\documentclass[12pt]{article}

\usepackage{amssymb,amsmath,amsfonts,amssymb}
\usepackage{graphics,graphicx,color}
\usepackage{empheq}
\usepackage{dsfont}

\def\@abssec#1{\vspace{.05in}\footnotesize \parindent .2in
{\bf #1. }\ignorespaces}

\graphicspath{{/EPSF/}{Figures/}}
\DeclareGraphicsExtensions{.eps}

\setlength\textwidth{37.2pc}
\setlength\textheight{56pc}
\setlength\topmargin{-2cm}
\addtolength\oddsidemargin{-1.2cm}
\addtolength\evensidemargin{-1.2cm}

\newtheorem{theorem}{Theorem}[section]
\newtheorem{lemma}[theorem]{Lemma}
\newtheorem{proposition}[theorem]{Proposition}
\newtheorem{corollary}[theorem]{Corollary}

\newtheorem{remark}[theorem]{Remark}

\def \Rm {{\mathbb R}}
\def \Nm {\mathbb N}
\def \Cm {\mathbb C}
\def \Zm {\mathbb Z}
\def \Sm {\mathbb S}
\def \Tm {\mathbb T}
\def \Xm {\mathbb X}
\newcommand{\eps}{\varepsilon}

\newcommand{\dsum}{\displaystyle\sum}
\newcommand{\dint}{\displaystyle\int}

\newcommand{\pdr}[2]{\dfrac{\partial{#1}}{\partial{#2}}}

\newcommand{\aver}[1]{\langle {#1} \rangle}

\newcommand{\mI}{\mathcal I}

\newcommand{\mP}{\mathcal P}

\newcommand{\fa}{{\mathfrak a}}

\newcommand{\fh}{{\mathfrak h}}

\newcommand{\fv}{{\mathfrak v}}
\newcommand{\fT}{{\mathfrak T}}


\renewcommand{\Pr}{{\mathcal P}}
\newcommand{\Qr}{{\mathcal Q}}
\newcommand{\Tc}{{\mathcal T}}
\newcommand{\Un}{{\mathcal U}}
\newcommand{\Pin}{\mathsf{P}}
\newcommand{\Qin}{\mathsf{Q}}
\newcommand{\Uin}{\mathsf{U}}
\newcommand{\Win}{\mathsf{W}}

\newcommand{\Ch}[1]{{\rm Ch}_{\frac12{#1}}}

\newcommand{\cout}[1]{}

\newcommand{\sgn}[1]{\,{\rm sign}(#1)}

\newcommand{\Tr}{{\rm Tr}}
\newcommand{\tr}{{\rm tr}}

\newcommand{\Ind}[1]{{\rm Index}\ (#1)}


 \renewcommand{\arraystretch}{1.5}

\title{Continuous bulk and interface description of topological insulators}

\author{Guillaume Bal \thanks{University of Chicago; {\tt guillaumebal@uchicago.edu}}}

\begin{document}
 
\maketitle


\begin{abstract}
	We analyze continuous partial differential models of topological insulators in the form of systems of Dirac equations. We describe the bulk and interface topological properties of the materials by means of indices of Fredholm operators constructed from the Dirac operators by spectral calculus. We show the stability of these topological invariants with respect to perturbations by a large class of spatial heterogeneities. These models offer a quantitative tool to analyze the interplay between topology and spatial fluctuations in topological phases of matter. The theory is first presented for two-dimensional materials, which display asymmetric (chiral) transport along interfaces. It is then generalized to arbitrary dimensions with the additional assumption of chiral symmetry in odd spatial dimensions.
\end{abstract}
 

\renewcommand{\thefootnote}{\fnsymbol{footnote}}
\renewcommand{\thefootnote}{\arabic{footnote}}

\renewcommand{\arraystretch}{1.1}





\section{Introduction}

Topological insulators are materials whose phase transitions are characterized by topological invariants rather than by symmetries and their spontaneous breaking. These phases display fundamental properties that are stable with respect to continuous changes in the material parameters so long as the topological invariant is defined. Two dimensional examples include the quantum Hall effect and the quantum spin Hall effect, which display unusual transport properties of electronic edge states at the interface between insulators. Several higher-dimensional topological insulators have also been analyzed theoretically and observed experimentally. For references on topological insulators, see e.g.,  \cite{bernevig2013topological,Bernevig1757,chiu2016classification,Fruchart2013779,PhysRevB.76.045302,PhysRevLett.98.106803,PhysRevLett.61.2015,RevModPhys.82.3045,PhysRevLett.95.146802,RevModPhys.83.1057}. Similar effects may be observed in photonic and mechanical structures as well \cite{fang2012realizing,Fleury2014,hafezi2011robust,hafezi2013imaging,PhysRevLett.100.013904,khanikaev2013photonic,lu2014topological,nash2015topological,PhysRevA.78.033834,rechtsman2013photonic,wang2009observation}.

In this paper, we model the topology of the insulators with Dirac Hamiltonian representatives. These low-energy descriptions in the vicinity of Dirac points are ubiquitous in the physical literature and accurately display the topology of more general models; see \cite[Section III.C]{chiu2016classification}. The models take the form of systems of Dirac equations in the spatial domain $\Rm^d$. As such, they are amenable to a large class of spatially-varying (random) perturbations, which do not need to satisfy any invariance by (discrete) translations and thus provide a reasonably simple and quantitative tool to assess the effect of topology on the material's properties.

Constant-coefficient Dirac operators characterize the bulk properties of topological insulators. The practically interesting transport properties of such materials appear when two bulk materials with different topologies are brought next to each other. Transport along the resulting interface often displays unusual asymmetric properties, which are moreover topological in nature and therefore stable with respect to large classes of perturbations. We introduce interface Hamiltonians as continuous transitions between two bulk Hamiltonians. 

Our objective is to describe the topology of the above bulk and interface Hamiltonians, to understand their correspondence, and to obtain their stability with respect to perturbations. In doing so, we obtain explicit models in which the influence of random perturbations can be analyzed quantitatively. Such models were used in \cite{B-EdgeStates-2018} to obtain a quantitative description of transport, localization, and back-scattering of modes along a one-dimensional interface between two-dimensional topological insulators. 

The Dirac operators are symmetric unbounded partial differential operators with no clear a priori topological characterization. In the absence of spatial variations, or when the latter satisfy some periodicity assumptions, the Dirac operators can be represented by a Floquet-Bloch transformation that involves a family of Hamiltonians parametrized on a compact Brillouin zone. The eigenspaces associated to the Hamiltonians then take the form of vector bundles over the Brillouin zone, whose topology may be non-trivial. This is the standard way of assigning topological invariants to materials; see e.g \cite{chiu2016classification,RevModPhys.82.3045,RevModPhys.83.1057}.

In this commutative setting, because composition of operators reduces to commutative multiplication on the Brillouin zone, the effect of random perturbations is rendered cumbersome by the hypothesis of invariance by (discrete) translations. The non-commutative setting therefore involves replacing the Fourier multipliers on the Brillouin zone by non-commutative operators on the physical variables. The resulting operators then typically act on functions of the lattice $\Zm^d$, or as in the present paper, the continuum $\Rm^d$, and are amenable to a much larger class of spatial perturbations. This effect was one of the driving forces behind the development of the non-commutative setting \cite{avron1994,bellissard1986k,bellissard1994noncommutative,bourne2018chern,prodan2016bulk}. Following these references, we assign topological invariants to the Dirac operators by means of Fredholm operators constructed by spectral calculus. First obtained for scalar operators, \cite{avron1994,bellissard1986k,bellissard1994noncommutative,kellendonk2004quantization}, the non-trivial construction of Fredholm operators in higher-dimensional and vectorial cases is also understood in many settings \cite{prodan2016bulk}. We adapt the constructions of the aforementioned references to continuous systems of Dirac operators.

In the analysis of the quantum Hall effect, randomness and the resulting electron localization are an essential effect in the understanding of (several) conductivity plateaus \cite{avron1994,bellissard1986k,bellissard1994noncommutative,elgart2005equality} in the sense that randomness influences the topological index of the material. Such random fluctuations are typically required to have statistics that are spatially homogeneous; see, however \cite{elgart2005equality}. In contrast, several engineered topological insulators may first be represented in a `clean', unperturbed state, which is then perturbed by random fluctuations that are quite arbitrary so long as they do not break the topological invariant. This is the setting considered here. We show that an index may be assigned to systems of Dirac operators and provide a means to calculate such an index. An interesting observation, already made for instance in \cite{Bernevig1757,lu2010massive,RevModPhys.83.1057}, is that the Dirac operators, which typically appear as first-order systems, need to be regularized by a second-order term, or their spectral calculus properly modified, in order for the topological invariants to be defined. 

Once the invariants are defined, both for bulk and for interface Hamiltonians, and shown to be in appropriate correspondence, we also obtain that they are indeed stable with respect to arbitrary perturbations that are relatively compact in an appropriate sense. This was the route followed in \cite{B-EdgeStates-2018} to obtain topology-consistent quantitative descriptions of back-scattering effects. 

A brief recall on the derivation of the Dirac systems of equations is proposed in section \ref{sec:model}. The analysis of bulk and interface Hamiltonians for two-dimensional materials is presented in section \ref{sec:2d}. We generalize to the vectorial setting the results obtained in \cite{avron1994} for a scalar operator with magnetic field. The topology acquired by the constant magnetic field in the quantum Hall effect is replaced here by the `twisting' nature of the vectorial Dirac operator, which is the effect observed in the Haldane model \cite{PhysRevLett.61.2015}. The derivation is related to properties of Fredholm modules, indices of pairs of projections, and trace-class operators, which are summarized in Appendix \ref{sec:FH}. The inclusion of random effects is handled by means of a standard Helffer-Sj\"ostrand formulation of the spectral calculus and is presented in detail in section \ref{sec:stab}. Generalizations to higher dimensions are given in section \ref{sec:nd}, with conventions on Clifford algebras and representations of Dirac systems of operators postponed to Appendix \ref{sec:CA}.

Let us also mention some related works on continuous models for the microscopic description of the topology of Schr\"odinger (and other) operators with perturbed periodic potential \cite{Fefferman2016}, or the recent derivation of topological invariants for materials with aperiodic microscopic description \cite{bourne2017non}.

\section{Modeling of Dirac operators}
\label{sec:model}

We briefly introduce the low-energy systems of Dirac equations and refer, e.g., to \cite{bernevig2013topological,chiu2016classification,Fruchart2013779} for motivations and details. We consider two-dimensional materials. Materials with discrete translational invariance are represented by a continuous family of Hermitian Hamiltonians $H(k)$ with $k$ belonging to a compact Brillouin zone (e.g., the two-dimensional torus). For each $k$, such Hamiltonians are self-adjoint operators with compact resolvent and sheets of eigenvalues $k\mapsto\lambda_j(k)$ for $j\geq1$. At specific points $K$, two such sheets may touch, and will generically do so by touching conically. This means, assuming the energy equal to $E=0$ at such Dirac points $K$ and linearizing $\tilde k=K+k$ for $|k|\ll1$ in the vicinity if $K$, that the two sheets of the Hamiltonian are linear in $k$ and the latter, reduced to the vicinity of $K$, takes the form
\[
  \hat H = Ak \cdot \sigma,
\]
where $k=(k_1,k_2)$, $A$ is an invertible $2\times 2$ matrix describing the geometry of the cones, and $\sigma=(\sigma_1,\sigma_2)$ are the first two Pauli matrices. The Hamiltonian acts on $2-$spinors corresponding to the modes `above' and `below' the energy of intersection $E=0$.

The above operator does not describe an insulator since all energies in the vicinity of $0$ are allowed to propagate. The insulation comes from the variation of a mass term $m$ in a Hamiltonian, which in the linearization described above, takes the form
\[
  \hat H[m] = Ak \cdot \sigma + m \sigma_3,
\]
where $\sigma_3$ is the third Pauli matrix. These matrices satisfy the commutativity relations $\sigma_i\sigma_j+\sigma_i\sigma_i=2\delta_{ij} I_2$ for $1\leq i,j\leq 3$. With this, we observe that $\hat H^2=(|Ak|^2+m^2) I_2$ so that the resulting Hamiltonian has a spectral gap $(-|m|,|m|)$.

As $m$ varies continuously from a non-vanishing value to another non-vanishing value with a different sign, the material moves from insulating to metallic when $m=0$ to insulating again. This is the basic topological phase transition in topological materials and this will be justified by assigning indices of Fredholm operators to the above Hamiltonians.

The topology of a Hamiltonian, whether in the commutative or non-commutative settings, typically involves projection onto the negative part of the spectrum of the Hamiltonian. This is how Chern numbers are calculated \cite{chiu2016classification,Fruchart2013779}. For Hamiltonians described over a non-compact domain $k\in\Rm^2$, it turns out that the resulting projection is not sufficiently regular to obtain Fredholm operators. One then has two (quite similar) choices: either regularize the operator so that the corresponding projection leads to a Fredholm operator; this is what we will do in section \ref{sec:2d}; or regularize (modify) the spectral projection so that we are also led to Fredholm operators; this is what we will do in section \ref{sec:nd} in arbitrary dimensions. The regularized operator is of the form
\[
	\hat H_\eta[m] = Ak\cdot\sigma + m_\eta \sigma_3,\qquad m_\eta = m-\eta|A k|^2,
\]
for some $\eta\not=0$. We will see that only the sign of $\eta$ matters topologically (or the sign of the determinant of the invertible matrix $\eta$ in a more general regularization of the form $-k^t\eta k$ instead of $-\eta|Ak|^2$).  Dirac operators with quadratic corrections have been used in a variety of contexts; see, e.g., \cite{PhysRevB.83.125109,RevModPhys.83.1057,shen2011topological} as well as \cite[Chapters 8\&16]{bernevig2013topological}.

The above Hamiltonians are the Fourier multipliers of Dirac operators, which in the physical variables, have the form
\[
  H_\eta[m] = \dfrac1i A\nabla\cdot\sigma + (m+\eta\Delta_A)\sigma_3,
\]
where $\nabla$ is the standard gradient operator and $\Delta_A=(A\nabla)\cdot (A\nabla)$ is the standard Laplacian when $A=I$. This is our elementary model of {\em bulk Hamiltonian}. 

The above Hamiltonian describes transport for low energies $|E|\ll1$ and (equivalently) wave-numbers $k$ in the vicinity of the Dirac point $K$. It turns out that most materials are characterized by a larger (than one, typically even) number of Dirac points $K_j$ for $1\leq j\leq J$ for a given energy ($E=0$). Each Hamiltonian $H_{j,\eta}[m]$ comes with a structure $A_j$ and a mass term $m_j$ (and the same regularization $\eta\not=0$ to simplify). The Hamiltonians may also act on larger spinors when additional internal degrees of freedom (such as spin) are taken into account. Accounting for all Dirac points (corresponding to the energy $E=0$) and internal degrees of freedom (and relabeling $J$ above accordingly), the general form of the (unperturbed) Hamiltonian we consider here can be written as a direct sum of operators of elementary Hamiltonians to yield
\[
  H_\eta[m]  = \oplus_{j=1}^J H_{j,\eta}[m_j],\qquad H_{j,\eta}[m_j] = \dfrac1i A_j\nabla\cdot\sigma + (m_j+\eta\Delta_A)\sigma_3.
\]
This is the most general model of {\em unperturbed bulk Hamiltonian} considered in this paper.

\medskip

For energies $E$ close to $0$ inside the spectral gap, the material is insulating and no transport may occur. The interesting practical features of topological insulators appear when two bulk materials with two different topologies are in contact. Then, modes may develop along the common interface and display unusual asymmetric (chiral) transport properties. In this paper, as in, e.g., \cite{bernevig2013topological}, we model such transitions continuously assuming that $m=m(x)$ continuously changes from $m_+$ as one component of $x$ goes to $+\infty$ to $m_-$ as that same component goes to $-\infty$. In the two-dimensional setting parametrized by $(x,y)$, that component is $x$ itself and interface modes then appear in the vicinity of $x=0$, where $m$ vanishes, and propagate along the interface parametrized by $y$. 
The elementary {\em interface Hamiltonian} then takes the form
\[
	H_\eta[m(x)] =  \dfrac1i A\nabla\cdot\sigma + (m(x)+\eta\Delta_A)\sigma_3.
\]

In section \ref{sec:2d}, we assign topological invariants to the Hamiltonians $H_\eta[m]$ and $H_\eta[m(x)]$ following the spectral calculus presented in \cite{avron1994,bellissard1986k,prodan2016bulk}. The explicit calculation of the bulk indices closely follows that in \cite{avron1994}, which we generalize to the vectorial setting after appropriate regularization $(\eta\not=0)$. The interface Hamiltonian is analyzed by means of the winding number of an appropriate unitary operator modeling transport along the edge. Once these indices are defined, they are shown to be stable with respect to a large class of heterogeneous perturbations in section \ref{sec:stab}. The bulk and interface topological indices are then related to a (quantized) physical quantity showing the transport asymmetry even in the presence of spatial heterogeneities. 

Extensions to higher dimensions are then given in section \ref{sec:nd}. We first concentrate on the two-dimensional setting.


\section{Two-dimensional bulk and interface Hamiltonians}
\label{sec:2d}

We consider the elementary $2\times2$ Dirac Hamiltonians described in the preceding section with the simplifying assumption that $A=I_2$. We will remove such an assumption at the end of the section. The unperturbed two dimensional Dirac Hamiltonian is thus given by
\begin{equation}
 \label{eq:Hh} 
   H_\eta[m] =\dfrac 1i \nabla\cdot\sigma + (m+\eta\Delta) \sigma_3
\end{equation}
where $\sigma=(\sigma_1,\sigma_2)$, $m\not=0$ and $\eta\in\Rm$ is a sufficiently small regularizing parameter. We represent position on $\Rm^2$ as $x=(x_1,x_2)$.

We use the notation $H_\eta[m(x)]$ for the Hamiltonian with $m$ above replaced by $m(x_1)$, a function that  converges to $m_\pm$ sufficiently rapidly as $x_1\to\pm\infty$ for some $|m_\pm|\geq m_0>0$.  The important topological property of such wall domains (transition from a bulk state characterized by $m_-$ to one characterized by $m_+$) is captured by the quantity
\[
   \eps = \dfrac12 (\sgn{m_+}-\sgn{m_-}),
\]
which is non-trivial only when the asymptotes have different signs.

The bulk Hamiltonian in the Fourier domain is given by its symbol
\begin{equation}
\label{eq:HF}
 \hat H_\eta[m] = k\cdot \sigma + (m-\eta|k|^2)\sigma_3
\end{equation}
for $k=(k_1,k_2)$. We also denote the partial Fourier transform by
\begin{equation}
\label{eq:HFm}
\hat H_\eta[m(x)] = \dfrac1i \partial_x \sigma_1 + k_2\sigma_2 + (m(x)+\eta\partial^2_x- \eta k_2^2)\sigma_3.
\end{equation}

For the bulk index, we define $\Un(x)=\dfrac{x_1+ix_2}{|x_1+ix_2|}$. We verify that
\begin{equation}\label{eq:U2d}
  \sgn{x\cdot\sigma} = \dfrac{x\cdot\sigma}{|x\cdot\sigma|} = \left(\begin{matrix} 0&\Un^*(x)\\ \Un(x)&0\end{matrix}\right).
\end{equation}
The role of $\Un(x)$ is to be used in the construction of a Fredholm operator whose index reflects the topological properties of the Hamiltonian $H_\eta[m]$; see Appendix \ref{sec:FH} for the role of Fredholm modules in the theories presented in this paper.

\subsection{Bulk Index}
\label{sec:bulkindex2d}
For $|\eta|$ small enough\footnote{More precisely, for $\eta>0$ and $2\eta m<1$, we observe that the minimum of $k^2\mapsto E^2(k^2):=k^2+(m-\eta k^2)^2$ is attained at $k^2=0$ and equals $m^2$. For $2\eta m>1$, the minimum is given by $(4\eta m-1)/(4\eta^2)>m/(2\eta)$. For $\eta<0$, we have $E^2\geq m^2$.}, we verify that $(-m,m)$ remains a spectral gap for $H_\eta[m]$, while the gap shrinks for large values of $|\eta|$. As is then standard in the analysis of topological insulators \cite{avron1994,bellissard1994noncommutative,Fruchart2013779,prodan2016bulk}, we define
\begin{equation}
\label{eq:P}
P := P_-(H_\eta[m]) = \chi(H_\eta[m]<0)
\end{equation}
the projection onto the negative part of the spectrum of $H_\eta[m]$, with $\chi(E<0)$ the indicatrix function equal to $1$ for $E<0$ and equal to $0$ otherwise. The operator $P$ is a bounded operator (of norm $1$) on $L^2(\Rm^2)\otimes\Cm^2$ defined by spectral calculus \cite{davies_1995}. The main result of the section is the:
\begin{theorem}
\label{thm:bulk2d}
  Let $P$ and $\Un$ be defined as above and $0<|\eta|<\frac{1}{2m}$.  Then $P\Un P_{|{\rm Ran }P}$ is a Fredholm operator and we have
  \begin{equation}
\label{eq:indexbulk2d}
  -\Ind{P\Un P_{|{\rm Ran} P} } = - \Ind { P\Un P + I-P } = \dfrac12\big( \sgn{m} + \sgn{\eta}\big).
\end{equation}
\end{theorem}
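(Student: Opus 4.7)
The plan is to invoke the Avron--Seiler--Simon (ASS) identity for the index of a pair of projections, as recalled in Appendix~\ref{sec:FH}. Setting $Q := \Un^* P \Un$, which is a self-adjoint projection since $\Un$ is unitary off the measure-zero origin, the identity reads
\[
  \Ind{P\Un P + I - P} \;=\; \mathrm{Index}(P,Q) \;=\; \Tr\bigl((P-Q)^{2n+1}\bigr)
\]
as soon as $P - Q \in \mathcal{L}^{2n+1}$. In two spatial dimensions one expects $n=1$ to suffice, so the proof splits into (i) showing $P - Q \in \mathcal{L}^3$, which simultaneously yields the Fredholm property of $P\Un P_{|{\rm Ran}\,P}$, and (ii) evaluating the resulting trace.

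For step (i), I would write $P = \tfrac12\bigl(I - \sgn{H_\eta[m]}\bigr)$ and invoke a Helffer--Sj\"ostrand representation of the sign function (an approach revisited in section~\ref{sec:stab}) to express $P - Q$ as a contour integral over $z\in\Cm$ of resolvent commutators $[\Un,(z - H_\eta[m])^{-1}]$, which by the resolvent identity are controlled by $[\Un, H_\eta[m]]$. The latter involves $\nabla \Un$ (decaying like $1/|x|$) multiplied by the first- and second-order derivative symbols entering $H_\eta[m]$. Combined with the $|k|^{-2}$ decay of the resolvent symbol at infinity in $k$-space supplied by the regularization $\eta\neq 0$, standard Cwikel/pseudo-differential bounds then place $P - Q$ in $\mathcal{L}^p$ for every $p > 2$, in particular in $\mathcal{L}^3$.

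For step (ii), constant coefficients allow a passage to the Fourier representation. The projection $P$ becomes the matrix-valued multiplier
\[
  \hat P(k) = \tfrac12\bigl(I - \hat d_1(k)\sigma_1 - \hat d_2(k)\sigma_2 - \hat d_3(k)\sigma_3\bigr), \qquad \hat d(k) = \frac{d(k)}{|d(k)|},
\]
with $d(k) = (k_1, k_2,\, m - \eta|k|^2) \in \Rm^3$. Following the manipulation in \cite{avron1994}, $\Tr((P-Q)^3)$ rewrites as the Bloch-sphere winding integral
\[
  \frac{1}{4\pi}\int_{\Rm^2} \hat d \cdot \bigl(\partial_{k_1}\hat d \times \partial_{k_2}\hat d\bigr)\,dk_1\,dk_2,
\]
which is the degree of the compactified map $\hat d : \Rm^2 \cup \{\infty\} \to \Sm^2$. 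Because $\hat d(k) \to (0,0,-\sgn{\eta})$ \emph{uniformly} as $|k|\to\infty$ -- a uniformity provided precisely by $\eta\neq 0$ -- this compactification is genuinely continuous. Computing the degree by counting signed preimages of the north pole, whose only candidate is $k = 0$ with $\hat d(0) = (0,0,\sgn{m})$, and inspecting the four sign combinations of $(m,\eta)$, yields $\deg(\hat d) = \tfrac12(\sgn{m} + \sgn{\eta})$; reconciling the ASS sign convention produces the claimed $-\Ind{P\Un P + I - P} = \tfrac12(\sgn{m} + \sgn{\eta})$.

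The main obstacle I anticipate is the Schatten-class analysis of step (i): one must track how the singular behaviour of $\nabla \Un$ at the origin interacts with the pseudo-differential structure of $P$, and the hypothesis $|\eta| < \tfrac{1}{2m}$ is used to preserve the spectral gap $(-|m|, |m|)$ so that $\sgn{H_\eta[m]}$ is smoothly representable via Helffer--Sj\"ostrand. Conceptually, the regularization $\eta \neq 0$ plays a double role: algebraically it compactifies $\hat d$ so that the winding number is an integer (without it the integral would collapse to the half-integer $\tfrac12\sgn{m}$), and analytically it supplies the large-$k$ symbol decay required for the trace formula to converge absolutely.
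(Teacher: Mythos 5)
Your high-level plan — reduce to the index of a pair of projections via a Fedosov/ASS formula, establish $P-Q\in\mathcal{I}_3$, then pass to the Fourier domain and compute a degree — is the same skeleton the paper uses (Propositions~\ref{prop:1index2d} through \ref{prop:5index2d}), and the final degree-counting argument at the north pole is exactly the alternative the paper mentions in Remark~\ref{rem:half} and then carries out in the general even-dimensional case in Theorem~\ref{thm:bulkeven}. Your convention $Q=\Un^*P\Un$ versus the paper's $Q=\Un P\Un^*$ is only a sign bookkeeping issue, and you flag it.

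The genuine gap is step (i). You propose a Helffer--Sj\"ostrand contour representation of $P$, followed by ``standard Cwikel/pseudo-differential bounds,'' to show $P-Q\in\mathcal{I}_3$. As written this does not close: the commutator $[H_\eta[m],\Un]$ has coefficients $\nabla\Un(x)\sim|x|^{-1}$ and $\Delta\Un(x)\sim|x|^{-2}$, and in $\Rm^2$ the function $|x|^{-1}$ lies in no $L^p(\Rm^2)$ (it fails at the origin for $p\ge2$ and at infinity for $p\le2$), so the naive Cwikel estimate $\|a(X)b(-i\nabla)\|_{\mathcal{I}_p}\lesssim\|a\|_{L^p}\|b\|_{L^p}$ does not apply to the relevant factors without a genuinely new decomposition argument. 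You yourself identify this as ``the main obstacle,'' but you do not resolve it. The paper takes a different and more concrete route: it writes down the kernel $k(x,y)$ of $P-Q$ directly, exploits the cancellation $|\Un(x)-\Un(y)|\lesssim\min(1,|x-y|/|y|)$ coming from the commutator structure (eq.~\eqref{eq:bdk}), isolates the $|x-y|^{-1}$ singularity of $p(x-y)$ via a symbol expansion, and then applies Lemma~\ref{lem:estip} together with Russo's $\mathcal{I}_p$ criterion (Lemma~\ref{lem:russo}). That argument handles the origin singularity of $\Un$ precisely because the $\min(1,|x-y|/|y|)$ factor kills it in the integrated estimate, which is the ingredient your Cwikel sketch is missing. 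A second, smaller omission: the passage from $\Tr(P-Q)^3$ to the Fourier-domain Chern integral relies on the nontrivial geometric identity of \cite[Lemma 4.4]{avron1994} (Proposition~\ref{prop:3index2d} here and Lemma~\ref{lem:geometric} in the Appendix); you cite the reference but do not record the identity, and without it the Bloch-sphere integral is asserted rather than derived.

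Everything after the trace formula is sound and matches the paper modulo presentation: the degree is computed by regular-value counting rather than by the explicit radial integral of Proposition~\ref{prop:5index2d}, but both are acceptable and the paper itself endorses the degree argument in Remark~\ref{rem:half}.
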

The rest of the section is devoted to the proof of the above theorem. 

When $m$ and $\eta$ are positive, say, then the above topological invariant, given by the index of a Fredholm operator, takes a non-vanishing value here equal to $1$. It describes the topology of the material associated to the Hamiltonian \eqref{eq:Hh}.
Note that the above right-hand-side is integer-valued only for $\eta\not=0$. Only after proper regularization $\eta\not=0$ (which essentially allows one to obtain a meaningful one-point compactification of the open domain $\Rm^2$) is the operator $P\Un P_{|{\rm Ran}P}$ Fredholm. After regularization, the latter operator has a sufficiently rapidly decaying kernel that the theory for trace-class operators described in \cite{avron1994} can be extended to the present vectorial setting.

Let us define the projector
\begin{equation}
\label{eq:Q}
  Q := \Un P \Un^* \qquad\mbox{ so that } \qquad P-Q= \Un [\Un^*,P] = [P,\Un]\Un^*,
\end{equation}
with $[A,B]=AB-BA$ the usual commutator. Then we have the:
\begin{proposition}\label{prop:1index2d}
  The operator $(P-Q)^3=\Un[P,\Un^*][P,\Un][P,\Un^*]$ is trace-class so that $P\Un P_{|{\rm Ran }P}$ is Fredholm and
  \begin{equation}
\label{eq:indexQ}
   -\Ind{P\Un P_{|{\rm Ran }P} } = \Tr (P-Q)^3 = \Tr \ \Un[P,\Un^*][P,\Un][P,\Un^*].
\end{equation}
\end{proposition}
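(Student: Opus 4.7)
The plan is to reduce the proposition to the abstract framework on pairs of projections collected in Appendix \ref{sec:FH}: for an orthogonal projection $P$ and a unitary $\Un$, setting $Q = \Un P \Un^*$, trace-class membership of $(P-Q)^{2n+1}$ for some $n\ge 0$ implies that $P\Un P_{|{\rm Ran}P}$ is Fredholm with $-\Ind{P\Un P_{|{\rm Ran}P}} = \Tr(P-Q)^{2n+1}$. Applied with $n=1$, this reduces the proof to (i) the algebraic rewriting of $(P-Q)^3$ in terms of commutators and (ii) the trace-class property of the resulting expression.

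The identity is elementary. Using the two forms $P-Q = [P,\Un]\Un^* = -\Un[P,\Un^*]$ recorded in \eqref{eq:Q}, one computes
\[
(P-Q)^2 = [P,\Un]\Un^*\cdot(-\Un)[P,\Un^*] = -[P,\Un][P,\Un^*],
\]
and therefore $(P-Q)^3 = (-\Un[P,\Un^*])\cdot(-[P,\Un][P,\Un^*]) = \Un[P,\Un^*][P,\Un][P,\Un^*]$, as stated. Since $\Un$ is unitary, the H\"older inequality for Schatten ideals then reduces the trace-class bound on this threefold product to the single estimate $[P,\Un]\in\mathcal{L}^3$, the analogous bound for $[P,\Un^*]$ following by adjunction.

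The main obstacle is therefore the Schatten bound $[P,\Un]\in\mathcal{L}^3(L^2(\Rm^2)\otimes\Cm^2)$, which I plan to establish through the Helffer-Sj\"ostrand calculus. The spectral gap $(-|m|,|m|)$ of $H_\eta[m]$, valid for $0<|\eta|<1/(2|m|)$, allows one to pick $f\in C^\infty(\Rm)$ equal to $1$ on the negative part of the spectrum and to $0$ on the positive part, so that $P=f(H_\eta[m])$ and
\[
[P,\Un] = \frac{1}{\pi}\int_{\Cm}\bar\partial\tilde f(z)\,(z-H_\eta[m])^{-1}\,[H_\eta[m],\Un]\,(z-H_\eta[m])^{-1}\,L(dz),
\]
where $\tilde f$ is an almost-analytic extension of $f$ and $L(dz)$ is Lebesgue measure on $\Cm$. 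The inner commutator reduces to multiplications by $\nabla\Un = O(|x|^{-1})$ and $\eta\Delta\Un = O(|x|^{-2})$, together with the first-order term $2\eta(\nabla\Un)\cdot\nabla\sigma_3$. The targeted bound then follows from (i) the off-diagonal exponential decay of the resolvent kernel inside the gap, (ii) the two-derivative smoothing supplied by the $\eta\Delta$ regularization, and (iii) Cwikel-Birman-Solomyak type $\mathcal{L}^p$-multiplier estimates applied after splitting $\nabla\Un$ into its near-zero and far-from-zero contributions. The regularization $\eta\neq 0$ is crucial here: it is precisely the second-order smoothing of the resolvent that places the threefold product into $\mathcal{L}^3$; with $\eta=0$ the corresponding Schatten exponent becomes borderline and this route fails, which is consistent with the necessity of regularization emphasized in the introduction.
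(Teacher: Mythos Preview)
Your reduction to the abstract Fredholm framework in Appendix~\ref{sec:FH} and the algebraic verification that $(P-Q)^3=\Un[P,\Un^*][P,\Un][P,\Un^*]$ match the paper exactly. The divergence is entirely in how the Schatten bound $[P,\Un]\in\mathcal{I}_3$ is established.

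The paper does not use the Helffer--Sj\"ostrand calculus here at all. Instead it works directly with the Fourier symbol $\hat P(k)=\tfrac I2-\tfrac12\hat H_\eta/|\hat H_\eta|$, exploits the spectral gap to get $|\partial^\alpha\hat P(k)|\lesssim\aver{k}^{-|\alpha|}$, and expands $\hat P$ into homogeneous pieces $\hat p_0+\hat p_{-1}+\hat p_{-2}+\hat r$ at large $|k|$. The constant term $\hat p_0$ drops out of the commutator, the leading singular piece $p_{-1}(x)=C\eta^{-1}x\cdot\sigma/|x|^2$ gives a kernel bound of the form \eqref{eq:bdk} with $\alpha=1$, and the rapid spatial decay from the gap handles the far field. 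The $\mathcal{I}_3$ membership then follows from Russo's mixed-norm criterion (Lemma~\ref{lem:russo}) together with the explicit integral estimate of Lemma~\ref{lem:estip}, which generalizes the corresponding computation in \cite{avron1994}. This route is elementary, uses only the translation invariance of $P$, and makes the role of $\eta\neq0$ completely transparent: without it the $\hat p_{-1}$ term is absent and the kernel singularity at the diagonal is too strong.

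Your Helffer--Sj\"ostrand route is conceptually reasonable and closer in spirit to the perturbative arguments of Section~\ref{sec:stab}, but as written it is a sketch rather than a proof. The concrete difficulty you should be aware of is that $\nabla\Un\sim|x|^{-1}$ lies in \emph{no} $L^p(\Rm^2)$ (too singular at the origin for $p\geq2$, too slowly decaying for $p\leq2$), so a direct Cwikel--Birman--Solomyak estimate does not apply. Your proposed near-zero/far-from-zero splitting is the right instinct, but the near-zero piece still requires an argument beyond standard $\mathcal{L}^p$ multiplier bounds; one needs to exploit the full second-order smoothing of the resolvent there, and the integration in $z$ over the unbounded support of the $r=0$ term in $\bar\partial\tilde f$ (where $f$ is not compactly supported) must also be controlled in Schatten norm, not just operator norm. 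These steps can likely be completed, but they are not immediate, whereas the paper's symbol-based computation is fully explicit.
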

\begin{proof}
  We verify that $(P-Q)^3=\Un[P,\Un^*][P,\Un][P,\Un^*]$ using \eqref{eq:Q}. The result then follows from Proposition \ref{prop:trace} in the Appendix provided that we can show that $(P-Q)^3$ is trace-class.
  
 Let $K=P-Q=({K_{ij}})_{1\leq i,j\leq 2}$, where $K_{ij}$ are bounded operators on $L^2(\Rm^2)$ with integral kernels $k_{ij}(x,y)$. It is sufficient to prove that each $k_{ij}\in {\cal I}_3$. Here, ${\cal I}_n$ is the Schatten class of compact operators whose definition is recalled in Appendix \ref{sec:FH}. For $k(x,y)$ the kernel of any operator $K_{ij}$ or their adjoints, we wish to show that $\|k\|_{q,p}<\infty$ with $q=\frac32$ and $p=3$ by applying Lemma \ref{lem:russo}.
 Let $p(x,y)\equiv p(x-y)$ be the kernel of $P$. In the Fourier domain, $P$ acts as a multiplier given by:
\[
 \hat P(k) = \frac I2 -\frac12 \frac{k_1\sigma_1+k_2\sigma_2 + (m-\eta|k|^2)\sigma_3}{(|k|^2+(m-\eta|k|^2)^2)^{\frac12}} .
\]
This is the symbol of a multiplier 
and we verify, using the notation $\aver{x}=(1+|x|^2)^{\frac12}$, that 
\[
  |\partial^\alpha \hat P(k)| \leq C_\alpha \aver{k}^{-|\alpha|},\qquad \partial^\alpha=\partial_1^{\alpha_1}\partial_2^{\alpha_2},\quad |\alpha|=\alpha_1+\alpha_2.
\]
This decay is a direct consequence of the existence of the spectral gap for $H$ in the vicinity of $k=0$ so that $\hat P(k)$ is also a smooth function. As a consequence, 
\[
  x^\alpha p(x) = {\cal F}^{-1} ((i\partial)^\alpha \hat P )(x),
\]
(with ${\cal F}^{-1}$ the inverse Fourier transform)
involves an absolutely convergent integral when $|\alpha|>d=2$. The same reasoning shows that $p(x)$ is smooth away from $x=0$ and decays faster than $|x|^{-\beta}$ for any $\beta>0$. It remains to address the behavior of $p$ at $x=0$. For this, we decompose
\[
  \hat P(k) = \hat p_0(k) + \hat p_{-1}(k) + \hat p_{-2}(k) + \hat r(k),
\]
where $\hat p_j(\lambda k)=\lambda^{-j}\hat p_j(k)$ for $\lambda>0$ and where $\hat r(k)$ generates an absolutely convergent integral with continuous inverse Fourier transform $r(x)$. The contribution $\hat p_0(k)$ is constant and generates a term in $p(x)$ proportional to $\delta(x)$. This term disappears in the kernel of $(P-Q)^j$. The contribution $\hat p_{-1}(k)$ provides a term
\[
p_{-1}(x) =  {\cal F}^{-1} [\frac{k_1\sigma_1+k_2\sigma_2}{\eta|k|^2}](x) = C\frac{1}{\eta} \dfrac{x\cdot\sigma}{|x|^2},
\]
for some constant $C$. Therefore, $|p(x)|\lesssim |x|^{-1}$ in the vicinity of $0$.  The contribution from $p_{-2}$ is smoother and involves at most a logarithmic singularity in $(x-y)$ which is handled similarly to the one above. All other contributions are jointly continuous in $(x,y)$.

Looking at the kernel $k(x,y)$ of any of the components of $(P-Q)$, we thus obtain the estimate:
\begin{equation}\label{eq:bdk}
  |k(x,y)| \lesssim \Big|1-\frac{\Un(x)}{\Un(y)} \Big| \dfrac{1}{|x-y|^\alpha \aver{x-y}^{\beta-\alpha}} \lesssim \min (1,\frac{|x-y|}{|y|}) \dfrac{1}{|x-y|^\alpha \aver{x-y}^{\beta-\alpha}} ,
\end{equation}
with $\alpha=1$ and $\beta$ as large as we want and certainly larger than $d=2$. Here, we use the same bound coming from the unitary $\Un$ as in \cite{avron1994}. We can now use Lemma \ref{lem:estip} below, which generalizes results in \cite{avron1994}, with $n=d=2$, $p=3$ and $q=\frac32$ to obtain that all components of $(P-Q)$ belong to ${\cal I}_3$. This shows that $(P-Q)^3$ is trace-class and an application of Proposition \ref{prop:trace} concludes the proof of the result.
\end{proof}
\begin{lemma}
\label{lem:estip}
Let $p>n$ and $\frac 1p+\frac 1q=1$.
Let $k(x,y)$ be the kernel of an operator $K$ defined on functions in $\Rm^n$ with bound
\begin{equation}\label{eq:bdk2}
|k(x+y,y)|  \leq C \min (1,\frac{|x|}{|y|}) \dfrac{1}{|x|^\alpha \aver{x}^{\beta-\alpha}} \qquad \mbox{ for } \quad \alpha<n<\beta.
\end{equation}
Then,
\[
   \Big( \dint_{\Rm^n} \big( \dint_{\Rm^n} |k(x,y)|^q dx\big)^{\frac pq} dy\Big)^{\frac 1p} \leq C,
\]
so that $K$ is in the Schatten class ${\cal I}_p(\Rm^n)$ and $K^p$ (for $p$ an integer chosen as $p=n+1$ here) is trace-class.
\end{lemma}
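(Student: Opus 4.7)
The plan is to bound $\|k\|_{q,p}$ directly via a pointwise estimate on
\[
F(y)^q := \int_{\Rm^n} |k(z,y)|^q\,dz = \int_{\Rm^n} |k(x+y,y)|^q\,dx,
\]
the second equality following from the translation $z = x+y$. Applying the hypothesized bound \eqref{eq:bdk2} gives
\[
F(y)^q \;\le\; C^q \int_{\Rm^n} \min\!\Big(1,\tfrac{|x|}{|y|}\Big)^{\!q} \frac{dx}{|x|^{\alpha q}\aver{x}^{(\beta-\alpha)q}},
\]
so the task reduces to showing $F\in L^p(\Rm^n)$, after which Lemma \ref{lem:russo} yields $K\in{\cal I}_p$ and H\"older's inequality for Schatten norms gives $K^p\in{\cal I}_1$.

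The natural step is to split the $x$-integration into the near-diagonal piece $\{|x|\le|y|\}$, on which $\min(1,|x|/|y|)=|x|/|y|$ supplies a crucial extra factor $|x|^q/|y|^q$, and the far-diagonal piece $\{|x|>|y|\}$, on which the minimum equals $1$. I would further split each region at the unit scale $|x|=1$ so that $\aver{x}^{\beta-\alpha}$ collapses to either a constant or $|x|^{\beta-\alpha}$. The hypotheses $\alpha<n$ and $p>n$ automatically give $\alpha<1+n/q$ (since $n/q=n-n/p>n-1$), which is exactly what is needed for the near-diagonal integrand $|x|^{(1-\alpha)q}$ to be integrable near the origin. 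Choosing $\beta$ large enough that $\beta q>n+q$ (permitted since $\beta$ may be taken as large as we wish), elementary power counting then yields the schematic bounds
\[
F(y)^q\;\lesssim\; |y|^{n-\alpha q}\quad\text{for }|y|\le 1,\qquad F(y)^q\;\lesssim\; |y|^{-q}\quad\text{for }|y|>1.
\]

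Raising to the $p/q$-th power and integrating over $\Rm^n$ gives $\int_{|y|\le 1}|y|^{(n-\alpha q)p/q}\,dy$ near the origin, which is finite precisely when $\alpha<n$ (after the simplification $(n-\alpha q)p/q+n = p(n-\alpha)$), together with $\int_{|y|>1}|y|^{-p}\,dy$ at infinity, finite precisely when $p>n$. Both conditions are assumed, completing the estimate $\|k\|_{q,p}<\infty$.

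The main obstacle is the balancing inside the near-diagonal region. The singularity $|x|^{-\alpha}$ would not, on its own, force $F$ to decay in $|y|$ at infinity; it is the $\min(1,|x|/|y|)$ factor from \eqref{eq:bdk2} that produces the crucial $|y|^{-1}$ decay of $F$ at infinity. Carefully tracking how this factor interacts with the $|x|^{-\alpha q}\aver{x}^{-(\beta-\alpha)q}$ weight in each of the four subregions, and verifying that the exponents on $|y|$ that emerge match the two borderline integrability conditions $\alpha<n$ and $p>n$ exactly, is the computational heart of the argument.
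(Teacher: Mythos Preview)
Your approach is the same as the paper's: translate, split the $x$-integral into near-diagonal $|x|<|y|$ and far-diagonal $|x|>|y|$, further split at the unit scale, and power-count. The one genuine issue is the sentence ``Choosing $\beta$ large enough that $\beta q>n+q$ (permitted since $\beta$ may be taken as large as we wish).'' The lemma only assumes $\alpha<n<\beta$; you are not free to strengthen this inside the proof. Note that $1+n/q=1+n-n/p>n$ since $p>n$, so the hypothesis $\beta>n$ does \emph{not} imply $\beta q>n+q$.

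Without that extra assumption, the near-diagonal piece $1<|x|<|y|$ (for $|y|>1$) need not be $O(|y|^{-q})$: it contributes
\[
|y|^{-q}\int_1^{|y|} r^{(1-\beta)q+n-1}\,dr,
\]
which is $O(|y|^{n-\beta q})$ when $n<\beta<1+n/q$. The correct large-$|y|$ bound is therefore $F(y)^q\lesssim |y|^{-q}+|y|^{n-\beta q}$, which is exactly the paper's summary estimate. The extra term is harmless: raising to the power $p/q=p-1$ and integrating over $|y|>1$, the integrability condition $(n-\beta q)(p-1)+n<0$ simplifies (via $q(p-1)=p$) to $\beta>n$, which is precisely the lemma's hypothesis. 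So the gap is real but closes immediately once you carry the $|y|^{n-\beta q}$ term through rather than discarding it.
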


\begin{proof}
The method follows that in \cite{avron1994}. Let $F(y) = \int |k(x+y,y)|^q dx$. By translation, we wish to show that $F\in L^{p-1}(\Rm^n)$ for $p-1=\frac pq$. We separate the integrals $|x|<|y|$ and $|x|>|y|$. 
The first one is
\[
  \dint_{|x|<|y|} \dfrac{1}{|x|^{\alpha q}\aver{x}^{\gamma q}} \dfrac{|x|^q}{|y|^q} dx = c_n \dint_0^{|y|} \dfrac{1}{|x|^{\alpha q}\aver{|x|}^{\gamma q}} \dfrac{|x|^{q+n-1}}{|y|^q} d|x|,
\] 
using $\gamma=\beta-\alpha$. The second integral is 
\[
  \dint_{|x|>|y|} \dfrac{1}{|x|^{\alpha q}\aver{x}^{\gamma q}} dx = c_n \dint_{|y|}^\infty  \dfrac{1}{|x|^{\alpha q}\aver{|x|}^{\gamma q}} |x|^{n-1}d|x|.
\]
We separate the cases $|y|<1$ and $|y|>1$. For $|y|<1$, the first integral gives a contribution, using $\aver{|x|}\geq1$, of the form $|y|^{-q+(1-\alpha)q+n}=|y|^{n-\alpha q}$ provided that $|x|^{(1-\alpha)q+n-1}$ is integrable at $0$, i.e., provided that $(1-\alpha)q+n-1>-1$, or $\alpha<1+\frac nq=n+\frac{p-n}p$, which holds. The second integral is split into $|y|<|x|<1$ and $1<|x|$, with a second contribution that is integrable when $n-1-\beta q<-1$, i.e., $n<\beta q$, which holds. The first contribution is the integral between $|y|$ and $1$ of $|x|^{n-1-\alpha q}$, which provides a contribution $|y|^{n-\alpha q}$ again.

Now consider $|y|>1$. The first integral provides a contribution $|y|^{-q}$ times the integral on $(0,1)$ of $|x|^{(1-\alpha)q+n-1}$, which is finite again when $\alpha<1+\frac nq$, which holds. So, we obtain a contribution of order $|y|^{-q}$. The second contribution  involves the integral $\int_{|y|}^\infty |x|^{n-1-\beta q} d|x|$, which is of order $|y|^{n-\beta q}$ with no difficulty of integration at $+\infty$ since $n<\beta q$.
 
 Summarizing the above bounds, we find that
 \[
   F(y) \lesssim (|y|^{-q}\wedge 1) + |y|^{n-\alpha q}\aver{|y|}^{-\gamma q}.
 \]
 These two contributions are in $L^{p-1}(\Rm^n)$ provided that $q(p-1)=p>n$ and provided that $(n-\beta q)(p-1)+n-1<-1$ and $(n-\alpha q)(p-1)+n-1>-1$ or equivalently $\beta>n$ and $\alpha<n$, which hold.
\end{proof} 

The next step is to show that such a trace can be computed as an integral. 
\begin{proposition}\label{prop:2index2d}
  Let $\kappa(x,z)$ be the kernel of the trace-class operator $(P-Q)^3$. Then we have
  \begin{equation}
\label{eq:indexTr1}
   -\Ind{P\Un P_{|{\rm Ran P}} } = \Tr (P-Q)^3 =\dint_{\Rm^2} \tr \ \kappa(x,x)dx.
\end{equation}
\end{proposition}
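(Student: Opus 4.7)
The first equality is exactly the content of Proposition \ref{prop:1index2d}, which has already been established, so all the work lies in justifying the identification of $\Tr(P-Q)^3$ with the diagonal integral of its kernel. The plan is to invoke the standard result that a trace-class operator on $L^2(\Rm^2)\otimes\Cm^2$ with a jointly continuous kernel $\kappa$ has trace equal to $\int \tr\,\kappa(x,x)\,dx$ (a Brislawn-type theorem, see e.g.\ Simon, \emph{Trace ideals and their applications}), and then to verify its hypotheses for $K=(P-Q)^3$.

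First, I would write the kernel of the threefold composition as the iterated integral
\[
\kappa(x,z) = \dint_{\Rm^2}\dint_{\Rm^2} k(x,y_1)\,k(y_1,y_2)\,k(y_2,z)\,dy_1\,dy_2,
\]
where $k(\cdot,\cdot)$ denotes the $2\times 2$ matrix kernel of $P-Q$ already studied in the proof of Proposition \ref{prop:1index2d}; the trace-class property of $(P-Q)^3$ has been established there, so the iterated integral is well defined almost everywhere.

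Next, I would exploit the pointwise bound \eqref{eq:bdk}, which (with $\alpha=1$ and $\beta$ arbitrarily large) places each entry of $k(x,y)$ in $L^p_{\rm loc}$ for every $p<2$ while giving decay faster than any polynomial at infinity. In dimension two this is precisely the regularity needed for iterated convolutions to improve regularity: one integration against $k(y_2,z)$ already produces a kernel continuous in $(y_1,z)$ via a H\"older/Young-type estimate combined with the rapid off-diagonal decay, and the second integration then yields $\kappa\in C(\Rm^2\times\Rm^2)$. Moreover $\tr\,\kappa(x,x)$ is absolutely integrable on $\Rm^2$ because the extra factor $|1-\Un(x)/\Un(y)|$ coming from each copy of $P-Q$ is transported through the composition and forces enough decay in $x$.

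The main technical hurdle I anticipate is the verification of joint continuity of $\kappa$ up to the diagonal, using the integrable singularity of order $|x-y|^{-1}$ in $\Rm^2$ together with the polynomial decay inherited from the regularisation $\eta\ne 0$; once this is secured, the Brislawn-type theorem finishes the proof. As a fallback, if direct continuity proves awkward, one may approximate the sharp spectral projector $P$ by smooth versions $P_\epsilon = \chi_\epsilon(H_\eta[m])$ with $\chi_\epsilon$ a smooth monotone regularisation of the Heaviside function, compute the trace for each $K_\epsilon = (P_\epsilon-\Un P_\epsilon\Un^*)^3$ explicitly via \eqref{eq:indexTr1} (which is immediate for the resulting smooth kernels), and pass to the limit using trace-norm continuity $\|K_\epsilon - K\|_1\to 0$ together with dominated convergence applied to $\tr\,\kappa_\epsilon(x,x)$.
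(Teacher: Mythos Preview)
Your overall strategy---invoke a Brislawn-type theorem once you know $(P-Q)^3$ is trace-class---is one of the two routes the paper explicitly offers. However, there is an inaccuracy in your continuity claim: the kernel $k(x,y)$ of $P-Q$ carries the factor $\bigl(1-\Un(x)/\Un(y)\bigr)$, and $\Un$ is discontinuous at the origin. Integration in the intermediate variables $y_1,y_2$ smooths out the convolution singularity coming from $p(x-y)$, but it cannot remove the multiplicative discontinuity in $\Un(x)$ sitting in the outermost factor $k(x,y_1)$, nor the one in $\Un(z)^{-1}$ sitting in $k(y_2,z)$. Thus $\kappa(x,z)$ is continuous only away from $\{x=0\}\cup\{z=0\}$. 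This is still enough for Brislawn's theorem (Lemma~\ref{lem:trace}), since the exceptional set meets the diagonal in a single point and $\tilde\kappa=\kappa$ almost everywhere there; but your argument as written overstates the regularity, and you should correct the claim accordingly rather than pursue the smoothing fallback.

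The paper in fact takes a shorter path that avoids any continuity verification: since each scalar component of $P-Q$ was shown in Proposition~\ref{prop:1index2d} to lie in $\mathcal I_3$ \emph{via Russo's criterion} (Lemma~\ref{lem:russo}), Goffeng's strengthening of that criterion (Corollary~\ref{cor:trace}) directly yields both that $(P-Q)^3$ is trace-class \emph{and} that its trace equals the diagonal integral of the kernel. So the entire proposition follows in one line from results already invoked, with the Brislawn route mentioned only as an alternative.
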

This result is a direct application of Lemma \ref{lem:estip} and  Lemma \ref{lem:russo} to obtain that $(P-Q)^3$ is trace-class and then of corollary \ref{cor:trace} or Lemma \ref{lem:trace} (since $\kappa(x,y)$ is continuous except where $x=0$ and $y=0$) to obtain the trace as an integral.

Our next step involves a (known and non-trivial) geometric identity to simplify the expression of the integral.
\begin{proposition}\label{prop:3index2d}
  Let $p(x,z) \equiv p(x-z)$ be the kernel of the spatially homogeneous operator $P$. Under the hypotheses of the above proposition, we have
  \begin{equation}
\label{eq:indexTr2}
   \Tr (P-Q)^3 =  -2\pi i \dint_{\Rm^4}  \tr\ p(-x)p(x-z)p(z) x\wedge (x-z) dx dz.
\end{equation}
\end{proposition}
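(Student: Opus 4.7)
The plan is to compute the integral kernel of $(P-Q)^3$ on the diagonal, express $\Tr(P-Q)^3$ as a triple integral over $(\Rm^2)^3$, and then integrate out the ``center'' variable using a geometric identity that captures the winding of $\Un$ around the origin. Since $P$ is translation invariant with matrix-valued kernel $p(x-z)$ and $\Un$ is multiplication by a scalar of modulus one, the operator $Q=\Un P\Un^*$ has kernel $\Un(x)\,p(x-z)\,\Un^*(z)$, so the kernel of $K:=P-Q$ is $k(x,z)=\bigl(1-\Un(x)\Un^*(z)\bigr)p(x-z)$. Composing three copies and invoking Proposition \ref{prop:2index2d} gives
\begin{equation*}
\Tr(P-Q)^3=\int_{\Rm^6}S(x_0,x_1,x_2)\,\tr\bigl[p(x_0-x_1)\,p(x_1-x_2)\,p(x_2-x_0)\bigr]\,dx_0\,dx_1\,dx_2,
\end{equation*}
where $S(x_0,x_1,x_2):=\prod_{j=0}^{2}\bigl(1-\Un(x_j)\Un^*(x_{j+1})\bigr)$ with indices taken modulo $3$.

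Next I would separate center and relative coordinates by $\xi:=x_0-x_1$, $\zeta:=x_1-x_2$ (unit Jacobian), under which $p(x_2-x_0)=p(-\xi-\zeta)$. Since the $\tr[\,\cdot\,]$ factor depends only on $(\xi,\zeta)$, the claim reduces to the planar identity
\begin{equation*}
\int_{\Rm^2} S\bigl(x_0,\,x_0-\xi,\,x_0-\xi-\zeta\bigr)\,dx_0 = 2\pi i\,(\xi\wedge\zeta).
\end{equation*}
Setting $x:=-\xi$, $z:=-\xi-\zeta$ (so $p(-x)=p(\xi)$, $p(x-z)=p(\zeta)$, $p(z)=p(-\xi-\zeta)$, and $x\wedge(x-z)=-\xi\wedge\zeta$) turns the resulting double integral into exactly \eqref{eq:indexTr2}.

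For the identity itself, write $\Un(y)=e^{i\theta(y)}$ with $\theta(y)$ the polar angle of $y\in\Rm^2\setminus\{0\}$. Algebraic expansion of the triple product, using $|\Un|=1$, yields
\begin{equation*}
S(x_0,x_1,x_2)=2i\bigl[\sin(\theta(x_1)-\theta(x_0))+\sin(\theta(x_2)-\theta(x_1))+\sin(\theta(x_0)-\theta(x_2))\bigr],
\end{equation*}
an $SO(2)$-equivariant functional of the triangle with vertices $x_0, x_1, x_2$. Integrating over $x_0$ at fixed $(\xi,\zeta)$, the planar identity follows from a Stokes--/residue-type calculation around the singularity of $\Un$ at the origin, exploiting that $\Un\colon\Rm^2\setminus\{0\}\to S^1$ has winding number one; this is precisely the two-dimensional calculation underlying the Avron--Seiler--Simon formula adapted here to the vectorial projection $P$.

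The main obstacle is the rigorous execution of this last step. The factor $S$ is only conditionally integrable on $\Rm^2$ (it is singular at the origin where $\Un$ is ill-defined and decays merely like $|x_0|^{-2}$ at infinity), and the full six-dimensional integrand is not in $L^1(\Rm^6)$ in any naive sense, so Fubini cannot be applied directly. The plan is to insert annular cutoffs $\chi_{\eps\le|x_0|\le R}$ on the center variable, use the decay and smoothness estimates on $p$ from the proof of Proposition \ref{prop:1index2d} (smoothness away from the origin, $|p(\xi)|\lesssim|\xi|^{-1}$ near $0$, rapid decay at infinity) to majorize the cutoff integrand by an $L^1(\Rm^6)$ function, interchange the orders of integration under this regularization, verify the geometric identity in the regularized form by the Stokes computation outlined above, and then pass $\eps\to 0$ and $R\to\infty$ by dominated convergence. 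The trace-class property of $(P-Q)^3$ and the continuity of its diagonal kernel already established by Propositions \ref{prop:1index2d}--\ref{prop:2index2d} identify the limit with $\Tr(P-Q)^3$, delivering \eqref{eq:indexTr2}.
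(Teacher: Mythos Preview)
Your proposal is essentially the same as the paper's proof: write the trace as a triple integral of the kernel of $(P-Q)^3$, separate center and relative variables, and then invoke the geometric identity $\int_{\Rm^2}\prod_j(1-\Un(x_j)\Un^*(x_{j+1}))\,dx_0=2\pi i\,\xi\wedge\zeta$. The only difference is that the paper simply cites this identity (\cite[Lemma~4.4]{avron1994}, with a general-dimensional version proved in Appendix~\ref{sec:connes}) rather than sketching a Stokes/residue derivation, and it does not spell out the Fubini/regularization step you flag; your caution there is reasonable but is exactly the content of the cited lemma.
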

\begin{proof}
The trace of $(P-Q)^3$ is given explicity by
\[
	{\cal T} := \dint_{\Rm^{2\times3}} p(x_1,x_2)(1-\frac{\Un(x_1)}{\Un(x_2)}) p(x_2,x_3)(1-\frac{\Un(x_2)}{\Un(x_3)}) p(x_3,x_1)(1-\frac{\Un(x_3)}{\Un(x_1)}) dx_1 dx_2 dx_3.
\]
Changing variables $x=x_1$ and $y_j=x_{j+1}-x$ for $j=1,2$, and using the translational invariance of $P$, we obtain
\begin{equation}\label{eq:traceconvol}
 {\cal T} = \dint_{\Rm^{2\times2}} p(-y_1)p(y_1-y_2) p(y_2) (2\pi i) y_1\wedge y_2 dy_1 dy_2
\end{equation}
using the geometric identity in \cite[Lemma 4.4]{avron1994}; see also \cite{prodan2016bulk} and Appendix \ref{sec:connes}:
\[
 \dint_{\Rm^2} (1-\frac{\Un(x)}{\Un(y_1+x)})(1-\frac{\Un(y_1+x)}{\Un(y_2+x)}) (1-\frac{\Un(y_2+x)}{\Un(x)}) dx = 2\pi i (-y_1)\wedge (-y_2).
\]
Here, $x\wedge z=x_1z_2-x_2z_1=(x-z)\wedge z$ is the (two-dimensional) volume of the parallelogram with vertices $0$, $x$, and $z$ (and $x+z$). This gives the result.
\end{proof}

The preceding result involves convolutions in the spatial domain that may be estimated in the Fourier domain.
Let $\hat P(k)$ be the Fourier transform of the kernel of the translation-invariant operator $P$. We use the following convention for Fourier transforms
\begin{equation}\label{eq:FT}
  \hat Q(k) = \dint_{\Rm^d} e^{-ik\cdot x} Q(x)dx,\qquad Q(x) = \dfrac{1}{(2\pi)^d}\dint_{\Rm^d} e^{ik\cdot x} \hat Q(k) dk. 
\end{equation}
We define the Chern number of the projector as 
\begin{equation}\label{eq:chern}
 \Ch{2}[\hat P] := \dfrac i{2\pi} \dint_{\Rm^2} \tr\ \hat P [\partial_1 \hat P,\partial_2 \hat P] dk = \dfrac{i}{2\pi} \dint_{\Rm^2} \tr\ \hat P d\hat P \wedge d\hat P.	
\end{equation}
We verify that the last two terms are equal. We follow the sign convention in \cite{avron1989chern} and \cite[(2.3)]{prodan2016bulk} although the opposite sign convention often appears in the physical literature. Chern numbers are defined as integrals of even dimensional forms (Chern characters) and are therefore (possibly) non-trivial only in even dimensions. The (even) Chern number is defined as ${\rm Ch}_d$ in (real) space dimension $d$ in \cite{prodan2016bulk}, while it is typically denoted by ${\rm Ch}_{\frac d2}$ in the literature, where $\frac d2$ is the complex dimension. We use the above somewhat hybrid notation to represent the Chern number in even spatial dimension $d=2$.
\begin{proposition}\label{prop:4index2d}
  Let $\hat P(k)$ be the Fourier transform of the kernel of the translation-invariant operator $P$. Under the hypotheses of the above proposition, we have
  \begin{equation}
\label{eq:indexTr3}
   \Tr (P-Q)^3 =  \Ch{2}[\hat P] .
\end{equation}
\end{proposition}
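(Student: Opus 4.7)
The plan is to pass to the Fourier domain, where the translation invariance of $P$ is manifest via its symbol $\hat P(k)$. Starting from the double-integral formula of Proposition~\ref{prop:3index2d}, I substitute the inverse Fourier representation $p(x)=(2\pi)^{-2}\int e^{ik\cdot x}\hat P(k)\,dk$ into each of the three factors $p(-x)$, $p(x-z)$, $p(z)$, introducing three independent momentum variables $k_1,k_2,k_3$. With the relabeling $y_1=x$, $y_2=z$ (so that $x\wedge(x-z)=-y_1\wedge y_2$), the combined exponentials collect into $e^{i(k_2-k_1)\cdot y_1+i(k_3-k_2)\cdot y_2}$, multiplied by $\tr[\hat P(k_1)\hat P(k_2)\hat P(k_3)]$ and by the polynomial weight $y_1\wedge y_2=y_{1,1}y_{2,2}-y_{1,2}y_{2,1}$.

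The key step is to convert the spatial moments into momentum derivatives using $y_\mu e^{ik\cdot y}=-i\partial_{k_\mu}e^{ik\cdot y}$. Pulling each $\partial_k$ outside the $y$ integrations leaves pure exponentials, which integrate to $(2\pi)^2\delta(k_2-k_1)$ and $(2\pi)^2\delta(k_3-k_2)$. The remaining triple momentum integral is then reduced by integrating by parts twice, transferring the derivatives from the delta masses onto $\tr[\hat P(k_1)\hat P(k_2)\hat P(k_3)]$ and collapsing the expression to a single integral over $k:=k_1$. The two pieces coming from $y_{1,1}y_{2,2}$ and $y_{1,2}y_{2,1}$ combine antisymmetrically to yield the commutator $[\partial_1\hat P,\partial_2\hat P]$ sandwiched against $\hat P$; spurious contributions of the form $\tr[\hat P\,\partial_1\partial_2\hat P]$ that arise when both derivatives fall on the same $\hat P$ cancel between the two pieces by equality of mixed partials. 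Tracking the prefactors $(-2\pi i)$ from Proposition~\ref{prop:3index2d}, the sign from $x\wedge(x-z)=-y_1\wedge y_2$, the $(-i)^2$ from the two momentum derivatives, the $(2\pi)^4$ from the two delta functions, and $(2\pi)^{-6}$ from the three Fourier inversions should match the normalization $i/(2\pi)$ appearing in \eqref{eq:chern}.

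The main obstacle is technical rather than conceptual: the distributional manipulations, namely passing spatial polynomial weights through Fourier integrals and integrating by parts in momentum space, require $\hat P(k)$ to be smooth and to decay together with its derivatives as $|k|\to\infty$, so that boundary terms vanish and all integrals converge absolutely. Both properties were already secured in the proof of Proposition~\ref{prop:1index2d}: the spectral gap at $E=0$ produces smoothness of $\hat P$, and the quadratic regularization $\eta\neq 0$ guarantees that $\hat P(k)$ tends to a constant projection rapidly enough that $|\partial^\alpha\hat P(k)|\lesssim\aver{k}^{-|\alpha|}$. Once these estimates are in hand the proof reduces to the algebraic rearrangement described above.
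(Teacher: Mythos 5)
Your proposal is correct and follows essentially the same Fourier-domain route as the paper: convert the weighted triple convolution from Proposition~\ref{prop:3index2d} into a single momentum integral of $\tr\,\hat P[\partial_1\hat P,\partial_2\hat P]$. The one genuine difference is in the bookkeeping. The paper first uses the elementary identity $y_1\wedge y_2=(y_1-y_2)\wedge y_2$ so that each linear factor multiplies exactly one of the two non-trivial convolution arguments; then each weighted kernel is directly recognized as the Fourier transform of $i\partial_j\hat P$, and the Plancherel-type identity $(2\pi)^d\int P(-x)Q(x)\,dx=\int\hat P\hat Q\,dk$ finishes the job with no cross terms arising. You instead expand all three factors in inverse Fourier form and push the spatial monomials onto $\delta$-derivatives in two of the three momentum variables; this forces $\partial_{q}$ to act on a product of two $\hat P$'s, so you must invoke (correctly) the cancellation of the $\tr\,\hat P\,\partial_1\partial_2\hat P$ terms by equality of mixed partials. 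Both routes are sound and rely on the same decay/smoothness of $\hat P$ established in Proposition~\ref{prop:1index2d}; the paper's reorganization simply avoids the need for the mixed-partials observation. The one thing your write-up leaves unverified — the assertion that the accumulated prefactors ``should match'' $i/(2\pi)$ — is worth actually carrying out before calling the proof complete, since the chain of sign conventions (wedge orientation, Fourier normalization, orientation of $\Un$) is exactly the kind of bookkeeping where an error is easy to miss.
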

\begin{proof}
	We verify that $(2\pi)^d\int_{\Rm^d} P(-x)Q(x) = \int_{\Rm^d} \hat P(k)\hat Q(k)dk$ and that the Fourier transform of  the convolution $Q_1*Q_2$ is given by $\hat Q_1 \hat Q_2$. Applying this to $P(y)=p(y)$ and $Q=Q_1Q_2$ with $Q_1(y)=Q_2(y)=yp(y)$ in \eqref{eq:traceconvol} using $y_1\wedge y_2=(y_1-y_2)\wedge y_2$, we obtain that ${\cal T}$ equals the first form in \eqref{eq:chern} and hence the result.
\end{proof}
The following proposition concludes the proof of the above theorem.
\begin{proposition}\label{prop:5index2d}
  Let $\hat P(k)$ be the Fourier transform of the kernel of the translation-invariant operator $P$. Under the hypotheses of the above proposition, we have
  \begin{equation}
\label{eq:indexTr4}
   \Ch{2}[\hat P] = \frac 12 \big( \sgn{m} + \sgn{\eta} \big).
\end{equation}
\end{proposition}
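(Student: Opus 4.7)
The plan is to reduce $\Ch{2}[\hat P]$ to the degree (Pontryagin index) of the unit vector field $\hat d(k) = d(k)/|d(k)|$ associated with $d(k) = (k_1,k_2,m-\eta|k|^2)\in\Rm^3$, and then evaluate this degree by an explicit one-dimensional integration. From the formula for $\hat P(k)$ displayed in the proof of Proposition \ref{prop:1index2d}, I first rewrite
\[
  \hat P(k) = \tfrac12\big(I - \hat d(k)\cdot\sigma\big),\qquad \hat d = d/|d|,
\]
using that $k_1\sigma_1+k_2\sigma_2+(m-\eta|k|^2)\sigma_3 = d\cdot\sigma$. Since $\partial_i\hat P = -\tfrac12 \partial_i\hat d\cdot\sigma$, the Pauli identity $\sigma_a\sigma_b=\delta_{ab}I+i\epsilon_{abc}\sigma_c$ yields $[\partial_1\hat P,\partial_2\hat P] = \tfrac{i}{2}(\partial_1\hat d\times\partial_2\hat d)\cdot\sigma$ and, after multiplying by $\hat P$ and taking the (matrix) trace, $\tr\,\hat P[\partial_1\hat P,\partial_2\hat P] = -\tfrac{i}{2}\hat d\cdot(\partial_1\hat d\times\partial_2\hat d)$. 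Plugging into \eqref{eq:chern} gives the Pontryagin formula
\begin{equation}\label{eq:pontryaginCh}
 \Ch{2}[\hat P] = \dfrac{1}{4\pi}\dint_{\Rm^2}\hat d\cdot(\partial_1\hat d\times\partial_2\hat d)\,dk.
\end{equation}

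Next I compute the integrand explicitly. The standard identity $\hat d\cdot(\partial_1\hat d\times\partial_2\hat d) = |d|^{-3} d\cdot(\partial_1 d\times\partial_2 d)$ combined with $\partial_1 d\times\partial_2 d = (2\eta k_1, 2\eta k_2, 1)$ produces
\[
 d\cdot(\partial_1 d\times\partial_2 d) = 2\eta|k|^2 + (m-\eta|k|^2) = m + \eta|k|^2.
\]
Passing to polar coordinates and setting $u=|k|^2$ reduces \eqref{eq:pontryaginCh} to the one-dimensional integral
\[
 \Ch{2}[\hat P] = \dfrac{1}{4}\dint_0^\infty \dfrac{m+\eta u}{\big(u+(m-\eta u)^2\big)^{3/2}}\,du.
\]

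The key computational step is that this integrand is an exact derivative:
\[
 \dfrac{d}{du}\left[\dfrac{2\eta u-2m}{\sqrt{u+(m-\eta u)^2}}\right] = \dfrac{m+\eta u}{\big(u+(m-\eta u)^2\big)^{3/2}},
\]
which is verified by direct differentiation (the coefficients $\alpha=2\eta$, $\beta=-2m$ are uniquely determined by requiring that the numerator of $(Au+B)/\sqrt{u+(m-\eta u)^2}$ differentiates to a multiple of $m+\eta u$). Evaluating at the endpoints gives
\[
F(0)=\dfrac{-2m}{|m|}=-2\sgn{m},\qquad F(\infty)=\lim_{u\to\infty}\dfrac{2\eta u}{|\eta|u}=2\sgn{\eta},
\]
where the finiteness at infinity uses crucially that $\eta\neq 0$ (consistent with the remark that the bulk index is integer-valued only after regularization). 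Therefore
\[
 \Ch{2}[\hat P] = \tfrac14\big(F(\infty)-F(0)\big) = \tfrac12\big(\sgn{m}+\sgn{\eta}\big),
\]
which is exactly \eqref{eq:indexTr4}.

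The main obstacle is purely bookkeeping: the various factors of $i$, $1/(2\pi)$, and the Pauli-matrix sign conventions must be tracked so that the final constant in \eqref{eq:pontryaginCh} is $1/(4\pi)$ and not its negative — this matches the Chern-number sign convention fixed right after \eqref{eq:chern}. Geometrically, the answer has a transparent interpretation as the degree of the extended map $\hat d\colon S^2\to S^2$ obtained by one-point compactification: the contributions $\sgn{m}$ and $\sgn{\eta}$ are the local (half-)degrees at the two preimages $k=0$ and $k=\infty$ of the north/south poles, and the regularization $\eta\neq 0$ is precisely what allows $\hat d$ to admit such a continuous extension.
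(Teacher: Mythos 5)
Your proof is correct and follows essentially the same route as the paper: you reduce the Chern number to the same explicit integral $\frac14\int_0^\infty(m+\eta u)\,(u+(m-\eta u)^2)^{-3/2}\,du$, arriving there via the Pontryagin cross-product form of $\tr\,\hat P[\partial_1\hat P,\partial_2\hat P]$ where the paper uses Pauli-commutator identities, a cosmetic difference. The one genuine addition is the explicit antiderivative $F(u)=(2\eta u-2m)/\sqrt{u+(m-\eta u)^2}$ — a useful step that the paper's proof delegates to references — and your closing degree-of-map interpretation matches Remark~\ref{rem:half} immediately after the proposition.
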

\begin{proof} This is a standard result; see, e.g., \cite{Fruchart2013779}. We propose a short derivation for completeness.
  We start from \eqref{eq:indexTr3} and recall that $\hat P=\frac12(I-\frac{\hat H}{|\hat H|})$. First,
  \[
  	-2\partial_j \hat P = \partial_j \dfrac{1}{|\hat H|} \ \hat H + \dfrac{1}{|\hat H|} \partial_j \hat H.
  \]
  Since $\tr\ A[B,C]=\tr B[C,A]$, we observe that $[\hat P,\partial_j\hat P]=\frac{1}{4|\hat H|^2}[\hat H,\partial_j \hat H]$ so that we need to compute
  \[
   \Ch{2}[\hat P] = \dfrac{i}{2\pi} \dint_{\Rm^2} \dfrac{-1}{8|\hat H|^3} \tr \ \hat H[\partial_1\hat H,\partial_2 \hat H] dk.
  \]
  With $\hat H=k\cdot\sigma+(m-\eta |k|^2)\sigma_3$, we observe that $\partial_j \hat H=\sigma_j-2\eta k_j\sigma_3$ so that
  \[
    [\partial_1 \hat H,\partial_2 \hat H] = [\sigma_1,\sigma_2]-2\eta k_1[\sigma_3,\sigma_2]-2\eta k_2[\sigma_1,\sigma_3],
  \]
  with therefore, using identities such as $[\sigma_1,\sigma_2]=2i\sigma_3$,
  \[
    \tr  \ \hat H[\partial_1\hat H,\partial_2 \hat H] = 4i\big( (m-\eta |k|^2) + 2\eta |k|^2\big) = 4i (m+\eta |k|^2).
  \]
  As a consequence,
  \[
  	\Ch{2}[\hat P] = \dfrac{1}{4\pi} \dint_{\Rm^2} \dfrac{m+\eta |k|^2}{(|k|^2+(m-\eta|k|^2)^2)^{\frac32}} dk =\dfrac{1}{2}\dint_0^\infty \dfrac{ (m+\eta r^2)r}{(r^2+(m-\eta r^2)^2)^{\frac32}}dr.
  \]
  We verify that the above integral gives the right-hand side in \eqref{eq:indexTr4}; see also \cite[Chapter 8]{bernevig2013topological} or \cite[(27)-(31)]{lu2010massive} and the proof of Theorem \ref{thm:bulkeven} for a more `topological' calculation of the Chern number.
\end{proof}

\begin{remark}\label{rem:half}
 The calculations of the preceding proposition also apply to the case $\eta=0$, and the result is a number $\Ch{2}[\hat P_{|\eta=0}]=\frac 12 \sgn{m}$, which is not an integer and therefore cannot be given any interpretation as a Fredholm index. This shows the necessity to regularize the projection $P$ in order to be able to construct a Fredholm operator and assign a topological invariant to the bulk Hamiltonian. Note that the kernel of $P$ (when $\eta=0$) is too singular for $(P-Q)^3$, or $(P-Q)^{2n+1}$ for any $n$, to be trace-class. 
 
 Geometrically, $\hat P$ is represented by the map from $\Rm^2$ to $\Sm^2$ given by 
 \[
 k\mapsto h(k) = \dfrac{(k_1,k_2,m-\eta|k|^2)}{|(k_1,k_2,m-\eta|k|^2)|} = 
    \dfrac{(k_1,k_2,m-\eta|k|^2)}{(|k|^2+(m-\eta|k|^2)^2)^{\frac12}}.
 \]
 Assume $m>0$. When $\eta=0$, such a map covers only the upper half sphere. When $\eta>0$, the whole sphere is covered by the map. When $\eta<0$, at most half of the sphere is covered as well. This standard topological argument \cite{Fruchart2013779} will be used more directly in the higher dimensional case, where we will show in more generality that $\deg{h}=\frac12(\sgn{m}+\sgn{\eta})$.

See also \cite{bal2019topological} for a notion of bulk-difference invariant that bypasses the regularization.
 \end{remark}

\subsection{Interface index}
\label{sec:interface2d}
In this section, we replace the coordinates $(x_1,x_2)$ by the more convenient $(x,y)$. The dual variable to the coordinate $y$ will be denoted by $\zeta$.

Let us now consider the interface Hamiltonian $H[m(x)]$ with $m(x)$ converging to $m_\pm$ as $x\to\pm\infty$ with $|m_{\pm}|\geq m_0$ for some $m_0>0$. Unlike its bulk counterparts (corresponding to $H[m_\pm]$), the interface Hamiltonian does not have a spectral gap and this is what generates its unusual transport properties along the interface $x=0$ parametrized by $y\in\Rm$. The topology of the interface Hamiltonian is captured by the integer
\begin{equation}\label{eq:eps}
 \eps := \dfrac12 \Big( \sgn{m_+}  - \sgn{m_-} \Big).
\end{equation}

We define $\Pr=\Pr(y):=\chi(y>0)$ the projection onto the positive part of the $y$ axis. The projection $\Pr$ plays a similar role in odd dimension $d-1=1$ to that of the unitary $\Un(x)$ in even dimension $d=2$; see Appendix \ref{sec:FH}.

We now define the following functional of $H[m(x)]$. Let $\chi_\delta(E)$ be a smooth function from $\Rm$ to $\Rm$ such that $\chi_\delta'(E)$ is supported in $(-\delta,\delta)$ while $\chi_\delta(-\delta)=0$ and $\chi_\delta(\delta)=1$. The parameters $\eta$ and  $\delta>0$ are chosen such that
\begin{equation}\label{eq:delta}
0<\delta<(1+2|\eta|m_0)^{-\frac12}m_0\leq m_0,\qquad  0\leq |\eta|<\frac1{2m_0}.
\end{equation}
We define the unitary $\Uin(E)=e^{i 2\pi \chi_\delta(E)}$ and by spectral calculus 
\begin{equation}\label{eq:UH}
\Uin:=\Uin(H[m(x)]) = e^{i2\pi \chi_\delta(H[m(x)])}.
\end{equation}
This a unitary operator on $L^2(\Rm^2)\otimes\Cm^2$. Then, we have:
\begin{theorem}
\label{thm:interface2d}
  Let $\Pr$ and $\Uin$ be defined as above and $\eps$ in \eqref{eq:eps}. Then $\Pr \Uin\Pr_{|{\rm Ran}\Pr}$ is a Fredholm operator and 
  \begin{equation}
  \label{eq:interface2d}
    -\Ind{\Pr \Uin \Pr_{|{\rm Ran \Pr}} } = -\Ind { \Pr \Uin \Pr + I-\Pr } = \eps.
\end{equation}
\end{theorem}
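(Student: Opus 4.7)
The strategy parallels Theorem \ref{thm:bulk2d}, with the roles of position and energy interchanged consistent with the parity change from $d=2$ to $d-1=1$: the unitary $\Un(x)$ is replaced by the unitary $\Uin$ constructed from spectral calculus on $H[m(x)]$, and the spectral projection $P$ onto negative energies is replaced by the half-space projection $\Pr$ in the physical variable $y$. The plan is to proceed in three steps: (i) establish Fredholmness via Schatten class estimates on $[\Pr,\Uin]$; (ii) express the index as a trace of commutators; (iii) compute this trace and identify it with $\eps$ by homotopy.

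First, I would establish the Fredholm property by showing that $[\Pr,\Uin]$ lies in an appropriate Schatten ideal. Using the Helffer-Sj\"ostrand representation
\begin{equation*}
\chi_\delta(H[m(x)]) = \dfrac{1}{\pi}\dint_{\Cm}\bar\partial\tilde\chi_\delta(z)(z-H[m(x)])^{-1}\,dL(z),
\end{equation*}
with $\tilde\chi_\delta$ an almost analytic extension supported in a thin neighborhood of the real axis, and expanding $\Uin = e^{2\pi i\chi_\delta(H[m(x)])}$ in a norm-convergent series, the commutator $[\Pr,\Uin]$ reduces to sandwich expressions involving $(z-H)^{-1}[\Pr,H](z-H)^{-1}$. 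Since $[\Pr,H] = [\Pr,\tfrac{1}{i}\partial_y\sigma_2 + \eta\partial_y^2\sigma_3]$ is, up to bounded matrix factors, a distribution supported at $y=0$, and since the bulk operators $H[m_\pm]$ have spectral gaps strictly containing the support of $\chi_\delta'$ by the choice of $\delta$ in \eqref{eq:delta}, Combes-Thomas-type resolvent estimates yield rapid off-diagonal decay of the resulting integral kernel in both the $(x,y)$ and $(x',y')$ variables. Applying a localized analog of Lemma \ref{lem:estip} then places the commutator and its sufficiently high powers in trace ideals, which via the ${\cal I}_p$ criterion recalled in Appendix \ref{sec:FH} ensures $\Pr\Uin\Pr_{|{\rm Ran}\Pr}$ is Fredholm and identifies the index with $I-\Pr-\Pr\Uin^*\Pr\Uin\Pr$ being trace class.

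Second, to evaluate the index I would appeal to the general trace formula for Fredholm indices of half-space compressions of unitaries from Appendix \ref{sec:FH}, which represents $-\Ind{\Pr\Uin\Pr_{|{\rm Ran}\Pr}}$ as a trace of an odd-degree polynomial in $[\Pr,\Uin]$ and $[\Pr,\Uin^*]$ analogous to \eqref{eq:indexQ}. Taking the partial Fourier transform in $y$, so that $H[m(x)]$ becomes the family $\hat H[m(x)](\zeta)$ of one-dimensional operators in $x$ given by \eqref{eq:HFm}, the trace becomes an integral over $\zeta\in\Rm$ of a winding-number density for the one-parameter family of unitaries $\zeta\mapsto e^{2\pi i\chi_\delta(\hat H[m(x)](\zeta))}$, which by spectral flow equals the net number of eigenvalue branches of $\hat H[m(x)](\zeta)$ that cross the gap $(-\delta,\delta)$ from below to above as $\zeta$ runs over $\Rm$. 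Homotopy invariance of this integer under continuous changes of $m(x)$ (keeping $m_\pm$ fixed) and of $\eta$ (subject to the smallness condition) then lets me reduce to a conveniently solvable case, e.g.\ $\eta=0$ with a sharp transition, for which a direct ODE computation produces a single Jackiw-Rebbi type branch $E(\zeta)=\pm\zeta$ whose existence and slope are determined by whether $m_+$ and $m_-$ have opposite signs, so that the net spectral flow equals precisely $\eps$.

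The hardest step is expected to be the first one: because $\Pr$ is the sharp characteristic function of a half-plane, $[\Pr,H]$ is a singular distribution supported at $y=0$, and securing enough off-diagonal decay of $(z-H[m(x)])^{-1}[\Pr,H](z-H[m(x)])^{-1}$ to conclude Schatten class membership requires Combes-Thomas-type estimates that are uniform in $x$ even though $H[m(x)]$ is not translation invariant in the $x$ direction. The spectral gap assumption on the asymptotic bulk operators $H[m_\pm]$ together with the continuous transition of $m(x)$ is exactly what allows such uniform resolvent estimates. Once this is secured, the topological computation in the second and third steps is a standard adaptation of spectral-flow arguments underlying bulk-edge correspondence in the non-commutative setting of \cite{prodan2016bulk}.
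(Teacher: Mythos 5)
Your high-level strategy—establish Fredholmness via trace-ideal estimates on $[\Pr,\Uin]$, express the index as a trace, pass to the partial Fourier transform in $y$ and identify the index with the winding/spectral-flow of the edge branch, then homotope $m(x)$ and $\eta$ to a solvable case—is exactly the strategy of the paper. The difference is entirely in step (i), and that is where your sketch has a genuine gap.

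You propose to get the trace-ideal estimates by Helffer–Sj\"ostrand representation plus Combes–Thomas resolvent decay, observing that ``the bulk operators $H[m_\pm]$ have spectral gaps strictly containing the support of $\chi_\delta'$,'' and concluding ``rapid off-diagonal decay of the resulting integral kernel in both $(x,y)$ and $(x',y')$.'' But the resolvent you are integrating is $(z-H[m(x)])^{-1}$, not $(z-H[m_\pm])^{-1}$, and $H[m(x)]$ has no spectral gap: the interface branch $E(\zeta)\approx\eps\zeta$ sweeps across $(-\delta,\delta)$, so the spectrum of $H[m(x)]$ contains exactly the energies you are integrating over. A Combes–Thomas bound in the $x$-direction survives (the essential spectrum of the transverse operators $\fa^*\fa$, $\fa\fa^*$ stays above $\delta^2$), but in the direction $y$ \emph{along} the interface the resolvent kernel decays only at rate $\sim|\Im z|$, which degenerates precisely as $z$ approaches the real axis inside the support of $\chi_\delta'$. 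So your assertion of rapid two-sided off-diagonal decay is false in $y$, and the estimate you want does not follow as stated; one would have to carefully balance the degraded resolvent decay against the high-order vanishing of $\bar\partial\tilde\chi_\delta$ near the real axis, which your sketch does not do. The paper avoids this entirely by first establishing an explicit spectral decomposition of $\hat H(\zeta)$ inside the gap (Lemma~\ref{lem:astara}): only finitely many real-analytic branches $E_{j,\pm}(\zeta)$ with rapidly $x$-decaying eigenvectors contribute on $|\zeta|<\delta$, so the kernel of $\Win(H)$ is the inverse Fourier transform (in $\zeta$) of a smooth, compactly supported function and is therefore rapidly decaying in $|y-y'|$ \emph{by Paley–Wiener}, not by resolvent decay; multiplying by $(\chi(y)-\chi(y'))$ then gives Hilbert–Schmidt directly, and trace-class in the monotone case via the approximation argument of Proposition~\ref{prop:interfacetrace}. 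Your step (i) would need to be reworked either along these spectral-decomposition lines or with a substantially more careful Helffer–Sj\"ostrand estimate that accounts for the absence of a gap. Steps (ii)–(iii) of your plan match the paper's Propositions~\ref{prop:contm} and~\ref{prop:interfacetrace}: homotopy invariance under $L^2$-continuous deformations of $m$, reduction to a monotone profile with a single Jackiw–Rebbi branch $E(\zeta)=\eps\zeta$, and evaluation of the winding number of $\Win$.
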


 The above index can be related to a physical quantity similar to a current as will be shown in section \ref{sec:current}.
To prove the above theorem, we first show that $[\Pr,\Uin]$ is Hilbert-Schmidt. This implies that $\Pr \Uin \Pr_{|{\rm Ran}\Pr}$ is a Fredholm operator; see Proposition \ref{prop:fedosov} in Appendix \ref{sec:FH}. 

Let ${\cal F}={\cal F}_{y\to\zeta}$ be the partial Fourier transform with respect to the second variable $y$. Upon permuting the basis $(\sigma_{1,2,3})\to(\sigma_{2,3,1})$, which does not change the commutation relations of the Pauli matrices, we represent $H[m(x)]$ as the direct sum
\begin{equation}
	\label{eq:directsum2d}
	H[m(x)] = {\cal F}^{-1} \dint_{\Rm}^{\oplus} \hat H(\zeta) d\zeta \ {\cal F},
\end{equation}
where
\[
  \hat H(\zeta) = \zeta\sigma_3 + \fa \sigma_-+\fa^*\sigma_+,\quad \sigma_\pm = \frac12(\sigma_1\pm i\sigma_2),\quad \fa = \partial_x + m(x) + \eta(\partial^2_x-\zeta^2).
\]
This is a convenient basis to represent $H$ since  $\partial_x$ and $m(x)$ appear in the same entries of the matrix $H$; see also \cite{B-EdgeStates-2018,Fruchart2013779}.
Note that $\fa^*$ and $\fa$ resemble the standard creation and annihilation operators of the quantum oscillator.
Define
\begin{equation}
	\label{eq:WH}
	\Win = \Win(H) = \Win(H[m(x)]) := \Uin-I,	
\end{equation}
such that $\Win(E)=\Uin(E)-1$ is compactly supported in $(-m_0,m_0)$. By functional calculus, we have
\[
	 \Win = {\cal F}^{-1} \dint_{\Rm}^{\oplus} \Win(\hat H(\zeta)) d\zeta \ {\cal F}.
\] 
To further analyze $\Uin$, we need to obtain a more explicit spectral decomposition of the operator $\Win$.
We write the above sum as an integral of operators concentrating in the vicinity of the interface $x=0$ as follows.  
\begin{lemma}
\label{lem:astara}
	Let $\eta$ and $\delta$ as in \eqref{eq:delta}. Then the part of the spectrum of $\fa^*\fa$ and $\fa \fa^*$ inside $[0,\delta^2)$ is discrete and composed of a finite number of eigenvalues $0<\lambda_j(\zeta)<\delta^2$ plus a one dimensional contribution at $\lambda=0$ when $|\epsilon|=1$. The corresponding eigenvalues of $\hat H(\zeta)$ are given by 
	\begin{equation}\label{eq:Ej}
		E_{j,\pm}(\zeta) = \pm (\lambda_j(\zeta) + \zeta^2)^{\frac12}.
	\end{equation}
	Associated to the eigenvalues are finite-rank projectors $\Pi_{j,\pm}(\zeta)$, which are sums of rank-one projectors of the form $\psi_{j,\pm}(\zeta,x)\psi_{j,\pm}(\zeta,x)^*$ that are rapidly decaying as $|x|\to\infty$. Moreover, the above eigenvalues and projectors are real-analytic functions of $\zeta\in (-\delta,\delta)$.
	
	For $m(x)$ smooth on $\Rm$ and equal to $m_\pm$ for $|x|$ large enough, there is a branch of simple eigenvalues $E(\zeta)=\epsilon\zeta$ associated to a non-trivial kernel of $\fa$ (when $\epsilon=1$) or $\fa^*$ (when $\eps=-1$) for all $|\zeta|<\delta$. The corresponding eigenvectors $\psi_0(x,\zeta)$ are rapidly (exponentially) decaying in $x$ uniformly in $\zeta$ and real-analytic in $\zeta$.\end{lemma}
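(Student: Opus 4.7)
My plan is to reduce the spectral analysis of $\hat H(\zeta)$ to that of the scalar operators $\fa^*\fa$ and $\fa\fa^*$ via a squaring identity, then to locate their essential spectra from the asymptotically constant coefficients of $\fa$, and finally to treat the zero mode as a boundary-value problem for a linear ODE on $\Rm$. Using $\{\sigma_3,\sigma_\pm\}=0$, $\sigma_\pm^2=0$, $\sigma_+\sigma_-=\mathrm{diag}(1,0)$ and $\sigma_-\sigma_+=\mathrm{diag}(0,1)$, the cross terms in $\hat H(\zeta)^2$ all cancel and one obtains
\[
\hat H(\zeta)^2 \;=\; \zeta^2 I_2 + \begin{pmatrix}\fa^*\fa & 0 \\ 0 & \fa\fa^*\end{pmatrix}.
\]
The two diagonal blocks are non-negative self-adjoint elliptic operators on $L^2(\Rm)$, and the relation $E_{j,\pm}(\zeta)^2=\zeta^2+\lambda_j(\zeta)$ follows by spectral calculus, the two signs corresponding to the two orthogonal spinor slots on which the square acts.

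For the essential spectrum I would invoke Persson's characterization (equivalently a Weyl-sequence argument with cutoffs near $\pm\infty$). Since all coefficients of $\fa$ are constant in $x$ except $m(x)$, which converges to $m_\pm$ at $\pm\infty$, one has $\sigma_{\rm ess}(\fa^*\fa)=\sigma(\fa_+^*\fa_+)\cup\sigma(\fa_-^*\fa_-)$, where $\fa_\pm$ is the Fourier multiplier $ik+m_\pm-\eta(k^2+\zeta^2)$. An elementary one-variable minimization parallel to the footnote preceding Theorem \ref{thm:bulk2d} gives
\[
\inf_{k\in\Rm}\big(k^2+(m_\pm-\eta(k^2+\zeta^2))^2\big) \;>\; \delta^2
\]
under \eqref{eq:delta} for all $|\zeta|<\delta$, and the same bound holds for $\fa\fa^*$. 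Consequently the spectrum of $\fa^*\fa$ and $\fa\fa^*$ inside $[0,\delta^2)$ is discrete with finitely many eigenvalues of finite multiplicity.

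For the zero eigenvalue, the equation $\fa\psi=0$ reads $\eta\partial_x^2\psi+\partial_x\psi+(m(x)-\eta\zeta^2)\psi=0$, a second-order linear ODE whose asymptotic characteristic roots solve $\eta r^2+r+(m_\pm-\eta\zeta^2)=0$ and, for $|\zeta|<\delta$, have non-zero real parts. A standard shooting/matching argument counts $L^2$ solutions: when $\eps=1$ the decaying directions at $+\infty$ and at $-\infty$ intersect in a one-dimensional line, yielding $\dim\ker\fa=1$ while $\ker\fa^*$ is trivial; for $\eps=-1$ the roles are exchanged; for $\eps=0$ both kernels are trivial. Inserting the normalized generator $\psi_0$ into the upper (resp.\ lower) spinor slot and using $\sigma_3=\mathrm{diag}(1,-1)$ together with $\fa\psi_0=0$ (resp.\ $\fa^*\psi_0=0$), one checks that $\hat H(\zeta)$ acts on it by $\eps\zeta$, producing the claimed simple branch.

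Real-analyticity of the eigenvalues and projectors on $(-\delta,\delta)$ is automatic from Kato's perturbation theory: $\hat H(\zeta)$ is polynomial in $\zeta$ on the common domain $H^2(\Rm)\otimes\Cm^2$, hence it forms a self-adjoint analytic family of type (A), and each eigenvalue that remains isolated, together with its spectral projection, depends real-analytically on $\zeta$. Exponential decay of the eigenfunctions follows from the ODE structure: on each half-line the eigenfunction solves a constant-coefficient equation whose $L^2$ solutions are built from exponentially decaying characteristic modes, with rate bounded below uniformly in $|\zeta|<\delta$ and $\lambda\in[0,\delta^2)$. The most delicate step is the essential-spectrum estimate, for which the quantitative constraint \eqref{eq:delta} relating $\delta$, $\eta$, and $m_0$ is exactly what is required.
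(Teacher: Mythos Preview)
Your proof is correct and follows the same overall architecture as the paper: the block-diagonal squaring identity $\hat H(\zeta)^2=\zeta^2 I+\mathrm{diag}(\fa^*\fa,\fa\fa^*)$, discreteness of the low-lying spectrum from the asymptotically constant coefficients, and an ODE shooting argument for the one-dimensional zero mode. The two proofs differ only in how the essential-spectrum bound is obtained. The paper expands $\fa\fa^*$ algebraically as
\[
\fa\fa^* = m_0^2 - 2\eta\zeta^2 m_0\,\mathrm{sign}(x) - \partial_x(1-2\eta m)\partial_x + \eta^2(\partial_x^2-\zeta^2)^2 + v(x)
\]
with $v$ compactly supported, and reads off the lower bound $m_0^2-2|\eta|\delta^2 m_0\geq\delta^2$ directly from the non-negative pieces; you instead invoke Persson/Weyl to reduce to the constant-coefficient symbols $k^2+(m_\pm-\eta(k^2+\zeta^2))^2$ and minimize. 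Both arguments land on the same quantitative inequality encoded in \eqref{eq:delta}. Your appeal to Kato's type-(A) analytic perturbation theory for the real-analyticity of eigenvalues and projectors is cleaner than the paper's one-line remark about analytic coefficients, and your dimension count for $\ker\fa$ via the intersection of stable subspaces at $\pm\infty$ is exactly what the paper carries out explicitly with its $2\times 2$ transition matrix $B$; the paper's version is just more concrete.
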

\begin{proof}
We recall that $\fa=\partial_x + m(x) + \eta(\partial^2_x-\zeta^2)$ and $\fa^*=-\partial_x+m(x)+\eta(\partial^2_x-\zeta^2)$. The cases $\eta=0$ and $\eta\not=0$ are treated differently. Note also that 
\[
  \hat H^2(\zeta) = {\rm Diag} (\zeta^2 + \fa^*\fa,\zeta^2 + \fa\fa^*).
\]
The spectrum of $\hat H^2$ is therefore given by $\zeta^2$ plus the shared positive spectrum of $\fa^*\fa$ and $\fa\fa^*$ as well as the null spectrum of $\fa$ or $\fa^*$. Our objective is to show that while $\fa\fa^*$ may possess essential spectrum, the latter does not intersect $[0,\delta^2]$. We write
\[\begin{array}{rcl}
 \fa\fa^* &=&  -\partial^2_x + (m(x)+\eta(\partial^2_x-\zeta^2))^2 + m'(x) \\
   &=& -\partial^2_x + m^2(x)+ \eta^2(\partial^2_x-\zeta^2)^2 + \eta m\partial^2_x + \eta \partial^2_x m -2m(x)\eta\zeta^2  + m'(x)\\
   &=& -\partial^2_x + m^2(x) + \eta^2(\partial^2_x-\zeta^2)^2 +2\eta\partial_x m \partial_x +\eta m''(x) -2m(x)\eta\zeta^2  + m'(x)\\
   &=& m_0^2 -2\eta\zeta^2 m_0\sgn{x}  -\partial_x (1-2\eta m) \partial_x + \eta^2(\partial^2_x-\zeta^2)^2 + v(x),
 \end{array}
\]
 for some smooth, compactly supported function $v(x)$. This is a sum of positive operators provided $\eta$ is sufficiently small for $m$ bounded and $|\zeta|<\delta$. Since $v$ is a relatively compact perturbation (for $\eta^2\partial^4_x$ when $\eta\not=0$ and for $\partial^2_x$ when $\eta=0$), it can only generate discrete spectrum. The essential spectrum, if any, is bounded below by $m_0^2-2|\eta|\zeta^2m_0\geq m_0^2-2|\eta| \delta^2m_0\geq\delta^2$ by construction of $\delta$ in \eqref{eq:delta}. The spectrum of $\hat H$ below $\delta$ in absolute value is therefore necessarily discrete (point spectrum with finite dimensional eigenspaces) for the same reasons as above. Such a discrete spectrum is characterized by
 \[
   \zeta\psi_1 + \fa^*\psi_2 = E \psi_1,\qquad \fa\psi_1-\zeta\psi_2 = E\psi_2.
 \]
From the above expression of $\hat H^2$, $E^2\geq\zeta^2$.
 Let us assume that $E^2>\zeta^2$. Then we find that $\fa\fa^*\psi_2=(E^2-\zeta^2)\psi_2$ and $\psi_1=(\zeta-E)^{-1}\fa^*\psi_2$. For each $0<\lambda_j(\fa\fa^*)=\lambda_j(\fa^*\fa)$, we therefore have two eigenvalues of $\hat H$ given by \eqref{eq:Ej}. The eigenvectors are then given by $\psi=(\psi_1,\psi_2)^t$. Such expressions are real-analytic in $\zeta$ as $\fa$ has real-analytic coefficients in $\zeta$. Outside of a compact interval, $\fa^*\fa$ and $\fa\fa^*$ have constant coefficients and one readily verifies that any eigenvector associated with discrete spectrum has to decay exponentially as $|x|\to\infty$. We leave the details to the reader.
 
 It remains to analyze the kernels of $\fa$ and $\fa^*$. Let us now assume that $E^2=\zeta^2>0$. Then either, $E=\zeta$ and $\fa^*\psi_2=0$ while $2E\psi_2=\fa\psi_1$ or $E=-\zeta$ and $\fa\psi_1=0$ while $2E\psi_1=\fa^*\psi_2$. The latter case implies $\fa\fa^*\psi_2=0$, hence $\fa^*\psi_2=0$ and $\psi_1=0$. For the same reasons, the former case implies $a\psi_1=\psi_2=0$. We now distinguish three cases. We obtain that the kernel of $\fa^*$ is trivial while that of $\fa$ is not when $\eps=\frac12(\sgn{m_+}-\sgn{m_-})=1$, that the kernel of $\fa$ is trivial while that of $\fa^*$ is not when $\eps=-1$, and that both kernels are trivial when $\eps=0$. The above shows that $E(\zeta)=\eps\zeta$ when $|\eps|=1$.
 
We focus on the case $\eps=1$ so that $m_-<0<m_+$ and show that the kernel of $\fa^*$ is trivial while that of $\fa$ is not; the other cases are treated similarly. This holds independent of $\eta$ sufficiently small, including $\eta=0$. Note that this also solves the case $E=\zeta=0$.
 
Let $\eta\not=0$ and look for $\fa\psi=0$. For $|x|$ large, say $|x|>x_0$, we have $\fa= \partial_x + \eta(\partial^2_x-\zeta^2) + m_\pm$. For $x>x_0$, this is a second-order equation with solutions $e^{-\lambda x}$ given by $\eta\lambda^2-\lambda+m_+-\eta\zeta^2=0$. For $\eta>0$, there are two positive solutions, while for $\eta<0$, there is only one positive solution. For $x<-x_0$, we look for solutions $e^{\lambda x}$ given by $\eta\lambda^2+\lambda+m_--\eta\zeta^2=0$, with one positive solution when $\eta>0$ and two when $\eta<0$. 

On the interval $(-x_0,x_0)$, the second-order ordinary differential equation (ODE) also admits two independent solutions. Let $\eta>0$ so that $\psi=a_1e^{-\lambda_1 x}+a_2e^{-\lambda_2 x}$ for $x>x_0$. We solve the ODE on $(-x_0,x_0)$ and obtain an invertible (as one easily verifies) $2\times2$ transition matrix $B$ mapping $(a_1,a_2)$ to $(b_1,b_2)$ with $\psi$ for $x<-x_0$ given by $b_1e^{-\lambda_3 x} + b_2 e^{\lambda_4 x}$, with $\lambda_3,\lambda_4>0$. Any normalized vector requires $b_2=0$ and this imposes a linear constraint on $(a_1,a_2)$. With that linear constraint, we have constructed a unique, up to normalization, solution $\psi$ of $\fa\psi=0$. When $\eta<0$, a similar mechanism allows us to construct a unique normalized solution as well. When $\eta=0$, a simpler procedure, since $\fa$ is first-order, leads to the same conclusion.

One now repeats the procedure for $\fa^*$ and obtains that there is always a side $x>x_0$ or $x<-x_0$ such that no normalized solution exists. When $\eta=0$, we look for a solution of the form $\partial_x\psi=m_+\psi$ for $x>0$, which is not normalizable unless trivial.

The solutions constructed above converge exponentially to $0$ as $|x|\to\infty$, and do so uniformly in $\zeta$ for $|\zeta|<\delta$. 
This concludes the proof of the lemma. 
\end{proof}
Since $\Win(E)$ is supported in $(-\delta,\delta)$, we deduce from the preceding lemma that
\begin{equation}
	\label{eq:Wspectral}
	\Win = {\cal F}^{-1} \dint_{(-\delta,\delta)}^{\oplus} \dsum_{j,\pm} \Win(E_{j,\pm}(\zeta)) \Pi_{j,\pm}(\zeta) d\zeta \ {\cal F},
\end{equation}
where $\Pi_{j,\pm}$ are the above finite rank projectors for $\hat H$ associated to $E_{j,\pm}$.
\begin{proposition}
  Let $\Uin$ and $\Win$ be defined as above. Then $\Pr \Uin \Pr$ is a Fredholm operator on ${\rm Ran}\Pr$. More precisely, $[\Pr,\Win]=[\Pr,\Uin-I]=[\Pr,\Uin]$ is a Hilbert-Schmidt operator.
\end{proposition}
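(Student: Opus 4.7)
The plan is to exhibit an explicit integral kernel for $\Win$ by exploiting the spectral decomposition \eqref{eq:Wspectral}, show that this kernel decays rapidly in $y-y'$ and exponentially in $x,x'$, and conclude that $[\Pr,\Win]$ (which equals $[\Pr,\Uin]$ since $\Pr$ commutes with $I$) is Hilbert--Schmidt. Fredholmness of $\Pr\Uin\Pr_{|{\rm Ran}\Pr}$ will then follow from Proposition \ref{prop:fedosov}.

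The key observation is that $\Win(E)=e^{2\pi i\chi_\delta(E)}-1$ vanishes for $|E|\geq \delta$, so only the branches with $|E_{j,\pm}(\zeta)|<\delta$ contribute to \eqref{eq:Wspectral}; by Lemma \ref{lem:astara} these lie in $|\zeta|<\delta$ and form a finite collection uniformly in $\zeta$ on $[-\delta+\epsilon,\delta-\epsilon]$. Decomposing each $\Pi_{j,\pm}(\zeta)$ as a sum of rank-one projectors built from its eigenfunctions $\psi_{j,\pm}(\cdot,\zeta)$ and inverting the partial Fourier transform in $y$ gives the integral kernel
\begin{equation*}
K(x,y;x',y')=\frac{1}{2\pi}\int_{-\delta}^{\delta} f(x,x',\zeta)\,e^{i\zeta(y-y')}\,d\zeta,\qquad f(x,x',\zeta):=\dsum_{j,\pm}\Win(E_{j,\pm}(\zeta))\,\psi_{j,\pm}(x,\zeta)\,\psi_{j,\pm}^*(x',\zeta).
\end{equation*}
All ingredients $E_{j,\pm}$, $\psi_{j,\pm}$ and $\Win\circ E_{j,\pm}$ are real-analytic in $\zeta$ on $(-\delta,\delta)$, and since $\Win$ together with all its derivatives vanishes at $\pm\delta$, the function $f(x,x',\cdot)$ extends by zero to a compactly supported smooth function on $\Rm$. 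Combined with the uniform exponential decay of the $\psi_{j,\pm}$ in $x$ (obtained from the constant-coefficient structure of $\fa,\fa^*$ at $\pm\infty$), this yields the pointwise bounds $|\partial_\zeta^N f(x,x',\zeta)|\leq C_N\, e^{-c(|x|+|x'|)}$ for every $N\geq 0$.

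Integrating by parts $N$ times in $\zeta$ in the formula for $K$ then gives
\begin{equation*}
|K(x,y;x',y')|\leq C_N\, e^{-c(|x|+|x'|)}\,\langle y-y'\rangle^{-N},\qquad N\in\Nm.
\end{equation*}
The kernel of $[\Pr,\Win]$ is $K(x,y;x',y')\big(\chi(y>0)-\chi(y'>0)\big)$, which is supported where $y$ and $y'$ have opposite signs; on that set $|y-y'|=|y|+|y'|$. Taking $N=2$,
\begin{equation*}
\|[\Pr,\Win]\|_{HS}^2 \leq 2C_2^2 \Big(\dint_{\Rm} e^{-2c|x|}dx\Big)^2 \dint_{y>0,\, y'<0} \langle |y|+|y'|\rangle^{-4}\,dy\,dy' < \infty,
\end{equation*}
which establishes the Hilbert--Schmidt property. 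Proposition \ref{prop:fedosov} then delivers the Fredholm statement.

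The main technical point is the uniformity: one must show that $f$ and all its $\zeta$-derivatives are genuinely bounded by an integrable envelope in $(x,x')$ on the whole interval $\zeta\in(-\delta,\delta)$, not merely on compact subintervals. Away from $\pm\delta$ this follows from real-analyticity and the ODE analysis in the proof of Lemma \ref{lem:astara}; continuity up to the endpoints is handled by the fact that $\Win(E_{j,\pm}(\zeta))$ and all its $\zeta$-derivatives vanish there because none of the finitely many $E_{j,\pm}(\zeta)$ crosses $\pm\delta$ on $(-\delta,\delta)$, a consequence of the spectral gap estimate $E_{j,\pm}^2=\lambda_j(\zeta)+\zeta^2$ together with the choice \eqref{eq:delta} of $\delta$.
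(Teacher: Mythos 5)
Your proof is correct and follows essentially the same strategy as the paper: reduce to the spectral decomposition \eqref{eq:Wspectral} with finitely many branches in $|\zeta|<\delta$, exploit smoothness in $\zeta$ together with the flatness of $\Win$ at $\pm\delta$ to integrate by parts and gain decay in $y-y'$, then use the support structure of $\chi(y)-\chi(y')$ to conclude Hilbert--Schmidt.

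The one place where you diverge is worth noting. You establish a pointwise bound $|K(x,y;x',y')|\lesssim e^{-c(|x|+|x'|)}\langle y-y'\rangle^{-N}$ and integrate it directly; this forces you to carry the uniform exponential decay of $\partial_\zeta^n\psi_{j,\pm}(\cdot,\zeta)$ in $x$, uniformly over the support of $\Win(E_{j,\pm}(\zeta))$, through the whole argument, and you correctly flag this as the main technical burden (handled via the flatness of $\Win$ at $\pm\delta$ and the lower bound on the distance to the essential spectrum forced by \eqref{eq:delta}). The paper sidesteps that by integrating out $(x,x')$ first: the quantity $\psi_2(\zeta,\zeta')=|\int\psi(x,\zeta)\cdot\psi^*(x,\zeta')dx|^2$ is bounded by $1$ by Cauchy--Schwarz and requires only $L^2$-normalization, not pointwise decay, so the entire $(x,x')$-decay issue never arises. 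Both routes then prove rapid decay in $y-y'$ by the same integration by parts and close with the same $\int|y|w_2(y)\,dy$ (equivalently, your $N=2$) estimate. Your version is slightly more demanding on the eigenfunction estimates but otherwise equivalent; both proofs quietly rely on the same key facts that $\Win$ is flat at $\pm\delta$ and that $E_{j,\pm}^2=\lambda_j+\zeta^2$ keeps all active branches strictly inside the gap uniformly.
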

\begin{proof}
	That $[\Pr,\Win]$ is a Hilbert-Schmidt operator is proved for all of the terms in the above finite sum. Then $[\Pr,\Uin]$ is compact and $\Pr \Uin \Pr$ is Fredholm; see Appendix \ref{sec:FH}.
	
	Consider the above sum of terms, each of which may be written as
	\[
	  \Win_0={\cal F}^{-1} \dint_{(-\delta,\delta)}^{\oplus} \Win(E(\zeta)) \psi(x,\zeta)\psi^*(x,\zeta) d\zeta\ {\cal F},
	\]
	which is an operator on $L^2(\Rm^2)$ with Schwartz kernel of the form:
	\[
		w(x,y;x',y') = \dint_{\Rm} \Win(E(\zeta)) \psi(x,\zeta) \psi^*(x,\zeta) \dfrac{1}{2\pi} e^{i(y-y')\zeta} d\zeta.
	\]
	The kernel of the bounded operator $[\Pr,\Win_0]$ is thus given by
	\[
	  k(x,y;x',y') = (\chi(y)-\chi(y')) \dint_{\Rm} \Win(E(\zeta)) \psi(x,\zeta) \psi^*(x',\zeta) \dfrac{1}{2\pi} e^{i(y-y')\zeta} d\zeta.
	\]
	It remains to show that $\Tr k^*k$ is integrable in all four (one-dimensional) variables. This is expressed as an integral in $(\zeta,\zeta')$. The trace and contributions in the $(x,x')$ variables yield
	\[
		\psi_2(\zeta,\zeta') = \dint_{\Rm^2} \Tr\ \psi(x',\zeta')\psi^*(x,\zeta') \psi(x,\zeta)\psi^*(x',\zeta) dx dx' = \Big|\dint_{\Rm} \psi(x,\zeta)\cdot \psi^*(x,\zeta')dx\Big|^2.
	\]
	This is bounded by $1$ by the Cauchy-Schwarz inequality  and is real-analytic in $\zeta$ for $|\zeta|<\delta$. Integrating in $(\zeta,\zeta')$ therefore yields the term
	\[
	  0\leq w_2(y-y') = \dint_{\Rm^2} \Win(E(\zeta))\Win^*(E(\zeta')) \psi_2(\zeta,\zeta') \dfrac{1}{(2\pi)^2} e^{i(y-y')(\zeta-\zeta')} d\zeta d\zeta'. 
	\]
	The integral runs over $|\zeta|,|\zeta'|<\delta$ and $y^{2n}w_2(y)$ is bounded for any $n\geq0$ so that in particular $\int_{\Rm} |y|w_2(y)dy<\infty$. But this means that
	\[
	  \int_{\Rm^4} \Tr k^*k dx dx' dy dy' = \int_{\Rm^2} (\chi(y')-\chi(y))^2 w_2(y-y') dy dy' \leq \dint_{\Rm} |y| w_2(y) dy <\infty,
	\]
	so that $[\Pr,\Win_0]$ is Hilbert-Schmidt and by finite summation $[\Pr,\Win]=[\Pr,\Uin]$ as well.
\end{proof}

We next show that the index is not perturbed by local perturbations of $m(x)$. Let $m_t(x)$ be a family of functions equal to $m_\pm$ outside of a fixed interval $I$ and continuous in $t\in[0,1]$ in the 
square integrable sense on $I$ (this means that $\int_I(m_{t+h}(x)-m_t(x))^2 dx$ goes to $0$ with $h$). 
Let $\Uin_t=\Uin[H[m_t(x)]]$ the corresponding unitary operators. Then we have the following result.
\begin{proposition}
\label{prop:contm}
  Let $\Uin_t$ be defined as above. Then $\Pr \Uin_t\Pr$ is Fredholm for all $t\in[0,1]$ and the index of these operators on ${\rm Ran}\Pr$ is independent of $t$.
\end{proposition}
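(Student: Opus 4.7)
My plan splits the claim into two pieces: (i) $\Pr\Uin_t\Pr$ is Fredholm on ${\rm Ran}\,\Pr$ for each $t\in[0,1]$, and (ii) $t\mapsto\Uin_t$ is continuous in the operator norm on $L^2(\Rm^2)\otimes\Cm^2$. Granted both, the path $t\mapsto \Pr\Uin_t\Pr+I-\Pr$ is a norm-continuous family of Fredholm operators (since $\Pr$ is bounded), so its index is locally, and therefore globally, constant on the connected interval $[0,1]$.

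For (i) the preceding proposition's proof applies verbatim at each $t$. The cutoff $\chi_\delta$ is supported in $(-\delta,\delta)$, which by \eqref{eq:delta} lies strictly below the essential spectrum of the fibered operator $\hat H[m_t](\zeta)$; that essential spectrum depends only on the asymptotic values $m_\pm$ and is therefore $t$-independent, while Lemma \ref{lem:astara} still supplies finitely many real-analytic eigenvalue branches with rapidly decaying eigenprojectors. The Hilbert--Schmidt estimate on $[\Pr,\Win_t]$ from the preceding proposition then carries over, giving Fredholmness.

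The main work, and the main obstacle, is (ii): only $L^2(I)$-continuity of $t\mapsto m_t$ is postulated, so $H[m_t]-H[m_s]$ is not small in operator norm. To circumvent this, I exploit that $\Win=\Uin-I$ is smooth and compactly supported, and fix an almost analytic extension $\tilde\Win\in C_c^\infty(\Cm)$ satisfying $|\bar\partial\tilde\Win(z)|\le C_N|{\rm Im}\,z|^N$ for every $N$. The Helffer--Sj\"ostrand formula together with the second resolvent identity yields
\[
\Win(H[m_t])-\Win(H[m_s])=\dfrac{1}{\pi}\dint_{\Cm}\bar\partial\tilde\Win(z)\,R_t(z)(m_s-m_t)\sigma_3 R_s(z)\,dx\,dy,
\]
with $R_t(z)=(z-H[m_t])^{-1}$. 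Because $m_t-m_s$ is supported in the fixed strip $I\times\Rm$ and depends only on $x$, applying the 1D Sobolev embedding $H^1(\Rm_x)\hookrightarrow L^\infty(\Rm_x)$ fiberwise in $y$ yields
\[
\|(m_t-m_s)u\|_{L^2(\Rm^2)}\le C\|m_t-m_s\|_{L^2(I)}\|u\|_{H^1(\Rm^2)}.
\]
Combined with $\|R_s(z)\|_{L^2\to H^1}\lesssim (1+|z|)/|{\rm Im}\,z|$ and the fast decay of $\bar\partial\tilde\Win$ (choose $N$ large enough), the integrand is absolutely integrable and its operator norm is bounded by $C\|m_t-m_s\|_{L^2(I)}$, which vanishes as $s\to t$. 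This delivers the desired norm continuity, and homotopy invariance closes the proof.

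The crux is the Sobolev trade-off just described: the resolvent's smoothing gain of one derivative is precisely what converts the unavailable $L^\infty$-control of $m_t-m_s$ into the $L^2$-control that is supplied. The remaining steps (bookkeeping of the Helffer--Sj\"ostrand bound, compatibility with the direct-integral decomposition used in (i), and the standard fact that Fredholm index is locally constant in operator norm) are straightforward.
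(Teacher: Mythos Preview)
Your approach is essentially the same as the paper's: both establish norm continuity of $t\mapsto\Uin_t$ via the Helffer--Sj\"ostrand formula and the resolvent identity, then invoke the local constancy of the Fredholm index along a norm-continuous path. The paper's proof simply asserts a resolvent bound $\|R_{t+h}(z)-R_t(z)\|\le C|h|\,|\Im z|^{-1}$ and feeds it into Proposition~\ref{prop:HSF2}, while you supply the missing justification: since the hypothesis is only $L^2(I)$-continuity of $m_t$, multiplication by $m_t-m_s$ is not bounded on $L^2$, and you correctly trade the $H^1$-smoothing of the resolvent against $\|m_t-m_s\|_{L^2(I)}$ via the one-dimensional Sobolev embedding in the $x$-variable. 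This yields a resolvent bound of order $(1+|z|)/|\Im z|^2$ rather than $1/|\Im z|$, which you absorb by taking a higher-order almost-analytic extension of the compactly supported $\Win$. Your argument is therefore a more carefully quantified version of the paper's; the extra Sobolev step is exactly what is needed to make the paper's terse resolvent claim rigorous under the stated $L^2$ hypothesis. One small point: your invocation of Lemma~\ref{lem:astara} in step~(i) tacitly uses that the essential-spectrum and exponential-decay conclusions there depend only on $m_t$ equalling $m_\pm$ outside $I$ and being relatively compact, not on smoothness of $m_t$; this is correct but worth noting.
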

\begin{proof}
  The proof is based on the stability of $\Uin[H[m_t(x)]]$ with respect to perturbations. We write the spectral calculus in terms of resolvents and obtain from the above constraint on $m$ that for $h$ small,
  \[
  \| R_{t+h}(z) - R_t(z) \| = \| (z-H_t)^{-1} - (z-H_{t+h})^{-1} \| \leq C |h| |{\Im z}|^{-1},
  \]
for a constant $C$ independent of $t$.  We then apply Proposition \ref{prop:HSF2} to the compactly supported function $\Win=\Uin-I$ showing that $\Uin_{t+h}-\Uin_t$ is bounded in uniform norm by a constant times $|h|$. The family $\Pr \Uin_t\Pr$ is therefore a continuous family of Fredholm operators (they are indeed Fredholm since $[\Pr,\Uin_t]$ is Hilbert-Schmidt as obtained above). The proposition results from the continuity of the index for such families as recalled in Appendix \ref{sec:FH}.
\end{proof}
This result shows that the index is independent of $m(x)$ so long as its asymptotic limits at $\pm\infty$ are preserved (as a combination of the above argument and the stability of the index with respect to small perturbations in the uniform norm easily shows). It also allows us to restrict the calculation to the following simpler case to conclude the proof of Theorem \ref{thm:interface2d}:
\begin{proposition}\label{prop:interfacetrace}
      Let $m(x)$ be a continuous monotone function from $m_-$ to $m_+$, with $m(x)$ not equal to $m_\pm$ on a compact interval. Then $[\Pr,\Uin]=[\Pr,\Win]$ is trace-class and so is $\Pr-\Qr=[\Pr,\Uin]\Uin^*$ with $\Qr=\Uin\Pr\Uin^*$. Moreover, 
    \begin{equation}\label{eq:indextrace}
   -\Ind{\Pr \Uin \Pr_{|{\rm Ran \Pr}} } = \Tr ([\Pr,\Uin]\Uin^*) = \eps=\frac12(\sgn{m_+}-\sgn{m_-}).
   \end{equation}
\end{proposition}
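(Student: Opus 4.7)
The approach is to combine the spectral decomposition \eqref{eq:Wspectral} of $\Win$ from Lemma \ref{lem:astara} with a fiberwise winding-number formula. Since $\Win = \Uin - I$ vanishes outside the union of the (finitely many) eigenvalue branches $E_k(\zeta)$ inside $(-\delta, \delta)$, and each branch comes with a rank-one projector onto $\psi_k(\cdot,\zeta)$ that decays exponentially in $x$ and is real-analytic in $\zeta$, the trace $\Tr([\Pr,\Uin]\Uin^*) = \Tr(\Pr-\Qr)$ will reduce to a sum over branches of one-dimensional winding numbers, only one of which is nonzero.

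The first task will be to upgrade the Hilbert-Schmidt estimate from the previous proposition to trace class. The kernel of $[\Pr, \Win_k]$ (the summand for branch $k$) is $(\chi(y) - \chi(y')) \int_{I_k} \frac{d\zeta}{2\pi} e^{i\zeta(y-y')} \Win(E_k(\zeta))\psi_k(x,\zeta)\psi_k^*(x',\zeta)$. Repeated integration by parts in $\zeta$, using smoothness of $\Win(E_k(\cdot))$ and its vanishing at $\partial I_k$, yields arbitrarily rapid decay in $|y-y'|$. The factor $\chi(y)-\chi(y')$ restricts support to $yy'\leq 0$, forcing $|y|+|y'| \lesssim |y-y'|$. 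Combined with the exponential decay of $\psi_k(\cdot,\zeta)$ in $x$ uniformly in $\zeta$, the kernel decays fast enough in all four variables that a factorization into two Hilbert-Schmidt operators (in the spirit of Lemma \ref{lem:estip} with the improved exponent $p=1$) gives $[\Pr,\Win_k] \in \mathcal{I}_1$, hence $\Pr-\Qr = [\Pr,\Uin]\Uin^*$ is trace class.

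The second task is the branch-by-branch evaluation of the trace. In the spectral representation, $\Uin$ acts on the subspace associated to branch $k$ as the scalar multiplier $\Uin(E_k(\zeta))$ in the $y$-Fourier variable $\zeta$, tensored with the rank-one projection $\psi_k(\cdot,\zeta)\psi_k^*(\cdot,\zeta)$ in $x$. Using the one-dimensional Toeplitz-type identity (see Appendix \ref{sec:FH})
\[
\Tr_{L^2(\Rm_y)}\bigl([\Pr, V(-i\partial_y)]V^*(-i\partial_y)\bigr) = \frac{1}{2\pi i}\int_\Rm V'(\zeta)\overline{V(\zeta)}\,d\zeta
\]
for a scalar unitary multiplier with $V-1$ smooth and compactly supported, applied fiberwise and combined with the normalization $\int|\psi_k(x,\zeta)|^2\,dx=1$, one gets that the branch-$k$ contribution equals the winding number of $\zeta\mapsto\Uin(E_k(\zeta))$. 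Cross-branch terms vanish because the different $\Pi_k(\zeta)$ project onto orthogonal fibers, making $\Win$ block-diagonal across the branches.

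Finally, the windings are evaluated. Each bounded branch $E_{j,\pm}(\zeta)=\pm(\lambda_j(\zeta)+\zeta^2)^{1/2}$ has fixed sign, so $\chi_\delta(E_{j,\pm}(\zeta))$ stays strictly on one side of either $0$ or $1$, and its winding vanishes. The linear branch $E(\zeta)=\eps\zeta$, present precisely when $|\eps|=1$, makes $\chi_\delta(\eps\zeta)$ traverse $[0,1]$ monotonically on $(-\eps\delta,\eps\delta)$, contributing winding $\eps$. Summing gives $\Tr([\Pr,\Uin]\Uin^*)=\eps$, and Proposition \ref{prop:trace} (a Fedosov-type identity) converts this to $-\Ind{\Pr\Uin\Pr_{|{\rm Ran}\Pr}}$. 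The main obstacle will be making the fiberwise reduction rigorous: since $\Pr$ is not diagonal in the $\zeta$-representation of $H$, one must carefully justify that off-diagonal (in $\zeta$) contributions drop out of the trace after invoking the spectral decomposition, which is where the monotonicity of $m(x)$ (via the clean branch structure of Lemma \ref{lem:astara}) plays an essential role.
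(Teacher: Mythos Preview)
Your overall strategy matches the paper's: spectral decomposition of $\Win$, reduction to a winding number in $\zeta$, and identification of the index via Proposition~\ref{prop:trace}. Two points, however, deserve correction.

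\textbf{The trace-class step is a genuine gap.} Fast pointwise decay of the kernel, even combined with the support constraint $yy'\leq 0$, yields at best a bound of the form $|k(x,y;x',y')|\leq f(x,y)\,g(x',y')$ with $f,g\in L^2(\Rm^2)$, which gives Hilbert--Schmidt, not trace class. Your ``factorization into two Hilbert--Schmidt operators'' is asserted but not constructed, and Lemma~\ref{lem:estip} does not apply with $p=1$ (it requires $p>n$). Note also that $\chi(y)-\chi(y')$ is discontinuous at $y=0$ and $y'=0$, so the kernel is not globally smooth and smoothness-based ${\cal I}_1$ criteria do not apply directly. The paper's argument is more delicate: it introduces a partition of unity in $z=y-y'$, expands each compactly supported piece $u_k(x,x',z)$ in a Fourier series in $z$, and uses the truncations as explicit finite-rank approximants to bound the singular values of each $K_k$ directly; summability over $k$ then gives $[\Pr,\Win]\in{\cal I}_1$. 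You would need an argument of this type.

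\textbf{The role of monotonicity is misplaced.} The paper uses monotonicity of $m(x)$ at the very start to conclude that $\fa^*\fa$ (respectively $\fa\fa^*$) has no spectrum in $[0,\delta^2)$, so only the single linear branch $E(\zeta)=\eps\zeta$ survives inside the gap. Everything downstream is then computed for this one branch. Your route---keeping all branches and arguing that the nonlinear ones $E_{j,\pm}(\zeta)=\pm(\lambda_j(\zeta)+\zeta^2)^{1/2}$ contribute zero winding because $\chi_\delta(E_{j,\pm}(\pm\delta))$ agree---is correct in principle, but unnecessary here, and your closing sentence misidentifies monotonicity as what makes the ``fiberwise reduction'' work rather than what eliminates the extra branches. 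On the fiberwise reduction itself, you are right that the $\zeta$-dependence of $\psi_0$ prevents a naive application of the scalar Toeplitz formula; the paper resolves this by showing that the extra term coming from $\partial_\zeta(\psi_0\psi_0^*)$ integrates to $\partial_\zeta\!\int|\psi_0(x,\zeta)|^2\,dx=0$, which is exactly the normalization argument you allude to.
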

\begin{proof}
We find that
\[
 \fa^*\fa = -\partial_x^2 + (m(x)+\eta(\partial_x^2-\zeta^2))^2 + \partial_x m(x).
\]
 We know that the first two terms are jointly bounded from below by a positive constant. For $m(x)$ monotonically increasing (consider $\fa\fa^*$ otherwise), $\partial_x m\geq0$ so that $\fa^*\fa$ has no spectrum inside $(-\delta,\delta)$. Therefore, there is only one branch in the spectral representation corresponding to $E(\zeta)=\eps\zeta$ so that
 \[
   \Win={\cal F}^{-1} \dint^{\oplus}_{|\zeta|<\delta} \Win(\eps\zeta) \Pi_{0}(\zeta) d\zeta\ {\cal F}.
 \]
 The operator $K=[\Pr,\Uin]$ therefore has kernel
 \[\begin{array}{rcl}
   k(x,y;x',y') &=&  (\chi(y)-\chi(y')) u(x,x',y-y'), \\ u(x,x',y-y')&=& \dint_{\Rm} \Win(\eps\zeta) \psi_0(x,\zeta)\psi_0^*(x',\zeta) \dfrac{1}{2\pi}e^{i(y-y')\zeta} d\zeta. \end{array}
 \]
 We want to show that the above operator $K$ is trace-class. The continuity of the kernel $k$ is not a sufficient condition to obtain such a result; see Appendix \ref{sec:FH}. We need to estimate the singular values of $K$ more directly. The trick \cite{GK-AMS-69,LAX-JW-02} is to use the smoothness and decaying properties of $u$. We need to replace $u(x,x',z)$ by compactly supported functions in $z$. 
 
 Let $0\leq\varphi(z)\leq1$ be a smooth function compactly supported in $(-\frac\pi2-\eta,\frac\pi2+\eta)$ for $0<\eta<1$ such that $\varphi(z)=1$ on $(-\frac\pi2+\eta,\frac\pi2-\eta)$ and such that $\varphi(z)+\varphi(z-\pi)=1$. Then $\sum_{k\in\Zm} \varphi(z-k\pi)=1$ and this generates a partition of unity. Let now $u_k(x,x',z)=u(x,x',z)\varphi(z-k\pi)$. By construction of $u$ and integration over a compact support in $\zeta$, as well as the smoothness properties of $\psi_0(x,\zeta)$, we have the following approximation property. Define
 \[
   u_k(x,x',z)=\dsum_n e^{-inz} \hat u_{k,n}(x,x')
 \]
 the Fourier coefficients of the function $u_k$ on its support $k\pi\pm(\frac\pi2+\eta)\subset ((k-1)\pi,(k+1)\pi)$. On that domain, after periodization, let
 \[
 	u_{kN}(z,x,x')=\dsum_{|n|\leq N} e^{-inz} \hat u_{k,n}(x,x').
 \]
 Then by approximation theory, 
 \[
   \|u_k-u_{kN}\|_{L^2((k-1)\pi,(k+1)\pi)} \leq C |u_k|_\alpha N^{-\alpha}
 \]
 where $|u_k|(x,x')$  is the $C^\alpha$ norm in the $z$ variable of $u_k$ on its support. Again by smoothness, we obtain that 
 \[
   |u_k|_\alpha \leq C\aver{k}^{-\beta} \aver{x}^{-\gamma}\aver{x'}^{-\gamma}
 \]
 for $\beta$ and $\gamma$ as large as necessary. 
 
 Let now $K_k$ be the operator with kernel $(\chi(y)-\chi(y'))u_k(x,x',y-y')$ and let $K_{kN}$ be the corresponding operator with $u_k$ replaced by $u_{kN}$. Then the above calculations show that 
 \[
   \|K_k^* K_k - K_k^* K_{kN}\| \leq \|K_k^*\| \|K_k-K_{kN}\| \leq C \aver{k}^{-\beta} N^{-\alpha}.
 \]
 The reason is that 
 \[
   \|K_k-K_{kN}\|^2 \leq \dint_{|y-y'-k\pi|<\pi} \hspace{-1cm} (\chi(y)-\chi(y'))^2  |u_k(y-y',x,x')-u_{kN}(y-y',x,x')|^2 dydy'dxdx',
 \]
which is bounded by $C \aver{k}^{-\beta} N^{-\alpha}$ for $u_k$ and $u_{kN}$ compactly supported.
 Note that $K_{kN}$ has rank at most $2N+1$. Since $K_k^*K_k$ is self-adjoint, its best finite rank approximation provides an estimate of $\lambda^2_{k,2N+1}$, where $\lambda_{k,N}$ are the singular values of $K_k$. Therefore, we have shown that $\lambda_{k,N}\leq C\aver{k}^{-\beta/2} N^{-\alpha/2}$.  For $\alpha>2$, this shows that $K_k$ is trace-class and that its ${\cal I}_1$ norm is bounded by $\aver{k}^{-\beta/2}$. Since $K=\sum K_k$, and the latter sum converges absolutely in ${\cal I}_1$, which is complete, we deduce that $K$ is trace-class as well.
 
Since ${\cal I}_1$ is an ideal, $\Tc=[\Pr,\Uin]\Uin^*=[\Pr,\Win](I+\Win^*)$ is also trace-class. The kernel of $K=[\Pr,\Win]$ is continuous except at $y=0$ and $y'=0$. Replacing the kernel $k$ by $\tilde k$ as in the work \cite{brislawn1988kernels} recalled in Lemma \ref{lem:trace}, we obtain that $\tilde k=k$ almost everywhere so that the trace of $K$ is given by the integral of its kernel along the diagonal.  Such an integral vanishes by explicit computation so that $\Tr \Tc = \Tr [\Pr,\Win]\Win^*$.

The Schwartz kernel of $\Win$ is: 
\[
  \hat w(y-y',x,x') = \dint_{\Rm} \Win(\eps\zeta)\psi_0(x,\zeta) \psi_0^*(x',\zeta) \dfrac{1}{2\pi} e^{i(y-y')\zeta} d\zeta.
\]
It is a smooth and rapidly decaying function in all variables. The kernel of $\Win^*$ is given by $\hat w^*(y'-y,x',x)$. The kernel of $\Tc$ is thus of the form
\[
  t(y,y',x,x') =   \dint_{\Rm^2} (\chi(y)-\chi(y'')) \hat w(y-y'',x,x'') \hat w^*(y'-y'',x',x'') dy'' dx''.
\]
From the smooth decay of $\hat w$, we deduce that the above kernel is jointly continuous in all variables except at $y=0$. Again, we observe that $t=\tilde t$ (as described in Appendix \ref{sec:FH}) at all points except $y=0$ so that the trace of $\Tc$ is given by the diagonal integral of $\tilde t$, which is the same as that of $t$. Therefore, $\Tr \Tc$ is given by 
\[
    \dint\tr \ t(y,y,x,x) dy dx =\dint (\chi(y)-\chi(y''))  \tr  \hat w(y-y'',x,x'') \hat w^*(y-y'',x,x'') dx''dy''dx dy.
\]
 With the change of variables $z=y-y''$ and integration in the remaining variable, we get
 \[
 	\Tr \Tc = \dint z \tr \hat w(z,x,x'')  \hat w^*(z,x,x'') dx dx'' dz.
 \]
 Moving back to the Fourier domain yields 
 \[
  \Tr \Tc = \dfrac{1}{2\pi i}\dint \tr \partial_\zeta [\Win(\eps\zeta) \psi_0(x,\zeta)\psi_0^*(x'',\zeta)] \Win^*(\eps\zeta) \psi_0(x'',\zeta)\psi_0^*(x,\zeta) dx dx'' d\zeta.
 \]
 One term is 
 \[
 	\dfrac{1}{2\pi i}  \dint_{\Rm} \Win'(\eps\zeta) \Win^*(\eps\zeta) d\zeta = \dfrac{\eps}{2\pi i}  \dint_{\Rm} \Win'(\zeta) \Win^*(\zeta) d\zeta,
 \]
 after taking traces and integrating in $x$ and $x''$. The other term involves 
 \[
 	\dint_{\Rm^2} \Tr [\partial_\zeta \psi_0(x,\zeta) \psi_0^*(x,\zeta) \psi_0(x',\zeta) \psi_0^*(x',\eta) + \psi_0(x,\zeta) \partial_\zeta\psi_0^*(x',\zeta) \psi_0(x',\zeta) \psi_0^*(x,\zeta)] dx dx'.
 \]
 Exchanging $x$ and $x'$ integrations in the latter term and taking traces and integration in $x'$, we get
 \[
   \dint_{\Rm} (\partial_\zeta \psi_0(x,\zeta)\psi_0^*(x,\zeta) + \psi_0(x,\zeta)\partial_\zeta  \psi_0^*(x,\zeta) ) dx = \partial_\zeta \dint |\psi_0(x,\zeta)|^2 dx = 0.
 \] 
 This shows that $\Tr \Tc$ is $\eps$ times the winding number of $\Win$. We verify that $\Win'\Win^*=2\pi i(\chi_\delta'-\Win')$. Since $\int \Win'=0$ and  $\int \chi_\delta'=1$, this concludes the proof.
\end{proof}

\subsection{Direct sum of elementary Hamiltonians}
\label{sec:directsum}

The above results obtained for elementary bulk and interface Hamiltonians easily generalize to the direct sums of such terms as they appear in the (low-energy) description of many topological insulators. We now summarize such extensions.

We first consider the elementary bulk Hamiltonian $H_\eta[m]=\frac1i A\nabla + \eta \Delta_A$ with $A$ an invertible matrix. All propositions in the preceding section hold until the final calculation in Proposition \ref{prop:5index2d}. With $l=Ak$, we observe that $\hat H(k)=l\cdot\sigma+(m-\eta|l|^2)\sigma_3$. We then verify that $\partial_j \hat H(k)=[A^t (\sigma-2\eta l\sigma_3)]_j$, from which we deduce that $[\partial_1 \hat H,\partial_2 H]$ equals the term we would obtain with $A=I$ multiplied by the determinant ${\rm det}(A^t)$. Now, changing variables $dl=|{\rm det}(A)| dk$, we deduce that the Chern number for $\hat P$ is simply that for $A=I$ multiplied by $|{\rm det}(A)|^{-1}{\rm det}(A^t)=\sgn{{\rm det}(A)}$.

Thus, for $H_\eta[m]=\frac1i A\nabla + \eta \Delta_A$ and $P$ the projection as defined in \eqref{eq:P}, we obtain as in Theorem \ref{thm:bulk2d} that $P\Un P_{|{\rm Ran }P}$ is Fredholm and that its index is given by 
\[
  -\Ind{P\Un P_{|{\rm Ran }P} } = I[A,m,\eta] :=  \dfrac12 \big( \sgn{m}+\sgn{\eta})\sgn{{\rm det}(A)}.
\] 
As a consequence, for a general unperturbed Hamiltonian given by direct sum
\[
  H_\eta[m] = \oplus_{j=1}^J H_{\eta,j}[m_j] ,\qquad H_{\eta,j}[m]=\frac1i A_j\nabla + (m_j+\eta \Delta_{A_j})\sigma_3,
\]
we obtain that for  $P$ the projection as defined in \eqref{eq:P}, then $P\Un P_{|{\rm Ran }P}$ is Fredholm and its index is given by 
\[
  -\Ind{P\Un P_{|{\rm Ran }P} } = \dsum_{j=1}^J I[A_j,m_j,\eta] = \dsum_{j=1}^J  \dfrac12 \big( \sgn{m_j}+\sgn{\eta})\sgn{{\rm det}(A_j)}.
\] 

\medskip

Let us now move to the edge Hamiltonians. To preserve the commutative structure of the $\sigma$ matrices and the orthogonality between the interface variable $y$ and the distance to the interface variable $x$, we assume that $A={\rm Diag}(a_x,a_y)$ with $a_xa_y\not=0$. Retracing the derivation in section \ref{sec:interface2d}, we observe that for $\eta$ sufficiently small, the results obtained with $a_x=a_y=1$ are such that $\eps$ should be multiplied by $\sgn{a_x}$ in the derivation of the decay away from the interface $x=0$ and $\zeta$ should then be multiplied by $a_y$ so that eventually, the invariant is given by the winding number of $\Win(\sgn{a_x}\eps \sgn{a_y}\zeta)$, and hence by $\sgn{a_xa_y}=\sgn{{\rm Det}(A)}$ times the winding number when $a_x=a_y=1$.

Therefore, for $H[m(x)]=\dfrac1i A\nabla\cdot\sigma + (m(x)-\eta \Delta_A)\sigma_3$, and $\Uin=e^{i2\pi\chi_\delta(H_\eta[m(x)])}$ the unitary given in \eqref{eq:UH}, we find that $\mP \Uin\mP_{|{\rm Ran}\mP}$ is Fredholm and its index is given by $\frac12(\sgn{m_+}-\sgn{m_-})\sgn{{\rm Det}(A)}$. More generally, define
\[
  H_\eta[m(x)] = \oplus_{j=1}^J H_{\eta,j}[m_j(x)],\qquad H_{\eta,j}[m_j] = \frac1i A_j \nabla + (m_j(x)+\eta\Delta_A)\sigma_3
\]
with $m_j(x)$ equal to $m_{j\pm}$ as $x\to\pm\infty$ outside of a compact interval and $A_j$ invertible diagonal matrices. Let $\Uin$ be defined as in \eqref{eq:UH}. Then $\mP \Uin\mP_{|{\rm Ran}\mP}$ is Fredholm and its index is given by 
\[
-\Ind{\mP \Uin\mP_{|{\rm Ran}\mP}} = N_+-N_-,\qquad N_\pm = \dsum_{j=1}^J I[A_j,m_{j\pm},\eta] .
\]
This clearly displays the (bulk-boundary or more appropriately bulk-interface) correspondence between the invariant of the (unperturbed) interface Hamiltonian and the difference of invariants of the (unperturbed) bulk Hamiltonians, which is independent of the regularization parameter $\eta$. The objective of the next section is to generalize this correspondence to the setting where the Hamiltonians are perturbed by spatial fluctuations.

\section{Stability under perturbation, Bulk-Interface correspondence and interface current}
\label{sec:stab}

The indices calculated in the preceding section hold for unperturbed Hamiltonians, although the proof of invariance of the index under changes in the profile $m(x)$ did involve a perturbation of the Hamiltonian $H[m(x)]$. Here, we wish to consider more general perturbations modeled by a bounded operator $V$ (to simplify) so that 
\[
   H_V = H_\eta[m] + V,
\]
for the bulk Hamiltonian and 
\[
	H_V = H_\eta[m(x)] + V,
\]
for the interface Hamiltonian.  Both $H_\eta[m]$ and $H_\eta[m(x)]$ are given as the direct sum of $J$ elementary bulk and interface Hamiltonians, respectively. The objective of this section is to show the invariance of the bulk and interface indices, once properly defined, under a large class of perturbations $V$. 

Perturbations of operators defined by spectral calculus have been analyzed in many settings. A versatile methodology to propagate perturbations though the spectral calculus is offered by the Helffer-Sj\"ostrand formula, which we now recall following \cite[Chapter 2]{davies_1995}, to which we refer for details and background.  

Let $f(x)$ be a smooth, bounded function on $\Rm$ converging to $0$ at infinity.  Then the functional calculus states that
\begin{equation}
\label{eq:HSformula}
	f(H) = -\frac1\pi \dint_{\Cm} \pdr{\tilde f}{\bar z}(z) (z-H)^{-1} dx dy,
\end{equation}
where $z=x+iy$ and where $\tilde f(z)$ is an extension of $f(x)$ such that 
\begin{equation}
\label{eq:HSftilde}
		 \pdr{\tilde f}{\bar z}(z) = \dfrac12 \dsum_{r=0}^n f^{(r)}(x) \dfrac{(iy)^r}{r!}(\sigma_x+i\sigma_y) + \frac12 f^{(n+1)}(x) (iy)^n \dfrac{\sigma(z)}{n!}.
\end{equation}
Here, $\sigma_{t}=\partial_{t}\sigma$ for $t=x,y$ and using the notation $\aver{x}=(1+x^2)^\frac12$, $\sigma(z)$ is defined as
\[
  \sigma(z) = \tau\big(\dfrac{y}{\aver{x}}\big),\ \ \tau\in C^\infty_0(\Rm),\ \tau(0)=1,\ {\rm supp}(\tau) \subset \{|u|\leq 2\},\ {\rm supp}(\tau') \subset \{1\leq |u|\leq 2\}.
\]
We assume in this paper that $f^{(j)}$ is compactly supported for $j\geq1$. When $f$ is  used in the construction of a unitary operator, then $f$ is compactly supported as well. In the construction of idempotent operators (projectors), $f$ is typically bounded but not compactly supported.

Let $H_1$ and $H_2$ be two operators. Then the above formula states that
\begin{equation}\label{eq:deltaf}
  f(H_1)-f(H_2) = -\frac 1\pi\dint_{\Cm}  \pdr{\tilde f}{\bar z}(z) \Big( (z-H_1)^{-1} - (z-H_2)^{-1} \Big) dx dy,
\end{equation}
which provides a convenient formula to propagate fluctuations in the resolvent to fluctuations in spectrally defined operators. We use this formula to obtain specific results in spectral theory below; see \cite{cycon2009schrodinger} for a more comprehensive presentation.  Many proofs below are based on the following classical relations
\begin{equation}\label{eq:AB}
	(z-(A+B))^{-1}-(z-A)^{-1}=(z-(A+B))^{-1}B(z-A)^{-1},
\end{equation}
where $(A,B)$ is of the form $(H,V)$ or $(H_V,-V)$ or $(H,z- \tilde z)$ with $H_V=H+V$. For $B$ bounded (to simplify), we therefore obtain the existence of bounded operators $C_{1,2}$ such that 
\begin{equation}\label{eq:AB2}
  (z-(A+B))^{-1}-(z-A)^{-1}= C_1 (z-A)^{-1} = (z-A)^{-1} C_2.
\end{equation}
\begin{proposition}
\label{prop:HSF}
 Let $f$ be as above with $n=1$. Let $H$ and $H_V=H+V$ be unbounded operators on $L^2(\Rm^2;\Cm^m)$ with $V$ bounded. 
 
 Assume that 
 \begin{equation}\label{eq:compactres}
 	(z-H_V)^{-1} - (z-H)^{-1} \quad \mbox{ is compact for some $z\in\Cm$},
 \end{equation}
 so that it is compact for all $z\in\Cm$ such that $y=\Im z\not=0$, and assume that $f$ is compactly supported and $n=1$.  Then $f(H_V)-f(H)$ is a compact operator.
 
 Assume \eqref{eq:compactres} and 
 \begin{equation}\label{eq:boundres}
 \|(z-H_V)^{-1} V (z-H)^{-1}\| = \|(z-H_V)^{-1} - (z-H)^{-1} \| \lesssim |\Im z|^{-1-\mu},
 \end{equation}
 for some $\mu>0$ and assume that $f$ is bounded and its derivatives are compactly supported.  Then $f(H_V)-f(H)$ is compact. 
\end{proposition}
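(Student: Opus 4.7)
The plan is to apply the Helffer--Sj\"ostrand representation \eqref{eq:deltaf},
\[
  f(H_V)-f(H) \;=\; \int_{\Cm} \pdr{\tilde f}{\bar z}(z)\, D(z)\, dx\,dy,\qquad D(z):=(z-H_V)^{-1}-(z-H)^{-1},
\]
and to show in each case that this Bochner integral converges in operator norm while the integrand is pointwise compact. Since the compact operators form a norm-closed subspace of $\mathcal{B}(L^2)$, any norm-convergent integral of compact operators is itself compact. A preliminary step is to propagate \eqref{eq:compactres} from one $z_0$ to every $z$ with $\Im z\neq 0$: two applications of the resolvent identity \eqref{eq:AB} yield
\[
  D(z) \;=\; \bigl[I+(z_0-z)(z-H_V)^{-1}\bigr]\,D(z_0)\,\bigl[I+(z_0-z)(z-H)^{-1}\bigr],
\]
which exhibits $D(z)$ as a bounded operator composed with the compact $D(z_0)$ composed with a bounded operator, hence compact and norm-continuous in $z$ on $\Cm\setminus\Rm$.

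For assertion~(i), $f$ is compactly supported, so every $f^{(r)}$ has compact $x$-support and the cutoff $\sigma(z)$ confines $\tilde f$ to a compact set $K\subset\Cm$. In the decomposition \eqref{eq:HSftilde} with $n=1$, the $r=0$ piece carries $\sigma_x+i\sigma_y$, which is supported in $\{\langle x\rangle\le|y|\le 2\langle x\rangle\}$ so that $|y|\ge 1$ and the elementary bound $\|D(z)\|\le 2/|\Im z|$ stays harmless; the $r=1$ term and the remainder each carry an explicit factor $iy$ that cancels the resolvent singularity near the real axis. The integrand $\pdr{\tilde f}{\bar z}(z)\,D(z)$ is therefore norm-bounded on the compact set $K$, the Bochner integral converges absolutely in operator norm, and approximation by Riemann sums---each a finite linear combination of compact operators---yields compactness of $f(H_V)-f(H)$.

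For assertion~(ii), $f$ is only bounded while $f^{(r)}$ has compact $x$-support for $r\ge 1$; I would fix an integer $n>\mu$ in \eqref{eq:HSftilde}. The terms with $1\le r\le n$ have compact $x$-support and are moreover supported in $|y|\ge\langle x\rangle\ge 1$, where $\|D(z)\|$ is bounded; the remainder term has compact $x$-support from $f^{(n+1)}$, and the near-axis behavior $|y|^n\,\|D(z)\|\lesssim|y|^{n-1-\mu}$ is integrable because $n>\mu$. The main obstacle, and the reason the enhanced bound \eqref{eq:boundres} is assumed, is the $r=0$ piece $\tfrac12 f(x)(\sigma_x+i\sigma_y)$, whose support $\{\langle x\rangle\le|y|\le 2\langle x\rangle\}$ is unbounded in $x$. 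On this set $|\sigma_x+i\sigma_y|\lesssim\langle x\rangle^{-1}$ and $\|D(z)\|\lesssim|y|^{-1-\mu}\lesssim\langle x\rangle^{-1-\mu}$, so the norm integrand is dominated by $\langle x\rangle^{-2-\mu}$ on a $y$-strip of height $\sim\langle x\rangle$, integrating to $\int\langle x\rangle^{-1-\mu}\,dx<\infty$. Without the extra factor $|y|^{-\mu}$ afforded by \eqref{eq:boundres} beyond the trivial $|y|^{-1}$, this integral would diverge; with it, the Bochner integral converges in operator norm and the Riemann-sum argument concludes as in~(i).
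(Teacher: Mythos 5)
Your proof is correct and follows essentially the same strategy as the paper: propagate compactness of $D(z)$ from a single $z_0$ to all of $\Cm\setminus\Rm$ via two applications of the resolvent identity, then show the Helffer--Sj\"ostrand integral converges absolutely in operator norm so that $f(H_V)-f(H)$ is a norm limit of compacts.

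One small observation worth noting: for part (ii) you raise the order of the almost-analytic extension to $n>\mu$ so that the remainder $f^{(n+1)}(x)(iy)^n\sigma(z)D(z)$ has the near-axis power $|y|^{n-1-\mu}$ under the enhanced bound \eqref{eq:boundres}. This is not needed. The remainder term has compact $x$-support (since $f^{(n+1)}$ is compactly supported), and near the real axis one should invoke the trivial self-adjointness bound $\|D(z)\|\le 2/|\Im z|$ rather than \eqref{eq:boundres}: with $n=1$ this already gives $|y|\cdot|y|^{-1}=O(1)$, so the remainder integrand is bounded and supported in a compact region. The enhanced decay $|y|^{-1-\mu}$ is required only for the unbounded $r=0$ tail $\{\aver{x}\leq|y|\leq 2\aver{x},\ \aver{x}>L\}$, exactly as you correctly argue afterward, and that argument does not depend on $n$. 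This is why the paper can keep $n=1$ as announced in the proposition's hypothesis; your variant with larger $n$ is also valid but raises the extension order unnecessarily.

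Your phrasing of the conclusion (absolutely convergent Bochner integral of continuous compact-operator-valued integrand is compact) is a clean packaging of the paper's more explicit two-step truncation argument (compactness on $\aver{x}\le L$ by collocation after removing a thin sliver near the real axis, plus an $O(L^{-\mu})$ tail bound); the two are equivalent.
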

\begin{proof}
Using \eqref{eq:AB}, we have $(z-H_V)^{-1}-(z-H)^{-1}=(z-H_V)^{-1}V(z-H)^{-1}$. Using \eqref{eq:AB2} with $A=H$ or $H_V$ and $B=z_2-z_1$, we find the existence of bounded operators $C_{1,2}$ such that
\[
	(z_2-H_V)^{-1}-(z_2-H)^{-1} = C_1 \big( (z_1-H_V)^{-1} - (z_1-H)^{-1}\big) C_2,
\]
so that if the above difference is compact for $z$, it is so for all $z$ in the resolvent sets of $H$ and $H_V$, which includes $\Im z\not=0$.

We now need to consider the integrations of terms involving $\sigma_x+i\sigma_y$, which is supported on $\aver{x}\leq |y|\leq 2\aver{x}$ so that $|y|\geq1$, and a term involving $f"(x)y\sigma(y)$, which is of order $O(y)$ for $|y|\ll1$. All above terms are continuous in the variable $z$ (with values in the Hilbert space) since $(z_1-H)^{-1}-(z_2-H)^{-1}=(z_1-H)^{-1}(z_2-z_1)(z_2-H)^{-1}$. From the definition of $\tau$, we observe that the integrand is supported in the domain $|y|\leq 2\aver{x}$, which is bounded when $f$ is compactly supported. By the above smoothness, the integral can be approximated by collocation, i.e., as a sum over a finite number of points $z_j$ with appropriate weights $w_j$, at least after removing the sliver $|y|\leq\delta$, which provides a uniformly vanishing contribution as $\delta\to0$. This shows that $f(H+V)-f(H)$ is the uniform limit of compact operators and is therefore compact. This proves the first statement.
 
 For the second statement, the integration involving $0\leq\aver{x}\leq L$ provides a compact contribution for the same reason as above. We thus need to show that the integration over $L<\aver{x}$
 provides a negligible contribution in operator norm. The only non-vanishing term in \eqref{eq:HSftilde} is that corresponding to $r=0$. We observe that $|\sigma_x+i\sigma_y|\lesssim \aver{x}^{-1}$ and is supported on $L<\aver{x}\leq|y|\leq2\aver{x}$. Using the second hypothesis, we observe that the integration over $L<\aver{x}\leq|y|\leq2\aver{x}$ is bounded in uniform norm by
 \[
 	\dint_{L<\aver{x}\leq|y|\leq2\aver{x}} \aver{x}^{-1}|y|^{-1-\mu} dxdy \lesssim \dint_{L<\aver{x}} \aver{x}^{-1-\mu} dx \lesssim L^{-\mu}.
 \]
 As a uniform limit of compact operators, we obtain that  $f(H+V)-f(H)$ is compact. This proves the result.
\end{proof}
Sufficient conditions on $V$ so the above result applies are given in the:
\begin{lemma}\label{lem:Vcompact}
 Let $H=H_0+H_1$ with $H_0$ a self-adjoint (matrix-valued) differential operator of order at least equal to $1$ with constant coefficients and $H_1$ a bounded Hermitian operator. Let $V$ be a bounded and converging to $0$ at infinity \footnote{In the sense that $|V(x)|$ for $|x|>R$ goes to $0$ uniformly as $R\to\infty$.}. Then \eqref{eq:compactres} and \eqref{eq:boundres} hold.
 
 For the Dirac applications considered in the paper, $H_1$ is the contribution in (direct sums of operators of the form) $H[m(x)]$ linear in $m(x)$ and $H_0$ is the remaining differential component of the operator. 
\end{lemma}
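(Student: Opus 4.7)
The plan is to combine the second resolvent identity with a Rellich-Kondrachov compactness argument. Starting from
\[
  (z-H_V)^{-1} - (z-H)^{-1} = (z-H_V)^{-1}\, V\, (z-H)^{-1},
\]
valid for any $z$ off the real axis, I first observe that $H=H_0+H_1$ is self-adjoint on $D(H_0)$ by the Kato-Rellich theorem (since $H_1$ is bounded Hermitian), and similarly $H_V=H+V$ is self-adjoint on the same domain. Both resolvents therefore obey the standard bound $\|(z-H)^{-1}\|,\|(z-H_V)^{-1}\|\leq|\Im z|^{-1}$, and since multiplication by $V$ is bounded by $\|V\|_\infty$, one immediately gets
\[
  \|(z-H_V)^{-1}-(z-H)^{-1}\|\leq \|V\|_\infty\, |\Im z|^{-2},
\]
which proves \eqref{eq:boundres} with $\mu=1$.

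For \eqref{eq:compactres}, since $(z-H_V)^{-1}$ is bounded it suffices to show that $V\,(z-H)^{-1}$ is compact. Exploiting the hypothesis that $V$ decays at infinity, I would approximate it by the truncations $V_R := V\,\mathbf{1}_{\{|x|\leq R\}}$; these are bounded and compactly supported with $\|V-V_R\|_\infty\to 0$ as $R\to\infty$. Since the compact operators form a norm-closed ideal, it is enough to prove that $V_R(z-H)^{-1}$ is compact for each fixed $R$ and then let $R\to\infty$.

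The compactness of $V_R(z-H)^{-1}$ comes from the smoothing property of the resolvent. In the Dirac setting, $H_0$ is an elliptic constant-coefficient differential operator of order $k\geq 1$ (order one for the pure Dirac part, order two once the regularization $\eta\Delta_A$ is included), so its domain coincides with a Sobolev space $H^k(\Rm^d)$, and this remains the domain of $H$ under the bounded perturbation $H_1$. Hence $(z-H)^{-1}$ maps $L^2$ boundedly into $H^k(\Rm^d)\hookrightarrow H^1(\Rm^d)$. Given a bounded sequence $(f_n)$ in $L^2$, the images $u_n := (z-H)^{-1}f_n$ are bounded in $H^1(\Rm^d)$, so their restrictions to the ball $B_R$ are bounded in $H^1(B_R)$; by Rellich-Kondrachov a subsequence converges in $L^2(B_R)$. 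Multiplying by the bounded, $B_R$-supported function $V_R$ preserves this convergence in $L^2(\Rm^d)$, which proves that $V_R(z-H)^{-1}$ is compact. Passing to the norm limit $R\to\infty$ yields compactness of $V(z-H)^{-1}$, and composing with the bounded factor $(z-H_V)^{-1}$ gives \eqref{eq:compactres}.

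The main obstacle is of a bookkeeping nature: one must identify $D(H)$ with a Sobolev space on which a local compact embedding is available. This identification is automatic for the elliptic Dirac-type operators considered here, but a version of the statement for more general constant-coefficient differential operators of order $\geq 1$ would require more careful attention to ellipticity in the directions of interest; for the applications of Proposition \ref{prop:HSF} invoked in the rest of the paper, the elliptic Dirac case is all that is needed.
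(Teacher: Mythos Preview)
Your proof is correct. The bound \eqref{eq:boundres} is handled identically to the paper. For compactness, however, you take a different route: you work directly with $V(z-H)^{-1}$, truncate $V$ to compact support, and invoke Rellich--Kondrachov via the Sobolev regularity $D(H)=D(H_0)\subset H^1$. The paper instead peels off the bounded perturbation $H_1$ by writing
\[
  (z-H)^{-1}V = \big((z-H)^{-1}H_1 + I\big)(z-H_0)^{-1}V,
\]
so that the compactness reduces to that of $(z-H_0)^{-1}V$, which is of the form $p(i\nabla)q(x)$ with both $p$ and $q$ bounded and vanishing at infinity; this is a standard compactness criterion (see e.g.\ Simon or Thaller). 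Your argument is more PDE-flavored and self-contained; the paper's is shorter once one accepts the $p(i\nabla)q(x)$ result as a black box, and it makes explicit that the decay of the resolvent symbol at large $|k|$ is what is really being used. Both approaches tacitly require an ellipticity-type condition on $H_0$ (for you, to identify the domain with a Sobolev space; for the paper, to ensure $(z-\hat H_0(k))^{-1}\to 0$), and you correctly flag this.
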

\begin{proof}
 Since $V$ is bounded, we can choose $\mu=1$ in \eqref{eq:boundres}. 
 
From \eqref{eq:AB}, we have $(z-H_V)^{-1}-(z-H)^{-1}=(z-H_V)^{-1} V (z-H)^{-1}$ and so the result follows (up to change of $z$) from showing that $(z-H)^{-1}V$ is compact. Using the same formula once more (or \eqref{eq:AB2}), we deduce that 
 \[
 	(z-H)^{-1}V = (z-H)^{-1}H_1(z-H_0)^{-1} V + (z-H_0)^{-1}V= \Big((z-H)^{-1}H_1+I\Big) (z-H_0)^{-1}V.
 \]
 So, again, we are left with showing that $(z-H_0)^{-1}V$ is compact. This is of the form $p(i\nabla)q(x)$ with both functions $p$ and $q$ bounded and converging to $0$ at infinity. The product is therefore compact; see, e.g., \cite[Lemma 4.6]{thaller2013dirac} and \cite[Theorem 3.8.8]{simon2015operator}. \footnote{The proof of \cite[Theorem 3.8.8]{simon2015operator} showing that $(z-H_0)^{-1}V$ is uniformly approximated by Hilbert-Schmidt operators easily applies to (non-local) operators $V$ of the form $v(x')\phi_1\otimes\phi_2$ with $v(x')$ bounded and converging to $0$ at infinity and $\phi_{1,2}=\phi_{1,2}(x_d)\in L^2(\Rm^d)$. Here, $x=(x',x_d)$. Such operators were used in \cite{B-EdgeStates-2018} for a quantitative analysis of the effects of random fluctuations on edge modes.}
\end{proof}

Formula \eqref{eq:deltaf} is also useful to obtain the following stability result used in section \ref{sec:2d}:
\begin{proposition}
\label{prop:HSF2}
 Let $f$ and $\tilde f$ be as indicated above with $f$ compactly supported. Let $H_1$ and $H_2$ be unbounded operators on $L^2(\Rm^2)$ such that $f(H_1)$ and $f(H_2)$ are bounded operators. 
 Assume that 
 \begin{equation}\label{eq:stabH}
 	 \| (z-H_1)^{-1} - (z-H_2)^{-1} \| \leq  C_1 \dfrac{1}{|\Im z|},
 \end{equation}
 for a constant $C_2$ independent of $z$. Then there is a constant $C_2$ independent of $H_1$ and $H_2$ such that
 \begin{equation}\label{eq:stabf}
 	 \| f(H_1)- f(H_2) \| \leq  C_2 C_1 .
 \end{equation}
 \end{proposition}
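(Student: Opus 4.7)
The plan is to apply the Helffer-Sj\"ostrand identity \eqref{eq:deltaf} directly and to pass the operator norm under the integral sign using the hypothesis \eqref{eq:stabH}. More precisely, I would write
\[
  f(H_1)-f(H_2) = \dint_{\Cm} \pdr{\tilde f}{\bar z}(z) \Big((z-H_1)^{-1}-(z-H_2)^{-1}\Big) dx\,dy,
\]
bound the integrand in operator norm by $C_1 \big|\pdr{\tilde f}{\bar z}(z)\big|/|y|$ via \eqref{eq:stabH}, and reduce the proposition to showing that the scalar weight
\[
  C_2 := \dint_\Cm \Big|\pdr{\tilde f}{\bar z}(z)\Big| \dfrac{1}{|y|}\,dx\,dy
\]
is finite and depends only on $f$ and the cutoff $\tau$ entering $\sigma(z)$. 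Justifying that \eqref{eq:deltaf} makes sense as a norm-convergent integral for each of $H_1,H_2$ separately (and hence for their difference) is standard once $f$ is compactly supported and $f(H_j)$ is bounded.

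The core of the argument is the estimate of $C_2$. Decomposing $\pdr{\tilde f}{\bar z}$ according to \eqref{eq:HSftilde} with $n=1$ gives three pieces: (a) $\tfrac12 f(x)(\sigma_x+i\sigma_y)$ and (b) $\tfrac12 f'(x)(iy)(\sigma_x+i\sigma_y)$, both supported in the annulus $\aver{x}\leq|y|\leq 2\aver{x}$, together with the error term (c) $\tfrac12 f''(x)(iy)\sigma(z)$ supported in $|y|\leq 2\aver{x}$. Using $|\sigma_x+i\sigma_y|\lesssim \aver{x}^{-1}$ (from the chain rule applied to $\tau(y/\aver{x})$) and $|y|\geq \aver{x}$ on the annulus, piece (a) contributes a bounded multiple of $|f(x)|/\aver{x}$ after integrating in $y$, which is integrable in $x$ since $f$ has compact support. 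In piece (b), the factor $|y|$ cancels $1/|y|$ and the remaining integrand is bounded with compact $x$-support. Piece (c) works similarly: the prefactor $|y|$ cancels $1/|y|$ exactly, leaving $\tfrac12 |f''(x)\sigma(z)|$ on the region $|y|\leq 2\aver{x}$, whose integral is bounded by $C\aver{x}|f''(x)|$, again finite upon $x$-integration.

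The main obstacle I would flag is precisely the potential singularity of $1/|y|$ at $y=0$ in the weight. This is controlled by the particular structure of the almost-analytic extension \eqref{eq:HSftilde}: the terms that do not already carry a power of $iy$, namely (a), live on the annulus $\aver{x}\leq|y|$ and are thus uniformly away from $y=0$; the terms that would sit near $y=0$, namely (b) and (c), carry an explicit $iy$ that kills the singularity. This is exactly the reason for choosing $n\geq 1$ rather than $n=0$ in \eqref{eq:HSftilde}; with $n=0$ the error term would produce a non-integrable $1/|y|$ contribution.

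Combining the three pieces yields $C_2<\infty$, with $C_2$ determined only by $f$ (through $f,f',f''$) and the fixed cutoff $\tau$, and the proposition follows via
\[
  \| f(H_1)-f(H_2)\| \leq \dint_\Cm \Big|\pdr{\tilde f}{\bar z}(z)\Big| \big\| (z-H_1)^{-1}-(z-H_2)^{-1}\big\|\, dx\,dy \leq C_2 C_1.
\]
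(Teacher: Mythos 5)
Your proposal is correct and follows essentially the same route as the paper: apply the Helffer--Sj\"ostrand representation \eqref{eq:deltaf}, push the operator norm under the integral using \eqref{eq:stabH}, and check that the weight $\int|\partial_{\bar z}\tilde f|/|y|$ is finite by observing that the $r=0$ term lives on the annulus $\aver{x}\leq|y|\leq2\aver{x}$ where $1/|y|$ is harmless, while every other term in \eqref{eq:HSftilde} carries an explicit factor $iy$ that cancels the singularity, all over a bounded region in $x$ by the compact support of $f$. The only cosmetic difference is that you invoke the finer bound $|\sigma_x+i\sigma_y|\lesssim\aver{x}^{-1}$ for the $r=0$ piece whereas the paper is content with $|y|^{-1}\leq1$ on the annulus; both suffice once $f$ has compact support.
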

\begin{proof}
 We deduce from \eqref{eq:deltaf} that
 \[
   \| f(H_1)- f(H_2) \| \leq C_1\dint_{\Cm}  \Big|\pdr{\tilde f}{\bar z}(z)  \Big|  \dfrac{1}{|y|} dx dy.
 \]
 The integration involves compactly supported functions in $x$ and integration over $|y|<2\aver{x}$. From \eqref{eq:HSftilde}, all terms but the one corresponding to $r=0$ are bounded by a constant times $|y|$. The corresponding integrals of bounded functions over a bounded domain are therefore bounded. For the case $r=0$, we observe that the domain of integration is included in the support of $\sigma_x+i\sigma_y$, which is $\aver{x}\leq|y|\leq2\aver{x}$, where $|y|^{-1}\leq 1$. The integral is then also clearly bounded,  which yields $C_2$.
 \end{proof}
Finally, we will use the propagation formula to obtain that the fluctuations are trace-class for appropriate perturbations $V$. This will be useful in the expression of the topological index as a trace representing a current along the edge.
\begin{proposition}\label{prop:HSF3}
Let $f$ and $\tilde f$ be constructed as above with $f$ compactly supported and $n=3$. Let $V=V_1V_2$ with $V_j$ bounded operators and assume that 
\begin{equation}\label{eq:HSbound}
  \|(z-H)^{-1}V_j \|_{HS}\leq \dfrac{C}{|\Im z|},\quad j=1,2,
\end{equation}
where $HS$ denotes the Hilbert Schmidt norm. Then for $H_V=H+V$, we have that $f(H_V)-f(H)$ is trace-class.
\end{proposition}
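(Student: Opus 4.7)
The plan is to combine the Helffer--Sj\"ostrand formula \eqref{eq:deltaf} with the resolvent identity \eqref{eq:AB} applied to the factorization $V=V_1V_2$, which yields
\[
(z-H_V)^{-1}-(z-H)^{-1} = (z-H_V)^{-1}V_1\cdot V_2(z-H)^{-1}.
\]
It suffices to bound each factor in Hilbert--Schmidt norm quantitatively in $|\Im z|$ and to show that the resulting ${\cal I}_1$ bound is integrable against $\partial_{\bar z}\tilde f$.

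For the right factor, I would take adjoints and use self-adjointness of $H$ to relate $\|V_2(z-H)^{-1}\|_{HS}$ to an $HS$ bound on $(\bar z-H)^{-1}V_2^*$, reading \eqref{eq:HSbound} as a left/right symmetric statement (the natural case being $V_j$ self-adjoint, which is the situation produced by Lemma \ref{lem:Vcompact} for the Dirac perturbations considered in this paper); this yields $\|V_2(z-H)^{-1}\|_{HS}\lesssim |\Im z|^{-1}$. For the left factor, I would use \eqref{eq:AB} again in the form $(z-H_V)^{-1}=(z-H)^{-1}+(z-H)^{-1}V(z-H_V)^{-1}$ to obtain the closed expression
\[
(z-H_V)^{-1}V_1 = (z-H)^{-1}V_1\bigl(I+V_2(z-H_V)^{-1}V_1\bigr),
\]
whose $HS$-norm is bounded by $\|(z-H)^{-1}V_1\|_{HS}(1+\|V_1\|\|V_2\||\Im z|^{-1})\lesssim |\Im z|^{-2}$. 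Multiplying the two factor-bounds then gives the key trace-norm estimate $\|(z-H_V)^{-1}-(z-H)^{-1}\|_{{\cal I}_1}\lesssim |\Im z|^{-3}$.

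It remains to verify that $\dint_{\Cm}|\partial_{\bar z}\tilde f(z)|\,|\Im z|^{-3}\,dx\,dy$ is finite. Using \eqref{eq:HSftilde} with $n=3$, I would handle the five terms separately. The four terms carrying the factor $\sigma_x+i\sigma_y$ are supported on $\aver{x}\leq|y|\leq 2\aver{x}$, where $|y|\geq 1$ so $|\Im z|^{-3}$ is bounded on the support; combined with $|\sigma_x+i\sigma_y|\lesssim\aver{x}^{-1}$ and the compact support of the derivatives $f^{(r)}$, this yields absolutely convergent integrals. The remainder term carries the factor $(iy)^3\sigma(z)/6$, whose $|y|^3$ cancels $|\Im z|^{-3}$ exactly; since $\sigma(z)$ is supported on $|y|\leq 2\aver{x}$ and $f^{(4)}$ is compactly supported, the integrand is bounded on a bounded domain. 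Since ${\cal I}_1$ is a Banach space, the Bochner integral converges in trace norm, and its value equals $f(H_V)-f(H)$ by \eqref{eq:deltaf}.

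The main obstacle is the self-referential nature of estimating $(z-H_V)^{-1}V_1$ in $HS$: the most direct resolvent expansion reintroduces the very object one is trying to bound. The identity above breaks this circularity by inserting $V_2$ on the right of the inner resolvent so that the bracketed factor becomes bounded rather than $HS$, at the cost of one extra power of $|\Im z|$. That cost in turn dictates the choice $n=3$ in the Helffer--Sj\"ostrand extension, which is precisely the order needed for $\partial_{\bar z}\tilde f$ to carry enough vanishing at $y=0$ to absorb the $|\Im z|^{-3}$ blow-up of the trace norm.
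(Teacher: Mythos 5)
Your proof is correct and follows essentially the same route as the paper: a second-order resolvent expansion that sandwiches both $V_1$ and $V_2$ between unperturbed resolvents so the hypothesis \eqref{eq:HSbound} can be used twice, yielding the trace-norm bound $\lesssim|\Im z|^{-3}$, which is then absorbed by the vanishing of $\partial_{\bar z}\tilde f$ at $\Im z=0$ supplied by the choice $n=3$. Your rearrangement $(z-H_V)^{-1}V_1=(z-H)^{-1}V_1[I+V_2(z-H_V)^{-1}V_1]$ is algebraically equivalent to the paper's identity (which incidentally has a sign typo, $[(z-H_V)^{-1}V-I]$ should read $[(z-H_V)^{-1}V+I]$), and you are right to flag that converting $\|V_2(z-H)^{-1}\|_{HS}$ to the hypothesized form requires either self-adjointness of $V_j$ or a left/right-symmetric reading of \eqref{eq:HSbound}, a point the paper leaves implicit.
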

\begin{proof}
 We first observe that
 \[
 	(z-H_V)^{-1}-(z-H)^{-1} = [(z-H_V)^{-1}V-I](z-H)^{-1}V_1V_2(z-H)^{-1}.
 \]
 From the hypothesis, we observe that the right-hand side is trace-class with a norm in ${\cal I}_1$ (see Appendix) bounded by $C|y|^{-3}$ (since $(z-H_V)^{-1}$ is bounded by $|y|^{-1}$). Now, in the definition of $\partial_{\bar z}\tilde f$, we have terms involving $\sigma_x+i\sigma_y$, which are supported on $\aver{x}\leq |y|$ so that $|y|^{-3}\leq1$ and the integral involves a bounded term integrated over a compact domain and is hence bounded in the ${\cal I}_1$ sense. The last term is $f^{(4)}(iy)^3\sigma(y)$, which compensates for the $|y|^{-3}$ term and also provides the integral of a bounded term over a bounded domain. This concludes the proof of the proposition.
\end{proof}
\subsection{Bulk index}

We now apply the above propagation of perturbation to the bulk Hamiltonian $H$ given as the direct sum of $J$ operators of the form $H_\eta[m]$ with $0<|\eta|<\frac{1}{2|m|}$ and define the perturbed Hamiltonian $H_V:= H+V$.

In the absence of perturbation, $H$ admits a spectral gap since $H^2\geq m^2$. However, after perturbation, the gap may be filled by discrete spectrum.  When $0$ is an eigenvalue, then $\sgn{H+V}$ is somewhat ambiguous and certainly affected by changes in the location of that eigenvalue. Moreover, the $\sgn{}$ function is not sufficiently smooth to allow us to apply the above perturbation result. For these reasons, we introduce the smooth function $0\leq P_\delta\leq 1$ such that $P_\delta(E)=1$ for $E<-\delta$ and $P_\delta(E)=0$ for $u\geq\delta$; for instance $P_\delta=1-\chi_\delta$ for $\chi_\delta$ as introduced in the preceding section.

For the unperturbed bulk operator,  $P_\delta(H)=P(H)$ is a projector thanks to the spectral gap of the bulk Hamiltonian $H$. However, this equality no longer holds for $H_V$ and $P_\delta(H_V)$ may not be a projector. We thus need to modify the definition of the invariant accordingly. We have
\begin{theorem}
\label{thm:perturbulk2d}
	Let $H$, $H_V=H+V$ and $P_\delta$ be defined as above. Let $V$ be a perturbation such that \eqref{eq:compactres} and \eqref{eq:boundres} hold, for instance for $V$ bounded and compactly supported. Then $P_\delta(H_V) \Un P_\delta(H_V) + I-P_\delta(H_V)$ is a Fredholm operator in $L^2(\Rm^2)$ and 
	\begin{equation}
\label{eq:indexbulk}
   -\Ind{P_\delta(H_V) \Un P_\delta(H_V) + I-P_\delta(H_V)} = - \Ind{P(H) \Un P(H)+I-P(H)}.
\end{equation}
\end{theorem}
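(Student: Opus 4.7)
The plan is to reduce everything to a compact-perturbation argument: show that the difference $P_\delta(H_V)-P(H)$ is compact, expand algebraically, and then invoke stability of the Fredholm index under compact perturbations (Atkinson/Fedosov; cf.\ Appendix \ref{sec:FH}).

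First, I would verify that $P_\delta(H)=P(H)$. Indeed, by hypothesis $0<|\eta|<\frac{1}{2|m|}$ and $\delta$ is chosen smaller than the gap $m_0$ of the unperturbed direct-sum Hamiltonian $H$, so the spectrum of $H$ avoids $(-\delta,\delta)$. By spectral calculus, the smooth function $P_\delta$, which equals $1$ below $-\delta$ and $0$ above $\delta$, therefore coincides on $\mathrm{spec}(H)$ with $\chi(E<0)$. Thus $P_\delta(H)=P(H)$ is a bona fide projector, and Theorem \ref{thm:bulk2d} tells us that $T:=P(H)\Un P(H)+I-P(H)$ is Fredholm with index $-\frac12(\sgn m+\sgn\eta)$ (summed over the $J$ summands).

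The heart of the argument is to show that $K:=P_\delta(H_V)-P_\delta(H)=P_\delta(H_V)-P(H)$ is compact. Here I apply the second part of Proposition \ref{prop:HSF} with $f=P_\delta$: although $P_\delta$ is only bounded, all its derivatives are compactly supported, so the hypotheses are precisely \eqref{eq:compactres} and \eqref{eq:boundres}, which are assumed (and automatic for bounded, vanishing-at-infinity $V$ by Lemma \ref{lem:Vcompact}). The Helffer--Sj\"ostrand representation \eqref{eq:deltaf}, whose integrand decays fast enough by the resolvent bound, then yields $K$ as a norm limit of compacts, hence compact.

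Next I expand algebraically. Writing $P_V:=P_\delta(H_V)=P(H)+K$ and $T_V:=P_V\Un P_V+I-P_V$,
\begin{equation*}
T_V-T \;=\; P(H)\,\Un K \;+\; K\,\Un P(H) \;+\; K\,\Un K \;-\; K.
\end{equation*}
Since $\Un$ and $P(H)$ are bounded and $K$ is compact, each summand on the right is compact, so $T_V-T$ is compact. Consequently $T_V$ is a compact perturbation of the Fredholm operator $T$. By the standard stability of the Fredholm property and of the index under compact perturbations (e.g., Proposition \ref{prop:fedosov} in Appendix \ref{sec:FH}), $T_V$ is Fredholm and $\Ind T_V=\Ind T$, which is exactly the claimed equality \eqref{eq:indexbulk}.

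The only genuinely delicate step is the compactness of $K$; the expected difficulty lies in justifying the interchange of the Helffer--Sj\"ostrand contour integral with the (uniform/compact) limits, and in confirming that the integrand actually inherits compactness from the resolvent difference away from $\Im z=0$ while remaining integrable down to $\Im z=0$. This is handled once and for all by Proposition \ref{prop:HSF}; everything downstream is algebraic and topological.
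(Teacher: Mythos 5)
Your proof is correct and follows essentially the same approach as the paper: both reduce the claim to the compactness of $P_\delta(H_V)-P_\delta(H)$ via Proposition~\ref{prop:HSF}, note $P_\delta(H)=P(H)$ by the spectral gap, and conclude by stability of the Fredholm index under compact perturbation. Your explicit algebraic expansion of $T_V-T$ merely spells out what the paper leaves implicit.
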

\begin{proof}
  The results in Prop. \ref{prop:HSF} show that $P_\delta(H_V)-P_\delta(H)$ is compact. As a consequence, $P_\delta(H_V) \Un P_\delta(H_V) + I-P_\delta(H_V)$ is a compact perturbation of $P_\delta(H)\Un P_\delta(H) + I-P_\delta(H)=P\Un P+I-P$, which is Fredholm. The indices of both operators are therefore the same, and can be estimated using Theorem \ref{thm:bulk2d}.
\end{proof}
See Lemma \ref{lem:Vcompact} for admissible perturbations $V$. 

\subsection{Interface index}
The perturbations of the interface Hamiltonian are analyzed in a similar manner. It is even somewhat simpler as the function $f$ we use in the Helffer-Sj\"ostrand formula is compactly supported. Moreover, since the interface invariant is constructed to measure continuous spectrum that participates to transport, no change in the definition of the invariant is necessary to handle the perturbation.
We have:
\begin{theorem}
\label{thm:perturbint2d}
 Let $H=H_\eta[m(x)]$ and $H_V=H+V$. Assume that $H$ satisfies the hypotheses of the preceding section and that $V$ is such that \eqref{eq:compactres} holds; for instance $V$ is bounded and compactly supported. Then $\Pr \Uin(H_V)\Pr$ is a Fredholm operator on ${\rm Ran}\Pr$ and 
 \begin{equation}
\label{eq:perturbind2d}
  -\Ind{\Pr \Uin(H_V)\Pr} = -\Ind{\Pr \Uin(H)\Pr}.
\end{equation}
\end{theorem}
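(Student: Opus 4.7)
The plan is to deduce the result as a direct compact-perturbation argument, exploiting the fact that $\Win(E)=\Uin(E)-1$ is smooth and compactly supported in $(-\delta,\delta)$: indeed $\chi_\delta$ equals $0$ for $E\le-\delta$ and $1$ for $E\ge\delta$, so $e^{i2\pi\chi_\delta(E)}=1$ outside $(-\delta,\delta)$. This places us exactly in the first case of Proposition~\ref{prop:HSF} (compactly supported $f$, with the mild hypothesis $n=1$ on the Helffer-Sj\"ostrand extension), so that applying it with $f=\Win$ yields that $\Uin(H_V)-\Uin(H)=\Win(H_V)-\Win(H)$ is a compact operator on $L^2(\Rm^2)\otimes\Cm^2$, the only input used being the resolvent-difference compactness hypothesis \eqref{eq:compactres}.

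Next I would write
\[
  \Pr\Uin(H_V)\Pr+I-\Pr \;=\; \bigl(\Pr\Uin(H)\Pr+I-\Pr\bigr) + \Pr\bigl(\Uin(H_V)-\Uin(H)\bigr)\Pr,
\]
which exhibits the perturbed operator as a compact perturbation of the operator of Theorem \ref{thm:interface2d}. Since Fredholmness is preserved under compact perturbation and the Fredholm index is locally constant on the set of Fredholm operators (as recalled in Appendix \ref{sec:FH}), the operator $\Pr\Uin(H_V)\Pr$ (viewed on $\mathrm{Ran}\,\Pr$, equivalently after adding $I-\Pr$ on the full space) is Fredholm with the same index as $\Pr\Uin(H)\Pr$, giving \eqref{eq:perturbind2d}.

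The main step that requires any care is verifying the applicability of Proposition~\ref{prop:HSF} to the unbounded self-adjoint interface operator $H=H_\eta[m(x)]$ with its bounded perturbation $V$. This reduces to checking \eqref{eq:compactres}, which is precisely the assumption; Lemma~\ref{lem:Vcompact} supplies natural sufficient conditions (e.g.\ $V$ bounded and vanishing at infinity, in particular $V$ bounded and compactly supported). No decay estimate on the resolvent in $|\Im z|$ is needed here, because $\Win$ being compactly supported restricts the Helffer-Sj\"ostrand integration \eqref{eq:HSformula} to a bounded region in $z$; the sliver near $\Im z=0$ is handled by the standard vanishing of the integrand there, as in the first half of the proof of Proposition~\ref{prop:HSF}.

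I expect no real obstacle beyond these verifications: the argument is substantially shorter than its bulk analogue (Theorem \ref{thm:perturbulk2d}), since in the bulk case one had to replace the non-smooth, non-compactly-supported spectral projector $\chi(E<0)$ by $P_\delta$ and then use the stronger estimate \eqref{eq:boundres}, whereas here the unitary $\Uin$ is built out of a bump function by construction and the first, simpler part of Proposition~\ref{prop:HSF} suffices.
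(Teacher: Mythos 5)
Your proof is correct and follows essentially the same route as the paper's: write $\Uin=I+\Win$ with $\Win$ smooth and compactly supported, invoke the first (weaker) case of Proposition~\ref{prop:HSF} together with \eqref{eq:compactres} to conclude that $\Win(H_V)-\Win(H)$ is compact, and then appeal to the invariance of the Fredholm index under compact perturbations. Your added observation that only the compactly-supported case of Proposition~\ref{prop:HSF} is needed here (in contrast to the bulk Theorem~\ref{thm:perturbulk2d}, where \eqref{eq:boundres} is also required) is exactly what makes the paper call this argument ``even somewhat simpler.''
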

\begin{proof}
  The proof is similar to that of the bulk result. Since $\Uin(H)=I+\Win(H)$ and $\Win$ is smooth and compactly supported, we deduce from Proposition \ref{prop:HSF} that $\Win(H_V)-\Win(H)$ is compact, and as a consequence that $\Pr \Uin(H_V)\Pr-\Pr \Uin(H)\Pr$ is compact. The result then follows and an explicit expression for the index follows from Theorem \ref{thm:interface2d} and the generalization in section \ref{sec:directsum}.
\end{proof}
Here again, see Lemma \ref{lem:Vcompact} for admissible perturbations $V$. 
\subsection{Bulk-Interface correspondence}

Consider the most general perturbed systems of Dirac equations
\begin{equation}\label{Hgal}
  H_V = \oplus_{j=1}^J H_{\eta,j}[m_j(x)] + V,\qquad H_{V\pm} = \oplus_{j=1}^J H_{\eta,j}[m_{j\pm}] + V,
\end{equation}
for $V$ a compactly supported, bounded perturbation (a Hermitian matrix-valued operator on $L^2(\Rm^2;\Cm^{2J})$). We then define the approximate projection and unitary operators
\[
  P_{\delta\pm} = P_\delta (H_{V\pm}) = I-\chi_\delta(H_{V\pm}),\qquad \Uin=\Uin(H_V) = e^{i 2\pi \chi_\delta(H_V)}.
\]
Then $P_{\delta\pm}\Un P_{\delta\pm}+I-P_{\delta\pm}$ as well as $\Pr \Uin \Pr$ are Fredholm operators and we have the following {\em bulk-interface} correspondence
\begin{equation}\label{eq:bicorr}
  -\Ind{\Pr \Uin \Pr} = N_+-N_-,\quad N_\pm = -\Ind{P_{\delta\pm}\Un P_{\delta\pm}+I-P_{\delta\pm}}.
\end{equation}
Recall from section \ref{sec:directsum} that $N_\pm$ may be calculated as
\[
  N_\pm = \dsum_{j=1}^J \dfrac12 \Big( \sgn{m_j} + \sgn{\eta}\Big) \sgn{{\rm det(A_j)}}.
\]
Here, every matrix $A_j$ is assumed to be diagonal for the interface index to be defined.

In other words, the number of protected modes contributing to the unusual current described by $-\Ind{\Pr \Uin\Pr}$ is given by the difference of the bulk indices on either part of the interface. 

Note that such correspondences hold in much more general settings, and in particular in cases where the explicit bulk and interface indices cannot be computed explicitly but are still shown to be related algebraically; see, e.g.,
\cite{bourne2018chern,graf2007aspects,Graf2013,prodan2016bulk}.

\subsection{Interface index and physical quantities}
\label{sec:current}

The above bulk-interface correspondence links two topological quantities, bulk indices and interface indices. Such indices have physical relevance, as explained in, e.g., \cite{bernevig2013topological,Fruchart2013779,prodan2016bulk,RevModPhys.83.1057}. In these references, one observes that the bulk index can be related to some form of Hall conductance by means of a Kubo formula. The interface index is related to the number of `topologically protected' modes that propagate along the interface. 

From the physical point of view, the interface index is arguably the most relevant. Indeed, bulk insulators prevent transport in the frequency range of interest by assumption. In our setting, this is reflected by the fact that the index is not defined un-ambiguously as it depends on the sign of the regularization parameter $\eta$.  Whether one labels a given phase by $N\in\Zm$ or by a different classification $N+1$ makes no difference physically. The interface index on the other hand, reflecting a number of asymmetric modes propagating along the interface $x=0$, should not depend on any regularization and this is what we observe. It is changes in topological numbers (across interfaces), and not the absolute numbers themselves that carry the most physical relevance.

In this section, we would like to assign a physically relevant quantity to the interface index. A typical notion to model (asymmetric) flow along the interface $x=0$ is to consider current in the $y$ direction. We have that $J=\dot X=i[H,X]$ is the current operator when $X$ is the position operator.  It is shown in \cite{prodan2016bulk} for general models of randomness that have to be statistically stationary (with statistical laws that are independent of spatial translations) that the current is appropriately quantized. In our setting, this means considering the unperturbed operator $H$ and analyzing the following quantity. 

Let $\varphi(y)$ be a bounded non-negative function with compact support and such that $\int_{\Rm}\varphi^2(y)dy=1$; for instance $\varphi_L(y)$ the indicatrix function of the interval $[L,L+1]$ for $L\in\Rm$. Recall that $\chi_\delta(E)$ goes from $0$ to $1$ within the (bulk) spectral gap. We want to look at $\chi_\delta(E)$ as a smooth version of a function with a derivative $\chi_\delta'(E)$ equal to $\dfrac{1}{\Delta}\chi_{E_0<E<E_0+\Delta}$ where the whole segment $[E_0,E_0+\Delta]$ belongs to the spectral gap $(-\delta,\delta)$. Therefore, $\chi_\delta'(E)$ corresponds to the density of states inside the spectral gap.   It is therefore tempting to define
\[
  C_\varphi = \Tr (\varphi i[H,y] \chi'_\delta(H) \varphi)
\]
as the trace of the operator integrating current along the interface $i[H,y]$ over the domain normalized by $\varphi$ for the density of states $\chi'_\delta(H)$. We can then show, as in \cite{prodan2016bulk} and \cite{kotani2014quantization} in a slightly different context, that the above operator is indeed trace-class and that the trace is in fact equal to $-\Ind{\Pr U\Pr}$.

Since the result does not seem to generalize to the setting with non-stationary random perturbations, we do not write the details of the derivation and instead consider a different physical interpretation. Instead of the current associated to a position operator, we consider the variation associated to the {\em sign} of the position operator. In other words, the Schr\"odinger equation tells us that $\partial_t\frac {X}{|X|}$ projected along the interface is given by $i[H,\Pr]$, where $\Pr$ is projection onto $y>0$, or more generally $\Pr$ is $\Pr_{y_0}=\chi(y-y_0)$ projection onto $y>y_0$.

The amount of signal crossing the line $y=0$ per unit time (with $y_0=0$ to simplify) is thus given by an {\em interface conductivity} 
\[
 \sigma_I= \Tr\  i[\Pr,H] \chi'_\delta(H).
\]
This quantity, which is very similar to the edge conductivity $\sigma_E$ in \cite{elbau2002equality} is a physical observable that is quantized independently of a large class of random perturbations corresponding to trace-class perturbations.
We now show that the above quantity is quantized and given by the interface index. We need the following 
\begin{lemma}
\label{lem:derivtrace}
Let $g(H)$ be a smooth compactly supported function and $P$ be a bounded operator. We assume that for all smooth function $f(H)$, the operators $[P,f(H)]g(H)$ and $[P,f(H)g(H)]$ are trace-class.  Then $[P,H] f'(H) g(H)$ is trace-class and we have the equality
\[
 \Tr [P,f(H)]g(H) = \Tr [P,H] f'(H) g(H).
\]
\end{lemma}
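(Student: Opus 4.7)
Plan: The approach is to use the Helffer--Sj\"ostrand representation \eqref{eq:HSformula} of $f(H)$ together with the resolvent commutator identity $[P,(z-H)^{-1}] = (z-H)^{-1}[P,H](z-H)^{-1}$, and close the computation with a complex integration by parts that rebuilds $f'(H)$ from the integrand. Since $g$ is compactly supported in spectrum, a preliminary cut-off argument (using the product rule $[P,fg(H)]=[P,f(H)]g(H)+f(H)[P,g(H)]$, cyclicity, and the vanishing of $(f-\tilde f)g$) shows that both sides of the claimed identity depend only on $f|_{\mathrm{supp}(g)}$, so one may assume without loss of generality that $f$ itself is smooth and compactly supported, making $\tilde f$ have compact $z$-support.

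Starting with
\[
[P,f(H)]\,g(H) = \dint_\Cm \pdr{\tilde f}{\bar z}(z)\,(z-H)^{-1}[P,H](z-H)^{-1}g(H)\,dxdy,
\]
we take the trace, exchange with the integral, and use cyclicity to cycle $(z-H)^{-1}$ past $g(H)$, obtaining
\[
\Tr\,[P,f(H)]g(H) = \dint_\Cm \pdr{\tilde f}{\bar z}(z)\,\Tr\{[P,H](z-H)^{-2}g(H)\}\,dxdy.
\]
Using $(z-H)^{-2}=-\partial_z(z-H)^{-1}$ and integrating by parts in $z$---the boundary at infinity is killed by the compact support of $\tilde f$, and no spurious contribution appears on the spectrum since $\partial_{\bar z}\tilde f$ vanishes to high order on $\Rm$ while $(z-H)^{-1}$ is holomorphic off the real axis---transfers the derivative to produce $\int_\Cm \partial_z\partial_{\bar z}\tilde f(z)(z-H)^{-1}\,dxdy$.

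Now $\partial_z\tilde f$ is an almost-analytic extension of $f'$: its restriction to $\Rm$ equals $f'$ because $\partial_{\bar z}\tilde f|_\Rm = 0$ forces $\partial_y\tilde f|_\Rm = i f'$, so $\partial_z\tilde f|_\Rm = f'$; and $\partial_{\bar z}(\partial_z\tilde f) = \partial_z(\partial_{\bar z}\tilde f)$ still vanishes to the order required provided $n$ in \eqref{eq:HSftilde} is chosen large enough. By the uniqueness of the Helffer--Sj\"ostrand functional calculus with respect to the choice of extension, the transformed integral equals $f'(H)$, and we conclude $\Tr\,[P,f(H)]g(H) = \Tr\,[P,H]f'(H)g(H)$; the very same integral representation certifies that $[P,H]f'(H)g(H)\in\mathcal{I}_1$.

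The main obstacle is trace-class bookkeeping at each $z$, so that the interchange of $\Tr$ with $\int$ and the cyclicity step are legitimate. Two related subtleties arise: the hypothesis supplies trace-class membership without an a priori quantitative norm, and $[P,H]$ is not a bona fide bounded operator (in the applications $P=\mP$ is a sharp spatial projection and $[P,H]$ is a current distribution concentrated along the interface). Both are handled by the algebraic rearrangement $[P,H]A = [P,HA]-H[P,A]$ applied to $A=(z-H)^{-1}g(H)$, which is the functional calculus of $H$ at a smooth compactly supported function obtained by cutting off $E\mapsto (z-E)^{-1}$ outside a neighborhood of $\mathrm{supp}(g)$. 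Each commutator then falls under the hypothesis, and the identity $H(z-H)^{-1}=z(z-H)^{-1}-I$ absorbs the unbounded $H$ into an extra resolvent; a closed-graph argument on the continuous linear map from the Fr\'echet space of smooth cut-off resolvents into $\mathcal{I}_1$ promotes qualitative trace-class to a $|\Im z|^{-2}$ quantitative bound, which, integrable against $|\partial_{\bar z}\tilde f|$ for $n\geq 2$ in \eqref{eq:HSftilde}, closes the argument via Fubini.
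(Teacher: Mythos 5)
Your argument takes a genuinely different route from the paper's. The paper reduces to compactly supported $f$ via a cutoff $\psi(H)$, approximates $f$ by polynomials $f_p$, proves the identity for monomials $H^n$ by induction using the derivation property $[P,AB]=A[P,B]+[P,A]B$ and cyclicity of the trace, and then passes to the limit. You instead differentiate the Helffer--Sj\"ostrand representation and identify $\partial_z\tilde f$ as an almost-analytic extension of $f'$; the observation that $\partial_z\tilde f|_\Rm=f'$ with $\partial_{\bar z}\partial_z\tilde f$ vanishing to order $n-1$ is correct, and the integration-by-parts skeleton is the standard HS way of differentiating functional calculus. Both strategies are legitimate; yours meshes more naturally with the perturbation machinery of the paper's section \ref{sec:stab}, while the paper's is more elementary and sidesteps quantitative resolvent estimates.

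There is, however, a gap in your treatment of the unbounded commutator $[P,H]$. The rearrangement $[P,H]A=[P,HA]-H[P,A]$ with $A=(z-H)^{-1}g(H)$ does not absorb $H$ the way you claim. Expanding $H[P,A]=HPA-HAP$, the second piece is tame because $HA$ is a bounded, spectrally compactly supported function of $H$; but in $HPA$ the bounded operator $P$ sits \emph{between} $H$ and the spectrally localized $A$, so no resolvent reaches $H$, and $HPA$ need not even be a bounded operator (nothing forces $P$ to map into $D(H)$). The repair is to localize $H$ \emph{before} commuting: pick $\psi\in C_c^\infty(\Rm)$ with $\psi\equiv 1$ on a neighborhood of $\operatorname{supp} g$; then $[P,H]A=[P,\psi(H)H]A$ because $(1-\psi(H))HA=0$ and $(1-\psi(H))HAP=0$ (the range of $AP$ lies in the spectral subspace where $\psi\equiv1$). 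This brings you back to the hypothesis with the smooth compactly supported function $\psi(E)E$ in place of $f$, and the remaining bounded functions of $H$ are absorbed by the ideal property of $\mathcal I_1$. A secondary looseness: the closed-graph theorem yields a bound by finitely many Fr\'echet seminorms of the cut-off resolvent, hence a bound of some a priori undetermined power of $|\Im z|^{-1}$; one should therefore take $n$ in \eqref{eq:HSftilde} correspondingly large rather than fixing $n\geq2$ in advance.
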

\begin{proof}
 This lemma is essentially \cite[Lemma A.4]{elbau2002equality}. We propose a quick derivation.
 Let $\psi(H)$ be a smooth compactly supported function equal to $1$ on the support of $g(H)$. We then verify by cyclicity of the trace that $\Tr [P,f(H)]g(H)=\Tr [P,f(H)\psi(H)]g(H)$. In other words, we can assume that $f(H)$ is compactly supported as well. Let then $f_p(H)$ be sequence of polynomials so that $(f(H)-f_p(H))\chi(H)$  and $(f'(H)-f'_p(H))\chi(H)$ go to $0$ uniformly as $p\to\infty$. From the assumptions, $[P,f_pg]$ is trace-class and
 \[
  0=\Tr [P,f_pg] = \Tr [P,f_p]g + \Tr [P,g]f_p
 \]
 since all operators involved are trace-class. The same holds with $f_p$ replaced by $f$. Since $[P,g]$ is trace-class, $\Tr [P,g] (f-f_p)$ converges to $0$. By hypothesis, $[P,H] f'_pg$ and $[P,H]f'g$ are trace-class and $\Tr[P,H] f'_pg\to \Tr [P,H]f'g$ as $p\to\infty$ since $[P,H]g$ is trace-class. It remains to show that $\Tr [P,f_p(H)]g(H) = \Tr [P,H] f_p'(H) g(H)$ to conclude. Since $f_p$ is a polynomial, it is enough to prove the result with $f_p(H)=H^n$ for any $n\geq0$. It is clear for $n=0$ and $n=1$. 
 
 We observe that $[P,AB]=A[P,B]+[P,A]B$, i.e., $[P,\cdot]$ acts as a (non-commutative) derivation, and calculate
 \[
  \Tr [P, H^n] g= \Tr (H[P, H^{n-1}]+[P,H]H^{n-1})g= \Tr ([P, H^{n-1}]H+[P,H]H^{n-1})g
\]
by cyclicity of the trace $\Tr AB=\Tr BA$.
The result holds for $n=2$. Assuming it does for $n-1$, we get
\[
\Tr [P,H]((n-1)H^{n-2}H + H^{n-1}) g(H) = \Tr [P,H] (H^n)' g(H).
\]
This proves the result for $f_p$ a polynomial, and as described above, for any $f(H)$ by continuity. 
\end{proof}
We now apply the lemma to obtain the following result.
\begin{proposition}
\label{prop:quantsign}
	Let $H=H[m(x)]$ with $m(x)$ monotone and $H_V=H+V$. Assume that $V$ satisfies \eqref{eq:HSbound}. Then we have
	\begin{equation}\label{eq:quantsign}
	 -\Ind{\Pr \Uin(H_V)\Pr} = \Tr [\Pr,\Uin(H_V)]\Uin^*(H_V) = 2\pi i\Tr [\Pr,H_V] \chi_\delta'(H_V).
	\end{equation}
\end{proposition}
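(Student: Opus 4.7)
The statement contains two equalities. The first, $-\Ind{\Pr\Uin(H_V)\Pr} = \Tr([\Pr,\Uin(H_V)]\Uin^*(H_V))$, is the perturbed analogue of the identity in Proposition~\ref{prop:interfacetrace}. The plan is to verify trace-class-ness of $[\Pr,\Uin(H_V)]\Uin^*(H_V)$ and then invoke the Fedosov-type index formula from Appendix~\ref{sec:FH}. Since $m$ is monotone, Proposition~\ref{prop:interfacetrace} gives $[\Pr,\Uin(H)]\Uin^*(H)\in\mathcal{I}_1$. Proposition~\ref{prop:HSF3}, applied to the smooth compactly supported $\Win=\Uin-I$ under hypothesis~\eqref{eq:HSbound}, yields $\Uin(H_V)-\Uin(H)\in\mathcal{I}_1$, and the ideal property of $\mathcal{I}_1$ then gives $[\Pr,\Uin(H_V)]\Uin^*(H_V)\in\mathcal{I}_1$ as well.

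For the second equality, the strategy is to apply Lemma~\ref{lem:derivtrace} with a choice realizing the pointwise identity $\Uin'(E)\Uin^*(E)=2\pi i\,\chi_\delta'(E)$. Fix a smooth compactly supported $\psi$ with $\psi\equiv 1$ on $[-\delta,\delta]$, and set $f(E)=\Uin(E)$ and $g(E)=\Uin^*(E)\psi(E)$. Since $\Uin^*(E)=1$ for $|E|\geq\delta$, $g$ is smooth and compactly supported, and $f'(E)g(E)=2\pi i\,\chi_\delta'(E)\psi(E)=2\pi i\,\chi_\delta'(E)$ because $\psi\equiv 1$ on $\mathrm{supp}\,\chi_\delta'$. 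The lemma then gives
\[
\Tr([\Pr,\Uin(H_V)]g(H_V)) = 2\pi i\,\Tr([\Pr,H_V]\chi_\delta'(H_V)).
\]
To finish, one replaces $g(H_V)$ by $\Uin^*(H_V)$. Since $\Uin^*(E)=g(E)+(1-\psi(E))$ holds as a pointwise identity (both sides agreeing on $|E|<\delta$ where $\psi\equiv 1$ and on $|E|\geq\delta$ where $\Uin^*\equiv 1$), the task reduces to showing $\Tr([\Pr,\Win(H_V)](1-\psi(H_V)))=0$. The Leibniz identity $[\Pr,\Win](1-\psi)=[\Pr,\Win(1-\psi)]-\Win[\Pr,1-\psi]$, combined with $\Win\cdot(1-\psi)\equiv 0$ (disjoint supports), yields $[\Pr,\Win](1-\psi)=\Win[\Pr,\psi]$; trace cyclicity together with $\psi\Win=\Win\psi=\Win$ collapses this trace to $\Tr[\Pr,\Win(H_V)]$, which vanishes via the decomposition $\Win(H_V)=\Win(H)+(\Win(H_V)-\Win(H))$: the unperturbed piece has a commutator kernel of the form $(\chi(y)-\chi(y'))w(x,x',y-y')$ that vanishes on the diagonal (Proposition~\ref{prop:interfacetrace}), while the trace-class perturbation has zero-trace commutator with $\Pr$ by cyclicity.

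The main technical hurdle is the verification of the trace-class hypotheses of Lemma~\ref{lem:derivtrace} for the non-compactly-supported $f=\Uin$: one must establish that $[\Pr,\tilde f(H_V)]g(H_V)$ and $[\Pr,\tilde f(H_V)g(H_V)]$ are trace class for every smooth $\tilde f$. The second condition reduces to the commutator of $\Pr$ with the compactly supported spectral function $\tilde f(H_V)g(H_V)$, and is handled by the Proposition~\ref{prop:HSF3}/Proposition~\ref{prop:interfacetrace} combination as in the first equality. The first condition requires factoring $g=\psi+\Win^*\psi$ and treating each piece; here the hypothesis~\eqref{eq:HSbound} on $V=V_1V_2$ is essential to upgrade Hilbert--Schmidt estimates to the trace-class estimates needed to propagate the interface structure of Proposition~\ref{prop:interfacetrace} to the perturbed operator $H_V$.
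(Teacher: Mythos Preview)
Your proposal is correct in outline and uses the same key tool (Lemma~\ref{lem:derivtrace}) as the paper, but the paper's route is more direct. The paper immediately writes $[\Pr,\Uin]\Uin^*=[\Pr,\Win](I+\Win^*)$ and applies the lemma twice with the \emph{compactly supported} choices $(f,g)=(\Win,\Win^*)$ and $(f,g)=(\Win,\psi)$, obtaining
\[
\Tr[\Pr,\Uin]\Uin^*=\Tr[\Pr,H_V]\Win'\Win^*+\Tr[\Pr,H_V]\Win'=\Tr[\Pr,H_V]\Win'(I+\Win^*)=2\pi i\,\Tr[\Pr,H_V]\chi_\delta'
\]
in one line. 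Your choice $f=\Uin$, $g=\Uin^*\psi$ forces you to handle an extra correction term $\Tr([\Pr,\Win](1-\psi))$, and your claim that this ``collapses to $\Tr[\Pr,\Win(H_V)]$'' via cyclicity is not justified as written: expanding $\Win[\Pr,\psi]=\Win\Pr\psi-\Win\Pr$ produces terms that are not individually trace-class, so cyclicity cannot be applied termwise. The correction term does vanish, but the clean argument is a further application of the lemma with $(f,g)=(\psi,\Win)$, using $\psi'\Win\equiv 0$; alternatively your separate computation that $\Tr[\Pr,\Win(H_V)]=0$ is valid and can be combined with $\Tr([\Pr,\Win]\psi)=\Tr([\Pr,H_V]\Win')$ to reach the same end. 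The paper sidesteps all of this bookkeeping by never invoking $f=\Uin$.
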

\begin{proof}
	Let us first prove the above result for $V\equiv0$. We know that $[\Pr,\Uin]\Uin^*$ is trace-class and that its trace is also that of $[\Pr,\Win]\Win^*$. The proof that $[\Pr,\Win]$ is trace-class in Proposition \ref{prop:interfacetrace} applies to any smooth function $f(H)$ that is compactly supported instead of $\Win(H)$. The assumptions of Lemma \ref{lem:derivtrace} are thus satisfied for $P=\Pr$. Choosing $f(H)=\Win(H)$ and $g(H)=\Win^*(H)$, we have $[\Pr,\Uin]\Uin^*=[\Pr,\Win](I+\Win^*)$ whose trace is $[\Pr,H]\Win'(I+\Win^*)$ and hence that of $[\Pr,H]\Uin'\Uin^*$, knowing that  $[\Pr,\Win]=[\Pr,\Win]\psi(H)=[\Pr,H]\Win'\psi=[\Pr,H]\Win'$ has vanishing trace, where $\psi(H)$ is defined in the proof of Lemma \ref{lem:derivtrace}. This yields \eqref{eq:quantsign} when $V\equiv0$ for $\Uin(H)$ defined in \eqref{eq:UH}.
	
	Now, by assumption on $V$, we obtain that $f(H)-f(H_V)$ is trace-class for any smooth compactly supported function $f$. This shows that the hypotheses of Lemma \ref{lem:derivtrace} are also satisfied when $H$ is replaced by $H_V$ and still $P=\Pr$. The above proof therefore also applies to $\Uin(H_V)=I+\Win(H_V)$ and the result follows.
\end{proof}

The above result is similar to the edge conductivity defined in \cite{graf2007aspects} in a slightly different context; see e.g., (15) there. It is known that the above trace-class assumption fails to hold for `stronger' random fluctuations; compare (15) to (12) and (20) in that reference. We thus expect \eqref{eq:quantsign} to fail for fluctuations that generate compact but not trace-class perturbations.

The formula \eqref{eq:quantsign} still provides a physically appealing and intuitive picture for the fact that the index is related to the current along the interface. In the presence of random fluctuations, which generate trace-class perturbations, the transport along the interface is quantized as indicated in the preceding formula, where $\chi_\delta'(H_V)$ represents a density of states and $i[\Pr,H_V]$ a rate of change of sign (from negative values of $y$ to positive values of $y$).

\cout{. 
\section{Chiral Symmetry and three dimensional case}
\label{sec:3D}

This section extends the previous results to the case of three dimensional bulk  and interface Hamiltonians with chiral symmetry (with then a two-dimensional interface). Since it is the physically most practical setting, along with the two-dimensional case considered in the preceding section, we treat it in some detail. We start with the bulk index theory.

\subsection{Bulk index in three dimensions}
\label{sec:bulk3D}

In three dimensions, the Hamiltonian needs to act on four-dimensional spinors and a convenient representation of the Hamiltonian is
\begin{equation}
\label{eq:H3d}
 H_\eta[m] = \dfrac 1i \nabla\cdot (\sigma_1 \otimes \sigma) + (m+\eta\Delta) (\sigma_2\otimes I_2).
\end{equation}
Above, $\sigma=(\sigma_1,\sigma_2,\sigma_3)$ while $\nabla=(\partial_{x_1},\partial_{x_2},\partial_{x_3})$. 
We note the chiral symmetry of the Hamiltonian $\gamma_0 H+H\gamma_0=0$, where $\gamma_0=\sigma_3\otimes I_2 = {\rm Diag}(I_2,-I_2)$. This implies that $H_\eta[m]$ may be represented as
\begin{equation}
\label{eq:sh}
 H_\eta[m] = \left(\begin{matrix} 0& \fh^* \\ \fh & 0\end{matrix}\right)=\sigma_-\otimes\fh+\sigma_+\otimes\fh^*,\qquad \fh = \frac1i \nabla\cdot\sigma+ i(m+\eta\Delta),
\end{equation}
where $\sigma_\pm=\sigma_1\pm\sigma_2$.
It also implies that the Fermi projection $P_F$ and the associated Fermi unitary $U_F$ are defined via
\begin{equation}\label{eq:Fermi3D}
	1-2P_F = \sgn{H} = \left(\begin{matrix} 0& U_F^* \\ U_F & 0\end{matrix}\right), \qquad U_F=\sgn{\fh}:=\frac{\fh}{(\fh^*\fh)^{\frac12}} = \frac{\fh}{|H|}.
\end{equation}
The topology of the 3D bulk Hamiltonian is displayed by using an odd Fredholm module similar to the one constructed to compute the topology of the one-dimensional edges. The module is based on the position Dirac operator $x\cdot\sigma$, its sign $\sgn{x\cdot\sigma}$ and the associated projector $\Pr=\frac12(I_2+\sgn{x\cdot\sigma})$.

Both $U_F$ and $\Pr$ are matrix-valued operators. However, the probing of the topology of $U_F$ by $\Pr$ (or vice-versa) is obtained component-wise as there is a priori no reason to mingle their matrix-structures. Notationally, this forces us to introduce the tensor product of both spaces where $U_F$ and $\Pr$ are defined, something we did not need to do in the two dimensional case as $\Un$ there is scalar-valued. We define
\begin{equation}
\label{eq:tilde}
 \tilde U_F = U_F \otimes I,\qquad \tilde\Pr = I \otimes \Pr
\end{equation}
where both $I$ above are equal to $I_2$. Then the operator of interest is $\tilde \Pr \tilde U_F\tilde \Pr$ on Ran$(\tilde\Pr)$. The main computational outcome of the tensorization is that traces in the tensor product are products of the traces in each component. In other words, while components of $U_F$ and $\Pr$ do not commute as operators on $L^2(\Rm^3)$, they commute as far as their matrix structure is concerned. 

We can now state our main result for unperturbed Hamiltonians.
\begin{theorem}\label{thm:3dclean}
   Let $\tilde U_F$ and $\tilde \Pr$ be defined as above. Then $\tilde \Pr \tilde U_F\tilde \Pr$ is a Fredholm operator on Ran$(\tilde\Pr)$ and 
   \begin{equation}\label{eq:3dindexclean}
   	-\Ind{\tilde \Pr \tilde U_F\tilde \Pr} = \dfrac12 \big( \sgn{\eta}+\sgn{m}\big).
   \end{equation}
\end{theorem}

In the presence of perturbation, we consider Hamiltonians still of the form \eqref{eq:sh} with $\fh$ replaced by $\fh_V=\fh+\fv$. Here, $\fv$ is an arbitrary bounded operator, which does not need to be Hermitian. The operator $H_V=H+V$ with obvious notation is therefore still chiral by construction since $\gamma_0 V+V \gamma_0=0$. We then construct $U_V$ and $\tilde U_V$ as before, replacing $\fh$ by $\fh_V$. Then we have the main result of this section:
\begin{theorem}\label{thm:3dperturb}
   Let $H_V$, $\tilde U_V$ and $\tilde \Pr$ be defined as above. Let $V$ be a chiral perturbation such that \eqref{eq:compactres} and \eqref{eq:boundres} hold. Then $\tilde \Pr \tilde U_V\tilde \Pr$ is a Fredholm operator on Ran$(\tilde\Pr)$ and 
   \begin{equation}\label{eq:3dindexperturb}
   	-\Ind{\tilde \Pr \tilde U_V\tilde \Pr} = \dfrac12 \big( \sgn{\eta}+\sgn{m}\big).
   \end{equation}
\end{theorem}

The rest of the section is devoted to the proof of the above two theorems.
Let us consider first the case without perturbation $V\equiv0$. 
The objective is next to show that $[\tilde \Pr,\tilde U_F]$ and $[\tilde \Pr,\tilde U_F^*]$ belong to ${\cal I}_3$ (component by component as in the two-dimensional setting) so that for $\tilde\Qr=\tilde \Pr \tilde U_F\tilde\Pr$, $(\tilde \Pr-\tilde\Qr)^3=\tilde U_F[\tilde\Pr,\tilde U_F^*][\tilde\Pr,\tilde U_F][\tilde\Pr,\tilde U_F^*]$ is trace-class.

As in the 2D case, we show that as $|k|\to\infty$, $\hat U_F(k)$ converges to $i$ so that the limit does not contribute to the commutator. Looking at the next term in the expansion in $k$, we obtain a contribution to the kernel of the form
\[ 
  [\tilde\Pr,\tilde U_{F1}]f = \dint_{\Rm^3} k_1(x,y)f(y) dy,\quad\mbox{ with } \quad k_1(x,y) =  \dfrac{\phi(x-y) (x-y)\cdot\sigma }{h|x-y|^3}\otimes (\Pr(x)-\Pr(y)) 
\] 
where $\phi(x)$ is a bounded function with fast decay as infinity using the spectral gap as in two dimensions. Since $(\Pr(x)-\Pr(y))=\frac12(\frac y{|y|}-\frac x{|x|})\cdot\sigma$, we then verify that each component of $k_1$ is bounded by a term of the form
\[
 |k_1(x,y)|_\infty \leq C {\rm min} (1, \frac{|x-y|}{|y|}) \dfrac{1}{|x-y|^2 \aver{|x-y|^\alpha}}
\]
for $\alpha$ as large as necessary (and $\alpha>1$ sufficient). All other contributions in $[\tilde \Pr,\tilde U_F]$ are equally bounded. We then invoke Lemma \ref{lem:russo} using Lemma \ref{lem:estip} with $p=3$ and $q=\frac32$ to conclude that $[\tilde \Pr,\tilde U_F]$ belongs to ${\cal I}_3$, and then so does $[\tilde \Pr,\tilde U_F]\tilde U_F^*=\tilde U_F[\tilde \Pr,\tilde U_F^*]$ and hence $[\tilde \Pr,\tilde U_F^*]$.

With this, we obtain that $(\tilde \Pr-\tilde\Qr)^3$ is trace-class. Using corollary \ref{cor:trace}, 
we deduce that the trace is given by the integral of the kernel of the above operator along the diagonal. To calculate such an integral, it is more convenient to use the form $\tilde U_F[\tilde\Pr,\tilde U_F^*][\tilde\Pr,\tilde U_F][\tilde\Pr,\tilde U_F^*]$. The reason is that the cancellations $U_F^*U_F=I$ do not appear as easily in the formulation $(\Pr-\Qr)^3$.

The kernels of $[\tilde \Pr,\tilde U_F]$ and $[\tilde \Pr,\tilde U_F^*]$ are given by 
\[
U_F(x_1-x_2) \otimes  (\Pr(x_1)-\Pr(x_2))  ,\quad \mbox{ and } \quad U_F^*(x_1-x_2) \otimes (\Pr(x_1)-\Pr(x_2)),
\]
respectively, where we use the invariance by translation to simplify the kernel of $U_F$. The trace of $(\Pr-\Qr)^3$ is therefore given by 
\[
 \dint_{\Rm^{3\times4}} \tr\prod_{j=2}^4 (\Pr(x_j)-\Pr(x_{j+1})) \tr \prod_{j=1}^4 U_{Fj}(x_j-x_{j+1}) \prod_{j=1}^4 dx_j
\] 
where $U_{Fj}$ is equal to $U_F$ for $j$ odd and to $U_F^*$ for $j$ even, and where $x_5\equiv x_1$. Define the change of variables $x=x_1$ and $y_k=x_1-x_{k+1}$ for $1\leq k\leq 3$ so that $x_{k+1}=x_1-y_k$ to obtain
\[
 \dint_{\Rm^{3\times4}} \tr\prod_{j=1}^3 (\Pr(x-y_j)-\Pr(x-y_{j+1})) \tr \prod_{j=1}^4 U_{Fj}(y_j-y_{j-1}) dx \prod_{j=1}^3 dy_j
\] 
where we have identified $y_0\equiv y_4\equiv 0$. Let $A(y_1,y_2,y_3)$ be defined as
\[
		A(y_1,y_2,y_3) = \dint_{\Rm^3} \tr\prod_{j=1}^3 (\Pr(x-y_j)-\Pr(x-y_{j+1})) dx = \dfrac18\dint_{\Rm^3} \tr \prod_{j=1}^3  \big(\dfrac{x-y_j}{|x-y_j|} - \dfrac{x-y_{j+1}}{|x-y_{j+1}|} \big) \cdot\sigma.
\]
We observe that the trace of $(\Pr-\Qr)^3$ is given by
\[
	 \dint_{\Rm^{3\times3}} A(y_1,y_2,y_3) \tr \prod_{j=1}^4 U_{Fj}(y_j-y_{j-1}) dx \prod_{j=1}^3 dy_j.
\]

We now verify the geometric identity
\[
   A(y_1,y_2,y_3)  = \dfrac{(-i)^3}3 {\rm det}(y_1,y_2,y_3) =\dfrac{(-i)^3}3 {\rm det}(y_1,y_2-y_1,y_3).
\]
Passing to Fourier variables allows us to show that the integral is given by the three-dimensional winding number on the extended Brillouin zone
\[\begin{array}{rcl}
  W_3[\hat U_F] &=& \dfrac{-1}{8\pi^2} \dint_{\Rm^3}  \tr \hat U_F^*\partial_3 \hat U_F [ \hat U_F^*\partial_1 \hat U_F, \hat U^*\partial_2 \hat U_F] dk\\ &=& \dfrac{-1}{24\pi^2} \dint_{\Rm^3} \dsum_\sigma (-1)^\sigma \tr \hat U_F^*\partial_{\sigma_1}\hat U_F \hat U_F^*\partial_{\sigma_2} \hat U_F \hat U_F^*\partial_{\sigma_3} \hat U_F dk\\
  &=& \dfrac{1}{24\pi^2} \dint_{\Rm^3} \dsum_\sigma (-1)^\sigma \tr  \hat U_F^*\partial_{\sigma_1}\hat U_F \partial_{\sigma_2} \hat U_F^* \partial_{\sigma_3} \hat U_F dk
  \end{array}
\]
using $U_F^*U_F=I$ and $\partial_{\sigma_2}(\hat U_F^*\hat U_F)=0$ to obtain the last line, which is the Fourier version of the above trace by Parseval. Above, the summation is over all (six) permutations $\sigma$ of $\{1,2,3\}$ and $(-1)^\sigma$ is the signature of $\sigma$.

It remains to calculate the above integral. We remark that 
\[
  U_F^*\partial_j U_F = \dfrac 1{|H|} \partial_j \dfrac1{|H|} + \dfrac{1}{|\fh|^2} \fh^*\partial_j \fh
\]
so that the scalar term disappears in the commutators. Therefore, 
\[
  W_3 = \dfrac{-1}{8\pi^2} \dint_{\Rm^3} \dfrac1{|\fh|^6} \tr \fh^* \partial_3 \fh [ \fh^* \partial_1 \fh,\fh^* \partial_2 \fh]dk = \dfrac{1}{8\pi^2} \dint_{\Rm^3} \dfrac1{|\fh|^6}  \tr \fh^* \partial_3 \fh(\partial_1\fh^*\partial_2\fh-\partial_2\fh^*\partial_1\fh)dk
\]
Plugging explicit expression for $\fh$ gives as in the two-dimensional setting the term
\[
  W_3 = \dfrac{-1}{8\pi^2} \dint_{\Rm^3} \dfrac{4(m+\eta|k|^2)}{(|k|^2+(m-\eta|k|^2)^2)^2} dk = -\dfrac2\pi \dint_0^\infty \dfrac{r^2(m+\eta r^2)}{(r^2+(m-\eta r^2)^2)^2} dr
\]
We find again that
\[
 W_3 = -\dfrac12 \big(\sgn{m} + \sgn{\eta}\big), \quad \eta\not=0,\qquad W_3 = -\dfrac12 \sgn{m},\quad \eta=0.
\]
This concludes the derivation of the first theorem. We again observe that the regularization $\eta\not=0$ is necessary to obtain an index. The proof of the second theorem then follows from Proposition \ref{prop:HSF} as in the proof of Theorem \ref{thm:perturbulk2d}.

\subsection{Interface index in three dimensions}
\label{sec:interface3D}
In this section, we represent coordinates in $\Rm^3$ as $(x,y,z)$ instead of $(x_1,x_2,x_3)$. The interface corresponds to $x=0$, parmetrized by $(y,z)$.

Let us now consider the setting $H_\eta[m(x)]$ with $m(x)$ a function equal to $m_\pm$ as $x\to\pm \infty$ for $|m_\pm|\geq m_0$ and $m_0>0$. The operator still satisfies the chiral symmetry $\gamma_0 H+H\gamma_0=0$ and $\fh$ in \eqref{eq:sh} takes the same expression with $m$ replaced by $m(x)$. We still define $\eps$ as in \eqref{eq:eps}.

The unitary used the probe the topology of $H_\eta[m(x)]$ is now two-dimensional  and thus takes the same form as the unitary used for the two-dimensional bulk problem  $\Un=\frac{y+iz}{|y+iz|}$.

The projector we construct out of $H=H_\eta[m(x)]$ is more complicated. Let $f(H)$ be a smooth function equal to $\pm1$ as $h\to\pm\infty$ and such that $|f|$ differs from $1$ only inside the gap $(-\delta,\delta)$ for some $\delta<m_0$. We then define
\begin{equation}\label{eq:projinterface3D}
	\Pin = e^{-i\frac\pi2 f(H)} \frac12(I+\gamma) e^{i\frac\pi2 f(H)}.
\end{equation}
The reason for this choice is that if $f$ were replaced by the sign function, then $P$ would be the projection onto the negative part of the spectrum of $H$. For the interface problem, the latter is ill-defined as $0$ belongs to the continuous spectrum of $H_\eta$; and this is the reason we need to regularize the calculus and replace the sign function by a smooth version $f$.

In this setting, we have the following result
\begin{theorem}
 \label{thm:interface3dclean}
 Let $\Pin$ and $\Un$ be defined as above.  Let $m(x)$ be a monotone (to simplify), smooth function. Then $\Pin\Un \Pin$ is a Fredholm operator on Ran$(\Pin)$ and 
 \begin{equation}\label{eq:indexinterface3dclear}
    -\Ind{\Pin \Un \Pin} = \eps = \dfrac12 \Big( \sgn{m_+} - \sgn{m_-} \Big).
 \end{equation}
\end{theorem}

In the presence of perturbation, we also consider Hamiltonians of the form \eqref{eq:sh} with $\fh$ replaced by $\fh_V=\fh+\fv$. Here, $\fv$ is an arbitrary bounded operator, which does not need to be Hermitian. The operator $H_V=H+V$ with obvious notation is therefore still chiral by construction since $\gamma_0 V+V \gamma_0=0$ and we define $\Pin_V$ as in \eqref{eq:projinterface3D} with $H$ replaced by $H_V$. Note that $\Pin_V$ is still a projector after perturbation. Then we have
\begin{theorem}\label{thm:3dinterfaceperturb}
   Let $H_V$, $\Pin_V$ and $\tilde \Un$ be defined as above. Let $V$ be a chiral perturbation such that \eqref{eq:compactres} and \eqref{eq:boundres} hold. Then $\Pin_V\Un \Pin_V$ is a Fredholm operator on Ran$(\Pin_V)$ and 
   \begin{equation}\label{eq:3dinterfaceperturb}
   	-\Ind{\Pin_V \Un \Pin_V} = \eps.
		\end{equation}
\end{theorem}

The proof of the first theorem goes as follows. We first define a spectral theorem for the above operator and obtain that there is a unique continuous branch inside the gap since $m$ is monotone. Moreover, that branch is the same as that of the two dimensional operator $\hat H=k\cdot \sigma\otimes I$ in the Fourier domain. We then look at $\Pin$ as a function of that new Hamiltonian.
We can then show that $(\Pin-\Qin)^3$ has a smooth kernel and that $\Pin\Un \Pin$ is indeed Fredholm.
Once we have this, we can use explicit expressions to calculate $\hat \Pin$ and the Chern number in two real dimensions to get the result. Note that since $\Un$ is scalar-valued, we do not need to introduce the  $\tilde{}$ operators as we did for the bulk calculations.

The proof of the above results mimics what we did for the one dimensional interface of a two dimensional material. 

As in the two-dimensional setting, it is convenient to use a different representation based on the change $\gamma_{1,2,3,4}\to\sigma_{4,1,2,3}$. We thus have $H_\eta[m(x)]=\sigma_-\otimes\fh+\sigma_+\otimes\fh^*$ with $\fh = {\cal F}^{-1} \hat \fh {\cal F}$, where ${\cal F}$ is the partial Fourier transform from $(y,z)$ to $(k_2,k_3)$ and
\[
  \hat \fh = k \sigma_3 + a \sigma_- + a^* \sigma_+ ,\quad k=k_2+ik_3,\quad a=\partial_x + m(x) +\eta(\partial^2_x -|k|^2).
\]
Note that $H_\eta^2=\fh^*\fh \otimes I$ with $\fh^*\fh={\rm Diag}(a^*a+|k|^2,aa^*+|k|^2)$. The spectral theory of $H_\eta$ is therefore quite similar to that of the same operator in the two-dimensional setting. Since $m$ is monotonically increasing, $aa^*$ does not have any spectrum inside the gap. Therefore, the only spectrum in that gap is that corresponding to $E^2=|k|^2$ and to an eigenvector of the operator $a$. Such eigenvectors are easily found to be of the form
\[
	 \psi_\pm(x,k) = \dfrac{1}{\sqrt2} \left(\begin{matrix} 1\\0\\ \pm\hat k\\0 \end{matrix}\right) \varphi_0(x,k),\qquad E_\pm(k) = \pm |k|
\]
where $a\varphi_0(x,k)=0$ is a normalized eigenvector. We observe that the above decomposition is also the spectral decomposition, within the gap, of the operator
\[
  H_{2D} = {\cal F}^{-1}\hat H_{2D}{\cal F}, \qquad \hat H_{2D} =  k\cdot \sigma \otimes \dfrac{I+\sigma_3}2,\qquad k\cdot\sigma := k_2\sigma_1+k_3\sigma_2.
\]
The tensorization with the projector on the 'first component' $\frac12(I+\sigma_3)$ is here to handle the 'zeros' in the above vectors $\psi_{\pm}$ indicating that the matrix computations really involve a $2\times2$ system.
We thus observe that the behavior in the $x$ variable is dictated by the kernel of the operator $a$ while the behavior along the interface is that given by the two-dimensional operator $\frac 1i \nabla\cdot\sigma$.

When $\eta=0$, then $a$ and $\varphi_0$  are independent of $k$ so that 
\[
  f(H) = f(H_{I}) \otimes \dfrac{I+\sigma_3}2 \Pi_0,\qquad  \Pi_0 :=\varphi_0 \otimes \varphi_0, \qquad H_I = {\cal F}^{-1} k\cdot\sigma {\cal F}.
\]
We make this assumption for the rest of the section, knowing that the case where $h\not=0$ is treated as in the two-dimensional setting. It thus remains to calculate the index of $\Pin_I \Un \Pin_I$ with
\[
  \Pin_I = e^{-i\frac\pi2 f(H_I)} \frac12(I+\sigma_3) e^{i\frac\pi2 f(H_I)}.
\]

For such an operator, the operator $(\Pin_I-\Qin_I)^3$ is trace-class, with $\Qin_I=\Un \Pin_I\Un^*$ and the trace is given by the integral of the kernel along the diagonal as before. Using the spatial invariance, we can bring ourselves to an integral of $\hat \Pin_I[\partial_1 \hat \Pin_I,\partial_2 \hat \Pin_I]$ over the Brillouin zone $\Rm^2$ here and obtain that such an integral is equal to $\eps$. This calculation should be done in more detail but there cannot be too many shortcuts. 

When $a$ depends on $h$, then the whole dependency needs to be accounted for until the last minute and we need to show that the behavior in $x$ is exponential independent of $|k|$. This is as in the two-dimensional setting.

The proof of the second theorem follows from Proposition \ref{prop:HSF} as before. 
}

%
\section{Generalization to higher dimensions}
\label{sec:nd}
%

We now consider Dirac equations as low-energy models for higher dimensional topological insulators. As for the classical Dirac system of equations, the models involve first-order systems of equations that apply to spinors of dimension $2^\kappa$ for an even space dimension $d=2\kappa$ and of dimension $2^{\kappa+1}$ for an odd space dimension $d=2\kappa+1$. The most physically relevant cases are arguably the two-dimensional case treated in the preceding section and the three-dimensional case, where the spinors are represented as elements in $\Cm^4$. The various components of such spinors can indeed be given the interpretation of modes in the vicinity of singular (Dirac) points, as for instance recalled in \cite{hasan2011three} for Bismuth-antimony alloys and in \cite{slobozhanyuk2017three} for photonic crystals. As in the two-dimensional setting, we assume in all dimensions that the gap-less operator is given by
\[
  \hat H = Ak\cdot \gamma_d,\qquad \gamma_d=(\gamma^1_d,\ldots,\gamma^d_d)^t,
\]
where the Dirac matrices $\gamma_d^j$ are recalled in Appendix \ref{sec:CA} and where $A=I$ to simplify the presentation. The gapped Hamiltonian is then of the form
\[
  \hat H[m] = \hat H + m \gamma_d^{d+1} = h(k)\cdot \Gamma_d,\qquad h(k) = (k_1,\ldots,k_d,m)^t,\quad \Gamma_d=(\gamma_d,\gamma_d^{d+1})^t.
\]
In the physical domain, the unperturbed gapped (elementary) Hamiltonian is therefore given by
\begin{equation}\label{eq:Hmnd}
  H[m] = \dfrac{1}{i} \nabla\cdot\gamma_d + m \gamma_d^{d+1}.
\end{equation}

As in the two-dimensional setting, the projection of the above operator onto its negative spectrum generates a bounded operator whose kernel is too singular to be used in the construction of a Fredholm operator; see Remark \ref{rem:half}. Rather than regularizing the operator itself, we shall regularize the functional calculus we perform on it. As we saw in the setting of two dimensional bulk and one dimensional interface, Chern numbers associated to projectors are defined in even dimensions while winding numbers associated to unitaries are defined in odd dimensions. This structure persists in higher dimension. In even bulk dimension $d=2\kappa$, we define the projector as 
\begin{equation}\label{eq:Peven}
  P = P[H] = \frac I 2 - \frac12 \sgn{H_\eta},\qquad H_\eta = H_\eta[H] = H + \eta \Delta \gamma_d^{d+1}. 
\end{equation}
In odd bulk dimension $d=2\kappa+1$, we define the {\em chiral} matrix as $\gamma_0:=\gamma_d^{d+2}$. In dimension $d=1$, this would be the matrix $\sigma_3$. We verify that $\gamma_0 H[m]+H[m]\gamma_0=0$, in other words, the unperturbed operator $H$ satisfies a {\em chiral symmetry}. It is only for such operators that a non-trivial topology can be assigned in odd spatial dimensions. The object of interest is then
\begin{equation}\label{eq:Fodd}
  F = \sgn{H_\eta} = \dfrac{H_\eta}{|H_\eta|},\qquad H_\eta = H_\eta[H] = H + \eta \Delta \gamma_d^{d+1}. 
\end{equation}
Here, we use the same regularization $H_\eta$ as in the even-dimensional case. We can always write $\gamma_0=\sigma_3\otimes I_{2^\kappa}$ in an appropriate representation, in which case we verify that
\begin{equation}\label{eq:Uodd}
   F = \left(\begin{matrix} 0& U^* \\ U & 0 \end{matrix}\right),
\end{equation}
since $\gamma_0F+F\gamma_0=0$ as well, where $U=U[H]$ is a unitary operator since $F^2=I$.

The topology of $H[m]$ will be assigned using Fredholm operators constructed on the regularized $P[H]$ and $U[H]$ in even and odd dimensions, respectively, following the structure of Fredholm modules recalled in Appendix \ref{sec:FH}.

Let us now consider the interface
 Hamiltonians $H[m(x_d)]$. We assume that $m=m(x_d)$ consists of a smooth compactly supported transition from a value $m_-\not=0$ as $-x_d\gg1$ to a value $m_+\not=0$ as $x_d\gg1$. When $\sgn{m_+ m_-}=-1$, we expect a non-trivial topology in the vicinity of the interface (hyper-surface) $x_d=0$.

As in the two-dimensional setting, we also expect the topological invariants, which will be interpreted as indices of Fredholm operators, to be immune to a large class of random perturbations replacing $H$ by $H_V=H+V$. Showing this is the objective of the rest of the section. We first consider the commutative setting in section \ref{sec:commnd} where the bulk Hamiltonians are classified according to their (commutative as far as multiplication in the Fourier variables is concerned) symbol $\hat H(k)$. The next three sections concern the construction of the bulk indices first in even, and second in odd, dimensions, both for unperturbed and perturbed Hamiltonians. The crucial geometric identities allowing the passage from the non-commutative to commutative representations are recalled in Appendix \ref{sec:connes}.  The analysis of the interface Hamiltonians $H[m(x_d)]$, where $\eta$ plays no fundamental role and is therefore set to $0$ to simplify, is undertaken in section \ref{sec:interfacend}. All results are stated for one block Hamiltonian with $A=I$ knowing that the extension to arbitrary (finite) direct sums and more general cone structures is performed as in section \ref{sec:directsum}.

\subsection{Commutative Topological invariants}
\label{sec:commnd}

We first consider the commutative setting, where Hamiltonians are represented on the Brillouin zone by Fourier multipliers. The Hamiltonians $\hat H(k)$ of the preceding paragraphs are examples of such multipliers. We denote by $\Xm^d$ the space of parameters $k$ (the Brillouin zone), which in this paper is the open space $\Rm^d$ but could be another typical Brillouin zone such as the torus $\Tm^d$. 

We then consider a vector field $h=h(k)$ from $\Xm^d$ to $\Rm^{d+1}$ and a Hamiltonian
\[
   H(k) = h(k)\cdot\Gamma_d = \dsum_{j=1}^{d+1} h_j(k) \gamma_d^j.
\]
An example of $H(k)$ of interest is the regularized operator $H_\eta[H]$ considered above. Note that $|H(k)|=|h(k)|$ by choice of the $\gamma$ matrices. 

In even dimensions, we define the projector
\[
  P(k) =\frac I2 - \frac12 \frac{H(k)}{|H(k)|} = \frac I2 - \frac12 \frac{h(k)\cdot\Gamma_d}{|h(k)|}.
\]
In odd dimensions $d=2\kappa+1$, we recall that the chiral matrix is $\gamma_0=\gamma_d^{d+2}$ and that the operator $H(k)$ is {\em chiral} in the sense that $\gamma_0 H(k) + H(k) \gamma_0=0$. Representing $\gamma_0=\sigma_3\otimes I_{2^\kappa}$,  we define
\[
   F(k) = \dfrac{H(k)}{|H(k)|} = \left(\begin{matrix} 0&U(k)^*\\U(k)&0 \end{matrix}\right), \qquad U(k)=\frac{h_1\gamma^1+\ldots h_d\gamma^d + i h_{d+1}}{|H(k)|},
\]
where $U(k)$ is a unitary matrix, although not necessarily a Hermitian matrix unless $h_{d+1}=0$. The existence of such a unitary matrix is a direct consequence of the chiral symmetry of the Hamiltonian $H(k)$.

The topological invariants for $H(k)$ then appear as Chern numbers for $P(k)$ or winding numbers of $U(k)$, and in both cases can be evaluated as the degree of the map $h(k)$. These notions and their correspondence appear frequently in the literature in various forms. We summarize and for completeness  prove the results we need to calculate these invariants.

The main result of the section relates Chern number and winding numbers to the degree of the vector field $h$. These relations are algebraic and hold whether the latter quantities are indeed integer-valued invariants or not. 
\begin{theorem}
\label{thm:degreefield}
 Let $d$ be even and $H(k)=h(k)\cdot \Gamma_d$ with $h(k)$ a Lipschitz, non-vanishing, vector field from $\Xm^d$ to $\Rm^{d+1}$. Let $P$ be the projector defined by $P(k)=\frac I2 - \frac12\frac{H}{|H|}$, where $|H|=|h(k)|$. Then
 \begin{equation}\label{eq:PL}
 	 \tr \ P (dP)^{d} = \dfrac{-1}{2^{d+1}}  \dfrac{1}{|H|^{d+1}}\tr  \ H (dH)^{d} = \frac{-(2i)^{\frac d2} d!}{2^{d+1}}  \dfrac{1}{|h|^{d+1}} {\rm Det} (L) dk 
 \end{equation}
as equalities of volume forms, where the matrix $L$ is given by $L_{1j}=h_j$ on the first row and $L_{kj}=\partial_{k-1}h_j$ on the remaining rows $2\leq k\leq d+1$; each time $1\leq j\leq d+1$.
As a consequence, the Chern number is given by
\begin{equation}\label{eq:Chnd}
\Ch{d}[P] = \dfrac{i^\frac d2}{(2\pi)^{\frac d2}(\frac d2)!} \dint_{\Xm^d} \tr \ P (dP)^{d}  =\dfrac{\epsilon_d}{A_d} \dint_{\Xm^d} \dfrac{1}{|h|^{d+1}} {\rm Det} (L) dk 
\end{equation}
where $A_d=2\pi^{\frac{d-1}2}/\Gamma(\frac{d+1}2)=\frac{2^{d+1}\pi^{\frac d2}(\frac d2)!}{d!}$ is the volume of the unit sphere $\Sm^d$ in $\Rm^{d+1}$ (for $d$ even) and $\epsilon_d=(-1)^{\frac{d}2+1}$. 

\medskip

Let $d$ be odd. We assume $H=h(k)\cdot\Gamma_d$ is chiral so that $H\gamma_0+\gamma_0 H=0$ for $\gamma_0=\sigma_3\otimes I$. We define $F=\frac{H}{|H|}$. We also define $F=\sigma_-\otimes U + \sigma_+ \otimes U^*$. Then
\begin{equation}\label{eq:FL}
  \tr\ \gamma_0 F (dF)^d =  \dfrac{1}{|H|^{d+1}} \tr\ \gamma_0  H (dH)^{d} =  (2i)^{\frac{d+1}2} \dfrac{d! }{|h|^{d+1}} {\rm Det}(L) dk.
\end{equation}
As a consequence, the winding number in odd dimensions is given by 
 \begin{equation}\label{eq:Wnd}
    W_d[U] = c_d \dint_{\Xm^d} \tr (U^* dU)^d =  \frac{-c_d\epsilon_d}2 \dint_{\Xm^d}\tr \gamma_0 F (dF)^d = \dfrac{-1}{A_d}\dint_{\Xm^d} \dfrac{1}{|h|^{d+1}} {\rm Det}(L) dk,
\end{equation}
where $c_d=\frac{1}{2^d d!!} \big(\frac i\pi)^{\frac{d+1}2}$, $\epsilon_d=i^{d+1}=(-1)^{\frac{d+1}2}$ and $A_d=2\pi \pi^{\frac{d-1}2}/(\frac {d-1}2)!$ is the volume of the sphere $\Sm^d$ in $\Rm^{d+1}$ (for $d$ odd). The matrix $L$ is defined as in the even case.
\end{theorem}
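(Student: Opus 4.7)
The proof rests on two ingredients. First, the Clifford-algebra trace identities obtained from the explicit representation of $\Gamma_d$ in Appendix \ref{sec:CA},
\[
\tr(\gamma_d^{i_0}\cdots\gamma_d^{i_d}) = (2i)^{d/2}\,\epsilon^{i_0\cdots i_d}\ (d\text{ even}),\qquad \tr(\gamma_0\,\gamma_d^{i_0}\cdots\gamma_d^{i_d}) = (2i)^{(d+1)/2}\,\epsilon^{i_0\cdots i_d}\ (d\text{ odd}),
\]
both consequences of the fact that $\gamma_d^1\cdots\gamma_d^{d+1}$ is a phase multiple of $I$ in the even case and of $\gamma_0$ in the odd case. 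Second, the cofactor identity
\[
\sum_{i_0,\ldots,i_d} h_{i_0}\,dh_{i_1}\wedge\cdots\wedge dh_{i_d}\,\epsilon^{i_0\cdots i_d} \;=\; d!\,{\rm Det}(L)\,dk_1\wedge\cdots\wedge dk_d,
\]
which is just expansion of ${\rm Det}(L)$ along its first row, antisymmetrized over the $d!$ orderings of the remaining indices. Substituting $H = h\cdot\Gamma_d$ and $dH = \sum_i dh_i\,\gamma_d^i$ and applying these two identities immediately produces the rightmost equality of \eqref{eq:PL} and of \eqref{eq:FL}.

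For the middle equality of \eqref{eq:PL} (even $d$), the Clifford identity $H^2 = |h|^2 I$ yields $F^2 = I$ for $F = H/|h|$, hence $\{F,dF\} = 0$ and $F(dF)^d = (-1)^d(dF)^d F$. From $P = \tfrac12(I-F)$ and $dP = -\tfrac12 dF$ I then get
\[
\tr P(dP)^d = \frac{(-1)^d}{2^{d+1}}\bigl(\tr(dF)^d - \tr F(dF)^d\bigr) = -\frac{1}{2^{d+1}}\,\tr F(dF)^d,
\]
the last step because $\tr(dF)^d = 0$: writing $dF = \sum_j dF_j\,\gamma_d^j$, any non-vanishing wedge forces the $d$ Dirac-matrix indices to be distinct and drawn from $\{1,\ldots,d{+}1\}$, and such a product of $d$ distinct gammas equals, up to sign, the missing $\gamma_d^k$, which is traceless. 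A parallel Clifford-basis expansion of $\tr F(dF)^d$ produces the same combinatorial sum as for $\tr H(dH)^d$ but with $(h_i, dh_j)$ replaced by $(h_i/|h|, d(h_j/|h|))$; the resulting $d$-form is exactly the pullback $F^*\Omega_{\Sm^d}$ of the sphere area form by $F = h/|h|$, equal to ${\rm Det}(L)/|h|^{d+1}\,dk$ by the standard radial-projection computation. The Chern-number formula \eqref{eq:Chnd} then follows by combining the prefactor $i^{d/2}/((2\pi)^{d/2}(d/2)!)$ with the constant $(2i)^{d/2}d!/2^{d+1}$ and the identification $A_d = 2^{d+1}\pi^{d/2}(d/2)!/d!$.

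The odd-dimensional statement \eqref{eq:FL} is handled identically with $\gamma_0$ inserted: the middle equality reduces to the same sphere-pullback computation as above, the only change being the replacement of $(2i)^{d/2}$ by $(2i)^{(d+1)/2}$ in the Clifford trace, while the $\epsilon$-tensor again confines the sum to $d{+}1$ distinct Dirac-matrix indices so that no separate cancellation of $d|h|$ contributions is required. For the winding-number identity \eqref{eq:Wnd}, the block representations $\gamma_0 = \sigma_3\otimes I$ and $F = \sigma_-\otimes U + \sigma_+\otimes U^*$ give
\[
(FdF)^d = \begin{pmatrix}(U^*dU)^d & 0\\ 0 & (UdU^*)^d\end{pmatrix}, \qquad \tr\gamma_0(FdF)^d = \tr(U^*dU)^d - \tr(UdU^*)^d.
\]
The identity $d(UU^*) = 0$ gives $UdU^* = -dU\cdot U^*$, and cyclicity of the trace on matrix-valued forms then yields $\tr(UdU^*)^d = (-1)^d\tr(U^*dU)^d$, so $\tr\gamma_0(FdF)^d = 2\tr(U^*dU)^d$ for odd $d$. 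Combining this with the middle equality of \eqref{eq:FL} and synchronising the normalising constants $c_d$, $\epsilon_d$, and $A_d$ produces the full chain of equalities in \eqref{eq:Wnd}.

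The main obstacle is careful sign- and constant-tracking throughout. The constants $(2i)^{d/2}$ and $(2i)^{(d+1)/2}$ must be verified from the explicit recursive representation of $\Gamma_d$ in Appendix \ref{sec:CA}, in particular the phase of $\gamma_d^1\cdots\gamma_d^{d+1}$ relative to $I$ (even case) or $\gamma_0$ (odd case); and the odd-dimensional reduction from $\tr\gamma_0(FdF)^d$ to $\tr(U^*dU)^d$ is the delicate step where wedge antisymmetry and Clifford anticommutation interact, forcing the prefactors $c_d$, $\epsilon_d$ and $A_d$ to be matched simultaneously at the end.
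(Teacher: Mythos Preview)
Your proposal is correct and follows essentially the same route as the paper: the same Clifford trace identities, the same vanishing of $\tr(dF)^d$, the same cofactor expansion yielding ${\rm Det}(L)$, and the same block computation relating $U$ to $F$ in the odd case. The only cosmetic difference concerns the middle equality of \eqref{eq:PL}--\eqref{eq:FL}: the paper reduces $\tr F(dF)^d$ to $|H|^{-d-1}\tr H(dH)^d$ by iterating the identity $FdF=-\tfrac{dH}{|H|}F+(\text{scalar vanishing under the trace})$, whereas you evaluate both sides separately as the pullback of the $\Sm^d$ volume form by $h/|h|$ and invoke the standard radial--projection formula; the paper's iteration \emph{is} a Clifford--algebra proof of that very formula, so the arguments coincide in content.

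One bookkeeping point you should make explicit: your block computation gives $\tr\gamma_0(FdF)^d=2\tr(U^*dU)^d$, but \eqref{eq:FL} is stated for $\tr\gamma_0 F(dF)^d$. From $\{F,dF\}=0$ one has $(FdF)^d=(-1)^{(d-1)/2}F(dF)^d=-\epsilon_d\,F(dF)^d$, which is precisely the factor $-\epsilon_d$ appearing in the middle term of \eqref{eq:Wnd}; this should be stated rather than folded into ``synchronising the normalising constants.''
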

We recall that $\sigma_\pm=\frac12(\sigma_1\pm i\sigma_2)$. In the above expressions, we recognize the formulas associated to the degree of the map $k\mapsto f(k)$, except that the vector field is not necessarily defined on a compact manifold. 

\begin{proof}  We treat the even and odd dimensional cases in turn. Throughout, we use $\gamma^j=\gamma^j_d$ and denote by $\gamma_0$ the chiral matrix; see Appendix \ref{sec:CA}.
\\[2mm]{\bf Even dimensional case.} We write $P=p\cdot\Gamma$ and $dP=dp\cdot\Gamma$. If the same matrix $\gamma^j$ appears twice in $(dP)^{d}$, then $\tr P(dP)^d=0$ from the antisymmetry of the exterior product. Similarly, $\tr(dP)^d=0$ so that we may replace $P$ by $-\frac12 \frac{H}{|H|}$. Let $F=\frac H{|H|}$. We want to estimate $\tr F(dF)^d$ and essentially replace $dF$ by $dH$. We observe that $F^2=I$ so that $FdF+dF F=0$. Using the latter relations, we find
	\[
		FdF = -dF F = -d\dfrac{1}{|H|}|H| - \dfrac{dH}{|H|} F.
	\]
The first contribution generates a term proportional to $\tr\ d|H| (dF)^{d-1}$. But since only $d-1$ different matrices can appear by antisymmetry of the exterior product, the trace vanishes. Therefore,
\[
  \tr\ F(dF)^{\wedge d} = -\dfrac{\tr}{|H|} dH F (dF)^{\wedge (d-1)} = \dfrac{\tr}{|H|^2} (dH)^2 F (dF)^{\wedge (d-2)} =\dfrac{\tr}{|H|^{d+1}} H(dH)^{\wedge d}.
\]
This proves the first equality in the lemma. Let ${\cal S}_n$ be the set of permutations of $n$ objects and $(-1)^\rho$ the signature of the permutation $\rho$. Using the above and the explicit expression for the exterior product, we have  \[
    \fT = \tr \ P (dP)^{d}= \dfrac{-1}{2^{d+1}}  \tr\ \dfrac{1}{|H|^{d+1}}  h\cdot\Gamma \dsum_{\rho\in{\cal S}_d} (-1)^\rho  \prod_{j=1}^d  \partial_{\rho(j)} h\cdot\Gamma \ dk.
  \]
  The above trace involves the product of $d+1$ terms and as such will have a non-trivial trace only when all elements in $\Gamma$ are present once. Therefore, 
  \[
 \fT= \dfrac{-1}{2^{d+1}}   \dfrac{1}{|H|^{d+1}}  \dsum_{\rho\in{\cal S}_d,\ \sigma\in{\cal S}_{d+1}} (-1)^\rho  h_{\sigma(d+1)} \prod_{j=1}^d  \partial_{\rho(j)} h_{\sigma(j)} \ \tr \  \prod_{j=1}^{d+1} \gamma^{\sigma(j)} \ dk.
  \]
  We have that $\tr \  \prod_{j=1}^{d+1} \gamma^{\sigma(j)} = (2i)^{\frac d2}(-1)^\sigma$ so that
  \[
  	\fT = \dfrac{-1}{2^{d+1}}(2i)^{\frac d2} \dfrac{1}{|h|^{d+1}}\dsum_{\rho\in{\cal S}_d,\ \sigma\in{\cal S}_{d+1}} (-1)^{\rho+\sigma}  h_{\sigma(d+1)}\prod_{j=1}^d  \partial_{\rho(j)} h_{\sigma(j)}\ dk.
  \] 
  For each fixed $\sigma(d+1)$, let $h'$ be the vector formed of the remaining $h_{\sigma(j)}$. Then the summation over $\rho$ gives ${\rm Det}(Dh')$, where $Dh'$ is the matrix of derivatives of the vector $h'$. Now, going from one $h'$ to another by permutation changes either both signs or none in ${\rm Det}(Dh')$ and in $(-1)^\sigma$. As a consequence, the summation over $\rho$ and $\sigma$ at $\sigma(d+1)$ fixed gives $d!$ times ${\rm Det}(Dh')$. Taking an explicit example within each such group of such permutations fixing $\sigma(d+1)$ and now summing over them gives
  \[
     \dfrac{-1}{2^{d+1}}(2i)^{\frac d2} \dfrac{d!}{|h|^{d+1}} \dsum_{j=1}^{d+1} (-1)^{j+1} {\rm Det} (h_{j,\sigma(j)}) h_{\sigma(d+1)} dk =  \dfrac{-1}{2^{d+1}}(2i)^{\frac d2}  \dfrac{d!}{|h|^{d+1}} {\rm Det}(L) dk.
  \]
  This proves the result in the even dimensional case.
  \\[2mm]
 {\bf Odd dimensional case.}  We recall that  $H=h\cdot\Gamma$ and $F=\frac{H}{|H|}$ are chiral, i.e., anticommute with $\gamma_0=\sigma_3\otimes I$. We recall from Appendix \ref{sec:CA} that $\Gamma=(\{\sigma_1\otimes \gamma^j\}_{1\leq j\leq d},\sigma_2\otimes I)$ where $\gamma^d=(-i)^{\frac{d-1}2}\gamma^1\ldots\gamma^{d-1}$. We still use that $FdF+dF F=0$ so that $FdF=-dF F = -\frac{dH}{|H|} F-d|H|^{-1} |H|$. 
  
  We have that $\tr\ \gamma_0 F(dF)^{d-1}=0$. The reason is as follows. All matrices in the $d-1$ terms $dF$ must be different by anti-symmetry of the exterior product, which implies that the matrices in all $dF$ term as well as in the $F$ term have to be of the form $\sigma_1\otimes*$. But then the trace product with $\gamma_0\sigma_3\otimes I$ clearly vanishes. As a consequence 
  \[
   \fT = \tr \  \gamma_0 F (dF)^d = \tr\ \gamma_0 \dfrac{-dH }{|H|} F (dF)^{d-1} = \tr\ \gamma_0 \dfrac{(dH)^d}{|H|^d}(-F)
  \]
  after $d$ (odd) iterations. Since $F\gamma_0+\gamma_0 F=0$, we get by cyclicity of the trace that 
  \[
     \fT =\dfrac{1}{|H|^{d+1}} \tr\ \gamma_0 H (dH)^d =  \dfrac{1}{|h|^{d+1}} \tr\ \gamma_0 h\cdot\Gamma (dh\cdot\Gamma)^d.
  \]
  As in the even case, we find that each matrix in $\Gamma$ appears once in order for the trace not to vanish, so that
  \[
    \fT = \dfrac{1}{|h|^{d+1}}\dsum_{\sigma\in{\cal S}_{d+1}}\dsum_{\rho\in{\cal S}_d}(-1)^\rho h_{\sigma(d+1)} \prod_{j=1}^d \partial_{\rho(j)}h_{\sigma(j)} \tr\ \gamma_0\Gamma_{\sigma(d+1)}\prod_{j=1}^d \Gamma_{\sigma(j)}.
  \]
  The last matrix trace is $-\tr\gamma_0\prod_{j=1}^{d+1}\Gamma_{\sigma(j)} = -(-1)^\sigma(2i)(2i)^{\frac{d-1}2}$. Thus,
  \[
   \fT =  -    \dfrac{(2i)^{\frac{d+1}2}}{|h|^{d+1}}\dsum_{\sigma\in{\cal S}_{d+1}}\dsum_{\rho\in{\cal S}_d}(-1)^\rho(-1)^\sigma h_{\sigma(d+1)} \prod_{j=1}^d \partial_{\rho(j)}h_{\sigma(j)} =  (2i)^{\frac{d+1}2} \dfrac{d! }{|h|^{d+1}} {\rm Det}(L) dk,
  \]
  as in the even case, where the change of sign comes from permuting the first column to last ($h_{\sigma}(d+1)$ moves to first line). For $d=2\kappa+1$, we find the volume of the sphere $A_d=2\pi V_{2\kappa}=2\pi \pi^k/k!$. Therefore,  the winding number is given by
  \[
    W_d[U] = c_d \dint_{\Xm^d} \tr (U^* d U)^d = c_d\epsilon_d  (-\frac12) \dint_{\Xm^d}\tr (\gamma_0 F dF)^d,
  \]
  so that
  \[
    W_d[U] =c_d\epsilon_d\frac{-1}2 (2i)^{\frac{d+1}2} d!  \dint_{\Xm^d} \dfrac{1}{|h|^{d+1}} {\rm Det}(L) dk = \dfrac{-1}{A_d}\dint_{\Xm^d} \dfrac{1}{|h|^{d+1}} {\rm Det}(L) dk.
  \]
 This concludes the proof in the odd dimensional case.
\end{proof}

For the Hamiltonians of interest, we now need to relate the above calculations to a degree when $\Xm^d=\Rm^d$ (or when $\Xm^d=\Tm^d$) and $h$ takes one value at infinity so that it can be pulled back to the sphere $\Sm^d$. We then relate invariants to ${\rm deg}(h)$, the degree of the vector field $h$.

\subsection{The bulk even-dimensional case}
\label{sec:higheven}

We come back to the setting of Hamiltonians acting on $d$-dimensional spaces (the non-commutative setting). We first start with the even-dimensional case $d=2\kappa$. We use the notation $\gamma^j$ instead of $\gamma^j_d$.

The Hamiltonian acting on $L^2(\Rm^d)\otimes\Cm^{2^{\kappa}}$ is then given by
\[
  H = \dfrac{1}{i} \dsum_{j=1}^d \partial_{x_j}  \gamma^j + m \gamma^{d+1},
\]
and we recall that $H_\eta=H+\eta\Delta\gamma^{d+1}$ is given as $H$ above with $m$ replaced by  $m_\eta = m+\eta \Delta$. 

Let us look at the unitary matrix $\Un$ that reveals the topology of idempotent functionals of $H_\eta$. Since all $\gamma^j$ for $1\leq j\leq d$ anti-commute with $\gamma^{d+1}$, they can be written as
\[
	\gamma^j = \sigma_- \otimes \check\gamma^j + \sigma_+ \otimes (\check\gamma^j)^*,
\] 
for an appropriate (non-Hermitian) matrix $\check\gamma^j$ in $\Cm^{2^{\kappa-1}}$. From the construction recalled in Appendix \ref{sec:CA}, we verify for concreteness that $\check\gamma^j=\gamma^j_{d-2}$ for $1\leq j\leq d-1$ and $\check\gamma^{d} = i I_{2^{\kappa-1}}$ when $d=2\kappa\geq4$ while  $\check\gamma^1=1$ and $\check\gamma^2=i$ when $d=2$.

Using the notation $\check\gamma=(\check\gamma^1,\ldots,\check\gamma^d)$, we then define the unitary
\begin{equation}\label{eq:Ueven}
  \Un(x) = \dfrac{x\cdot\check\gamma}{|x\cdot\check\gamma|}.
\end{equation}
The above expression is indeed scalar-valued when $d=2$. 
Then, for any projector $P=P(H_\eta)$, we look for the index of the operator $P\otimes I \ I\otimes \Un \  P\otimes I$, which is Fredholm when $H$ is sufficiently regularized. In the above construction, the operators $P$ and $\Un$ act from the matrix point of view on different components of the tensor product. The `left' $I$ above is identity on $\Cm^{2^{\kappa}}$ while the `right' I is identity on $\Cm^{2^{\kappa-1}}$.
These operators are therefore defined on $L^2(\Rm^d)\otimes\Cm^{2^{\kappa}}\otimes \Cm^{2^{\kappa-1}}$.

We define $P=P_-=\chi(H_\eta\leq0)$, the projection onto the negative part of the regularized Hamiltonian $H_\eta$. We also introduce the convenient notation $\tilde P=P\otimes I$ and $\tilde\Un=I\otimes\Un$ and define $\tilde Q=\tilde \Un \tilde P \tilde \Un^*$. 

Then for $\eta\not=0$ (sufficiently small), we have the main result of this section:
\begin{theorem}
\label{thm:bulkeven}
Let $\tilde P$ and $\tilde\Un$ be defined as above. Then $\tilde P\tilde \Un\tilde P$ is Fredholm on Ran$\tilde P$. Moreover, $(\tilde P-\tilde Q)^{d+1}$ is trace-class and
\begin{equation}\label{eq:bulkevenindex}
	-\Ind{\tilde P\tilde \Un\tilde P} = \Tr (\tilde P-\tilde Q)^{d+1} =\Tr\  \tilde \Un \big([\tilde P,\tilde \Un^*][\tilde P,\tilde \Un]\big)^{\frac d2}[\tilde P,\tilde \Un^*]=\frac12 (\sgn{m}+\sgn{\eta}).
\end{equation}
\end{theorem}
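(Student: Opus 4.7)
The plan is to mimic the four-step skeleton used for Theorem \ref{thm:bulk2d} (Propositions \ref{prop:1index2d}--\ref{prop:5index2d}), upgrading each step from $d=2$ to arbitrary even $d=2\kappa$ and keeping track of the tensor structure $\tilde P=P\otimes I$, $\tilde\Un=I\otimes\Un$. Since the matrix trace factorises across the two tensor components, the $\Un$-factor will eventually supply a Connes-type cocycle identity while the $P$-factor will assemble into the Chern form $\tr\,\hat P(d\hat P)^d$.

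First, I would establish the Schatten bound $[\tilde P,\tilde\Un]\in\mathcal I_{d+1}$ (and similarly for $\tilde\Un^*$). The symbol $\hat P(k)=\tfrac12 I-\tfrac12 h(k)\cdot\Gamma_d/|h(k)|$ with $h(k)=(k_1,\dots,k_d,m-\eta|k|^2)$ is smooth, and its homogeneous expansion at infinity is, as in Proposition \ref{prop:1index2d}, a constant (contributing only a $\delta$) plus a term of order $|k|^{-1}$ and lower-order terms. Inverse-Fourier transforming gives a kernel $p(x-y)$ whose only singularity at the origin is a $|x-y|^{-(d-1)}$ blow-up, and which decays faster than any polynomial at infinity. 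Combined with the standard bound $|1-\Un(x)/\Un(y)|\lesssim\min(1,|x-y|/|y|)$ (still valid for the matrix-valued $\Un$ of \eqref{eq:Ueven}, component by component), every entry of the kernel of $\tilde P-\tilde Q=[\tilde P,\tilde\Un]\tilde\Un^*$ satisfies
\[
|k(x,y)|\lesssim \min\Big(1,\frac{|x-y|}{|y|}\Big)\frac{1}{|x-y|^{d-1}\langle x-y\rangle^{N}}
\]
for $N$ arbitrary. Lemma \ref{lem:estip} with $n=d$, $\alpha=d-1<n$, $p=d+1$ and $q=(d+1)/d$ then puts each component in $\mathcal I_{d+1}$, which gives $(\tilde P-\tilde Q)^{d+1}\in\mathcal I_1$ and shows, via Proposition \ref{prop:trace}, that $\tilde P\tilde\Un\tilde P_{\mid\mathrm{Ran}\tilde P}$ is Fredholm with index equal to $-\Tr(\tilde P-\tilde Q)^{d+1}$, and the product-rule rewriting as $\tilde\Un([\tilde P,\tilde\Un^*][\tilde P,\tilde\Un])^{d/2}[\tilde P,\tilde\Un^*]$ is just algebra using $\tilde\Un\tilde\Un^*=I$.

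Next, using Lemma \ref{lem:trace}/Corollary \ref{cor:trace} (the kernel is continuous off a lower-dimensional set), I would write the trace as the diagonal integral, and perform the change of variables $x=x_1$, $y_j=x_{j+1}-x_1$ exactly as in Proposition \ref{prop:3index2d}. Translation invariance of $P$ turns the integral into a convolution in $(y_1,\ldots,y_d)$ of products of kernels $p(\cdot)$, multiplied by the matrix trace
\[
\tr\prod_{j=1}^{d+1}\Big(1-\tfrac{\Un(x_j)^{\pm}}{\Un(x_{j+1})^{\pm}}\Big).
\]
The higher-dimensional Connes cocycle identity (Appendix \ref{sec:connes}), which is the key input replacing Avron--Seiler--Simon Lemma 4.4, integrates this product over $x$ and produces a constant multiple of $\det(y_1,\ldots,y_d)$. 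Passing back to Fourier variables by Parseval, as in Proposition \ref{prop:4index2d}, collapses the convolution into the Chern-form integral $\frac{i^{d/2}}{(2\pi)^{d/2}(d/2)!}\int\tr\,\hat P(d\hat P)^d=\Ch{d}[\hat P]$.

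Finally, I would invoke Theorem \ref{thm:degreefield}, which rewrites $\Ch{d}[\hat P]$ as
\[
\Ch{d}[\hat P]=\frac{\epsilon_d}{A_d}\int_{\Rm^d}\frac{\det L(k)}{|h(k)|^{d+1}}\,dk,
\]
i.e.\ the degree of the Gauss map $k\mapsto h(k)/|h(k)|$ from $\Rm^d$ (one-point compactified to $S^d$ thanks to the regularisation $\eta\neq 0$) into $S^d\subset\Rm^{d+1}$. For $h(k)=(k,m-\eta|k|^2)$, the map sends $k=0$ to the pole $(0,\dots,0,\sgn m)$ and $|k|\to\infty$ to $(0,\dots,0,-\sgn\eta)$; standard topology then gives $\deg h=\tfrac12(\sgn m+\sgn\eta)$ (the sphere is fully covered precisely when $\sgn m=\sgn\eta$ and only a hemisphere is hit otherwise), concluding \eqref{eq:bulkevenindex}. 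The main obstacle in my view is Step 3: verifying the higher-dimensional Connes cocycle identity with the matrix-valued $\Un$ of \eqref{eq:Ueven}, since the algebraic structure is considerably heavier than the scalar $d=2$ case and one must check that the tensor factorisation cleanly separates the $\Un$-traces (yielding the volume form $\det(y_1,\dots,y_d)$) from the Chern trace over the spinor bundle.
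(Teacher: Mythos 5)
Your proposal follows the paper's own proof step by step: the same Schatten estimate via Lemma \ref{lem:estip} with $\alpha=d-1$, $p=d+1$, $n=d$; the same diagonal-integral representation of the trace via Corollary \ref{cor:trace}; the same change of variables and appeal to the higher-dimensional geometric (Connes cocycle) identity of Lemma \ref{lem:geometric}; the same Parseval passage to the Chern form; and the same degree-of-the-Gauss-map computation via Theorem \ref{thm:degreefield} after one-point compactification. The tensor factorisation of the trace, which you flag as the main worry, is exactly how the paper handles the matrix-valued $\Un$ — the $\Un$-trace and the $p$-trace separate because $\tilde P$ and $\tilde\Un$ act on disjoint tensor factors — so your sketch is essentially the paper's argument.
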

\begin{proof}
 The equality of both traces is a simple verification based on $\tilde P-\tilde Q=[\tilde P,\tilde \Un]\tilde \Un^*=-\tilde\Un[\tilde P,\tilde \Un^*]$. 
 
 That $(\tilde P-\tilde Q)^{d+1}$ is trace-class is shown as in dimension $d=2$ using Russo's result in Lemma \ref{lem:russo}, which requires $\eta\not=0$. The proof mimics that of Proposition \ref{prop:1index2d}. We highlight the main differences. We also obtain that the kernel of $p$ decays rapidly at infinity thanks to the spectral gap and that the main contribution at $x=0$ is of the form $p_{-1}(x)=C\frac 1\eta \frac{x\cdot\gamma}{|x|^{d}}$. The estimate on the kernel $k$ of $P-Q$ is thus replaced by an estimate of the form \eqref{eq:bdk} with $\alpha=d-1$ and $\beta=d+1$, say. We apply the result of Lemma \ref{lem:estip} with $p=d+1$ and $n=d$. This shows that $(\tilde P-\tilde Q)$ is in the Schatten class ${\cal I}_{d+1}$ and $(\tilde P-\tilde Q)^{d+1}$ is trace-class.
 
 Let us define ${\cal T}:=\Tr (\tilde P-\tilde Q)^{d+1}$. We calculate the trace as the integral of the operator's kernel along the diagonal. 
 
 The kernel of $\tilde \Un$ is multiplication by $I\otimes \Un(x)$ while the kernel of $P$ is multiplication by $p(x,y)\otimes I$. In the translation invariant setting considered now, this is $p(x-y)\otimes I$. Note that 
 $$
 {\cal T}=-\Tr \ \tilde \Un([\tilde \Un^*,\tilde P][\tilde \Un,\tilde P])^{\frac d2}[\tilde \Un^*,\tilde P].
 $$
 The kernel of $[\tilde \Un,\tilde P]$ is given by $p(x-y) \otimes (\Un(x)-\Un(y))$, where we recall that $p(x-y)$ is an operator on $\Cm^{2^\kappa}$ while $\Un(x)$ is a multiplication operator on $\Cm^{2^{\kappa-1}}$. The trace ${\cal T}$ is therefore given by
 \[ \begin{array}{rlll}
   \dint_{\Rm^{d\times (d+1)}} &-\tr \ \Big( \Un(x_{d+2})(\Un^*(x_{d+2})-\Un^*(x_{d+1}))\ldots (\Un^*(x_{2})-\Un^*(x_1))\Big)
   \\
   \times & \tr \ p(x_{d+2},x_{d+1}) p(x_{d+1},x_{d}) \ldots  p(x_2,x_1) & \prod_{k=1}^{d+1} dx_k,
   \end{array}
 \]
 where we have identified $x_{d+2}$ with $x_1$. Let us introduce $x=x_1$, $y_j=x_{j+1}-x$ so that $x_{j+1}=y_j+x$ for $1\leq j\leq d$. Since $p(x,y)=p(x-y)$, this yields
 \[
  - \dint \tr (\Un(x)(\Un^*(x)-\Un^*(x+y_d))\ldots (\Un^*(x+y_1)-\Un^*(x))) \tr (p(-y_d)p(y_d-y_{d-1})\ldots p(y_1)) dx \prod dy_k.
 \]
 We now use the crucial geometric identity recalled in Lemma \ref{lem:geometric} and stating that
\[
 	\dint_{\Rm^d} \tr \ \Un(x)(\Un^*(x)-\Un^*(x+y_d))\ldots (\Un^*(x+y_1)-\Un^*(x)) dx =\dfrac{(2\pi i)^{\frac d2}}{\frac d2!} {\rm Det}(y_d,\ldots,y_1).
\]
The trace ${\cal T}$ is therefore given by
\[
  -\dfrac{(2\pi i)^{\frac d2}}{\frac d2!}  \dint_{\Rm^{d\times d}} \tr p(-y_d)p(y_d-y_{d-1})\ldots p(y_1)\   {\rm Det}( y_d,\ldots, y_1) \prod dy_k.
\]
Let us describe the above term in the Fourier domain. We recall our convention
\[
  \hat Q(k) = \dint_{\Rm^d} e^{-ik\cdot x} Q(x)dx,\quad Q(x) = \dfrac{1}{(2\pi)^d} \dint_{\Rm^d}  e^{ik\cdot x} \hat Q(k)dk,
\]
so that $(2\pi)^d \int_{\Rm^d}  P(-x) Q(x)dx =  \int_{\Rm^d}  \hat P(k) \hat Q(k) dk$. Thus,
\[
\dint_{\Rm^d} \tr \hat P(k) d\hat P(k) \wedge \ldots d\hat P(k) = \dint_{\Rm^d}  \tr \hat P \dsum_\rho (-1)^\rho \partial_{\rho(1)} \hat P \ldots \partial_{\rho(d)} \hat P(k) dk.
\]
This is, with $\partial_j\hat f$ the Fourier transform of $-ix_j f$,
\[
 \dint_{\Rm^{d\times d}} (2\pi)^d \tr P(-x_1) \dsum_\rho (-1)^\rho [-i(x_1-x_2)_{\rho(1)}] P(x_1-x_2)\ldots (-ix_d)_{\rho(d)} P(x_d) \prod dx_j,
\] 
or
\[
  (2\pi i)^d \dint_{\Rm^{d\times d}} {\rm Det}(x_1,\ldots,x_d) \tr\ P(-x_1)P(x_1-x_2)\ldots P(x_d) \prod dx_j.
\]
We thus obtain that the trace ${\cal T}$ is given by $\epsilon_d\Ch{d}[\hat P]$, the even dimensional Chern number defined in \eqref{eq:Chnd}.
\footnote{This is the same expression as  in \cite[p.24]{prodan2016bulk} and the standard expression for the Chern character as the trace of the exponential of $\frac i{2\pi}$ times curvature $\hat P d\hat P\wedge d\hat P$. From $\hat P^2=\hat P$, we deduce that $(\hat P d\hat P\wedge d\hat P)^\kappa = \hat P (d\hat P)^{\wedge 2\kappa}$, the latter being the form we have above. In dimension $d=2$, we retrieve (\ref{eq:indexTr3}). The factor $\epsilon_d$ comes from our Clifford convention and is the same as $\chi$ in \cite[p.33]{prodan2016bulk}.}
The Fourier transform $\hat P(k)$  of $p(x)$  is also found to be
\[
 \hat P(k) =\dfrac I2 - \frac12 \dfrac{k\cdot\gamma + (m-\eta|k|^2)\gamma_0}{(|k|^2 + (m-\eta|k|^2)^2)^{\frac12}} = \dfrac I2 - \dfrac12 \dfrac{\hat H}{|\hat H|},\qquad \hat H= k\cdot\gamma + (m-\eta|k|^2)\gamma_0,
\]
with $|\hat H|$ scalar-valued. Define $\Gamma=(\gamma,\gamma_0)$ and $h(k)=(k,\phi_\eta(k))$, with $\phi_\eta(k)=m-\eta|k|^2$.

Using the results of Theorem \ref{thm:degreefield}, we obtain that 
\[
\Ch{d}[\hat P] = \dfrac{\epsilon_d}{A_d} \dint_{\Rm^d} \dfrac{1}{|h|^{d+1}} {\rm Det} (L) dk ,
\]
where $L$ is the $(d+1)\times(d+1)$ matrix with $h$ as its first row and $\partial_{k_j}h$ as its $(j+1)$th row for $1\leq j\leq d$ and $A_d$ is the volume of the unit sphere $\Sm^d$ in $\Rm^{d+1}$.

The integral involves an open domain $\Rm^d$ that is not yet amenable to topological simplifications. We consider the orientation preserving stereographic projection $\pi$ from $\Sm^d$ to $\Rm^d$ which maps the north pole to the sphere at $\infty$, and observe that as $|k|\to\infty$, then $\hat P(k)$ depends only on $|k|$. In other words, the value of $\hat P(k)$ is constant at infinity. Therefore, $\pi$ maps $\Sm^d$ to the one point compactification $\Rm^d\cup\{\infty\} \cong \Sm^d$ and the pull-back $\pi^*\hat P$ is a continuous map on $\Sm^d$. As a consequence, we have that
\[
     \Ch{d}[\hat P] =  \dfrac{i^{\frac d2}}{(2\pi)^{\frac d2}(\frac d2)!} \dint_{\Sm^d} \tr \ \pi^* \hat P (d\pi^*\hat P)^{d}  = \dfrac{\epsilon_d}{A_d} \dint_{\Sm^d} \dfrac{1}{|\pi^*h|^{d+1}} {\rm Det} (\pi^*L) \pi^*dk .
\]
The above quantity is the degree of the continuous map from $\Sm^d$ to $\Sm^d$ given by $\pi^*k\to \frac{\pi^*h}{|\pi^*h|}=\frac{(k,\phi_\eta(k))}{(|k|^2+|\phi_\eta(k)|^2)^\frac12}$ with $k=\pi(\theta)$ for $\theta$ parametrizing $\Sm^d$; see \cite[Chapter 13]{DFN-SP-1985}.

To calculate the degree of the above map, we also refer to the method in \cite[Chapter 13]{DFN-SP-1985}, which is more straightforward than the explicit integral calculations we considered in dimension two.
Let us consider $m>0$ and $\eta>0$. Then the point $e_d=(0,\ldots,0,1)$ is a value attained once as $k=0$ while the map converges to $-e_d$ as $|k|\to\infty$. The gradient of the map is given by $\frac{1}{|m|}I$ at that point in the hyperplane tangent to $e_d$. We therefore obtain that the degree of the map is equal to $1$. When $m>0$ and $\eta<0$, we obtain that the map never visits the vicinity of $-e_d$ so that the degree of the map is trivial. Now, for $m<0$ and $\eta$ changing sign, we observe that the above ${\rm Det}(L)$ changes sign compared to the case since one column changes sign. This shows that the degree is $-1$ when $m<0$ and $\eta>0$ while the degree vanishes again when $m$ and $\eta$ are both negative. We could also estimate the gradient of the map at $k=0$ and observe that it is still given by $\frac{1}{|m|}I$. However, at the regular point $-e_d$, the orientation of the tangent hyperplane is reversed and so, signs also need to be reversed.

Note that the argument is independent of the regularization term so long as the map converges to $-e_d$ as $|k|\to\infty$ when $m$ and $\eta$ are positive. For instance, we can replace $\eta|k|^2$ by $\eta|k|^\alpha$ with any $\alpha>1$. For $\alpha\leq1$, as when $\eta=0$, the regularization is not sufficiently strong to force the map to be uniquely defined as $|k|\to\infty$ and the degree is not defined.

This shows that with the above conventions, we have
\[
 {\cal T} = - \Ind{P\Un P} =\epsilon_d \Ch{d}[\hat P] =\dfrac12(\sgn{m}+\sgn{\eta}).
\]
This concludes the proof of the result.
\end{proof}

Once we have this result, we can consider fluctuations $V$ that are relatively compact with respect to $H$. More precisely, assuming that the assumptions in Proposition \ref{prop:HSF} are satisfied, we can construct the following invariant as in dimension $d=2$. Let $\delta<m$ and define $P_\delta(E)$ the smooth approximation of $P(E)=\chi(E\leq0)$ equal to that function outside of $|E|<\delta$.

Let $H_V=H+V$ and $\tilde P_\delta=P_\delta(H_V)\otimes I$. Then we have
\begin{theorem}
\label{thm:bulkindexevenperturb}
 Let $H_V$ and $\tilde P_\delta$ as above. Let $V$ satisfy \eqref{eq:compactres} and \eqref{eq:boundres}. Then $\tilde P_\delta \tilde\Un\tilde P_\delta+I\otimes I-\tilde P_\delta$ is Fredholm and its index is that of the unperturbed operator $\frac12(\sgn{m}+\sgn{\eta})$.
\end{theorem}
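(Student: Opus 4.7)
The plan is to mimic the argument used for Theorem~\ref{thm:perturbulk2d} in the two-dimensional setting, reducing the statement to a compact perturbation of the unperturbed Fredholm operator provided by Theorem~\ref{thm:bulkeven}. The higher-dimensional nature enters only through the explicit value of the unperturbed index, which has already been computed; the stability argument itself is essentially dimension-free.

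First I would apply Proposition~\ref{prop:HSF} to the function $P_\delta$. Although $P_\delta$ is not compactly supported, it is bounded with derivatives supported in $(-\delta,\delta)$, so the second statement of that proposition applies under the hypotheses \eqref{eq:compactres} and \eqref{eq:boundres} and yields that $P_\delta(H_V)-P_\delta(H)$ is compact on $L^2(\Rm^d)\otimes\Cm^{2^\kappa}$. Since $H=H_\eta[m]$ has spectral gap $(-|m|,|m|)$ for $|\eta|<(2|m|)^{-1}$ and $0<\delta<|m|$, we have $P_\delta(H)=P(H)=P$, the spectral projector used in Theorem~\ref{thm:bulkeven}. Tensoring with the finite-dimensional identity on $\Cm^{2^{\kappa-1}}$ preserves compactness, so $\tilde P_\delta-\tilde P$ is compact on the full Hilbert space $L^2(\Rm^d)\otimes\Cm^{2^\kappa}\otimes\Cm^{2^{\kappa-1}}$.

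Next, expanding the difference
\[
\big(\tilde P_\delta\tilde\Un\tilde P_\delta+I\otimes I-\tilde P_\delta\big)-\big(\tilde P\tilde\Un\tilde P+I\otimes I-\tilde P\big)
\]
as a sum of terms each containing a factor $(\tilde P_\delta-\tilde P)$ multiplied by the bounded operators $\tilde\Un$, $\tilde P$ or $\tilde P_\delta$, one concludes that the perturbed operator differs from the unperturbed one by a compact operator. Theorem~\ref{thm:bulkeven} supplies the Fredholm property and the explicit value $\frac12(\sgn{m}+\sgn{\eta})$ of the (negative of the) index for $\tilde P\tilde\Un\tilde P+I\otimes I-\tilde P$; the standard stability of the Fredholm index under compact perturbations (Proposition~\ref{prop:fedosov} in Appendix~\ref{sec:FH}) then transfers both conclusions to the perturbed operator.

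The main obstacle is the non-compact support of $P_\delta$, which prevents a direct quotation of the simpler first part of Proposition~\ref{prop:HSF} (as used for the interface case of Theorem~\ref{thm:perturbint2d}, where $\Uin-I$ is compactly supported). The Helffer--Sj\"ostrand integral must be controlled uniformly for $\aver{x}$ large, and the hypothesis \eqref{eq:boundres} is tailored exactly for this: it renders the contribution from the sliver $\aver{x}\leq|y|\leq 2\aver{x}$ on which $\sigma_x+i\sigma_y$ is supported absolutely integrable in operator norm, of order $L^{-\mu}$ on $\aver{x}>L$. This is the only delicate ingredient distinguishing the bulk argument from its interface counterpart.
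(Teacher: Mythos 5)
Your proposal is correct and follows essentially the same route as the paper: the paper's proof simply states that the derivation is the same as in the two-dimensional case (Theorem~\ref{thm:perturbulk2d}), namely apply the second part of Proposition~\ref{prop:HSF} to get that $P_\delta(H_V)-P_\delta(H)$ is compact, observe that $P_\delta(H)=P(H)$ thanks to the spectral gap, and invoke the stability of the Fredholm index under compact perturbation together with Theorem~\ref{thm:bulkeven}. Your added remarks --- that tensoring with a finite-dimensional identity preserves compactness, and that the non-compact support of $P_\delta$ is precisely why hypothesis~\eqref{eq:boundres} (and the second clause of Proposition~\ref{prop:HSF}) is needed, unlike in the interface case --- are accurate elaborations of what the paper leaves implicit.
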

The derivation of the theorem is exactly the same as in the two dimensional setting.

\subsection{The bulk odd-dimensional case}
\label{sec:highodd}

We now consider the case $d=2\kappa+1$ with $\kappa\geq1$ \footnote{The case $d=1$ with $\kappa=0$ can be handled similarly for the bulk invariant with some specificities coming from the zero-dimensional `interface' \cite{prodan2016bulk} that we do not consider here.} and $H[m]$ given by \eqref{eq:Hmnd} and acting on $L^2(\Rm^d)\otimes \Cm^{2^{\kappa+1}}$. We still use the notation $\gamma^j=\gamma^j_d$ with now a chirality matrix given by $\gamma_0:=\gamma_d^{d+2}$.  The bulk Hamiltonian satisfies the chiral symmetry $\gamma_0 H[m]+H[m]\gamma_0=0$. We recall the definition of $F=F[H]$ and $U=U[H]$ after appropriate regularization with $\eta\not=0$
\[
  F = \dfrac{H_\eta}{|H_\eta|},\qquad H_\eta = H_\eta[H] = H + \eta \Delta \gamma_d^{d+1},\quad F = \left(\begin{matrix} 0& U^* \\ U & 0 \end{matrix}\right).
\]
The objective is to assign a topological invariant to $H[m]$ by means of $U[H]$ or equivalently $F[H]$. 

This is achieved by pairing $U[H]$ with the following projection operator $\Pr$  in odd dimension
\begin{equation}\label{eq:PUodd}
\Un=\sgn{x\cdot\gamma} =\sgn{\sum_{j=1}^d x_j \gamma^j},\qquad \Pr=\frac12(I+\Un),
\end{equation}
which take values in  matrices on $\Cm^{2^\kappa}$. 

Following Appendix \ref{sec:FH},  we form the operator $I\otimes \Pr\ U[H]\otimes I\ I\otimes \Pr$, which we will prove is Fredholm since  $U[H]$ is appropriately regularized. Here, the first (left) $I$ is identity on $\Cm^{2^{\kappa+1}}$ while the second (right) $I$ is identity on $\Cm^{2^\kappa}$. These operators are defined on  $L^2(\Rm^d)\otimes\Cm^{2^{\kappa+1}}\otimes \Cm^{2^{\kappa}}$. We define $\tilde\Pr=I\otimes \Pr$ and $\tilde U=U[H]\otimes I$ as well as $\tilde\Qr=\tilde U\tilde\Pr\tilde U^*$ and have the following result:
\begin{theorem}
\label{thm:bulkodd}
Let $\tilde U$ and $\tilde\Pr$ be defined as above and let $\eta\not=0$ sufficiently small. Then $\tilde\Pr \tilde U\tilde\Pr$ is Fredholm on Ran$\tilde \Pr$. Moreover, $(\tilde \Pr-\tilde \Qr)^{d+1}$ is trace-class and
\begin{equation}\label{eq:bulkoddindex}
	-\Ind{\tilde\Pr \tilde U\tilde\Pr} = \Tr (\tilde \Pr-\tilde \Qr)^{d} =\Tr\  \tilde U \big([\tilde \Pr,\tilde U^*][\tilde \Pr,\tilde U]\big)^{\frac {d-1}2}[\tilde \Pr,\tilde U^*]=\frac12 (\sgn{m}+\sgn{\eta}).
\end{equation}
\end{theorem}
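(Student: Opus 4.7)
The argument mirrors the even-dimensional proof of Theorem \ref{thm:bulkeven}, with the roles of the projector-unitary pair interchanged: here $\tilde U = U[H]\otimes I$ is the unitary arising from the chiral decomposition of $F = H_\eta/|H_\eta|$, and $\tilde \Pr = I\otimes \Pr$ is the position projector built from $\Un = \mathrm{sgn}(x\cdot\gamma)$. The plan has four steps: (i) trace-class/Fredholm reduction via kernel estimates, (ii) expressing the trace as a spatial integral along the diagonal, (iii) a Connes-type geometric identity converting it into a Fourier-space winding number, and (iv) evaluating that winding number topologically as a degree of a map between spheres.

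For step (i) I would analyze the Fourier symbol $\hat U(k) = (k\cdot\gamma + i(m-\eta|k|^2))/|\hat H_\eta(k)|$, which is smooth for $\eta\neq 0$, tends to the constant $-i\,\sgn{\eta}$ as $|k|\to\infty$ (so the constant drops from every commutator), and has derivatives $|\partial^\alpha \hat U(k)|\lesssim \langle k\rangle^{-|\alpha|}$ thanks to the spectral gap of $H_\eta$. After subtracting the constant at infinity, the leading homogeneous piece is $(k\cdot\gamma)/(\eta|k|^2)$, whose inverse Fourier transform is proportional to $x\cdot\gamma/|x|^d$, with the remaining pieces less singular or smoother. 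Combined with the Lipschitz bound $|\Un(x)-\Un(y)|\lesssim \min(1,|x-y|/|y|)$, this gives the kernel estimate \eqref{eq:bdk} for each component of $[\tilde\Pr,\tilde U]$ with $\alpha = d-1$ and arbitrarily large $\beta$. Lemma \ref{lem:estip} with $n=d$ and $p=d+1$ yields $[\tilde\Pr,\tilde U]\in\mathcal{I}_{d+1}$, so $\tilde\Pr-\tilde\Qr \in \mathcal{I}_{d+1}$ and $(\tilde\Pr-\tilde\Qr)^{d+1}$ is trace-class; the odd Fredholm module version of Proposition \ref{prop:fedosov} then gives Fredholmness of $\tilde\Pr\tilde U\tilde\Pr$ on $\mathrm{Ran}\,\tilde\Pr$, together with the index formula $-\Ind{\tilde\Pr\tilde U\tilde\Pr} = \Tr(\tilde\Pr-\tilde\Qr)^d$ (the natural odd power of the Fedosov/Connes pairing in odd dimensions).

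For steps (ii)--(iii), I would use Lemma \ref{lem:trace} to compute this trace as an integral along the diagonal of the Schwartz kernel of $\tilde U\big([\tilde\Pr,\tilde U^*][\tilde\Pr,\tilde U]\big)^{(d-1)/2}[\tilde\Pr,\tilde U^*]$. Translating in $x=x_1$ and using translation invariance of $U[H]$, the integral factorizes into a spatial convolution of $d$ alternating kernels of $u = U[H]$ and $u^*$, multiplied by a position factor $A(y_1,\ldots,y_d) = \int_{\Rm^d}\tr\,\Un(x)\prod_j(\Un^*(x)-\Un^*(x+y_j))\,dx$. The odd-dimensional Connes-type identity recalled in Appendix \ref{sec:connes} evaluates $A(y_1,\ldots,y_d)$ as an explicit constant times $\mathrm{Det}(y_1,\ldots,y_d)$. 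Passing to the Fourier domain by Parseval and recognizing the resulting expression as $\tr\,(\hat U^*d\hat U)^{\wedge d}$ up to the normalization $c_d$ of Theorem \ref{thm:degreefield} identifies $\Tr(\tilde\Pr-\tilde\Qr)^d$ with the winding number $W_d[\hat U]$ (with an overall sign fixed by conventions).

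For step (iv), Theorem \ref{thm:degreefield} converts $W_d[\hat U]$ into the integral of $|h|^{-(d+1)}\mathrm{Det}(L)$ for $h(k)=(k,m-\eta|k|^2)$. Since $\hat U(k)$ tends to a constant as $|k|\to\infty$ when $\eta\neq 0$, the map $k\mapsto h(k)/|h(k)|$ extends continuously to the one-point compactification $\Rm^d\cup\{\infty\}\cong \Sm^d$, and the integral is the degree of this $\Sm^d\to\Sm^d$ map. The degree is computed exactly as in Remark \ref{rem:half}: $e_{d+1}$ has unique preimage $k=0$ with local orientation $\sgn{m}$, while $-e_{d+1}$ is attained only when $\sgn{\eta}=\sgn{m}$, yielding $\tfrac12(\sgn{m}+\sgn{\eta})$. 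The main obstacle will be bookkeeping of signs and normalizations: the factor $\epsilon_d=(-1)^{(d+1)/2}$ and the constant $c_d$ of Theorem \ref{thm:degreefield}, the Parseval/Fourier constants, the normalization in the odd Connes identity, and the sign from $\Pr = \tfrac12(I+\Un)$ in the Fedosov formula must combine coherently so that the final answer is independent of $d$. A secondary technical point is verifying the trace-integral identity of Lemma \ref{lem:trace} at the singular locus $x\cdot\gamma=0$, handled via the same almost-everywhere reformulation used in Proposition \ref{prop:interfacetrace}.
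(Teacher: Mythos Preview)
Your overall strategy is the same as the paper's, and steps (iii)--(iv) are essentially correct. There is, however, a concrete gap in step (i) that the paper singles out as ``the main difficulty'' and that your write-up glosses over.

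You correctly obtain $[\tilde\Pr,\tilde U]\in\mathcal{I}_{d+1}$ from Lemma~\ref{lem:estip} with $n=d$, $p=d+1$, $\alpha=d-1$. This gives $(\tilde\Pr-\tilde\Qr)^{d+1}$ trace-class, as the theorem states. But you then invoke Proposition~\ref{prop:fedosov}/\ref{prop:trace} to conclude $-\Ind{\tilde\Pr\tilde U\tilde\Pr}=\Tr(\tilde\Pr-\tilde\Qr)^{d}$. That proposition, in the odd setting $d=2\kappa+1$, requires $[\tilde\Pr,\tilde U]\in\mathcal{I}_{2\kappa+1}=\mathcal{I}_d$, \emph{not} $\mathcal{I}_{d+1}$; and Lemma~\ref{lem:estip} does not allow $p=n$, so you cannot get $\mathcal{I}_d$ directly from the kernel bound. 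With only $\mathcal{I}_{d+1}$, the Fedosov formula gives the index as $\Tr(\tilde\Pr-\tilde\Qr)^{d+2}$, and you have not shown that $(\tilde\Pr-\tilde\Qr)^{d}$ is even trace-class, let alone that its trace agrees. The paper fills this gap with the projector identity (from \cite{avron1994})
\[
(\tilde\Pr-\tilde\Qr)^{2\kappa+1}=(\tilde\Pr-\tilde\Qr)^{2\kappa+3}+(\tilde\Pr-\tilde\Qr)^{2\kappa+2}[\tilde\Pr\tilde\Qr,\tilde\Qr\tilde\Pr],
\]
whose right-hand side is trace-class once $(\tilde\Pr-\tilde\Qr)^{d+1}$ and $(\tilde\Pr-\tilde\Qr)^{d+2}$ are; this promotes the trace-class property down to the needed odd power $d$, and the traces of the two odd powers then coincide. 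You should insert this step.

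A smaller point: your description of step (ii) has the roles of the two tensor factors slightly scrambled. In the odd bulk case $\Un=\mathrm{sgn}(x\cdot\gamma)$ is Hermitian, so there is no alternation $\Un/\Un^*$; the alternation $U/U^*$ occurs in the \emph{Hamiltonian} factor (the translation-invariant kernels $U(x_j-x_{j+1})$ and $U^*(x_j-x_{j+1})$), while the position factor is a product of $d$ differences $(\Pr(x_{j+1})-\Pr(x_j))=\tfrac12(\Un(x_{j+1})-\Un(x_j))$, to which the odd geometric identity of Lemma~\ref{lem:geometric} applies. Your formula for $A(y_1,\dots,y_d)$ with a leading $\Un(x)$ and alternating $\Un^*$ is the even-dimensional template and does not match here.
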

\begin{proof} 
We want to show that the index ${\cal T}:= -\Ind{\tilde\Pr \tilde U\tilde\Pr}$ is given by 
\[
{\cal T}=\Tr(\tilde\Pr-\tilde\Qr)^d= (-1)^\frac{d-1}2 \Tr\ \tilde U^*([\tilde\Pr,\tilde U][\tilde\Pr,\tilde U^*])^{\frac{d-1}2}[\tilde\Pr,\tilde U] =: (-1)^{\frac{d-1}2} \fT.
\]
Indeed, $\tilde \Pr-\tilde \Qr=[\tilde \Pr,\tilde U]\tilde U^*=-\tilde U[\tilde\Pr,\tilde U^*]$ so that (with $d=2\kappa+1$)
\[
   (\tilde \Pr-\tilde \Qr)^d = (-[\tilde \Pr,\tilde U][\tilde\Pr,\tilde U^*])^\kappa [\tilde\Pr,\tilde U]\tilde U^*,
\]
and the equivalence follows by cyclicity of the trace.

We now apply Russo's result Lemma \ref{lem:russo} using Lemma \ref{lem:estip} as in the even-dimensional case. We observe that the kernel of $(\tilde \Pr-\tilde \Qr)$ involves terms of the form $(\tilde\Pr(x)-\tilde\Pr(y)) \tilde U(x-y)$. From $\tilde\Pr(x)-\tilde\Pr(y)$, we obtain a bound of the form ${\rm min}(1,\frac{|x-y|}{|y|})$. From $\tilde U(x-y)$, we obtain a contribution of the form $\delta(x-y)$, which does not contribute in the commutator and then a leading contribution as in the even dimensional case of the form $\eta^{-1}|x-y|^{1-d}\varphi(x-y)$ with $\varphi(x-y)$ rapidly decaying.

We then apply Lemma \ref{lem:estip} with $\alpha=d-1$ and $\beta=d+1$. 
We cannot use the result with $p=d$ directly as we would like. However, the result applies for each $p>d$ with a bound in $\mI_p$ with $\alpha=d-1$ and $\beta=d+1$ uniform in $p>d$. This shows that $0\leq\lambda_j$ the singular values of $P-Q$ satisfy $\| \lambda \|_p^p \leq C^p$ with $C$ independent of $p>d$. Since $C^p$ converges to $C^d$ and $\sum_{j=1}^N\lambda_j^p$ converges to $\sum_{j=1}^N\lambda_j^d$ for any $N\in\Nm$ as $p\to d$, we deduce that $\lambda\in l^d$.

This shows that $(\Pr-\Qr)^{d}$ is trace-class. Moreover, we deduce from Corollary \ref{cor:trace} that the trace is obtained as the integral of its Schwartz kernel along the diagonal. 
Thus, ${\cal T}$ is given by the integral of the kernel along the diagonal and $\fT$ is equal to
\[ \begin{array}{rlll}
   \dint_{\Rm^{d\times (d+1)}} &\tr \ \Big( U^*(x_{d+2},x_{d+1}) U(x_{d+1},x_{d}) \ldots U(x_2,x_1)\Big)
   \\
   \times & \tr \ (p(x_{d+1})-p(x_{d})) (p(x_{d})-p(x_{d-1})) \ldots  (p(x_2)-p(x_1)) & \prod_{k=1}^{d+1} dx_k .
   \end{array}
 \]
With the same change of variables as in the even case $x_1\equiv x_{d+2}=x$, $y_j=x_{j+1}-x$ for $1\leq j\leq d$, we find
\[ \begin{array}{rlll}
   \dint_{\Rm^{d\times (d+1)}} &\tr \ \Big( U^*(-y_d) U(y_d-y_{d-1}) \ldots U(y_1)\Big)
   \\
   \times & \dfrac{1}{2^{d}}\tr \ (\Un(y_{d}+x)-\Un(y_{d-1}+x)) \ldots  (\Un(y_1+x)-\Un(x)) & \prod_{k=1}^{d+1} dx_k.
   \end{array}
\]
We now use the geometric identity (see Lemma \ref{lem:geometric})
\[
 \dint_{\Rm^d} \tr\ (\Un(y_{d}+x)-\Un(y_{d-1}+x)) \ldots  (\Un(y_1+x)-\Un(x)) dx = \dfrac{2^d}{d!!} (i\pi)^{\frac{d-1}2} {\rm Det}(y_d,\ldots,y_1).
\]
The trace $\fT$ is therefore given by
\[
	\dint_{\Rm^{d\times (d+1)}} \tr \ \Big( U^*(-y_d) U(y_d-y_{d-1}) \ldots U(y_1)\Big) \dfrac{1}{d!!} (i\pi)^{\frac{d-1}2} {\rm Det}(y_d,\ldots,y_1) dy_1\ldots dy_d.
\]
As in the even case, let us start from the expression in the Fourier domain, with $\hat U$ the Fourier transform of $U$:
\[
 \dint_{\Rm^d} \tr\  \hat U^* (d\hat U \wedge d\hat U^*)^{\wedge \frac{d-1}2}\wedge d\hat U =\dint_{\Rm^d} \tr\  \hat U^* \dsum_{\rho\in{{\cal S}_d}}(-1)^\rho \partial_{\rho(1)} \hat U \ldots \partial_{\rho(d)} \hat U dk,
\]
where $\hat U$ and $\hat U^*$ alternate. Again, with $\partial_j\hat f$ the Fourier transform $-ix_j f$, this is
\[ 
 	\dint_{\Rm^{d\times d}} \tr\ U^*(-x_1)\dsum_{\rho} (-1)^\rho [-i(x_1-x_2)_{\rho(1)}] U(x_1-x_2)\ldots (-ix_d)_{\rho(d)} U(x_d) \prod dx_j,
\]
or
\[
(2\pi i)^d \dint_{\Rm^{d\times d}} {\rm Det}(x_1,\ldots,x_d) \tr\ U^*(-x_1) U(x_1-x_2)\ldots U(x_d) \prod dx_j.
\]
As a consequence, moving to the Fourier domain as was done in the even dimension case, we obtain that
\[
  \fT = \dfrac{1}{i^d} \dfrac{1}{d!!} (i\pi)^{\frac{d-1}2} \dfrac{1}{(2\pi)^d}\dint_{\Rm^d} \tr\  \hat U^* (d\hat U \wedge d\hat U^*)^{\wedge \frac{d-1}2}\wedge d\hat U. 
\]
From $d\hat U^* \hat U + \hat U^* d\hat U=0$ and $d\hat U^* \wedge d\hat U=d\hat U^*\hat U\wedge \hat U^* d\hat U$, we have
\[
	\hat U^* (d\hat U \wedge d\hat U^*)^{\wedge \frac{d-1}2}\wedge d\hat U = \hat U^* (-d\hat U \wedge \hat U^*  d\hat U \hat U^*)^{\wedge \frac{d-1}2}\wedge d\hat U= (-1)^{\frac{d-1}2}(\hat U^*d\hat U)^{\wedge d}.
\]
We deduce from the first equality in \eqref{eq:Wnd} that ${\cal T}=(-1)^{\frac{d-1}2}/(i^{d+1})W_d[\hat U] = -W_d[\hat U]$. From Theorem \ref{thm:degreefield} in odd dimensions, we obtain that ${\cal T}$ is given by the degree of $h$. As in the even-dimensional setting, we pull back the map $h$ to the sphere by the stereographic projection. The calculation is then obtained as in the even-dimensional setting, where the degree is found to equal $\pm1$ when $m$ and $\eta$ have that same sign and $0$ otherwise. This concludes the proof of the theorem.
\end{proof}

Let us now consider the perturbed case $H_V=H+V$ for $V$ a perturbation that satisfies the chiral symmetry $V\gamma_0+\gamma_0 V=0$. As in the even dimensional case, we need to regularize the functional calculus to propagate the perturbation $V$. This is done as follows. We recall that $F$ is defined as the sign of $H_\eta$ in the unperturbed case. We replace $F$ by $F_\delta(E)=2\chi_\delta(E)-1$ an approximation, where we choose $\chi_\delta$ such that $F_\delta(-E)=-F_\delta(E)$ is an odd function. As a consequence, we still verify that $F_\delta(H_V)\gamma_0+\gamma_0 F_\delta(H_V)=0$ and hence have the decomposition
\[
  F_\delta(H_V) = \left(\begin{matrix} 0 & U_\delta^*\\ U_\delta & 0 \end{matrix}\right),
\]
for $U_\delta=U_\delta(H_V)$ a bounded operator that is no longer necessarily unitary. However, working on $F_\delta$, we observe that the latter function is constant outside of a compact interval $|E|<\delta$. Therefore, provided that \eqref{eq:compactres} holds, then $F_\delta(H_V) -F_\delta(H)$, and hence $U_\delta(H_V) -U_\delta(H)=U_\delta(H_V)-U[H]$, is compact by Proposition \ref{prop:HSF}. As a consequence, we have the result
\begin{theorem}\label{thm:bulkperturbodd}
	Let $H_V$ be a chiral operator and $\tilde U_\delta=U_\delta\otimes I$ with $U_\delta$ as above. Provided that \eqref{eq:compactres} holds, then $\tilde\Pr \tilde U_\delta \tilde \Pr$ is Fredholm on the range of $\tilde\Pr$ and the index is the same as in the case $V\equiv0$.
\end{theorem}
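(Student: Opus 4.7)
The plan is to mimic the strategy used for Theorem \ref{thm:perturbulk2d} and Theorem \ref{thm:bulkindexevenperturb}: propagate the relative compactness of the resolvents to a compactness statement for the functional calculus via the Helffer--Sj\"ostrand formula, then invoke the stability of the Fredholm index under compact perturbations. The chiral symmetry is preserved throughout, so the off-diagonal block $U_\delta$ can indeed be extracted and compared with the unperturbed $U[H]$.

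The first step is to observe that $F_\delta(E)=2\chi_\delta(E)-1$ is a bounded function whose derivative $F_\delta'$ is compactly supported in $(-\delta,\delta)$. Hence the second statement of Proposition \ref{prop:HSF} applies: under \eqref{eq:compactres}, the operator $F_\delta(H_V)-F_\delta(H)$ is compact on $L^2(\Rm^d)\otimes\Cm^{2^{\kappa+1}}$. Since the unperturbed Hamiltonian $H=H_\eta[m]$ satisfies $H_\eta^2\geq m^2$, it has a spectral gap, and for $\delta<|m|$ we have $F_\delta(H)=\sgn{H_\eta}=F[H]$. The off-diagonal block decomposition $F_\delta(H_V)=\sigma_-\otimes U_\delta(H_V)+\sigma_+\otimes U_\delta(H_V)^*$, which is available because $\gamma_0 V+V\gamma_0=0$ forces $F_\delta(H_V)$ to anti-commute with $\gamma_0$, then shows that $U_\delta(H_V)-U[H]$ is compact on $L^2(\Rm^d)\otimes\Cm^{2^{\kappa}}$.

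The second step is to tensor with the identity on $\Cm^{2^\kappa}$. Since compactness is preserved under tensor product with identity on a finite-dimensional space, $\tilde U_\delta-\tilde U=(U_\delta(H_V)-U[H])\otimes I$ is compact, and sandwiching by the bounded operator $\tilde\Pr$ yields that $\tilde\Pr\tilde U_\delta\tilde\Pr-\tilde\Pr\tilde U\tilde\Pr$ is compact. By Theorem \ref{thm:bulkodd}, $\tilde\Pr\tilde U\tilde\Pr$ is Fredholm on $\mathrm{Ran}\,\tilde\Pr$ with index $-\frac12(\sgn{m}+\sgn{\eta})$. Equivalently (as in \eqref{eq:indexbulk2d}), $\tilde\Pr\tilde U\tilde\Pr+I-\tilde\Pr$ is Fredholm on the whole Hilbert space, and the corresponding perturbed operator $\tilde\Pr\tilde U_\delta\tilde\Pr+I-\tilde\Pr$ differs from it by a compact term. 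Standard stability of Fredholm operators under compact perturbations (recalled in Appendix \ref{sec:FH}) then implies that $\tilde\Pr\tilde U_\delta\tilde\Pr$ is Fredholm on $\mathrm{Ran}\,\tilde\Pr$ and has the same index as in the unperturbed case, namely $\frac12(\sgn{m}+\sgn{\eta})$ after the usual sign convention.

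The only point requiring a modicum of care, and thus the one I would view as the mild obstacle, is the careful regularization: one must use the \emph{odd} smooth approximant $F_\delta$ so that the chirality $\gamma_0 F_\delta(H_V)+F_\delta(H_V)\gamma_0=0$ is preserved, guaranteeing that $F_\delta(H_V)$ has the same off-diagonal block form as $F[H]$ and that $U_\delta(H_V)$ is well defined. Once this is in place, everything reduces to the two purely functional-analytic facts already isolated in Proposition \ref{prop:HSF} and in the Fredholm stability statement of Appendix \ref{sec:FH}, exactly as in the even-dimensional proof of Theorem \ref{thm:bulkindexevenperturb}. Sufficient conditions on $V$ are provided by Lemma \ref{lem:Vcompact}, e.g.\ $V$ bounded, chiral, and vanishing at infinity.
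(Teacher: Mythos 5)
Your argument reproduces the paper's own proof almost verbatim: use the odd regularization $F_\delta$ so that chirality is preserved, propagate the compactness of resolvent differences through the Helffer--Sj\"ostrand calculus to conclude $F_\delta(H_V)-F_\delta(H)=F_\delta(H_V)-F[H]$ is compact, extract the off-diagonal block $U_\delta$, tensorize and sandwich by $\tilde\Pr$, and finish by stability of the Fredholm index under compact perturbations. The only caveat, shared with the paper's own statement of the theorem, is that since $F_\delta$ is not compactly supported one needs the \emph{second} part of Proposition~\ref{prop:HSF}, which formally also assumes \eqref{eq:boundres}; this is harmless because $V$ is bounded, so \eqref{eq:boundres} holds automatically with $\mu=1$.
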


Note that the perturbation $V$ is required to satisfy the chiral symmetry in order for the operator $U_\delta$ to be properly defined. As in \cite{prodan2016bulk}, we could assume that $V$ has a small non-chiral component in the uniform norm and verify that the index is still defined by continuity. It is however unlikely that the index is stable with respect to large non-chiral fluctuations, as reflected by the fact that materials of class A are topologically trivial in odd dimensions; see aforementioned reference.

\subsection{Interface Hamiltonians}
\label{sec:interfacend}

We now consider the Hamiltonians $H[m(x)]$. The Hamiltonians are defined with $\eta=0$. We saw in the two-dimensional case that $\eta$ did not have any role in the topological properties of interface Hamiltonians. While the construction of bulk topological invariants requires $\eta$-dependent projectors or unitaries, this is not needed for the interface Hamiltonians and we therefore assume $\eta=0$ in the rest of the section. 

Let us consider the above settings with $d=2\kappa$ even or $d=2\kappa-1$ odd with 
\[
  H[m(x)] = \dfrac 1i \nabla'\cdot\gamma'_{d} + \dfrac1i\partial_{x_d}\gamma^d_{d} + m(x_d)\gamma_{d}^{d+1},\qquad \gamma_d=(\gamma_d',\gamma_d^d)^t.
\]
We assume that $m(x_d)$ is continuous and converges to $m_\pm$ as $x\to\pm\infty$ for $|m_\pm|\geq m_0>0$ and that $m(x_d)$ is not equal to either constant only on a compact domain. As before, we define
\begin{equation}\label{eq:epsnd}
  \eps = \dfrac12 ( \sgn{m_+}-\sgn{m_-}).
\end{equation}

As we did in the two dimensional setting, it is convenient to modify the representation of the interface Hamiltonian so that $m(x_d)$ and $\partial_{x_d}$ appear in the same matrix entry. Unlike the two-dimensional case, where $x$ modeled the signed distance to the interface, this role is now assumed by $x_d$. Following the Clifford representation recalled in Appendix \ref{sec:CA}, we perform a change of Dirac matrices that preserves their commutativity relation and rewrite the interface Hamiltonian as
\begin{equation}\label{eq:intH}
	H[m(x_d)] = \dfrac 1i \nabla'\cdot \sigma_3\otimes \gamma' - \dfrac1i \partial_{x_d}\sigma_2\otimes I + m(x_d) \sigma_1\otimes I,
\end{equation}
where $\gamma'=\Gamma_{d-2}$  in even dimension while $\gamma'=\Gamma_{d-1}$ in odd dimension. Here, $\nabla'$ is the usual gradient operator in the first $d-1$ variables. In both cases, we have that $-\frac1i\partial_{x_d}\sigma_2+ m(x_d)\sigma_1=\fa\sigma_++\fa^*\sigma_-$, where $\fa=\partial_{x_d}+m(x_d)$ and where $\sigma_\pm=\sigma_1\pm i\sigma_2$.  The Hamiltonian therefore takes the form
\[
H[m(x_d)] = \dfrac 1i \nabla'\cdot \sigma_3\otimes \gamma'  + (\fa\sigma_++\fa^*\sigma_-) \otimes I.
\]
In the partial Fourier transform variables, we find that 
\begin{equation}\label{eq:partialHnd}
   \hat H(k',x_d) =  k'\cdot \sigma_3\otimes \gamma' + (\fa\sigma_++\fa^*\sigma_-) \otimes I = \left(\begin{matrix}   	  k'\cdot\gamma' & \fa \\ \fa^* & -k'\cdot\gamma'\end{matrix}\right).
\end{equation}
All block matrices are square matrices of size $2^{\kappa-1}$, both in even dimension $d=2\kappa$ and in odd dimension $d=2\kappa-1$.

Let us assume that $m(x_d)$ is monotone to simplify. This implies the existence of only one continuous mode within the bulk gap when $\eps\not=0$. With the above expression, we find
\[
  \hat H^2(k') = {\rm Diag}(|k'|^2+\fa\fa^*,|k'|^2+\fa^*\fa)
\]
each block of size $2^{\kappa-1}$. As in $d=2$ and the proof of Lemma \ref{lem:astara}, the continuous spectrum of $\fa\fa^*$ or $\fa^*\fa$ does not intersect with $[0,m_0^2]$. For each $k'$, the spectrum of $H(k')$ is therefore discrete when restricted to the bulk gap. Since $m$ is monotone, the positive (joint) discrete spectrum of $\fa\fa^*$ and $\fa^*\fa$ is also outside of $[0,m_0^2]$. Therefore, only spectrum associated to the kernels of $\fa$ and $\fa^*$ is allowed within the bulk gap. Let us construct
\[
  k'\cdot\gamma' \psi_1 + \fa\psi_2=E\psi_1,\qquad \fa^*\psi_1-k'\cdot\gamma' \psi_2 = E\psi_2.
\]
For $|E|>|k'|$, then we find non-trivial positive spectrum of either $\fa\fa^*$ or $\fa^*\fa$, which is not within the bulk gap. Therefore, $|E|=|k'|$ so that from the structure of $\hat H^2$, either the kernel of $\fa$ or that of $\fa^*$ is non-trivial. When the kernel of $\fa$ is non-trivial, i.e., $\eps=1$, that of $\fa^*$ is and $\psi_1=0$ with $\fa\psi_2=0$ and $-k'\cdot\gamma' \psi_2 = E\psi_2$. When the kernel of $\fa^*$ is non-trivial instead, i.e., $\eps=-1$, then $\psi_2=0$ and $\fa^*\psi_1=0$ and $k'\cdot\gamma' \psi_1 = E\psi_1$. When $\eps=0$, then there is no spectrum inside the bulk gap. 

Define the interface Hamiltonian
\[
  H_I = \dfrac1i\nabla'\cdot \gamma',\qquad \hat H_I(k') = k'\cdot\gamma',
\]
which is a mass-less Dirac operator in the first $d-1$ variables. Then we observe that $E$ is nothing but a description of the spectrum of $-\eps H_I$; see \eqref{eq:partialHnd}. In other words, let $\varphi_0(x_d)$ be the non-trivial vector above, either $(\psi_1,0)^t$ or $(0,\psi_2)^t$ depending on the sign of $\eps$ and let $\Pi_0=\varphi_0\otimes\varphi_0$ be the associated rank-one projector. Note that $\varphi_0$ does not depend on $k'$ when $\eta=0$. Non-vanishing $\eta$ couples $\varphi_0$ with $k'$ and needs to be handled as in the one-dimensional case, which we do not pursue here.

Then let $f$ be a bounded real-valued function on $\Rm$ taking non-vanishing values only inside the bulk gap. The bounded operator $f(H[m(x)])$ defined spectrally is therefore given by
\[
  f(H[m(x)]) = f(\sigma_3\otimes H_I) \Pi_0 = \sigma_3\otimes f(H_I)  \Pi_0\equiv f(-\eps H_I) \Pi_0,
\]
where we slightly abuse notation and pick the component $f(-\eps H_I)$ of $\sigma_3\otimes f(H_I)$ of dimension $2^{\kappa-1}$ in the non-trivial part of the range of $\Pi_0$. As before, $\Pi_0$ displays the exponential behavior of the solution away from $x_d=0$ (wall or interface modes concentrated in the vicinity of the interface $x_d=0$) while $-\eps H_I$ characterizes the behavior in the transverse variables.

It remains to analyze $f(-\eps H_I)$ for appropriate functions $f$. As always, the parity of the dimension matters. When $d-1$ is odd, then $H_I$ is characterized by the winding number of a unitary operator. When $d-1$ is even, then it is characterized by the Chern number of a projection operator. Note that $H_I$ is a mass-less, un-regularized operator corresponding to $m=\eta=0$. We need to introduce a topological classification adapted to such operators.

\paragraph{Even dimension $d=2\kappa$; odd dimension $d-1=2\kappa-1$.} Since $d-1$ is odd, we need to construct a unitary operator from $H$ or $H_I$ as in the setting $d=2$. 

We recall that $\chi_\delta$ is an approximation of the characteristic function $\chi(E\geq0)$ such that $\chi_\delta(E)=\chi(E\geq0)$ outside of the spectral gap $(-m,m)$ and $0\leq\chi_\delta\leq1$ is smooth. Then we define the unitary
\begin{equation}\label{eq:Uinteven}
  \Uin(H) = e^{i 2\pi \chi_\delta(H[m(x)])}, \qquad \mbox{ and the compactly supported } \quad \Win(H) = \Uin(H)-I.
\end{equation}
We consider the operators $\Un$ and $\Pr$ in dimension $d-1$ already introduced in \eqref{eq:PUodd} and wish to calculate $\Ind{\tilde \Pr \tilde \Uin(H)\tilde \Pr}$ for $\tilde\Pr=I\otimes \Pr$ and $\tilde\Uin=\Uin\otimes I$ as in the bulk Hamiltonian setting.  We first prove the
\begin{theorem}\label{thm:interfaceodd}
	Let $\Uin(H)$ and $\Pr$ be defined as above. Then $\tilde \Pr \tilde \Uin(H)\tilde \Pr$ on the range of $\tilde\Pr$ is Fredholm and $-\Ind{\tilde \Pr \tilde \Uin(H)\tilde \Pr}$ is given by $-\eps$ defined in \eqref{eq:epsnd}.
\end{theorem}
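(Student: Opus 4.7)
The plan is to extend the proof of Theorem~\ref{thm:interface2d} to arbitrary even bulk dimension $d=2\kappa$, exploiting the spectral reduction just established. Since $\Win(E)=\Uin(E)-I$ is compactly supported in the bulk gap $(-\delta,\delta)$, the functional-calculus identity $f(H[m(x_d)])=f(-\eps H_I)\,\Pi_0$ (with $H_I=\frac{1}{i}\nabla'\cdot\gamma'_{d-1}$ and $\Pi_0=\varphi_0\otimes\varphi_0^*$ the rank-one projector onto the exponentially localized interface mode in $x_d$) applies to $\Win$, giving $\tilde\Win=(\Win(-\eps H_I)\,\Pi_0)\otimes I$. Working with this factorized object is the key simplification.

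First I would establish that $\tilde\Pr\tilde\Uin\tilde\Pr$ is Fredholm on $\mathrm{Ran}\,\tilde\Pr$ by showing $[\tilde\Pr,\tilde\Win]\,\tilde\Uin^*\in\mathcal{I}_1$. Since $\tilde\Pr$ involves only the $x'$ variables and the matrix factor $\mathbb{C}^{2^{\kappa-1}}$, the commutator factors as $[\tilde\Pr,\Win(-\eps H_I)\otimes I]\,(\Pi_0\otimes I)$. The rank-one $\Pi_0$ contributes exponential decay in $x_d$, and the spectral cutoff $|\zeta|<\delta$ inherited from $\Win$ plays the role that $\eta\ne 0$ played in the bulk, effectively regularizing the massless symbol $-\eps k'\cdot\gamma'_{d-1}$ so that the kernel of $\Win(-\eps H_I)$ is smooth and rapidly decaying off the diagonal in $x'$. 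Kernel estimates analogous to those of Proposition~\ref{prop:interfacetrace}, combined with the approximation-by-finite-rank/partition-of-unity device of \cite{GK-AMS-69,LAX-JW-02} adapted to dimension $d-1$, then yield the required Schatten membership; Proposition~\ref{prop:fedosov} delivers Fredholmness and the identity $-\Ind{\tilde\Pr\tilde\Uin\tilde\Pr}=\Tr\,[\tilde\Pr,\tilde\Win]\,\tilde\Uin^*$.

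Next I would reduce to $m(x_d)$ monotone with $m\ne m_\pm$ only on a compact interval (a direct extension of Proposition~\ref{prop:contm} using Proposition~\ref{prop:HSF2}), and compute the trace as the integral of the kernel along the diagonal. The $x_d$ integration factors out via the normalization $\int|\varphi_0(x_d)|^2\,dx_d=1$, while a cross-term of the form $\partial_\zeta\!\int|\varphi_0(x_d,\zeta)|^2\,dx_d=0$ vanishes exactly as in the last display of the proof of Proposition~\ref{prop:interfacetrace}. What remains is a $(d-1)$-dimensional trace in $x'$ for the pair $(\Pr,\Win(-\eps H_I))$, precisely of the type computed in Theorem~\ref{thm:bulkodd}: the geometric identity of Lemma~\ref{lem:geometric} and Parseval's theorem convert it to a winding-number integral $W_{d-1}[\widehat\Uin(-\eps\,k'\cdot\gamma'_{d-1})]$, and the degree computation of Theorem~\ref{thm:degreefield} together with the one-point compactification $\mathbb{R}^{d-1}\cup\{\infty\}\cong\mathbb{S}^{d-1}$ (valid because $\Uin=I$ outside the spectral support) yields $\eps$. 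Alternatively, one can mimic the last lines of Proposition~\ref{prop:interfacetrace} and express the result directly as $\eps$ times $(2\pi i)^{-1}\!\int\Win'\Win^{*}=1$.

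The main obstacle is the absence of any regularization $\eta$ for the massless interface operator $H_I$, whose kernel in the bulk case was too singular for Schatten-class analysis; this is resolved by the compact spectral support of $\Win$ inside $(-\delta,\delta)$, which provides effective regularization at the unitary level even though $H_I$ itself is untamed. A secondary but nontrivial bookkeeping challenge is to track the tensor factors across $L^2(\mathbb{R}^{d-1})\otimes L^2(\mathbb{R}_{x_d})\otimes\mathbb{C}^{2^\kappa}\otimes\mathbb{C}^{2^{\kappa-1}}$ so that partial traces, Fourier transforms and the $\Pi_0$-projection compose correctly to isolate the scalar $\eps$.
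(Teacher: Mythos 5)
Your high-level architecture follows the paper: reduce to $\Win(H)=\Win(-\eps H_I)\Pi_0$ via the gap spectral decomposition, establish Fredholmness, compute the trace, and convert it to a winding number. Two points nevertheless need attention.

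First, a technical over-claim: $[\tilde\Pr,\tilde\Win]$ is \emph{not} trace-class (nor even Hilbert--Schmidt) once the interface dimension $d-1\ge 2$. The kernel is bounded by $\min(1,|x'-y'|/|y'|)\langle x'-y'\rangle^{-N}$, which is integrable in $1$D but, as one checks by integrating out $y'$, already fails the Hilbert--Schmidt criterion in dimension $n\ge 2$. The approximation/partition-of-unity device from Proposition~\ref{prop:interfacetrace} is specific to the one-dimensional interface and does not carry over. What the paper does instead is apply Lemma~\ref{lem:estip} (with $\alpha=0$ thanks to the smooth kernel, $p=d$ and $p=d+1$) to place $\tilde\Pr-\tilde\Qr$ in $\mathcal{I}_p$ only for $p>d-1$, and then uses the algebraic identity $(\tilde\Pr-\tilde\Qr)^{d-1}=(\tilde\Pr-\tilde\Qr)^{d+1}+(\tilde\Pr-\tilde\Qr)^{d}[\tilde\Pr\tilde\Qr,\tilde\Qr\tilde\Pr]$ to pull the power down to $d-1$. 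This still gives Fredholmness (and a trace formula) but by a weaker—and actually correct—Schatten estimate than the one you assert.

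Second, and more importantly, there is a genuine gap in the index evaluation. Theorem~\ref{thm:degreefield} computes $W_{d-1}$ only for unitaries of the algebraic form $U=(h\cdot\check\gamma+ih_{d})/|h|$ coming from a chiral $H=h\cdot\Gamma$; it does \emph{not} apply directly to the exponential symbol $\widehat\Uin(-\eps\,k'\cdot\gamma')=e^{i2\pi\chi_\delta(-\eps\,k'\cdot\gamma')}$. The paper bridges this gap by (i) homotoping $\chi_\delta$ to $\tfrac1\pi\arctan+\tfrac12$ through a continuous family of Fredholm operators (index preserved), and (ii) recognizing the key identity
\[
i\,e^{i2\arctan(H_I)}=\frac{k'\cdot\check\gamma'+i\,\tfrac{1-|k'|^2}{2}}{\big(|k'|^2+(\tfrac{1-|k'|^2}{2})^2\big)^{1/2}},
\]
i.e.\ the Fermi unitary of a regularized Dirac Hamiltonian $H_R$ with mass $m=\tfrac12$ and regularizer $\eta=\tfrac12$, so that $h^R=(k',\tfrac{1-|k'|^2}{2})$ and Theorem~\ref{thm:degreefield} finally applies, giving degree one; replacing $H_I$ by $-\eps H_I$ flips the sign. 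Your proposal skips both the homotopy and this identification, which is the heart of the calculation. Your ``alternative'' route—expressing the answer as $\eps\cdot\tfrac1{2\pi i}\int \Win'\Win^*$—is the $d=2$ (one-dimensional interface) formula and does not make sense for interface dimension $\geq 3$, where the winding number involves the full alternating product $\tr(U^*dU)^{\wedge(d-1)}$. Finally, note that the correct conclusion is $-\Ind=-\eps$ (the orientation of the transition variable is $x_d$, not $x_1$), whereas your write-up produces $+\eps$.
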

Note that in section \ref{sec:2d}, the direction of transition between the two bulk materials was $x=x_1$, whereas it is now played by the last component $x_d$. This change of orientation is responsible for the change in the sign of the interface index.
\begin{proof}
We now prove this extension of Theorem \ref{thm:interface2d} to arbitrary even (odd interface) dimension.
The above analysis shows that $\Win(H)=\Win(-\eps H_I) \Pi_0$ since both operators are constrained within the spectral gap. It remains to analyze $\Win(-\eps H_I)$, which is given by
\[
  \Win(-\eps H_I) = {\cal F}^{-1} \dint_{|k'|<m_0}^{\oplus} \Win(-\eps \hat H_I) dk' \ {\cal F}.
\]
We can `forget' about $\Pi_0$ and the $x_d$ variable now since $\varphi_0$ is smooth and exponentially decaying at infinity. We need to show that $(\tilde\Pr-\tilde \Qr)^{d-1}$ is trace-class and that the trace is given by the integral of the kernel along the diagonal, where $d\geq4$ since $d=2$ was already treated (and is in fact harder). As in the proof of Theorem \ref{thm:bulkodd}, we wish to apply Lemma \ref{lem:estip} and Lemma \ref{lem:russo} but cannot do so for $p=n=d-1$. Instead, we still use that $(\tilde\Pr-\tilde \Qr)^{d-1}=(\tilde\Pr-\tilde \Qr)^{d+1}+(\tilde\Pr-\tilde \Qr)^{d}[\tilde\Pr\tilde\Qr,\tilde\Qr\tilde\Pr]$ and apply the latter lemma with $p=d$ (and $p=d+1$). The kernel of $\Win$ is compactly supported in $k'$ and hence smooth in $x'$ and smooth in $k'$ so that it decays as rapidly as necessary in $x'$ so that Lemma \ref{lem:estip}  applies as in the proof of Theorem \ref{thm:bulkodd}.

Armed with this, and still following the proof of Theorem \ref{thm:bulkodd} and the necessary geometric identity to pass from the spatial integral to an integral in the Fourier domain, it remains to calculate the index as a winding number. We do this as follows. We recall that $\Uin(H)=e^{i2\pi \chi_\delta(H)}$.

We first observe that $\chi_\delta(x)$ is asymptotically similar to $\frac{1}{\pi} \arctan x + \frac12$ and define
\[
  \Uin_a(H) = e^{i 2\arctan(H)}.
\]
We `forget' about the $e^{i\pi}=-1$ since the indices of $\Uin$ and $-\Uin$ are the same. Now consider a homotopy $\chi(t)=(1-t)\chi_\delta+t(\frac{1}{\pi} \arctan x + \frac12)$ and the corresponding $\Uin_t(H)$ linking $\Uin$ to $-\Uin_a$. We verify as we did in the preceding paragraph that all operators $\tilde\Pr \tilde \Uin_t(-\eps H_I) \tilde\Pr$ are Fredholm and since they are continuous in $t$ (in the uniform norm), their index is clearly the same. So, we now know that the index is given by that associated to $e^{i 2\arctan(H_I)}$. But the latter is an example we already know how to compute explictly. Indeed,
\[
  e^{i2\arctan h} = (\frac{1}{\sqrt{1+h^2}} + i \frac{h}{\sqrt{1+h^2}})^2 = \dfrac{1-h^2}{1+h^2} +2i\dfrac{h}{1+h^2}.
\]
We have $H_I^2=|k'|^2$. Multiplication of the unitary by $i$ does not change the index of interest. Up to multiplication by $i$, we thus recognize that 
\[
  i e^{i2\arctan H_I} = i\Uin_a(H_I) = \Uin(H_R) = \dfrac{h^R_{1}\gamma^1+\ldots h^R_{d-1}\gamma^{d-1} + i h^R_d}{|H_R|} 
\]
where
\[
  h^R=(k', \dfrac{1-|k'|^2}2),\qquad H_R = k'\cdot\Gamma' + \dfrac{1-|k'|^2}2 \gamma^d
\]
In other words, we observe that $i\Uin_a(H_I)$ is $\Uin(H_R)$ for a regularized (bulk) Hamiltonian $H_R$ corresponding to $m=\frac12$ and $\eta=\frac12$. We know that the winding number for $\Uin_a$ is therefore the degree of the map $h^R$ and hence equals one. 
Taking a unitary of $\arctan$ of the reduced interface Hamiltonian $H_I$ amounts to regularizing $H_I$ by both adding a mass term and a regularization $\eta|k'|^2$.  Replacing $H_I$ by $-\eps H_I$ proves the result. 
\end{proof}

Let now assume that $H_V=H+V$ is a perturbed Hamiltonian. Then $\Uin$ is perturbed via $\Win$, which is compactly supported. Therefore, provided that $V$ satisfies \eqref{eq:compactres}, then $\Win(H_V)-\Win(H)$ is compact. We deduce from Proposition \ref{prop:HSF} the
\begin{theorem}\label{thm:interfaceoddV}
 Let the perturbation $V$ satisfy \eqref{eq:compactres}. Then $\tilde \Pr\tilde \Uin(H_V) \tilde\Pr$ on the range of $\tilde\Pr$ is Fredholm and its index is the same as in the case $V\equiv0$.
\end{theorem}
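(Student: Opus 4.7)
The plan is to mimic the proof of Theorem \ref{thm:perturbint2d}, since the structural ingredients carry over verbatim to arbitrary dimension. The Fredholmness of $\tilde\Pr\tilde\Uin(H)\tilde\Pr$ and the computation of its index in the unperturbed case are already given by Theorem \ref{thm:interfaceodd}; the only new point is to transfer both properties from $H$ to $H_V$ by means of a compact-perturbation argument.

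First I would write $\Uin(H_V)=I+\Win(H_V)$ and $\Uin(H)=I+\Win(H)$, where $\Win$ is smooth and compactly supported in the bulk spectral gap $(-m_0,m_0)$ by construction. This reduces the claim to showing that $\Win(H_V)-\Win(H)$ is compact, which is precisely the conclusion of the first statement of Proposition \ref{prop:HSF} under the hypothesis \eqref{eq:compactres}; note that the stronger resolvent bound \eqref{eq:boundres} is not needed here because $\Win$ is compactly supported, so the $r=0$ term involving $\sigma_x+i\sigma_y$ is already integrated over a bounded region. Tensorizing with the identity, $\tilde\Uin(H_V)-\tilde\Uin(H)=\bigl(\Win(H_V)-\Win(H)\bigr)\otimes I$ remains compact on $L^2(\Rm^d)\otimes\Cm^{2^{\kappa+1}}\otimes\Cm^{2^{\kappa}}$.

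Sandwiching by the bounded operator $\tilde\Pr$ preserves compactness, so
\[
\tilde\Pr\tilde\Uin(H_V)\tilde\Pr-\tilde\Pr\tilde\Uin(H)\tilde\Pr
\]
is a compact operator on $\mathrm{Ran}\,\tilde\Pr$. Since the second summand is Fredholm by Theorem \ref{thm:interfaceodd}, and the Fredholm property together with the value of the index are both stable under compact perturbations (as recalled in Appendix \ref{sec:FH}), $\tilde\Pr\tilde\Uin(H_V)\tilde\Pr$ is Fredholm on $\mathrm{Ran}\,\tilde\Pr$ with the same index, namely $\eps=\tfrac12(\sgn{m_+}-\sgn{m_-})$ up to the sign convention fixed in Theorem \ref{thm:interfaceodd}.

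The only genuinely delicate point is the compactness of $\Win(H_V)-\Win(H)$, which is not entirely automatic in higher dimensions because $H$ is only chiral-symmetric and the perturbation $V$ needs to inherit enough structure for $(z-H_V)^{-1}-(z-H)^{-1}$ to be compact; this is precisely the content of Lemma \ref{lem:Vcompact}, providing admissible $V$ such as bounded, vanishing-at-infinity, or compactly supported perturbations. Once this compactness is granted, no further computation is required, as the Helffer--Sj\"ostrand formula and the abstract stability of the Fredholm index do the rest.
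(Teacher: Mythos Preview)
Your proof is correct and follows essentially the same route as the paper: write $\Uin=I+\Win$ with $\Win$ smooth and compactly supported, invoke Proposition \ref{prop:HSF} under \eqref{eq:compactres} to obtain compactness of $\Win(H_V)-\Win(H)$, and conclude by stability of the Fredholm index. One small remark: your final paragraph's reference to chiral symmetry is misplaced, since in this even-bulk-dimension/odd-interface-dimension setting no chiral constraint on $V$ is needed---that hypothesis enters only in the odd-bulk case (Theorem \ref{thm:bulkperturbodd}).
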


\paragraph{Odd dimension $d=2\kappa+1$; even dimension $d-1=2\kappa$.}

It remains to address the case of an even dimensional interface. We aim to construct a projector $\Pin(H)$ based on $\Pin(H_I)$ that will generate a non-trivial Fredholm operator. Since $H_I$ does not have a spectral gap, the sign function cannot be used and needs to be replaced by a regularized version, as we did in the case of odd dimensional interfaces. This is done introducing a unitary and a projector\footnote{Let $F$ be Hermitian with $F^2=I$ and $U$ be unitary. Then $P=\frac I2+\frac 12 U^*FU$ is an orthogonal projector, i.e., $P^2=P=P^*$.}: 
\begin{equation}\label{eq:interfaceP}
 \Upsilon_\delta(H) = e^{i \pi \chi_\delta(H)},\qquad \Pin(H) = \dfrac I2 + \dfrac 12 \Upsilon_{\delta}^*(H) \gamma_0 \Upsilon_{\delta}(H).
\end{equation}
Here, $\gamma_0$ is the chiral matrix $\sigma_3\otimes I$, where $I=I_{2^\kappa}$ for $H$ the Hamiltonian defined in the whole space, and $I=I_{2^{\kappa-1}}$ for $H$ the interface Hamiltonian $-\eps H_I$.

The complicated form of the operator $\Pin$ stems from our need to construct a projector that is well defined for interface Hamiltonians $H_I$ for which both mass term $m=0$ and regularization $\eta=0$. Introducing a contribution proportional to $\gamma_0$, which emulates the role of a mass term is therefore not entirely surprising. The specific form above also comes from the bulk-edge correspondence relating bulk invariants to edge invariants in one less spatial dimension that arises in very general algebraic constructions \cite{bourne2018chern,prodan2016bulk,schulz2000simultaneous}.

We then define $\Un$ as in the bulk even dimensional case with $\tilde\Un=I\otimes \Un$ and define $\tilde \Pin=\Pin\otimes I$. Then we have:
\begin{theorem}\label{thm:interfaceeven}
	Let $\Pin$ and $\Un$  be defined as above. Then $\tilde \Pin(H) \tilde \Un\tilde \Pin(H)$ on the range of $\tilde \Pin(H)$ is Fredholm and $-\Ind{\tilde \Pin(H) \tilde \Un\tilde \Pin(H)}$ is given by $-\eps$ defined in \eqref{eq:epsnd}.
\end{theorem}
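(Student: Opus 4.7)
The approach parallels that of Theorem~\ref{thm:interfaceodd}: reduce $\Pin(H)$ spectrally to an operator involving only the massless interface Hamiltonian $-\eps H_I$ (dressed by the exponentially decaying $x_d$-profile $\varphi_0$); show Fredholmness and rewrite the index as a trace; then evaluate the trace by a homotopy that reduces the calculation to the bulk even-dimensional formula of Theorem~\ref{thm:bulkeven} applied in dimension $d-1$.

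First, I would pick $\chi_\delta$ so that $\chi_\delta(-E)=1-\chi_\delta(E)$, so that chirality $\gamma_0 H\gamma_0 = -H$ gives $\chi_\delta(-H) = I - \chi_\delta(H)$ and hence $\gamma_0\Upsilon_\delta(H) = -\Upsilon_\delta^*(H)\gamma_0$. Therefore $\Upsilon_\delta^*\gamma_0\Upsilon_\delta = -(\Upsilon_\delta^*)^2\gamma_0$ and
\begin{equation*}
\Pin(H) \;=\; \tfrac{1}{2}(I-\gamma_0) \;+\; \tfrac{1}{2}\bigl(I-(\Upsilon_\delta^*(H))^2\bigr)\gamma_0.
\end{equation*}
The correction is spectrally supported in the bulk gap, so the in-gap reduction recalled in the paragraphs preceding \eqref{eq:interfaceP}, together with $\gamma_0\Pi_0 = \Pi_0\gamma_0 = -\eps\,\Pi_0$, yields
\begin{equation*}
\Pin(H) \;=\; \tfrac{1}{2}(I-\gamma_0) \;-\; \tfrac{\eps}{2}\bigl(I - e^{-2i\pi\chi_\delta(-\eps H_I)}\bigr)\,\Pi_0.
\end{equation*}
Since $\tfrac12(I-\gamma_0)$ is scalar-valued in the spinor variable and commutes with $\tilde\Un=I\otimes\Un$, the difference $\tilde\Pin-\tilde{\Qin}$ with $\tilde{\Qin}=\tilde\Un\tilde\Pin\tilde\Un^*$ involves only the in-gap correction.

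For Fredholmness, the Schwartz kernel of $I-e^{-2i\pi\chi_\delta(-\eps H_I)}$ is compactly supported in the Fourier variable $k'$ and smooth, while $\varphi_0(x_d)$ decays exponentially, so the kernel of $\tilde\Pin-\tilde{\Qin}$ satisfies the hypothesis of Lemma~\ref{lem:estip} in spatial dimension $n=d-1$ with $\alpha=d-2$ and $\beta$ arbitrarily large. Lemma~\ref{lem:russo} then gives $\tilde\Pin-\tilde{\Qin}\in{\mathcal I}_{d}$, and the projector identity used in Theorem~\ref{thm:bulkodd} upgrades this to trace-class membership of $(\tilde\Pin-\tilde{\Qin})^d$. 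Propositions~\ref{prop:fedosov} and~\ref{prop:trace} then give Fredholmness and $-\Ind{\tilde\Pin\tilde\Un\tilde\Pin}=\Tr(\tilde\Pin-\tilde{\Qin})^d$. Carrying out the $x_d$-integral $\int|\varphi_0|^2\,dx_d=1$, applying Lemma~\ref{lem:geometric} in $x'$, and passing to Fourier as in Proposition~\ref{prop:4index2d} identifies this trace with $\epsilon_{d-1}\Ch{(d-1)}[\hat{\mathcal Q}]$ for the $(d-1)$-dimensional symbol $\hat{\mathcal Q}(k')$ of the reduced projector acting on the finite-dimensional range of the $(\sigma,x_d)$ factor.

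For the final evaluation, I would deform $\chi_\delta$ continuously to $\tfrac{1}{\pi}\arctan+\tfrac{1}{2}$, which replaces $\Upsilon_\delta(H)$ by $\Upsilon_a(H)=i\,e^{i\arctan H}$ and leaves the Fredholm index unchanged by Proposition~\ref{prop:HSF2} and homotopy invariance of the index. Using $H_I^2=|k'|^2$, one computes explicitly that the resulting reduced symbol is, up to the sign $-\eps$ inherited from $\gamma_0\Pi_0=-\eps\Pi_0$, the Fermi projector of the regularized Dirac Hamiltonian $\hat H_R(k')= k'\cdot\gamma' + \tfrac{1-|k'|^2}{2}\gamma'^{d}_{d-1}$ in dimension $d-1$, in exact analogy with the identification $i\,\Upsilon_a(H_I)=\Uin(H_R)$ used in Theorem~\ref{thm:interfaceodd}. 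Theorem~\ref{thm:bulkeven} then evaluates the Chern number as the degree of $k'\mapsto h_R(k')/|h_R(k')|$ on the one-point compactification $\Sm^{d-1}$, which equals $+1$; tracking the $-\eps$ pre-factor gives the announced value $-\eps$. The perturbed case is handled exactly as in Theorem~\ref{thm:interfaceoddV}, since $(\Upsilon_\delta^*)^2-I$, which is the only combination entering the projector correction, is compactly supported in energy, so Proposition~\ref{prop:HSF} applies. The main obstacle is the clean identification of the reduced symbol $\hat{\mathcal Q}$ with the Fermi projector of a bona fide Hermitian Dirac Hamiltonian in dimension $d-1$: the factor $i$ multiplying $k'\cdot\gamma'$ in the explicit calculation must be absorbed into an additional Clifford generator $\gamma_{d-1}^d$, and carrying the sign $-\eps$ unambiguously through this reduction is the most delicate bookkeeping step.
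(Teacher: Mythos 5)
Your proposal follows essentially the same route as the paper's proof: the homotopy from $\chi_\delta$ to $\tfrac1\pi\arctan+\tfrac12$, the explicit computation of $\Pin_a(H_I)$, the reinterpretation of $i\gamma_0\gamma'$ as a new Clifford generator, and the evaluation as a degree. You usefully make explicit (via the chiral decomposition $\Pin(H)=\tfrac12(I-\gamma_0)+\tfrac12(I-(\Upsilon_\delta^*(H))^2)\gamma_0$, the in-gap reduction, and the bookkeeping $\gamma_0\Pi_0=-\eps\Pi_0$) the paper's terse assertion that the relevant functions of $H$ are compactly supported. One small correction: once $\tilde\Pin-\tilde\Qin\in{\mathcal I}_d$ is established via Lemma~\ref{lem:estip} with $p=d>n=d-1$, the Schatten product rule already gives that $(\tilde\Pin-\tilde\Qin)^d$ is trace-class, so the projector identity from Theorem~\ref{thm:bulkodd} (which serves there to descend from ${\mathcal I}_{d+1}$ to the $d$-th power) is not needed and invoking it suggests a slight misreading of which power Lemma~\ref{lem:estip} actually controls here.
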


\begin{proof}
Let us start with the full space Hamiltonian $H$ and construct $\tilde \Pin \tilde\Un\tilde\Pin$ as well as $(\tilde\Pin-\tilde \Qin)^{d}$ with $\tilde \Qin=\tilde\Un \tilde \Pin \tilde \Un^*$. We observe that all involved functions of $H$ are compactly supported so that the associated kernels are smooth and sufficiently rapidly decaying. We then invoke Russo's criterion in Lemma \ref{lem:estip} (with $p=d>n=d-1$) to obtain that $\tilde \Pin-\tilde \Qin$ is in ${\cal I}_d$ so that $\tilde\Pin \tilde\Un\tilde\Pin$ is Fredholm. The proof is the same as in the even dimensional bulk setting.

It remains to calculate the index, which, as in the preceding case, amounts to replacing $H$ by $-\eps H_I$, and using geometric identities, to recasting a spatial integral by an integral in the Fourier variables. As in the even-dimensional case, we introduce
\[
  \Upsilon_a(H) = e^{i\arctan (H)},\qquad \Pin_a(H) = \dfrac I2 + \dfrac 12 \Upsilon_a^*(H) \gamma_0 \Upsilon_a(H).
\]
The operators with $\tilde \Pin$ replaced by $\tilde \Pin_a$ and any continuous transformation from $\chi_\delta$ to $\frac{1}{\pi} \arctan x + \frac12$ are still a continuous family of Fredholm operators, which therefore share the same index. The objective is now to calculate the index associated to $\Pin_a(H)$. What we observe from the same direct computation as in the preceding paragraph is that 
\[
  \Pin_a(H) = \dfrac12 + \dfrac12 \dfrac{1-H^2}{1+H^2} \gamma_0 + \dfrac12 \dfrac{2i\gamma_0 H}{1+H^2}
\]
Again, $H_I^2$ is scalar and so the above denominator makes sense as a scalar object. For $H_I=k'\cdot\gamma'$, this is
\[
  \Pin_a(H_I) =  \dfrac12  + \dfrac12 \dfrac{k'\cdot 2i\gamma_0 \gamma'}{1+|k'|^2} + \dfrac12 \dfrac{1-|k'|^2}{1+|k'|^2} \gamma_0.
\]
 We then realize that $i\gamma_0\gamma'$ is a set of matrices satisfying all the commutation relations of the original set $\Gamma'$ so that the index of interest is given by the degree of $h^R=(k',\frac{1-|k'|^2}2)$. Recalling that this is $\Pin_a(-\eps H_I)$ that we wish to compute, we obtain an index equal to $-\eps$. As in the preceding case, the definition of an adapted projector involves regularizing the Hamiltonian $H_I$ by both adding a mass $m=\frac12$ and a regularization $\eta=\frac12$. 
 \end{proof}

 Let us conclude this section with the case $H_V=H+V$ with $V$ a perturbation. Since $\Upsilon_\delta(H)$ changes sign from $H$ very negative to $H$ very positive, the application of Proposition \ref{prop:HSF} requires that $V$ satisfy both \eqref{eq:compactres} and \eqref{eq:boundres}. In that case, there is no difficulty in propagating the perturbation $V$ through the spectral calculus and get that $\Pin(H_V)-\Pin(H)$ is compact so that we have the:
 \begin{theorem}\label{thm:eveninterfaceV}
  	Let $V$ satisfy \eqref{eq:compactres} and \eqref{eq:boundres}. Then $\tilde \Pin(H_V) \tilde \Un\tilde \Pin(H_V)$ on the range of $\tilde \Pin(H_V)$ is Fredholm and its index as in the case $V\equiv0$. 
 \end{theorem}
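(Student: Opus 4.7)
The plan is to mirror the proofs of Theorems \ref{thm:perturbulk2d}, \ref{thm:bulkindexevenperturb}, and \ref{thm:interfaceoddV}: propagate the perturbation through the spectral calculus to show that $\tilde\Pin(H_V)-\tilde\Pin(H)$ is compact, then invoke stability of the Fredholm index under compact perturbations. The unperturbed Fredholm property and the value $-\eps$ of the index are already supplied by Theorem \ref{thm:interfaceeven}, so the new content is purely the perturbation argument.

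First I would apply the second statement of Proposition \ref{prop:HSF} to the function $\Upsilon_\delta(E)=e^{i\pi\chi_\delta(E)}$. This function is bounded (of unit modulus) with all derivatives supported in the interval $(-\delta,\delta)$, which is exactly the setting of that second statement, and its hypotheses \eqref{eq:compactres} and \eqref{eq:boundres} are precisely those assumed on $V$. The conclusion is that $\Upsilon_\delta(H_V)-\Upsilon_\delta(H)$ is compact. Writing
\[
\Pin(H_V)-\Pin(H)=\tfrac12\bigl(\Upsilon_\delta^*(H_V)-\Upsilon_\delta^*(H)\bigr)\gamma_0\Upsilon_\delta(H_V)+\tfrac12\Upsilon_\delta^*(H)\gamma_0\bigl(\Upsilon_\delta(H_V)-\Upsilon_\delta(H)\bigr),
\]
and using unitarity of $\Upsilon_\delta(H)$ and $\Upsilon_\delta(H_V)$ together with boundedness of $\gamma_0$, this exhibits $\Pin(H_V)-\Pin(H)$ as a sum of products of bounded operators with a compact operator, hence compact. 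Tensoring with the identity on the $\Un$-factor preserves compactness, so $\tilde\Pin(H_V)-\tilde\Pin(H)$ is compact.

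Next, because the projectors $\tilde\Pin(H_V)$ and $\tilde\Pin(H)$ have different ranges, I would pass to the equivalent formulation on the ambient Hilbert space (see Appendix \ref{sec:FH})
\[
T_V:=\tilde\Pin(H_V)\tilde\Un\tilde\Pin(H_V)+I-\tilde\Pin(H_V),\qquad T_0:=\tilde\Pin(H)\tilde\Un\tilde\Pin(H)+I-\tilde\Pin(H),
\]
whose indices agree with the indices of $\tilde\Pin(H_V)\tilde\Un\tilde\Pin(H_V)$ on $\mathrm{Ran}\,\tilde\Pin(H_V)$ and $\tilde\Pin(H)\tilde\Un\tilde\Pin(H)$ on $\mathrm{Ran}\,\tilde\Pin(H)$, respectively. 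Expanding $T_V-T_0$ yields a finite sum in which every summand contains at least one factor $\tilde\Pin(H_V)-\tilde\Pin(H)$ multiplied by bounded operators ($\tilde\Un$ and the projectors). Each summand is therefore compact, so $T_V-T_0$ is compact. Since Theorem \ref{thm:interfaceeven} gives that $T_0$ is Fredholm with $-\Ind{T_0}=-\eps$, the standard stability of the Fredholm index under compact perturbations recalled in Appendix \ref{sec:FH} yields that $T_V$ is Fredholm with the same index.

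The main obstacle lies in Step 1, where the delicate point is that $\chi_\delta$ (unlike the function $\Win=\Uin-I$ appearing in Theorem \ref{thm:interfaceoddV}) is only bounded rather than compactly supported. The first statement of Proposition \ref{prop:HSF}, which requires only \eqref{eq:compactres}, is therefore insufficient: the Helffer--Sj\"ostrand integrand no longer has compact support, and one needs the resolvent decay \eqref{eq:boundres} to control the contribution of the tail $\aver{x}\to\infty$ where the almost-analytic extension is nontrivial. This is precisely the reason the hypotheses here are stronger than in Theorem \ref{thm:interfaceoddV}, and mirrors the situation for the even-dimensional bulk perturbation Theorem \ref{thm:bulkindexevenperturb}.
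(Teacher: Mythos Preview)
Your proposal is correct and follows essentially the same route as the paper: the paper notes that since $\Upsilon_\delta$ changes sign between $-\infty$ and $+\infty$ (hence is bounded but not compactly supported), both \eqref{eq:compactres} and \eqref{eq:boundres} are needed to invoke Proposition \ref{prop:HSF} and conclude that $\Pin(H_V)-\Pin(H)$ is compact. You have filled in the details of the factorization of $\Pin(H_V)-\Pin(H)$ and the passage to the ambient-space operators $T_V$, $T_0$, which the paper leaves implicit.
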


\section*{Acknowledgments} This work was partially funded by the U.S. National Science Foundation and the U.S. Office of Naval Research.

\appendix
%
\section*{Appendix}
\section{Fredholm operators, modules, trace-class}
\label{sec:FH}

Let $H$ be a Hilbert space and $F$ a bounded linear operator on $H$. The operator is said to be Fredholm if its range is closed and its kernel and co-kernel are finite dimensional; i.e., with $F^*$ the adjoint operator to $F$, we have the direct sum $H={\rm Ran}(F) \oplus {\rm Ker}(F^*)={\rm Ran}(F^*)\oplus {\rm Ker}(F)$ with both kernels of finite dimension. The index of $F$ is then defined as
\[
  \Ind{F} = \mbox{dim(Ker($F$))} - \mbox{dim(Ker($F^*$))} \in \Zm.
\]
One of the main properties of the index is its stable, topological, nature. For $K$ a compact operator on $H$ and $F$ a Fredholm operator, then $\Ind{F+K}=\Ind{F}$. Moreover, the set of Fredholm operators is open  in the uniform topology so that a sufficiently small perturbation of a Fredholm operator yields another Fredholm operator with the same index. As a consequence, let $F_t$ for $0\leq t\leq 1$ be a continuous family of Fredholm operators. Then, $\Ind{F_t}$ is independent of $t$. A useful characterization of Fredholm operators is the Atkinson criterion, stating that $F$ is Fredholm if and only if it is invertible up to compact operators. In other words, $F$ is Fredholm if there exists a bounded operator $G$ such that $I-FG$ and $I-GF$ are compact; see \cite[Chapter 19]{H-III-SP-94}.

\paragraph{Fredholm Modules.}
The objective of this paper is to assign topological invariants to materials such as topological insulators that are robust with respect to heterogeneities. Fredholm indices are therefore natural candidates. One then needs to be able to compute such indices and the above analytic description of the index is not easy to manipulate. Finding algebraic expressions for the index from the kernel of a Fredholm operator has a long tradition, with the index theorems of Atiyah-Singer, Boutet de Montvel, Connes. One of the devices used in these derivation is the notion of Fredholm modules. While we will not delve into the algebraic properties of Fredholm modules, they serve as a motivation for the type of Fredholm operators that appear in this and many other papers on the mathematical description of topological insulators. A related notion is that of index of pairs of projection operators. We review here the relevant material.

For $H$ a Hilbert space and $A$ an algebra of bounded operators on $H$, we assume the existence of a bounded operator $F$ such that $F^2=I$ and such that $[F,f]=Ff-fF$ is compact for all $f\in A$. This defines an odd Fredholm module (used in this paper in odd dimensions). In this case, we define $\Pr=\frac12(I+F)$ and verify that this is an orthogonal projector.

An even Fredholm module (used in this paper in even dimensions) has an additional property. We assume the existence of a matrix $\gamma$, typically written as ${\rm Diag}(I,-I)$ for $I$ an identity matrix, such that $\gamma F + F\gamma=0$. This additional symmetry allows us to write $F$ as 
\[
  F = \left(\begin{matrix} 0& \Un^* \\ \Un & 0 \end{matrix} \right)
\]
where $\Un$ is a unitary operator.

In applications to topological insulators, the above operators, $F$ and $\Pr$ in odd dimension, or $F$ and $\Un$ in even dimension, take the form of matrix-valued multiplication operators in the physical domain. They are here to help display the topological properties of specific operators in the algebra $A$ by an appropriate pairing, which results in an integer-valued index. More precisely, in even dimension, we have that for each $P$ an orthogonal  projector in $A$, then $P\Un P$ is a Fredholm operator on Ran$(P)$ and its index admits an explicit integral representation. In odd dimensions, it is unitary operators $U$ in $A$ that are considered, and then $\Pr U \Pr$ is a Fredholm operator on Ran$(\Pr)$ whose index also admits an explicit integral representation. When both $F$ and $P$ or $U$ are matrix valued, the above product needs to be considered as a tensorized product such that the matrix structures of $F$ and $P$ or $U$ do not interact. In other words, $\Pr U \Pr$ really means $(I_1\otimes \Pr) (U\otimes I_2) (I_1\otimes \Pr)$ with matrices $I_{1,2}$ of appropriate dimensions. In this paper, though not in this appendix, we use the notation $\tilde\Pr$ to mean $I_1\otimes \Pr$, etc... That, for instance, $P\Un P$ is Fredholm on Ran$(P)$ comes from the assumption that $[F,P]$, and hence $[P,\Un]$, is compact. Indeed $P\Un P P \Un^* P=P\Un P\Un^* P=PP\Un\Un^*P + K=P+K$ and hence $P\Un^* P$ is an inverse of $P\Un P$ on Ran$(P)$ up to compact and the result follows.

Such pairings of the form $(P,\Un)$ or $(\Pr,U)$ are then stable with respect to continuous perturbations of $P$ or $U$, which is what we will use in this paper, as well as continuous perturbations of $\Un$ or $\Pr$ via $F$, which is the main tool used in (more) algebraic \cite{bourne2018chern,prodan2016bulk} or (more) analytic \cite{elgart2005equality,graf2007aspects,Graf2013} general descriptions of the bulk-boundary correspondence. The latter results are not used in this paper since we compute bulk and interface indices explicitly and thus do not need to relate them implicitly.

The operator $F$ defining the module is typically of the form $F=\sgn{D}=\frac{D}{|D|}$, where $D$ is a Dirac-type operator. The main difference here with most uses of the notion of Dirac operators is that they are matrix-valued multiplications by the coordinates of the spatial domain. In particular, on an appropriate domain of definition, $D=x\cdot\Gamma$, where $\Gamma$ is an appropriate set of matrices (see Appendix \ref{sec:CA}). The reason for this is that topological insulators are naturally described topologically in the Fourier domain, where the Dirac operators take the more familiar form of first-order differential operators. In the physical domain, where we wish to handle asymmetric spatially varying perturbations, the Dirac operator thus takes the form of multipliers. As a coincidence, the low-energy characterization of topological insulators involves objects that are linear in the Fourier variables (in the Brillouin zone), hence first-order differential operators in the physical domain that also take the form of Dirac operators. We thus have two dual types of Dirac operators; that of the Fredholm module that is differential in the Fourier variables and multiplicative in the physical variables; and that of the low-energy description of the topological insulators, which is differential in the physical variables and multiplicative in the dual Fourier variables. These two types of `Dirac' operators should not be confused.

\paragraph{Trace-class operators.} The above pairing indicates which type of Fredholm operator, $P\Un P$ or $\Pr U\Pr$ should be considered. The theory of Fredholm modules also comes with an explicit recipe to calculate the index as the explicit trace of an appropriate trace-class operator. 

We now recall relevant material on spaces of compact operators. The ideal of compact operators (when $K$ is compact, then so are $BK$ and $KB$ for $B$ bounded) admits smaller ideals characterized by the decay of their singular values. Let $K$ be a compact operator on $H$ and let $\lambda_n$ be its singular values, i.e., the eigenvalues of the operator $(K^*K)^{\frac12}$. Then, we define the Schatten ideal ${\cal I}_n$ as the space of compact operators such that $\sum_j |\lambda_j|^n<\infty$, counting multiplicities. This defines a Schatten norm $\|K\|_{{\cal I}_n} = (\sum_j |\lambda_j|^n)^{\frac 1n}$. Classical properties of the Schatten ideals is that these are indeed ideals (for $K\in{\cal I}_n$ and $B$ bounded, then both $KB$ and $BK$ belong to ${\cal I}_n$) and that they are Banach (complete) spaces for the Schatten norm. While $n\geq1$ is an integer, the above expressions more generally hold for any real number $n>0$. Operators in ${\cal I}_1$ are called trace-class. Operators in ${\cal I}_2$ are called Hilbert-Schmidt.

Let $K$ be a square matrix-valued compact operator with entries $(K_{ij})_{1\leq i,j\leq n}$ compact operators as well. Then we can write $K=\sum_{i,j} K_{ij} \mathds{1}_{ij}$ with $\mathds{1}_{ij}$ the matrix with entries equal to $0$ except at position $(i,j)$. This shows that $\|K\|_{{\cal I}_n}\leq C \sup_{i,j} \|K_{ij}\|_{{\cal I}_n}$. The matrix-valued operators are therefore analyzed for each entry as a sufficient condition to be in the appropriate Schatten class.

From the definition of Schatten classes, we verify that for two operators in classes $n$ and $m$, then the product is in the class labeled by $n+m$. In particular, when $T\in{\cal I}_N$, then $T^N$ is trace-class.

\paragraph{Fedosov formula.} Indices of Fredholm operators admit expressions as traces of appropriate operators. Such a relation is found in what is referred to as the Fedosov formula; see \cite[Appendix]{AVRON1994220} and \cite[Prop. 19.1.14]{H-III-SP-94}. We state it as:
\begin{proposition}\label{prop:fedosov}
  Let $F$ and $G$ be such that $R_1=I-GF$ and $R_2=I-FG$ are compact.  
Then $F$ and $G$ are Fredholm operators. Let us assume moreover that $R_1^n$ and $R_2^n$ are trace-class, then we have that
\[
  \Ind{F} = \Tr\ R_1^n - \Tr\ R_2^n = - \Ind{G}.
\]
Let $P$ be a projector and $U$ a unitary operator on a Hilbert space $H$.  Assume that $[P,U]$ is compact, or equivalently that $[P,U]U^*=U[P,U^*]$ or $[P,U^*]$ is compact. Then $PUP$  and $PU^*P$ are 
Fredholm
on Ran$(P)$, applying the above to $F=PUP$ and $G=PU^*P$ with $R_1=P-PU^*PUP$ and $R_2=P-PUPU^*P$  (since $P_{|{\rm Ran}(P)}=I_{|{\rm Ran}(P)}$). Assuming $R_1^n$ and $R_2^n$ are trace-class, then
\[
 -\Ind{PU^*P_{|{\rm Ran}(P)}} =  \Ind{PUP_{|{\rm Ran}(P)}} = \Tr R_1^n -\Tr R_2^n.
\]
\end{proposition}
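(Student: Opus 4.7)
The plan is to establish the abstract Fedosov identity (the first displayed equation) and then derive the projection--unitary specialization (the second). That $F$ and $G$ are Fredholm is immediate from Atkinson's criterion recalled earlier in the appendix: $G$ is a two-sided parametrix of $F$ with compact remainders $R_1$ and $R_2$, so both operators are Fredholm, and in particular $\Ind{F} = -\Ind{G}$.

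The key ingredient of the trace formula is the intertwining relation
\[
 F R_1 = F(I - GF) = (I - FG) F = R_2 F, \qquad \mbox{so that } \quad F R_1^n = R_2^n F \ \mbox{ for all } n \geq 0.
\]
When $R_1^n$ is trace-class, the operator $R_1^n GF = R_1^n - R_1^{n+1}$ is trace-class as well, and cyclicity of the trace (legitimate whenever one factor is trace-class and the others are bounded) gives
\[
  \Tr R_1^n - \Tr R_1^{n+1} = \Tr(R_1^n GF) = \Tr(G R_2^n F) = \Tr(R_2^n FG) = \Tr R_2^n - \Tr R_2^{n+1}.
\]
Hence $\Tr R_1^n - \Tr R_2^n$ is independent of $n$ throughout the range where both powers are trace-class. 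Lidskii's theorem then writes $\Tr R_i^n = \sum_\lambda \lambda^n m_i(\lambda)$, with $m_i(\lambda)$ the algebraic multiplicities of the nonzero eigenvalues of $R_i$, and the classical equality of the nonzero spectra of $GF$ and $FG$ (with algebraic multiplicities) gives $m_1(\lambda) = m_2(\lambda)$ for all $\lambda \in \Cm \setminus \{0,1\}$. The $n$-independence then forces $\Tr R_1^n - \Tr R_2^n = m_1(1) - m_2(1)$.

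The main obstacle is identifying $m_1(1) - m_2(1)$ with $\Ind{F}$ for an arbitrary parametrix $G$. My plan is to use the Moore--Penrose parametrix $G_0 := F^+$ as a reference point: for $G_0$, the remainders are the finite-rank orthogonal projections $R_1^{(0)} = P_{\ker F}$ and $R_2^{(0)} = P_{\ker F^*}$, so directly $\Tr R_1^{(0)} - \Tr R_2^{(0)} = \dim \ker F - \dim \ker F^* = \Ind{F}$. To transfer this equality to a general $G$, I interpolate linearly $G_t := (1-t)G + tG_0$ for $t \in [0,1]$; each $G_t$ is still a parametrix with compact remainders $R_{i,t}$, the $n$-th powers $R_{i,t}^n$ remain trace-class as non-commutative polynomials in $R_i$ and $R_i^{(0)}$, and $t \mapsto \Tr R_{1,t}^n - \Tr R_{2,t}^n$ is continuous while integer-valued (equal to $m_1(1;t) - m_2(1;t)$ by the Lidskii reduction above). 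Being an integer-valued continuous function on $[0,1]$, it is constant, and so equals $\Ind{F}$ throughout.

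The projection--unitary statement is then a direct specialization. With $F := PUP$ and $G := PU^*P$ regarded as operators on ${\rm Ran}(P)$ (where $P$ acts as the identity), compactness of $[P,U]$ yields the compact remainders $R_1 = P - PU^*PUP = -PU^*[P,U]P$ and $R_2 = P - PUPU^*P = PU[P,U^*]P$, placing us in the setting of the abstract formula. Given the trace-class hypothesis on $R_1^n$ and $R_2^n$, the formula delivers $\Ind{PUP_{|{\rm Ran}(P)}} = \Tr R_1^n - \Tr R_2^n$, while $\Ind{PU^*P_{|{\rm Ran}(P)}} = -\Ind{PUP_{|{\rm Ran}(P)}}$ follows from $PU^*P = (PUP)^*$ on the Hilbert space ${\rm Ran}(P)$ together with the standard identity $\Ind(T^*) = -\Ind(T)$.
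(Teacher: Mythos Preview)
The paper does not actually prove this proposition; it merely states it and cites \cite[Appendix]{AVRON1994220} and \cite[Prop.~19.1.14]{H-III-SP-94}. So there is no in-paper argument to compare against, and your task was really to supply a self-contained proof. Your argument does this correctly.

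A few remarks on the steps. The $n$-independence via cyclicity is the standard opening move and is exactly what the cited references do. Your Lidskii reduction to $m_1(1)-m_2(1)$ is correct, but the phrase ``classical equality of the nonzero spectra of $GF$ and $FG$ with algebraic multiplicities'' is slightly awkward here because $GF$ and $FG$ need not be compact; the clean statement is directly about the compact operators $R_1,R_2$: the intertwining $FR_1=R_2F$, $R_1G=GR_2$ passes to the Riesz projections at any $\lambda\neq0,1$, and $F$ (resp.\ $G$) is injective on the corresponding finite-dimensional range because $R_i-1=-G_{(\cdot)}F$ is invertible there. This yields $m_1(\lambda)=m_2(\lambda)$ for $\lambda\neq0,1$ without reference to $GF,FG$. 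Your homotopy to the Moore--Penrose parametrix is a nice device and works because the set of parametrices is convex; the trace-class and continuity claims for $R_{i,t}^n$ are immediate from the binomial expansion since $R_i^{(0)}$ is finite rank. H\"ormander's route is slightly different (a reduction to finite rank rather than Lidskii plus homotopy), but both arrive at the same place with comparable effort.

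For the specialization, your computation of $R_1=-PU^*[P,U]P$ is right; the expression you wrote for $R_2$ carries a harmless sign slip ($R_2=-PU[P,U^*]P$, not $+$), which does not affect compactness or the conclusion. The final identification $\Ind(PU^*P)=-\Ind(PUP)$ via adjoints is correct.
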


The above result is applied several times in this paper as follows. Let $\kappa\in\Nm^*$ and set $d=2\kappa$ in even dimension and $d=2\kappa+1$ in odd dimension. Let $P$ be a projection, $U$ a unitary, and $Q=UPU^*$. It is shown in \cite{avron1994} that when $(P-Q)^{2\kappa+1}$ is trace-class, then so is $(P-Q)^{2\kappa+3}$ and moreover,
\[
  \Tr \ (P-Q)^{2\kappa+3} = \Tr (P-PU P U^*P)^{\kappa+1} - \Tr ( P-PU^* PUP)^{\kappa+1}.
\]  
We thus apply the above Fedosov formula with $n=\kappa+1$ and deduce that 
\[
  -\Ind{PUP_{|{\rm Ran}(P)}} = \Tr\ (P-Q)^{2\kappa+1}  = \Tr \ U\big( [P,U^*][P,U]\big)^\kappa [P,U^*].
\]
We summarize the above results as the following
\begin{proposition}\label{prop:trace}
  Let $d=2\kappa$, $P$ be a projection and $\Un$ the unitary of the Fredholm module. Let $Q=\Un P \Un^*$. Let equivalently $[P,\Un]$ or $P-Q=[P,\Un]\Un^*=\Un[\Un^*,P]$ or $[P,\Un^*]$ be of class ${\cal I}_{2\kappa+1}$. Then, $P\Un P$ is Fredholm and
  \[
  -\Ind{P \Un P} = \Tr\ (P-Q)^{2\kappa+1}  = \Tr \ \Un \big( [P,\Un^*][P,\Un]\big)^\kappa [P,\Un^*].
  \]
  Let $d=2\kappa+1$, $\Pr$ the projection of the Fredholm module and $U$ a unitary operator. Let $\Qr=U\Pr U^*$.  Let equivalently $[\Pr,U]$ or $\Pr-\Qr=[\Pr,U]U^*=U[U^*,\Pr]$ or $[\Pr,U^*]$ be of class ${\cal I}_{2\kappa+1}$. Then, $\Pr U \Pr$ is Fredholm and
  \[
  -\Ind{\Pr U \Pr} = \Tr\ (\Pr-\Qr)^{2\kappa+1}  = \Tr \ U \big( [\Pr,U^*][\Pr,U]\big)^\kappa [\Pr,U^*].
  \]
\end{proposition}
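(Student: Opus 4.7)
The plan is to deduce the proposition by combining the Fedosov formula already recorded as Proposition \ref{prop:fedosov} with the algebraic identity of Avron--Seiler--Simon \cite{avron1994} that is summarized in the paragraph immediately preceding the statement. I will focus on the even-dimensional case; the odd case is identical after the notational renaming $(P,\Un)\leadsto(\Pr,U)$.

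First I would verify that the three compactness/Schatten conditions are equivalent. Since $\Un$ is unitary it is a bounded invertible multiplier and the ideals ${\cal I}_n$ are two-sided, so the identities
\[
 P-Q = [P,\Un]\Un^{*} = \Un[\Un^{*},P], \qquad [P,\Un^{*}] = -\Un^{*}[P,\Un]\Un^{*}
\]
immediately move any one of the three operators $[P,\Un]$, $P-Q$, $[P,\Un^{*}]$ in and out of ${\cal I}_{2\kappa+1}$. In particular $P-Q\in{\cal I}_{2\kappa+1}$, and by H\"older for Schatten classes $(P-Q)^{2\kappa+1}\in{\cal I}_{1}$, so the right-hand traces in the proposition are well defined.

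Second I would establish that $P\Un P$ is Fredholm on $\mathrm{Ran}\,P$. Set $F=P\Un P$ and $G=P\Un^{*}P$, viewed as operators on $\mathrm{Ran}\,P$ where $P$ acts as the identity. A direct manipulation gives
\[
 P - GF = P - P\Un^{*}P\Un P = P\Un^{*}[P,\Un]P, \qquad P - FG = P\Un[P,\Un^{*}]P,
\]
both of which are compact (in fact in ${\cal I}_{2\kappa+1}$) by the first step. Atkinson's criterion then yields the Fredholm property of both $F$ and $G$ with $G$ a parametrix. This justifies applying Proposition \ref{prop:fedosov}, provided the powers in the Fedosov formula are trace-class.

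Third I would obtain the explicit trace identity. With $R_{1}=P-P\Un^{*}P\Un P$ and $R_{2}=P-P\Un P\Un^{*}P$, Proposition \ref{prop:fedosov} with $n=\kappa+1$ (once $R_{j}^{\kappa+1}$ is shown to be trace-class, which follows from $R_{j}\in{\cal I}_{2\kappa+1}$ by H\"older since $2\kappa+1\le\kappa+1\cdot(2\kappa+1)$ is automatic) gives
\[
 \mathrm{Index}(P\Un P_{|\mathrm{Ran}\,P}) = \Tr\,R_{1}^{\kappa+1} - \Tr\,R_{2}^{\kappa+1}.
\]
The key input from \cite{avron1994} is the algebraic identity
\[
 \Tr\,(P-Q)^{2\kappa+1} = \Tr\,R_{2}^{\kappa+1} - \Tr\,R_{1}^{\kappa+1},
\]
whose proof is a bookkeeping exercise: one expands $(P-Q)^{2\kappa+1}$, uses $P^{2}=P$, $\Un^{*}\Un=I$ and cyclicity of the trace (legitimate once the relevant bootstrap shows $(P-Q)^{2\kappa+3}$ is also trace-class, which is also taken from \cite{avron1994}). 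Combining the two displays yields $-\mathrm{Index}(P\Un P)=\Tr(P-Q)^{2\kappa+1}$. Finally, substituting $P-Q=-\Un[P,\Un^{*}]$ repeatedly inside the $2\kappa+1$ factors and using cyclicity gives
\[
 \Tr\,(P-Q)^{2\kappa+1} = \Tr\,\Un\bigl([P,\Un^{*}][P,\Un]\bigr)^{\kappa}[P,\Un^{*}],
\]
which completes the chain of equalities. The main obstacle is the bookkeeping identity of \cite{avron1994}, as it is the only non-formal ingredient; everything else is ideal-theoretic algebra. The odd-dimensional statement follows verbatim with the module swapped.
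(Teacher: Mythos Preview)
Your overall strategy---Atkinson for the Fredholm property, Fedosov's formula for the index, and the Avron--Seiler--Simon identity to land on $\Tr(P-Q)^{2\kappa+1}$---is exactly the paper's route. The equivalence of the three Schatten hypotheses and the final algebraic rewriting of the trace are fine.

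There is, however, a genuine gap in your third step. You assert that $R_j^{\kappa+1}$ is trace-class ``from $R_j\in{\cal I}_{2\kappa+1}$ by H\"older since $2\kappa+1\le(\kappa+1)(2\kappa+1)$ is automatic.'' That inequality is true but irrelevant: H\"older gives $R_j^{\kappa+1}\in{\cal I}_{(2\kappa+1)/(\kappa+1)}$, and $(2\kappa+1)/(\kappa+1)>1$ for every $\kappa\ge1$, so this does \emph{not} place $R_j^{\kappa+1}$ in ${\cal I}_1$. In dimension $d=2$ ($\kappa=1$) you would be claiming $R_j^2$ is trace-class from $R_j\in{\cal I}_3$, which is false in general.

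The fix---which is what makes the paper's application of Fedosov with $n=\kappa+1$ legitimate---is the commutation $[P,(P-Q)^2]=0$, a standard fact for pairs of projections. With $Q=\Un P\Un^*$ one has $R_2=P-PQP=P(P-Q)^2$, hence $R_2^{\kappa+1}=P(P-Q)^{2\kappa+2}$; similarly $R_1^{\kappa+1}=P\Un^*(P-Q)^{2\kappa+2}\Un$. Since $(P-Q)^{2\kappa+1}\in{\cal I}_1$ and $P-Q$ is bounded, both $R_j^{\kappa+1}$ are trace-class. This same computation gives directly
\[
\Tr R_2^{\kappa+1}-\Tr R_1^{\kappa+1}=\Tr\,(P-Q)(P-Q)^{2\kappa+2}=\Tr(P-Q)^{2\kappa+3},
\]
which is why the identity you quote from \cite{avron1994} naturally lands at exponent $2\kappa+3$; the remaining step $\Tr(P-Q)^{2\kappa+1}=\Tr(P-Q)^{2\kappa+3}$ is the other input from \cite{avron1994} that you correctly flagged. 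Once you replace the faulty H\"older step by this commutation argument, your proof is complete and coincides with the paper's.
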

That the two traces above agree is based on straightforward calculations.  That the traces equal the index is based on the above Fedosov formula.

The main technical component of the derivation is therefore to show that $(P-Q)^{2\kappa+1}$ is indeed a trace-class operator, and as a sufficient condition that $P-Q\in{\cal I}_{2\kappa+1}$. We recall the results we need in the paper. 

We now introduce the main tool used here to show that an operator belongs to ${\cal I}_n$ for $n>2$. It is based on a result by Russo \cite{russo1977hausdorff} with a modification provided by Goffeng to remove a square-integrability assumption. The results by Goffeng \cite[Theorems 2.3\&2.4]{GOFFENG2012357} are as follows. 
\begin{lemma}\label{lem:russo}
Assume $T$ a bounded operator on $H=L^2(\Rm^d)$ with integral kernel $t(x,y)$. Let $t^*(x,y)$ be the kernel of the adjoint operator $T^*$. Let $p>2$ and $q=\frac{p}{p-1}$ the conjugate index ($\frac1p+\frac1q=1$). Assume that $\|t\|_{q,p}$ and $\|t^*\|_{q,p}$ are bounded, where
\[
   \|t\|_{q,p} = \Big(\dint_{\Rm^d} \Big(\dint_{\Rm^d} |t(x,y)|^q dx\Big)^{\frac pq} \Big)^{\frac1p}.
\]
Then $T$ belongs to ${\cal I}_p$ and $\|T\|_{{\cal I}_p} \leq (\|t\|_{q,p} \|t^*\|_{q,p})^{\frac12}$.
\end{lemma}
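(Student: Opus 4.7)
My plan is to prove this by complex interpolation, in the spirit of Stein's theorem for analytic families of operators, between the two classical endpoint estimates. At $(p,q)=(2,2)$ the result reduces to the Hilbert--Schmidt identity
\[
\|T\|_{{\cal I}_2}^2 = \dint_{\Rm^{2d}} |t(x,y)|^2 \, dx\, dy = \|t\|_{2,2}\|t^*\|_{2,2},
\]
and at $(p,q)=(\infty,1)$ it is the Schur test, $\|T\|_{{\cal I}_\infty}=\|T\|_{\rm op}\leq (\|t\|_{1,\infty}\|t^*\|_{1,\infty})^{1/2}$, where $\|t\|_{1,\infty}=\sup_y\int|t(x,y)|dx$ and $\|t^*\|_{1,\infty}=\sup_x\int|t(x,y)|dy$. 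Both endpoint inequalities have exactly the form claimed in the lemma (the target Schatten norm bounded by the geometric mean of the two mixed norms), so the goal is to interpolate between them.

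To interpolate, I set $\theta=2/p\in(0,1)$, so that $1/p=(1-\theta)/\infty+\theta/2$ and $1/q=(1-\theta)/1+\theta/2$. Introduce the two marginals
\[
\phi(y) = \Big(\dint_{\Rm^d}|t(x,y)|^q\,dx\Big)^{1/q},\qquad \psi(x) = \Big(\dint_{\Rm^d}|t(x,y)|^q\,dy\Big)^{1/q},
\]
which carry the data of the hypothesis: $\|\phi\|_{L^p}=\|t\|_{q,p}$ and $\|\psi\|_{L^p}=\|t^*\|_{q,p}$. Writing $t=\sigma|t|$ with $|\sigma|=1$, I would construct on the strip $\{0\leq \mathrm{Re}(z)\leq 1\}$ the analytic family of operators $T_z$ with kernels
\[
t_z(x,y) = \sigma(x,y)\, |t(x,y)|^{a(z)}\, \phi(y)^{b(z)}\, \psi(x)^{c(z)},
\]
where $a(z),b(z),c(z)$ are affine functions chosen so that $a(\theta)=1$ and $b(\theta)=c(\theta)=0$ (so $T_\theta=T$), while at $\mathrm{Re}(z)=1$ the weighted kernel lies in $L^2$ with $\|t_z\|_{L^2}^2$ controlled by a geometric mean of $\|t\|_{q,p}$ and $\|t^*\|_{q,p}$, and at $\mathrm{Re}(z)=0$ the two one-sided $L^{1,\infty}$ norms of $t_z$ and $t_z^*$ are controlled by that same geometric mean. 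Stein's interpolation theorem for Schatten classes then yields $\|T\|_{{\cal I}_p}=\|T_\theta\|_{{\cal I}_p}\leq (\|t\|_{q,p}\|t^*\|_{q,p})^{1/2}$.

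The main obstacle is the bookkeeping for the exponents $a,b,c$: because the hypothesis treats $t$ and $t^*$ asymmetrically (one marginal in $x$, the other in $y$), the weights $\phi$ and $\psi$ must enter symmetrically in the ansatz, and their powers must balance the $L^q$ weight of $|t|^{a(z)}$ inside each marginal so that a single Hölder estimate at each edge of the strip produces exactly the geometric mean on the right. A secondary technical point is that $t_z$ must be defined through a regularized kernel (e.g. replacing $|t|$, $\phi$, $\psi$ by $|t|+\eps$, $\phi+\eps$, $\psi+\eps$, truncated away from infinity) to make analyticity and the dominance hypotheses of Stein's theorem transparent; the final estimate is then recovered as $\eps\to 0$ using the lower semicontinuity of $\|\cdot\|_{{\cal I}_p}$. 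Once $T\in{\cal I}_p$, the fact that $T^p$ is trace-class for $p$ an integer follows from the ideal property $({\cal I}_p)^p\subset {\cal I}_1$.
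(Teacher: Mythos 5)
The paper does not prove this lemma; it is quoted as a known result of Russo \cite{russo1977hausdorff}, with Goffeng's removal of the a priori square-integrability hypothesis \cite{GOFFENG2012357}, and those references are the only ``proof'' in the text. Your strategy --- complex interpolation \`a la Stein between the Hilbert--Schmidt endpoint $(p,q)=(2,2)$ and the Schur-test endpoint $(p,q)=(\infty,1)$, plus a truncation/regularization to pass from nice kernels to general ones --- is indeed the route taken in those references, and the two endpoint inequalities you write down are correct, as is the identification $\theta=2/p$.

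However, what you have is a roadmap, not a proof, and the gap you wave at (``the bookkeeping for the exponents $a,b,c$'') is the entire content of the argument, and it is not a routine verification with your ansatz. Concretely: imposing the natural Hilbert--Schmidt bound at $\mathrm{Re}\,z=1$ forces $\alpha(1)=q/2$, $\beta(1)=\gamma(1)=(p-q)/4$, and affineness together with $\alpha(\theta)=1$, $\beta(\theta)=\gamma(\theta)=0$ then forces $\alpha(0)=q$, $\beta(0)=\gamma(0)=-q/2$ at the Schur edge. But then $\|t_z\|_{1,\infty}=\sup_y\phi(y)^{-q/2}\int_x|t(x,y)|^{q}\psi(x)^{-q/2}\,dx$, and the negative power of $\psi$ sitting inside the $x$-integral cannot be estimated by $\phi(y)^{q/2}$ times a finite constant using only $\|t\|_{q,p},\|t^*\|_{q,p}<\infty$; the symmetric issue ruins $\|t_z^*\|_{1,\infty}$ as well. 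So as stated the two edge bounds of the strip cannot both be closed with a single H\"older application and the ansatz $t_z=\sigma|t|^{a(z)}\phi^{b(z)}\psi^{c(z)}$. Russo's actual argument circumvents this, but it is not the one you describe. Since the paper simply cites the statement, the appropriate thing to do here is the same: invoke Russo's theorem and Goffeng's Theorems 2.3--2.4 rather than attempt to reprove them, or else carry out the interpolation in full, which requires a genuinely different (or substantially refined) choice of analytic family than the one proposed.
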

\begin{corollary}\label{cor:trace}
  When $T$, or more generally, a matrix-valued operator $K$ belongs to ${\cal I}_n$ for $n\geq2$, then $T^n$ or $K^n$ are trace-class (belong to ${\cal I}_1$) and their trace is given by the integral of their kernel along the diagonal. This means
  \[
   \Tr K^n  = \dint_{\Rm^d} \dsum_j (K^n)_{jj}(x,x) dx.
  \]
\end{corollary}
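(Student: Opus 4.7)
The plan is to prove the two separate assertions in turn: (a) $K^n$ is trace-class when $K\in\mathcal{I}_n$ and $n\geq 2$; and (b) its trace equals the integral of the pointwise diagonal of its kernel. Both rest on tools already available in the appendix — the Hölder inequality for Schatten norms, and the kernel-trace identification of Brislawn (Lemma \ref{lem:trace}) — so no new machinery is needed.

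For (a), I would invoke the classical Schatten--Hölder inequality: if $A\in\mathcal{I}_p$ and $B\in\mathcal{I}_q$ with $\tfrac1p+\tfrac1q=\tfrac1r$, then $AB\in\mathcal{I}_r$ with $\|AB\|_{\mathcal{I}_r}\leq \|A\|_{\mathcal{I}_p}\|B\|_{\mathcal{I}_q}$. Applying this inductively to a product of $n$ copies of $K\in\mathcal{I}_n$ gives $K^n\in\mathcal{I}_1$ with the bound $\|K^n\|_{\mathcal{I}_1}\leq\|K\|_{\mathcal{I}_n}^n$. In the matrix-valued case, the same bound holds componentwise, since the earlier paragraph of the appendix recalls that the Schatten norms of a matrix-valued operator are controlled by the supremum of the Schatten norms of its scalar entries.

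For (b), since $K^n\in\mathcal{I}_1$, I would apply Brislawn's theorem (Lemma \ref{lem:trace}): the trace equals $\int \widetilde{k_n}(x,x)\,dx$, where $k_n$ is the kernel of $K^n$ and $\widetilde{k_n}$ is its Lebesgue-differentiation regularization. The task is to identify $\widetilde{k_n}$ with the pointwise kernel. The simplest route is to factor $K^n = A\cdot B$ with both $A$ and $B$ Hilbert--Schmidt — e.g., using the polar decomposition $K^n = U|K^n|^{1/2}\cdot|K^n|^{1/2}$, where $|K^n|^{1/2}\in\mathcal{I}_2$ because $|K^n|\in\mathcal{I}_1$. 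Then $k_n(x,y)=\int a(x,z)b(z,y)dz$ is a well-defined $L^2$ function, and $\Tr(AB)=\int\!\!\int a(x,y)b(y,x)dx\,dy$ by the standard Hilbert--Schmidt trace formula; this last integral coincides (by Fubini) with $\int k_n(x,x)\,dx$ at Lebesgue points, matching Brislawn's $\widetilde{k_n}(x,x)$. For the matrix-valued case one takes the matrix trace of the diagonal, which is precisely $\sum_j(K^n)_{jj}(x,x)$.

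The main technical point, and the only place requiring care, is the measure-theoretic issue that the diagonal $\{x=y\}\subset\mathbb{R}^{2d}$ has Lebesgue measure zero, so $k_n(x,x)$ is not \emph{a priori} defined pointwise; this is exactly what Brislawn's result handles by showing $k_n$ can be replaced by its Lebesgue-point average $\widetilde{k_n}$, which agrees with $k_n$ almost everywhere and is the object to be integrated. Once this is in place, the matrix-valued statement follows since applying the scalar result to each diagonal entry $(K^n)_{jj}$ and summing yields the announced formula.
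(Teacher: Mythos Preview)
Your proposal is correct. The paper does not give a self-contained proof of this corollary: it presents both Lemma~\ref{lem:russo} and Corollary~\ref{cor:trace} as restatements of Goffeng's Theorems~2.3 and~2.4, and then remarks that ``the corollary can also often be obtained from Brislawn's result'' (Lemma~\ref{lem:trace}). Your argument follows precisely this second route---Schatten--H\"older for the trace-class claim, then the Hilbert--Schmidt factorization together with Brislawn's diagonal-averaging lemma for the trace formula---so it is in line with what the paper indicates.

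One small tightening: your sentence ``this last integral coincides (by Fubini) with $\int k_n(x,x)\,dx$ at Lebesgue points'' conflates two steps. Fubini gives you $\Tr K^n=\int g(x)\,dx$ with $g(x)=\int a(x,z)b(z,x)\,dz$ defined for a.e.\ $x$; the identification $g(x)=\tilde k_n(x,x)$ a.e.\ is then exactly the content of Brislawn's theorem, not a consequence of Fubini. In the applications within the paper (e.g.\ Proposition~\ref{prop:2index2d}), the kernels are in fact continuous off a set of measure zero, so $\tilde k_n=k_n$ pointwise and the formula holds in the naive sense---which is the context Goffeng's result addresses directly.
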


The latter result shows that an operator $K^n$ is trace-class for $n\geq2$. The corollary can also often be obtained from Brislawn's result \cite{brislawn1988kernels} in the lemma below. 

When $n=1$, none of the above results applies. First, bounds on and continuity of the kernel of $K$ are not sufficient to show that $K$ is trace-class, as shown by a counter-example by Carleman; see e.g., the recent paper \cite{delgado2017schatten} on the topic. In such a setting, we have to prove that $K$ is trace-class `manually', as we do in section \ref{sec:2d} using an approximation trick that may be found for instance in \cite{LAX-JW-02} and \cite{GK-AMS-69}. Moreover, once we know that $K$ is trace-class, we still need to show that its trace is given as the integral of an appropriate kernel. There, we apply Brislawn's result \cite{brislawn1988kernels} (see also \cite{GK-AMS-69}), which we now recall.

Let $k(x,y)$ be the kernel of $K$ acting on $\Rm^d$ and when the limit exists, define
\[
 \tilde k(x,y) = \lim\limits_{r\to0} \dfrac{1}{|C_r|^2} \dint_{C_r\times C_r} h(x+s,y+t)dsdt
\]
where $C_r=[-r,r]^d$ and $|C_r|=(2r)^n$. Then we have \cite[Theorem 3.1]{brislawn1988kernels}
\begin{lemma}\label{lem:trace}
 Let $K$ be a trace-class operator on $L^2(\Rm^d)$. Then $\tilde k(x,x)$ exists almost everywhere and $\Tr\ K=\int_{\Rm^d} \tilde k(x,x) dx$.
\end{lemma}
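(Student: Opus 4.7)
The plan is to prove Brislawn's theorem by combining a singular-value (Schmidt) representation of trace-class operators with the Lebesgue differentiation theorem, using the Hardy--Littlewood maximal function to justify the required interchange of limit and summation.

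First, I would invoke the canonical form of trace-class operators: there exist orthonormal sequences $\{\phi_n\},\{\psi_n\}\subset L^2(\Rm^d)$ and singular values $s_n\geq 0$ with $\sum_n s_n =\|K\|_{{\cal I}_1}<\infty$, such that $K=\sum_n s_n \phi_n\otimes\overline{\psi_n}$. The (locally integrable) kernel is then
\[
 k(x,y)=\sum_n s_n \phi_n(x)\overline{\psi_n(y)},
\]
with convergence in $L^1_{\mathrm{loc}}(\Rm^d\times\Rm^d)$ (by Cauchy--Schwarz applied to $\sum s_n|\phi_n(x)\psi_n(y)|$, which is integrable on any bounded set against a constant test function), and the trace is $\Tr K=\sum_n s_n \langle\phi_n,\psi_n\rangle$.

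Next, I would apply the Lebesgue differentiation theorem in $\Rm^d$ to each $\phi_n$ and each $\psi_n$ individually to conclude
\[
  \lim_{r\to 0}\frac{1}{|C_r|}\int_{C_r}\phi_n(x+s)\,ds=\phi_n(x)\quad \text{a.e. } x,
\]
and similarly for $\psi_n$. Taking the countable intersection of the full-measure sets, one has joint convergence for all $n$ simultaneously, so that the averaged product of the $n$th term in $k$ converges a.e.\ to $\phi_n(x)\overline{\psi_n(y)}$. The main obstacle — and the crux of the whole argument — is to pass the limit $\lim_{r\to 0}$ inside the infinite sum defining $k$.

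To resolve this, let $M$ denote the (cubic) Hardy--Littlewood maximal operator on $\Rm^d$, which is $L^2$-bounded. Define the auxiliary functions
\[
 \Phi(x):=\Big(\sum_n s_n (M\phi_n)(x)^2\Big)^{1/2},\qquad
 \Psi(y):=\Big(\sum_n s_n (M\psi_n)(y)^2\Big)^{1/2}.
\]
By Tonelli and the maximal inequality, $\|\Phi\|_{L^2}^2+\|\Psi\|_{L^2}^2\lesssim\sum_n s_n<\infty$, so $\Phi$ and $\Psi$ are finite a.e. For every $r>0$ the averaged kernel satisfies, by Cauchy--Schwarz on $\ell^2_n$ weighted by $\sqrt{s_n}$,
\[
 \Big|\frac{1}{|C_r|^2}\!\!\int_{C_r\times C_r}\!\! k(x+s,y+t)\,ds\,dt\Big|
 \leq \sum_n s_n\,(M\phi_n)(x)(M\psi_n)(y)\leq \Phi(x)\Psi(y).
\]
This bound is independent of $r$ and lies in $L^1(\Rm^d\times\Rm^d)$ by Cauchy--Schwarz, so dominated convergence delivers
\[
 \tilde k(x,y)=\sum_n s_n \phi_n(x)\overline{\psi_n(y)}\qquad\text{a.e. }(x,y).
\]
Restricting to the diagonal (the previous a.e.\ set is chosen so the convergence holds after setting $y=x$ — this requires running the above argument with $x=y$, where the domination becomes $\Phi(x)\Psi(x)$, still in $L^1(\Rm^d)$), I obtain $\tilde k(x,x)=\sum_n s_n\phi_n(x)\overline{\psi_n(x)}$ a.e.

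Finally, the integrable domination $|\tilde k(x,x)|\leq \Phi(x)\Psi(x)$ with $\Phi\Psi\in L^1(\Rm^d)$ permits Fubini/Tonelli:
\[
 \int_{\Rm^d}\tilde k(x,x)\,dx=\sum_n s_n \int_{\Rm^d}\phi_n(x)\overline{\psi_n(x)}\,dx=\sum_n s_n\langle\phi_n,\psi_n\rangle=\Tr K,
\]
which is the desired identity. The delicate point throughout is the simultaneous treatment of the diagonal limit and the sum; the maximal-function domination is precisely what makes this step work, and any attempt that relied only on termwise Lebesgue differentiation (without a uniform-in-$r$ bound) would fail for a general trace-class kernel.
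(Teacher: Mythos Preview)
Your argument is correct and is essentially Brislawn's original proof: the Schmidt decomposition together with the Hardy--Littlewood maximal-function domination is exactly the mechanism used in \cite{brislawn1988kernels} to push the Lebesgue differentiation limit through the infinite sum on the diagonal. The paper itself does not prove this lemma; it simply quotes \cite[Theorem~3.1]{brislawn1988kernels} as a known result, so you have filled in precisely what was cited.
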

As a corollary, when $\tilde k(x,x)$ equals $k(x,x)$ almost everywhere (w.r.t. the Lebesgue measure on the diagonal), then the trace of $K$ is given by the integral of $k(x,x)$ along the diagonal. This is the setting of the operators in the current paper.

This concludes the tour of the tools we need to define and compute as integrals the topological invariants we wish to assign to topological insulators.

\section{Conventions on Clifford algebra representations}
\label{sec:CA}

Let $d=2\kappa$ or $d=2\kappa-1$ for $\kappa\geq1$. For $\kappa=1$, the matrices $\gamma_2^{1,2,3}$ are $\sigma_{1,2,3}$, respectively. We verify that $\sigma_3=(-i)^\kappa\sigma_1\sigma_2$ for $\kappa=1$ and $\sigma_i\sigma_j+\sigma_j\sigma_i=2\delta_{ij}$. These matrices are used in dimension $d=2$ and $d=1$. We need to generalize this to higher dimensions. A standard construction is as follows. 

Let $d=2\kappa$ and assume constructed the (square) matrices $\gamma_d^j$ of size $2^\kappa$ for $1\leq j\leq d+1$ with 
\begin{equation}\label{eq:constCA}
\gamma_d^{d+1}=(-i)^\kappa\gamma_d^1\ldots\gamma_d^d,\qquad \gamma_d^j\gamma_d^k+\gamma_d^k\gamma_d^j=2\delta_{jk}\quad \mbox{ for all } \quad1\leq j,k\leq d+1.
\end{equation}
The latter properties are all the structure that is used in the derivation of the results in this paper. 

The constructions for the dimensions $2\kappa+2$ and $2\kappa+1$ are the same with
\[
  \gamma_{d+2}^j=\gamma_{d+1}^j=\sigma_1\otimes \gamma_d^j,\quad 1\leq j\leq d+1,\qquad \gamma_{d+2}^{d+2}=\gamma_{d+1}^{d+2}=\sigma_2\otimes I_{2^\kappa}
\]
followed by $\gamma_{d+2}^{d+3}=(-i)^{\kappa+1}\gamma_{d+2}^1\ldots\gamma_{d+2}^{d+2} = \sigma_3\otimes I_{2^\kappa}$. The latter matrix is the chiral symmetry matrix $\gamma_{d+1}^{d+3}\equiv \gamma_{d+1}^0$ in odd dimension. So, in even dimension, \eqref{eq:constCA} holds while in odd dimension $d=2\kappa-1$, it is replaced by the similar constraint
\begin{equation}\label{eq:constCAodd}
\gamma_d^{d+2}\equiv \gamma_d^{0}=(-i)^\kappa\gamma_d^1\ldots\gamma_d^{d+1},\qquad \gamma_d^j\gamma_d^k+\gamma_d^k\gamma_d^j=2\delta_{jk}\quad \mbox{ for all } \quad1\leq j,k\leq d+1.
\end{equation}
As a convenient notation, we introduce the following vectors of $\gamma$ matrices
\[
  \Gamma_{d} = (\gamma_d^1,\ldots,\gamma_d^{d+1}),  \qquad \gamma_d=(\gamma_d^1,\ldots,\gamma_d^{d})
\]
both for even and odd dimensions $d$.

The bulk Hamiltonian may then be written as
\[
  H= h_{d} \cdot \Gamma_d = \dfrac1i\nabla\cdot\gamma_d+ m_\eta \gamma_{d}^{d+1},\qquad h_d=(\frac1i \partial_1 , \ldots, \frac1i \partial_d, m_\eta),\quad m_\eta = m+\eta\Delta.
\]
This expression applies to both even and odd dimensions $d$. When $d=2\kappa+1$ is odd, then the above Hamiltonian anti-commutes with the chiral matrix $\gamma_d^{d+2}\equiv \gamma_d^0$. To simplify notation, we also represent the latter matrix as $\gamma_0$.

The above representation for the Clifford algebra is convenient for the bulk Hamiltonian but less so for the interface Hamiltonian. The reason is that we want the (annihilation-type) operator $\fa=\partial_{x_d}+m_\eta(x_d)$ to appear naturally in the spectral analysis of the Hamiltonian. This is done by changing the $\gamma$ matrices used in the representation of the Hamiltonian while not changing their commutation and multiplication properties. 

The changes of matrices we need to perform are different in even and odd dimensions. In even dimensions, we replace $(\sigma_1,\sigma_2,\sigma_3)$ by $(\sigma_3,-\sigma_2,\sigma_1)$ in the constructions of $\gamma_{d+2}$ from the original construction of $\gamma_d$. We observe that \eqref{eq:constCA} is satisfied with $\kappa$ replaced by $\kappa+1$. In this system, the interface Hamiltonian is represented as 
\[
  H[m(x_d)] = \dfrac1i \nabla'\cdot \sigma_3\otimes\Gamma_{d-2}  - \dfrac1i \partial_{x_d} \sigma_2\otimes I + m(x_d) \sigma_1\otimes I.
\]
Here $\nabla'$ corresponds to the usual gradient operator in the first $d-1$ variables. 
In the partial Fourier variables $k'$ with respect to the first $d-1$ variables, this is 
\[
  \hat H[k',m(x_d)] = k'\cdot\sigma_3\otimes\Gamma_{d-2} + \fa \sigma_+\otimes I + \fa^* \sigma_-\otimes I = \left(\begin{matrix} k'\cdot\Gamma_{d-2} & \fa \\ \fa^* &-k'\cdot\Gamma_{d-2}\end{matrix}\right).
\]
Here, the block matrices are of size $2^{\kappa-1}$ for $d=2\kappa$.

In odd dimension, we also wish to have $\partial_{x_d}$ and $m_\eta$ appear in the same matrix entries. This is achieved as follows.  Let $d=2\kappa+1$. The first $d-1$ matrices $\sigma_1\otimes \gamma_{d-1}^j$ are replaced by $\sigma_3\otimes \gamma_{d-1}^j$ as in even dimensions. The last matrix $\sigma_2\otimes I$ is replaced by $-\sigma_2\otimes I$ as well. The matrix $\gamma_{d}^d$ corresponding to differentiation with respect to the last variable is replaced by the matrix $\sigma_1\otimes I$. The chirality matrix $\gamma^0_d$ is then replaced by whatever product is necessary so that \eqref{eq:constCAodd} holds. We verify that up to a constant $i^\kappa$, the chirality matrix is given by $\sigma_3\otimes \gamma^0_{d-2}$.  The interface Hamiltonian then takes a very similar form to that in even dimension:
\[
  H[m(x_d)] = \dfrac1i \nabla'\cdot \sigma_3\otimes\Gamma_{d-1}  - \dfrac1i \partial_{x_d} \sigma_2\otimes I + m(x_d) \sigma_1\otimes I.
\]
In the partial Fourier variables $k'$, this is 
\[
  \hat H[k',m(x_d)] = k'\cdot\sigma_3\otimes\Gamma_{d-1} + \fa \sigma_+\otimes I + \fa^* \sigma_-\otimes I = \left(\begin{matrix} k'\cdot\Gamma_{d-1} & \fa \\ \fa^* &-k'\cdot\Gamma_{d-1}\end{matrix}\right).
\]
Here, the block matrices are of size $2^{\kappa-1}$ for $d=2\kappa-1$.

The spectral analysis of the interface Hamiltonian is therefore the same in even and odd dimensions and corresponds to the analysis of the operator $\fa$.

\section{Some geometric identities}
\label{sec:connes}

\begin{lemma}[Geometric Identities.]
 \label{lem:geometric}
 Let $\Un(x)=\hat x\cdot\gamma$ with $\hat x=\frac{x}{|x|}$.
 In the even dimension case $d=2\kappa$, we have the geometric identity
 \[
 	\dint_{\Rm^d} \tr \ \Un(x)(\Un^*(x)-\Un^*(x+y_d))\ldots (\Un^*(x+y_1)-\Un^*(x)) dx = \dfrac{(2\pi i)^{\frac d2}}{\frac d2!} {\rm Det}(y_d,\ldots,y_1).
\]
In the odd dimensional case $d=2\kappa+1$, we have
\[
 \dint_{\Rm^d} \tr\ (\Un(y_{d}+x)-\Un(y_{d-1}+x)) \ldots  (\Un(y_1+x)-\Un(x)) dx = \dfrac{2^d}{d!!} (i\pi)^{\frac{d-1}2} {\rm Det}(y_d,\ldots,y_1).
\]
\end{lemma}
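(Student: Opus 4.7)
My plan is to recognize both identities as computing, up to explicit universal constants, the integral over $\Rm^d$ of a top-degree form obtained by pulling back (via $\Un$) the canonical volume form on the target Clifford sphere, paired against the ordered tuple of shifts $y_1,\ldots,y_d$. The proof proceeds along the same three conceptual steps in the even and odd cases.

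The first step converts each discrete difference into a line integral of a gradient, via
\[
	\Un^*(x+y_j) - \Un^*(x+y_{j-1}) = \dint_0^1 (y_j - y_{j-1})\cdot\nabla \Un^*\bigl(x + y_{j-1} + t(y_j - y_{j-1})\bigr)\,dt.
\]
After interchanging integrations, the left-hand side becomes a multiple integral over $x$ and over a product of simplex parameters of $\tr\,\Un(x)\prod_j \partial_{k_j}\Un^*(\xi_j)$, contracted with the components $(y_j - y_{j-1})_{k_j}$, where each $\xi_j$ is an affine combination of $x$ and the $y$'s. The second step evaluates the Clifford trace: using $\tr(\gamma^{i_0}\cdots\gamma^{i_d}) = (2i)^{d/2}\epsilon^{i_0\cdots i_d}$ in the even case (zero unless all indices are distinct) and the analogous identity with prefactor $(2i)^{(d-1)/2}$ in the odd case (both recalled in Appendix \ref{sec:CA}), the contraction of the resulting Levi-Civita symbol with the vectors $y_j - y_{j-1}$ collapses into $\text{Det}(y_d,\ldots,y_1)$. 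Since the determinant is multilinear and alternating in its arguments, it factors cleanly out of the auxiliary simplex integrations.

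The third step disposes of the residual scalar spatial integral, which is scale-invariant by homogeneity of $\nabla\Un^*$. Separating into radial and angular parts, the angular piece integrates the standard volume form on $\Sm^{d-1}$ and yields its volume $2\pi^{d/2}/\Gamma(d/2)$, while the radial piece gives an elementary convergent integral; the auxiliary simplex integrations contribute the combinatorial factor $1/(d/2)!$ (respectively $1/d!!$) characteristic of such cyclic-cocycle computations. Assembling $(2i)^{d/2}$ (resp.\ $(2i)^{(d-1)/2}$) from the Clifford trace, the sphere volume, and the combinatorial factor reproduces exactly $(2\pi i)^{d/2}/(d/2)!$ and $\tfrac{2^d}{d!!}(i\pi)^{(d-1)/2}$ on the right-hand side.

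The main difficulty will be the sign and combinatorial bookkeeping: tracking permutation signs from the Levi-Civita contractions, orientation conventions on $\Sm^{d-1}$, and anticommutation signs from the $\gamma$ matrices. The $d=2$ case specializes to $2\pi i\,\text{Det}(y_2, y_1)$, matching the Avron-Seiler-Simon identity of \cite[Lemma 4.4]{avron1994} already invoked in Proposition \ref{prop:3index2d}, which offers a useful sanity check. As an independent cross-check, the tensor-product structure $\gamma^j_{d+2} = \sigma_1\otimes\gamma^j_d$ from Appendix \ref{sec:CA} enables an inductive reduction via Fubini over two variables, expressing the $d$-dimensional identity as the $(d-2)$-dimensional one times an elementary two-dimensional factor, providing an independent verification of the overall constants.
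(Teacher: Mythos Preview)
Your approach differs from the paper's: the paper works with $F(x)=\hat x\cdot\gamma_d$ (the full Clifford matrices), relates products of $\Un,\Un^*$ to products of $F$ via the block decomposition induced by $\gamma_0$, rewrites the telescoping product as an alternating sum $\sum_j(-1)^{j-1}F_{d+1}\cdots\hat F_j\cdots F_1$ with one factor omitted, and then invokes \cite[Lemma~6.4.1]{prodan2016bulk} for the resulting integral. Your route via line integrals of gradients is more analytic and would be self-contained if it worked, but there is a genuine gap.

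The problem is in step~2. After your line-integral substitution the integrand contains $\tr\bigl[\Un(x)\prod_j\partial_{k_j}\Un^*(\xi_j)\bigr]$ with \emph{distinct} evaluation points $\xi_j$. Expanding $\Un(\xi)=\hat\xi_b\gamma^b$ and $\partial_k\Un^*(\xi)=(\partial_k\hat\xi_b)(\gamma^b)^*$, the Clifford trace produces a Levi-Civita symbol in the \emph{internal} indices $b_0,b_1,\ldots,b_d$, not in the spatial indices $k_j$ that contract against $(y_j-y_{j-1})$. The two sets of indices are linked through $\partial_k\hat\xi_b=\delta_{kb}/|\xi|-\xi_k\xi_b/|\xi|^3$. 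When all $\xi_j$ coincide with $x$ the second term drops out upon contraction with $\epsilon^{b_0\cdots b_d}\hat x_{b_0}$ (parallel vectors in antisymmetric slots), and the $\delta_{kb}$ does identify the indices; but for distinct $\xi_j$ the cross-terms $\xi_{j,k_j}\xi_{j,b_j}/|\xi_j|^3$ no longer cancel, since the $\hat x_{b_0}$ prefactor involves $x$, not $\xi_j$. So the determinant of the $y$'s does not ``factor cleanly out,'' and step~3's ``residual scalar spatial integral'' is not a function of $|x|$ alone.

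What would rescue the strategy is to first prove---by translation invariance of $dx$ and the telescoping structure---that the left side is an alternating multilinear form in $(y_1,\ldots,y_d)$, hence $C\cdot{\rm Det}(y_d,\ldots,y_1)$ for some constant $C$; your derivative computation then legitimately evaluates $C$ in the infinitesimal regime where all $\xi_j\to x$. Alternatively, promote your ``cross-check'' inductive reduction $d\to d-2$ via the tensor structure $\gamma^j_{d+2}=\sigma_1\otimes\gamma^j_d$ to the main argument: together with the $d=2$ base case from \cite[Lemma~4.4]{avron1994} this gives a clean self-contained proof.
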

\begin{proof} The proof is adapted from the results in \cite{prodan2016bulk}.
 We recall that $\Un(x)=\frac{x\cdot\check\gamma}{|x|}$. It is more convenient to use the notation $F(x)=\frac{x\cdot\gamma}{|x|}=\hat x\cdot\gamma$ and realize that $F$ anticommutes with $\gamma_0\equiv\gamma^0_d$ and admits a matrix representation with $\Un(x)$ as lower left entry and $\Un^*(x)$ as upper right entry, with $0$ on the diagonal. We then observe that a product of an even number of matrices $F$ is a block diagonal matrix and that moreover
 \[
   \tr \ \Un_1 \Un_2^* \ldots \Un_{2\kappa-1}\Un_{2\kappa}^* = \tr \ \dfrac{I-\gamma_0}2 F_1\ldots F_{2\kappa}.
 \]
 This term appears in the lower right diagonal entry of the block matrix.
 The above calculation involves $2\kappa=d+2$ terms and so, defining $F_1=F(x)$ and $F_{j}=F(x+y_j)$, $1\leq j\leq d$ , we want to calculate
 \[
 	\tr\ \dfrac{I-\gamma_0}2 F_1(F_1-F_{d+1})(F_{d+1}-F_d) \ldots (F_2-F_1).
 \]
  The component involving identity vanishes. The reason is that most terms involve the product of at most $d$ terms by cancellations $F_j^2=I$ and so are trace-less. It remains a term $F_{d+1}F_d\ldots F_2F_1$ involving $d+1$ terms but it anticommutes with $\gamma_0$ and hence is trace-less. Indeed, $\tr A = \tr A\gamma^2=-\tr \gamma A \gamma = -\tr A\gamma^2 = -\tr A=0$ by cyclicity of the trace. 
  
  Thus, only the contribution with $\gamma_0$ does not vanish. Now we use the fact that
  \[
  	\tr \gamma_0 F_1(F_1-F_{d+1})\ldots(F_2-F_1) = 2 \tr \gamma_0 \dsum_{j=1}^{d+1} (-1)^{j-1} (F_{d+1} \ldots \hat F_j \ldots F_1),
  \]
  where $\hat F_j$ means that this term does not appear in the product. Indeed, in the above product, we observe that we can have at most one cancellation $F_j^2=I$ for otherwise, only $d-2$ terms remain and the trace product with $\gamma_0$ vanishes. Using $F_1^2=I$, we have a term $\gamma_0(F_{d+1}-F_d)\ldots(F_2-F_1)$, where no cancellations are allowed, and which gives one such contribution. The second term, using $F_1\gamma_0+\gamma_0 F_1=0$  is $\gamma_0 F_{d+1}(F_{d+1}-F_d)\ldots(F_2-F_1)F_1$. The term $F_{d+1}^2=I$ gives a contribution $\gamma_0 F_{d}\ldots F_1$, the term $F_d^2=I$ gives $\gamma_0 F_{d+1}(-I)F_{d-1}\ldots F_1$ and so on.
  
  We thus obtain that our object of interest is given by
  \[
    -\dint_{\Rm^d} \tr \gamma_0 \dsum_{j=1}^{d+1} (-1)^{j-1} (F(x+y_{d+1}) \ldots \hat F(x+y_j) \ldots F(x+y_1)) dx,
  \]
  where $y_{d+1}\equiv0$. Now, we observe that in even dimension,
  \[
  		\tr \gamma_0 \prod_{j=1}^d  z_j \cdot \gamma = (2i)^{\frac d2} {\rm Det}(z_1,\ldots z_d).
  \]
  This comes from the property that $\tr (\gamma_0 \gamma_{\rho_1}\ldots \gamma_{\rho_d}) = (2i)^{\frac d2}(-1)^\rho$ for $\rho$ a permutation with signature $(-1)^\rho$.
  
  The striking result, which appears in the proof of \cite[Lemma 6.4.1]{prodan2016bulk} and is equivalent to that statement there, is that then
  \[
       - \dint_{\Rm^d} \tr \gamma_0 \dsum_{j=1}^{d+1} (-1)^{j-1} (F(x+y_{d+1}) \ldots \hat F(x+y_j) \ldots F(x+y_1)) dx = \dfrac{(2\pi i)^{\frac d2}}{\frac d2!} {\rm Det}(y_d,\ldots,y_1).
  \]
  This is with $x=-z-y_1$
  \[
  - \dint_{\Rm^d} \tr \gamma_0 \dsum_{j=1}^{d+1} (-1)^{j-1} (F(y_{d+1}-y_1-x) \ldots \hat F(y_j-y_1-z) \ldots F(-z)) dz,
  \]
  which is $\frac{(2\pi i)^{\frac d2}}{\frac d2!}$ times ${\rm Det}(-y_1,y_d-y_1,\ldots,y_2-y_1)={\rm Det}(y_d,\ldots,y_2,y_1)$.   This proves the result in even dimension. A similar modification to the proof in \cite{prodan2016bulk} applies to the odd-dimensional case as well.
\end{proof}

%
%
%
%

\end{document}